\newtheorem{theorem}{Theorem}
\newtheorem{lemma}{Lemma}
\newtheorem{claim}{Claim}
\newtheorem{corollary}{Corollary}
\newtheorem{observation}{Observation}
\theoremstyle{definition}
\newtheorem{definition}{Definition}
\DeclareMathOperator*{\argmax}{arg\,max}
\DeclareMathOperator*{\argmin}{arg\,min}
\def\cC{{\cal C}}
\def\cS{{\cal S}}
\def\bbN{{\mathbb N}}
\def\bbR{{\mathbb R}}
\def\createTree{{\textsc{CreateTree}}}
\def\split{{\textsc{Split}}}
\def\maxIPLS{{\textsc{MaxIP-LS}}}
\def\fastLS{{\textsc{Fast-LS}}}
\def\CalcPotential{{\textsc{CalcPotential}}}
\def\CalcAverage{{\textsc{CalcAverage}}}
\def\CalcCentral{{\textsc{CalcCentralPoint}}}
\def\split{{\textsc{Split}}}
\def\fastSplit{{\textsc{FastSplit}}}
\DeclarePairedDelimiter\ceil{\lceil}{\rceil}
\def\poly{\mathrm{poly}}
\def\polylog{\mathrm{polylog}}
\newcommand{\expect}[2][]{\mathbb{E}_{#1}\left[{#2}\right]}
\newcommand*{\defeq}{\stackrel{\text{def}}{=}}
\newcommand{\quotes}[1]{{``{#1}"}}
\DeclareMathOperator{\avg}{avg}
\DeclareMathOperator{\median}{median}
\DeclareMathOperator{\diameter}{diameter}
\newcommand{\clusterToRecompute}{\ensuremath{\mathcal{C}_R}}
\title{Scalable Algorithms for Individual Preference Stable Clustering}
\author{Ron Mosenzon\thanks{Toyota Technological Institute at Chicago (TTIC). Email: \href{ron.mosenzon@ttic.edu}{ron.mosenzon@ttic.edu}} \and Ali Vakilian\thanks{Toyota Technological Institute at Chicago (TTIC). Email: \href{mailto:vakilian@ttic.edu}{vakilian@ttic.edu}} }
\date{}
\begin{document}

\maketitle

\begin{abstract}
  In this paper, we study the individual preference (IP) stability, which is an notion capturing individual fairness and stability in clustering. Within this setting, a clustering is $\alpha$-IP stable when each data point's average distance to its cluster is no more than $\alpha$ times its average distance to any other cluster. In this paper, we study the natural local search algorithm for IP stable clustering. Our analysis confirms a $O(\log n)$-IP stability guarantee for this algorithm, where $n$ denotes the number of points in the input. Furthermore, by refining the local search approach, we show it runs in an almost linear time, $\Tilde{O}(nk)$. 
\end{abstract}

\section{Introduction}
In numerous human-centric clustering applications, the data points being clustered often represent agents each having distinct preferences.
In these contexts, ensuring fairness and stability necessitates clustering approaches that prioritize individual satisfaction of each agent over optimizing a global objective.
One such notion is that of \emph{individual preference stability (IP stability)}, which was recently introduced by~\cite{ahmadi2022individual}.
A dataset's clustering is IP stable when each data point finds its own cluster's average distance to be smaller than the average distance to any other cluster. This essentially means that every individual has a preference for its designated cluster, ensuring stability. The notion is closely related to envy-free clustering.
More formally, given a metric space $(X,d)$ on $n$ points and a partition of $X$ into $k$ disjoint non-empty clusters $(C_1,\ldots,C_k)$,
we say that: (1) a point $p \in C_i$ is $\alpha$-envious of a cluster $C_j$ if $p$'s average distance to the points in its own cluster (i.e., $C_i$) is more than $\alpha$ times larger than its average distance to the points in $C_j$; and, (2) the partition is an $\alpha$-IP stable clustering if no point is $\alpha$-envious of any cluster.

Perhaps one of the most natural way to find an $\alpha$-IP stable clustering is the following local search procedure: Start with an arbitrary $k$-clustering of $X$, and, as long as there exists some point $p$ that $\alpha$-envies some cluster $C$, choose such a pair $(p,C)$ and swap the point $p$ from its current cluster to the cluster $C$.
The study of local search for IP stable clustering is conceptually interesting for several reasons.
Firstly, its simplicity suggests a high likelihood of practical effectiveness. Secondly, it offers insights into the natural emergence of IP stable clusterings in real-world scenarios. For instance, in situations where agents might gravitate towards another cluster with more similar members, the dynamics resemble the local search approach, thereby leading to the formation of an IP stable clustering.
An example of this dynamic can be seen in social networks, where individuals often cluster with those who resonate with similar viewpoints or passions.

In this paper, we analyze the natural local search algorithm for this problem, as described in Algorithm \ref{:alg:natural-local-search}, and bound its approximation guarantee for general metric.
\begin{algorithm}[ht]
\caption{Local Search Meta-Algorithm.}\label{:alg:natural-local-search}
\KwData{$(X,d)$, $k \leq n =|X|$, and $\alpha \geq 1$.}
\KwResult{an $\alpha$-IP stable $k$-clustering of $X$.}
\textbf{initialize} an arbitrary $k$-clustering $\mathcal{C}$ of $X$. \\
\While{exists $p$ and $C'$ s.t. $\avg(p,C(p) \setminus \{p\}) > \alpha \cdot \avg(p,C')$}{
\textbf{move} $p$ from $C = C(p)$ to the cluster $C'$\\
}
\Return the current clustering
\end{algorithm}

\subsection{Problem Definition and Background}
Given a metric space $(X,d)$ with $|X| = n$, a $k$-clustering of $(X,d)$ is a partition of $X$ into $k$ disjoint non-empty clusters.
Furthermore, for any non-empty subset $S \subseteq X$ and point $p \in X$, we use $\avg(p,S)$ to denote the average distance of $p$ to the points in $S$,
\begin{align*}
 \avg(p,S) \defeq \frac{1}{|S|}\sum_{p' \in S}d(p,p').   
\end{align*}
\begin{definition}[IP Stability \citep{ahmadi2022individual}]\label{:def:IP-stable-clustering}
Given a metric space $(X,d)$ with $|X| = n$. A clustering $\cC$ of $X$ is said to be $\alpha$-IP stable if for every point $p \in X$ and every cluster $C' \in \cC$ where $C' \neq C(p)$, either $C(p) = \{p\}$ or $\avg(p,C(p) \setminus \{p\}) \leq \alpha \cdot \avg(p,C')$.
\end{definition}

\citet{ahmadi2022individual} showed that, given a metric space $(X,d)$ and a desired number of clusters $k$, it is not always possible to find a $1$-IP stable $k$-clustering of $X$, as such a clustering may not exists. Furthermore, it is NP-hard to decide whether a $1$-IP clustering exists for a given $(X,d)$ and $k$.
This NP-hardness has motivated the study of $\alpha$-IP stability for $\alpha > 1$.~\cite{ahmadi2022individual} proposed an $O(n)$-IP stable clustering for general metrics, by applying the standard $O(n)$-distortion embedding to one-dimensional Euclidean space~\citep[Theorem 2.1]{matouvsek1990bi}. They also gave a bicriteria approximation that discards an $\varepsilon$-fraction of the input and outputs a $O\left(\frac{\log^2 n}{\varepsilon}\right)$-IP stable clustering for the remaining. Very recently,~\cite{aamand2023constant}, via a clever ball carving technique, designed an algorithm that outputs an $O(1)$-IP stable clustering in all metric instance.  

\subsection{Our Results}
Our main contribution is the analysis of the natural and practical local search algorithm for IP-stable clustering.
\paragraph{Simple Local Search.} More precisely, we show that there exists $\alpha = O(\log n)$ such that the natural local search meta-algorithm (Algorithm \ref{:alg:natural-local-search}) always terminates when used with this $\alpha$.
\begin{restatable}{theorem}{NaturalLocalSearchTerminates}\label{:thm:natural-local-search-terminates}
    For every $n \geq 3$, the natural local search algorithm (Algorithm \ref{:alg:natural-local-search}) always terminates with parameter $\alpha \geq 2\log n$ on any metric space with $n$ points, regardless of the value of $k$.
\end{restatable}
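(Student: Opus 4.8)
The plan is to prove termination by producing a potential function $\Phi$ on the (finite) set of $k$-clusterings of $X$ that strictly decreases with every move made in the \textbf{while}-loop. Since a strictly monotone quantity never returns to a previous value, no clustering can recur, and a run visiting only finitely many distinct clusterings must halt.

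First I would inspect the obvious candidate $\Phi_0(\mathcal{C}) = \sum_{C \in \mathcal{C}} \sum_{\{p,q\} \subseteq C} d(p,q)$, the total within-cluster pairwise distance. Moving $p$ from $C$ to $C'$ changes $\Phi_0$ by exactly
\[
  |C'| \cdot \avg(p, C') - (|C| - 1) \cdot \avg(p, C \setminus \{p\}),
\]
which the move condition $\avg(p, C \setminus \{p\}) > \alpha \cdot \avg(p, C')$ renders negative as soon as $|C'| \le \alpha\,(|C| - 1)$. So $\Phi_0$ alone already forbids every move except a jump into a cluster more than $\alpha$ times larger than the one abandoned — and it is exactly such jumps that force the $\log n$ blow-up. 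I would therefore augment $\Phi_0$ by a cluster-size--sensitive term (added, or multiplied and tracked in logarithmic scale), tuned so that deserting a tiny cluster for a crowded one is penalized enough to absorb the growth of $\Phi_0$. To see where the constant comes from, track for each point $p$ a quantity like $\avg(p, C(p) \setminus \{p\})$ (or its logarithm): a move scales the \emph{mover's} quantity by a factor below $1/\alpha$, whereas, by the triangle inequality, shrinking a cluster by one point can scale a surviving member's quantity by only a factor of the form $\tfrac{j}{j-1}$ tied to the cluster's size, and $\prod_{j=2}^{n} \tfrac{j}{j-1} = n$ telescopes to give the $\log n$; the factor $2$ reflects that every move disturbs two clusters, the one vacated and the one entered.

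The steps, in order: (i) fix $\Phi$ and write its change under one move as three contributions — that of $p$, that of the remaining members of $C$, and that of the members of $C'$; (ii) use the move condition to lower-bound the saving from the first contribution, which supplies a factor-$\alpha$ (equivalently $\log \alpha$) worth of headroom; (iii) upper-bound the two bystander contributions using only the triangle inequality and the sizes of $C$ and $C'$, summing the per-size losses telescopically; (iv) check by direct calculation that the headroom from (ii) outweighs the loss from (iii) precisely when $\alpha \ge 2 \log n$ (a few small values of $n$ may need to be verified by hand if the generic estimate is tight there); (v) conclude from finiteness of the space of $k$-clusterings.

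The main obstacle I anticipate is step (iii) for the cluster that $p$ \emph{enters}. The mover's own average distance to its cluster provably drops by a factor $> \alpha$, but a point already in $C'$ can see its average distance to the cluster \emph{rise} when $p$ is inserted, and a crude triangle-inequality bound on that rise can be large — for instance when that point sits very close to its current neighbours, so its baseline average is tiny. Showing that these rises, accumulated over all moves, still telescope to an $O(\log n)$ total, small enough to be dominated by the $\log \alpha$ gain together with the $\Phi_0$ term, is the technical heart of the argument; pinning the leading constant to exactly $2$ rather than a larger absolute constant forces a careful, non-asymptotic accounting, and one probably has to be slightly careful about near-coincident points so that the logarithms stay well defined.
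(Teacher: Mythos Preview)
Your high-level plan---a potential on $k$-clusterings that strictly drops on every move---is exactly the paper's strategy, and your diagnosis of why the naive $\Phi_0(\mathcal C)=\sum_{C}\sum_{\{p,q\}\subseteq C}d(p,q)$ fails (moves into a much larger cluster) is correct. But the proposal stops short of a proof: you never commit to a specific potential, and the analysis outline in steps (i)--(iv) leaves the decisive step unresolved. In particular, your ``main obstacle'' paragraph talks about accumulating bystander rises \emph{over all moves} and telescoping them; that is an amortized picture, not a per-move strict decrease, and it does not by itself yield termination.

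The paper's argument is cleaner and sidesteps the obstacle you flag. The potential is the single explicit choice
\[
\Phi(S)\;=\;\frac{\log|S|}{|S|}\sum_{p,q\in S}d(p,q)\;=\;\log|S|\cdot\sum_{p\in S}\avg(p,S),
\]
and the entire proof reduces to the two-sided ``sandwich''
\[
\avg(p,S)\;\le\;\Phi(S\cup\{p\})-\Phi(S)\;\le\;2\log n\cdot\avg(p,S)
\]
for every nonempty $S$ and $p\notin S$. From this, one move (remove $p$ from $C$, add to $C'$) changes $\sum_C\Phi(C)$ by at most $2\log n\cdot\avg(p,C')-\avg(p,C\setminus\{p\})<0$ whenever $\alpha\ge 2\log n$. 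No per-point tracking, no telescoping across moves, no separate ``bystander'' bookkeeping is needed: the sandwich already absorbs all of that.

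The technical heart you are missing is the lemma that makes the sandwich work: for any nonempty $S$ and any $p$,
\[
\frac{1}{|S|}\sum_{p'\in S}\avg(p',S)\;\le\;2\,\avg(p,S),
\]
which is a one-line consequence of the triangle inequality. Writing $x_S=\sum_{p_1,p_2\in S}d(p_1,p_2)$ and $y_S=|S|$, one has $\Phi(S)=\tfrac{\log y_S}{y_S}x_S$; adding $p$ increases $x_S$ by exactly $2y_S\,\avg(p,S)$ and $y_S$ by $1$, and the lemma says $x_S\le 2y_S^2\,\avg(p,S)$, which is precisely what bounds the ``derivative in $y$'' contribution. The factor $2$ in $2\log n$ comes from this lemma (and the $\log n$ from $\log|S|\le\log n$), not from a product $\prod_j \tfrac{j}{j-1}$ over cluster sizes. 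Your worry about a bystander in $C'$ whose baseline average is tiny is handled automatically: the sandwich is stated in terms of $\avg(p,S)$, the mover's average, and never needs the bystanders' averages individually.

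In short: right framework, but the proof needs the concrete potential above and the inequality $\tfrac{1}{|S|^2}\sum_{p_1,p_2\in S}d(p_1,p_2)\le 2\,\avg(p,S)$; once you have those, steps (iii)--(iv) collapse to a two-line computation and there is no remaining obstacle.
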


Even though~\cref{:thm:natural-local-search-terminates} guarantees that the local search algorithm successfully finds an $O(\log n)$-IP stable clustering, it does not guarantee polynomial time running time. In fact, the worst-case runtime of this simple local search is $n^{\Theta(k)}$. This motivates our second main result, which is a fast algorithm that outputs a $O(\log n)$-IP stable $k$-clustering of any metric space.

\paragraph{Fast Algorithm.}
Building on the standard local search algorithm, we construct a $\Tilde{O}(nk)$-time algorithm that finds a $O(\log n)$-IP stable $k$-clustering.
Specifically, in the practical settings where $k = \mathrm{polylog}(n)$, the running time of this algorithm is near-linear in the number of points.
We remark that the running time is sub-linear in the input size, since the input consists of the pairwise distances of points. Our algorithm improves upon the runtime of~\citep{aamand2023constant} by a factor of $n$.

\begin{restatable}{theorem}{FastAlgorithmExists}\label{:thm:fast-algorithm-exists}
    There exists a Monte-Carlo randomized algorithm that, given a metric space $(X,d)$ with $|X|=n$ and a desired number of clusters $k \leq n$, finds a $O(\log n)$-IP stable $k$-clustering of $(X,d)$ in time $\Tilde{O}(nk)$. The error probability can be made to be an arbitrarily small polynomial at the cost of only a $\polylog(n)$ blowup in the running time.
\end{restatable}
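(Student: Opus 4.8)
The plan is to turn the (super-polynomial) local search algorithm of \cref{:thm:natural-local-search-terminates} into a fast procedure by (a) replacing the exact move rule with an approximate one that still guarantees termination in a polynomial number of moves, and (b) designing a data structure so that each iteration can be executed and the next violating pair found in $\tilde O(k)$ amortized time rather than $\tilde O(nk)$. First I would argue that each swap should be required to yield a \emph{substantial} improvement: only move $p$ from $C(p)$ to $C'$ when $\avg(p, C(p)\setminus\{p\})$ exceeds, say, $\tfrac{3}{2}\alpha\cdot\avg(p,C')$ (so the output is $\tfrac32\alpha=O(\log n)$-IP stable), and choose $\alpha$ with a constant of slack over $2\log n$. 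One then needs a potential function — the natural candidate is $\Phi = \sum_{i}\sum_{p\in C_i}\avg(p,C_i\setminus\{p\})$, i.e.\ (twice) the sum over clusters of the average intra-cluster distance, or equivalently a normalized version of the 1-median/sum-of-pairs objective — that decreases by a multiplicative $(1-\Omega(1/\poly))$ factor (or an additive amount bounded below in terms of the aspect ratio) at every substantial move. Combined with the observation that $\Phi$ lies in a bounded range, this caps the number of moves at $\poly(n)$; a careful choice of the improvement threshold and a geometric-series argument should in fact bring the move count down to $\tilde O(n)$, which is what the target running time requires.

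The second and more delicate ingredient is the per-iteration bookkeeping. Maintaining $\avg(p, C_i)$ for \emph{all} $n$ points and \emph{all} $k$ clusters is $nk$ numbers, and a single swap of one point changes $n$ of them (every point's average to the two affected clusters), so a naive update is $\Theta(n)$ per move and $\Theta(n^2)$ overall — too slow. The key idea is that we do not need exact averages: it suffices to know, for each point $p$, an approximate value of $\min_{C'\neq C(p)}\avg(p,C')$ and of $\avg(p,C(p)\setminus\{p\})$ up to a $(1+o(1))$ factor, which we can fold into the constant slack in $\alpha$. I would use random sampling: for each cluster $C_i$, maintain a uniform random sample $S_i\subseteq C_i$ of size $\tilde O(1)$ (refreshed when $|C_i|$ changes by a constant factor, or maintained under insertions/deletions by reservoir-style sampling), and estimate $\avg(p,C_i)\approx\avg(p,S_i)$. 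A Chernoff/Hoeffding bound on sampled distances — using that all distances to points in a fixed cluster are within a bounded multiplicative range once we've localized, or more robustly using a median-of-means estimator to handle heavy-tailed distance distributions — gives a $(1\pm\varepsilon)$ multiplicative estimate with high probability after $\tilde O(1/\varepsilon^2)$ samples. Evaluating all these estimates for a fixed $p$ costs $\tilde O(k)$, and after a swap only $O(1)$ samples change, so we can lazily refresh affected estimates. Finding a violating point then reduces to maintaining a priority queue keyed by the ratio (estimate of own-cluster average)/(estimate of best other-cluster average); each swap triggers $\tilde O(k)$ priority-queue updates for the two clusters whose membership changed, plus updates for points whose sampled representatives moved.

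Putting the pieces together: initialize with an arbitrary clustering, build the samples and the priority queue in $\tilde O(nk)$ time, then repeatedly pop the most-violating pair and perform the swap, spending $\tilde O(k)$ per swap, over $\tilde O(n)$ swaps, for a total of $\tilde O(nk)$; the output is $O(\log n)$-IP stable because every estimate used in the stopping criterion is within a $(1+o(1))$ factor of the truth, which we absorb into a slightly larger constant in front of $\log n$. The error probability is driven by the sampling concentration bounds, and taking sample sizes $\Theta(\log^c n)$ for a suitable $c$ (rather than $\Theta(1)$) makes the failure probability an arbitrarily small polynomial at the cost of only a $\polylog(n)$ factor, as claimed. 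The main obstacle I anticipate is proving the polynomial (ideally $\tilde O(n)$) bound on the number of moves: the clean multiplicative-decrease argument works when the improvement threshold is a constant factor, but one must verify that a single swap of a single point actually decreases the global potential $\Phi$ — moving $p$ out of $C_i$ helps the $|C_i|-1$ other members of $C_i$ but the term for $p$ itself is replaced by $p$'s (small) average to $C'\cup\{p\}$, and one has to check the arithmetic of how $\avg$ terms for the \emph{receiving} cluster's old members change — and then to turn a multiplicative decrease into a move count one needs both an upper bound on the initial $\Phi$ and a lower bound on the final nonzero $\Phi$ in terms of $n$ and the minimum pairwise distance, or else a smoothing/rescaling argument that removes the dependence on the aspect ratio. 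Reconciling this potential-based move bound with the approximate (sampled) move rule — so that a move which looks substantial under the estimates is genuinely substantial under the true distances, with high probability simultaneously over all $\tilde O(n)$ moves — is the crux, and I would handle it by a union bound over moves after fixing fresh randomness (or re-randomizing samples) at each step.
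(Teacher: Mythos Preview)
Your proposal has a genuine gap at exactly the point you flag as ``the main obstacle'': bounding the number of moves. The hoped-for multiplicative decrease of $\Phi$ under swaps alone does not hold. A single swap decreases $\Phi$ by $\Theta(\avg(p,C(p)\setminus\{p\}))$, and this quantity can be an arbitrarily small fraction of $\Phi(\cC)$ --- there are instances in which every envious point $p$ has $\avg(p,C(p))$ tiny while $\Phi(\cC)$ is dominated by large, far-away clusters containing only non-envious points. No choice of improvement threshold fixes this; the paper in fact notes that the pure swap-based local search has worst-case runtime $n^{\Theta(k)}$. Your fallback of bounding $\Phi_{\mathrm{init}}/\Phi_{\mathrm{final}}$ via the minimum pairwise distance introduces an aspect-ratio dependence that cannot be removed by ``smoothing/rescaling'' without further ideas.

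The paper's missing ingredient is a second kind of move: whenever the would-be swap decrease $\avg(p,C(p)\setminus\{p\})$ is below $\tilde\Theta(\Phi(\cC)/(nk))$, instead \emph{merge} $C(p)$ with $C'$ (cheap, since $p$ close to both forces them close to each other via the triangle inequality) and \emph{split} the cluster of largest potential into two random halves (which drops its potential by a $(1-\Omega(1/\log n))$ factor). This guarantees every iteration reduces $\Phi(\cC)$ by a $(1-\tilde\Omega(1/(nk)))$ factor (swap) or a $(1-\tilde\Omega(1/k))$ factor (merge-and-split). Together with a $k$-center initialization --- which makes $\Phi_{\mathrm{init}} \le \poly(n)\cdot\Phi_{\mathrm{final}}$ independently of the aspect ratio --- this yields $\tilde O(nk)$ swaps and $\tilde O(k)$ merge-and-splits. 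A secondary issue: uniform sampling of cluster members does \emph{not} give a $(1\pm\epsilon)$-multiplicative estimate of $\avg(p,C)$ in a general metric, because the distances $\{d(p,q)\}_{q\in C}$ can be heavy-tailed and median-of-means only controls additive error; the paper uses importance sampling anchored at an approximately central point of $C$ to obtain multiplicative guarantees.
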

Additionally, if the points in $X$ are located in small number $m \ll n$ of distinct \emph{locations}, then the algorithm from \cref{:thm:fast-algorithm-exists} can be easily modified so that the running time will scale with $m$ rather than with $n$. (i.e. the running time becomes $\Tilde{O}(mk)$, but the $\Tilde{O}$ notation still hides $\mathrm{polylog}(n)$ factors.)
This modification is performed by treating each location as if it is one \quotes{weighted} point, which means that the modified algorithm also has the ``consistency'' property that points in the same location are guaranteed to belong to the same cluster of the output $k$-clustering. We remark that unlike our result, the runtime of the algorithm of~\cite{aamand2023constant} scale with the total weight of points in the input and fails to satisfy the natural consistency property.

\paragraph{Bounding Approximation Factor.}
While the described results so far, as well as the guarantees of prior works, only provide an absolute IP-stability bound,~\cite{aamand2023constant} posed an open question whether it is possible to return a clustering whose IP-stability guarantee is within a constant factor of the the best achievable IP-stability guarantee for the given instance. In particular, if a given metric space $(X,d)$ admits an $\alpha^*$-IP stable clustering with $\alpha^* = o(1)$, it is possible to find a $O(\alpha^*)$-IP stable clustering of the instance. The problem is closely related to a notion of stability for clustering studied in a line of work by~\citep{daniely2012clustering,balcan2013clustering}. However, their guarantee requires a lower bound of $\beta n$ on the size of the minimum cluster in the underlying solution and the runtime of their algorithm exponentially depends on $\frac{n}{\beta} > k$. Here, we answer this question too. 

\begin{theorem} \label{:thm:when-there-is-a-very-good-clustering}
    There exists a deterministic algorithm that, given a metric space $(X,d)$ with $|X|=n$ and a desired number of clusters $k$, runs in polynomial time and returns a $k$-clustering $\cC$ with the guarantee that, if there exists a $\alpha^*$-IP stable $k$-clustering of $X$ for $\alpha^* < 0.001$, then $\cC$ is $(3\alpha^*)$-IP stable.
\end{theorem}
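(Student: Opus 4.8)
The plan is to show that when $\alpha^* < 0.001$, any two $\alpha^*$-IP stable $k$-clusterings must be ``essentially the same'' in a strong sense, so that we can reconstruct a good clustering by a greedy/agglomerative procedure that does not need to know the optimal solution. The intuition: if $\alpha^*$ is tiny, then for every point $p$, the average distance to its own cluster is tiny compared to the average distance to any other cluster. This should force the clusters to be ``well-separated'' and the within-cluster distances to be small relative to across-cluster distances, making the ground-truth partition detectable from the metric alone. First I would prove a structural lemma of roughly the following form: in any $\alpha^*$-IP stable clustering with $\alpha^*$ a sufficiently small constant, for any two points $p, q$ in the same cluster $C$ and any point $r \notin C$, we have $d(p,q) \le c_1 \cdot \alpha^* \cdot \avg(p, C \setminus \{p\})$ is too weak, so instead I would aim for: $\max_{p,q \in C} d(p,q) \ll \min_{p \in C, r \notin C} d(p,r)$ up to a constant depending on $\alpha^*$. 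Concretely, using the IP stability inequality $\avg(p, C \setminus \{p\}) \le \alpha^* \avg(p, C')$ for the cluster $C'$ containing $r$, together with the triangle inequality to relate $\avg(p,C')$ to $d(p,r)$ and to the diameter of $C'$, one should extract a separation ratio.

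Next I would use this structural lemma to design the algorithm. The natural choice is single-linkage clustering: compute a minimum spanning tree of $(X,d)$ (or sort all pairwise distances), and then cut the $k-1$ longest edges to obtain $k$ components — or more robustly, use a threshold on distances. The structural lemma should guarantee that all within-(true-)cluster distances lie below some threshold $\tau$ and all across-cluster distances lie above $\tau$, so single-linkage with the right cut recovers the true partition $\cC^*$ exactly. If exact recovery holds, then outputting $\cC^*$ certainly gives a $(3\alpha^*)$-IP stable (indeed $\alpha^*$-IP stable) clustering, which is stronger than needed; the factor $3$ and the ``$3\alpha^*$'' in the statement is the slack that suggests the recovery need not be exact — perhaps a few points get misassigned, and one argues that moving at most a controlled set of points changes each relevant average distance by at most a factor $3$. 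So the algorithm is: run single-linkage to get a candidate partition, then possibly do one cleanup pass; the analysis shows the output is $(3\alpha^*)$-IP stable whenever a $0.001$-IP stable solution exists. Crucially, the algorithm must be well-defined and run in polynomial time even when no good clustering exists (it just returns whatever single-linkage gives), and the guarantee is conditional — this matches the theorem statement.

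The main obstacle I anticipate is getting the constants to work out: I need the threshold $0.001$ on $\alpha^*$ to be small enough that the separation lemma holds with a clean margin, and I need to track how averaging (as opposed to max/min distances) interacts with the triangle inequality. The subtle point is that $\avg(p, C')$ can be much larger than $\min_{r \in C'} d(p,r)$, so the IP inequality gives me an upper bound on $\avg(p, C\setminus\{p\})$ in terms of an average over $C'$, and I must convert this into a usable bound — likely by picking $C'$ to be the cluster minimizing $\avg(p,C')$ and arguing that this minimum is comparable to the distance from $p$ to the nearest point outside $C$. A second obstacle is ensuring the recovered partition has exactly $k$ parts: if the true optimal clustering has some structure where single-linkage would naturally produce fewer or more well-separated groups, I need to argue the cut at $k-1$ edges is consistent with $\cC^*$; this should follow from $\cC^*$ itself having exactly $k$ nonempty clusters and the separation margin being uniform. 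I would handle this by showing the $k-1$ largest ``gaps'' in the sorted edge set of the MST align exactly with the inter-cluster edges of $\cC^*$. Finally, I would double-check that the cleanup pass (if needed) is necessary at all — it may be that exact recovery always holds under $\alpha^* < 0.001$, in which case the ``$3\alpha^*$'' is simply a conservative statement and the proof is cleaner than the theorem suggests.
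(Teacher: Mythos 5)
Your structural intuition is aligned with the paper: the key step there is to define
\[
\beta(C) \defeq \frac{\max_{p,p'\in C} d(p,p')}{\min_{p\in C,\, p'\in X\setminus C} d(p,p')},
\]
show (via the Daniely et al.\ separation lemma) that $\alpha^*$-IP stability with $\alpha^* < 0.001$ forces $\beta(\cC^*) \le 3\alpha^* < 1$, and conversely that any clustering with $\beta(\cC)<1$ is $\beta(\cC)$-IP stable. The factor $3$ comes from this $\alpha^*\mapsto\beta$ conversion, not from tolerating a few misassigned points as you suggest; the paper's algorithm in fact returns a clustering that is exactly optimal for the $\beta$ objective.

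However, the algorithmic step you propose --- cut the $k-1$ longest edges of the MST / stop single linkage at $k$ components --- is wrong, and this is a real gap. Consider $A=\{a_1,a_2,a_3\}$ with all pairwise distances $1$, $B=\{b_1,b_2\}$ at distance $0.5$, every $A$--$B$ distance equal to $10^4$, and $k=3$. The clustering $\{A,\{b_1\},\{b_2\}\}$ is $10^{-4}$-IP stable, so a $3\alpha^*$-IP stable output must be roughly this good. But the two longest MST edges are the one $A$--$B$ edge and a length-$1$ edge inside $A$, so your cut produces something like $\{\{a_1\},\{a_2,a_3\},B\}$, which is only $1$-IP stable (e.g.\ $a_2$'s average to $\{a_3\}$ equals its average to $\{a_1\}$). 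The problem is that the $k{-}1$ longest edges need not be the inter-cluster edges when clusters live at very different scales, and your ``cleanup pass'' and ``exact recovery probably always holds'' remarks do not repair this. The paper avoids the issue by building the full divisive MST dendrogram (\cref{:alg:create-tree}), proving that \emph{every} clustering with $\beta<1$ is \emph{induced} by some set of $k$ nodes of that tree (\cref{:cl:best-clustering-is-induced-by-tree}), and then running a dynamic program over the tree (\cref{:lem:dynamic-programming-to-find-best-clustering-induced-by-tree}) to select the $k$ nodes minimizing $\beta$ --- which in the example above would choose $\{A,\{b_1\},\{b_2\}\}$ rather than the greedy top-of-dendrogram cut. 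Replacing the $k{-}1$-cut rule with this ``dendrogram $+$ DP'' step is exactly the missing ingredient in your proposal.
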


In particular, together with the $O(1)$-IP stable guarantee of the recent work of~\cite{aamand2023constant}, the above theorem implies $O(1)$-approximate IP stable clustering; there exists an algorithm that on every metric instance $(X,d)$ returns an $O(\alpha^*)$-IP stable clustering where $\alpha^*$ is the smallest value for which $(X,d)$ admits a $\alpha^*$-IP stable clustering. Note that $\alpha^*$ can be any value, either $\ge 1$ or $<1$. 

\paragraph{Other IP Stability Functions.}
Moreover, we study the IP-stability with respect to functions other than average.
\cite{aamand2023constant} proposed the general notion of $(\alpha, f)$-IP stability (or $f$-IP stability for brevity) as follows.
\begin{definition}
Given a metric space $(X,d)$ and a function $f:X \times 2^{X} \to \bbR_{\geq 0}$, a clustering $\cC$ of $(X,d)$ is $(\alpha, f)$-IP stable if, for every point $p \in X$, either $C(p)=\{p\}$ or for every cluster $C' \neq C(p)$ in $\cC$, $f(p,C(p)) \leq \alpha \cdot f(p,C')$. 
\end{definition}
We show that our analysis of the natural local search for $\avg$-IP stability, and our fast adaptation of the standard local search, can be extended to other special cases of $f$-IP stability through careful modification.
We use this framework in order to prove that the natural local search for $\median$-IP stability outputs an $O(1)$-approximate $\median$-IP stable clustering, and can be modified to run in polynomial time. \begin{theorem} \label{:thm:constant-median-IP}
For every metric space $(X,d)$, every non-empty subset $S \subseteq X$, and every point $p \in X$, let $\mathrm{median}(p,S)$ denote the $\ceil{|S|/2}{}$th smallest of the $|S|$ values $\{d(p,p')\}_{p' \in S}$.
Then, there exists a deterministic algorithm running in polynomial time that for a desired number of clusters $2 \leq k \leq |X|$, computes a $(O(1), \median)$-IP stable $k$-clustering of $X$.
\end{theorem}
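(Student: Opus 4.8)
The plan is to transfer the local-search framework to the $\median$ objective. We run the meta-algorithm of Algorithm~\ref{:alg:natural-local-search} with $f=\median$ (so that $\median(p,S)$ is the $\ceil{|S|/2}$-th smallest of the distances from $p$ to the points of $S$) and a suitably large absolute constant $\alpha$: starting from an arbitrary $k$-clustering, as long as some point $p$ with $|C(p)|\ge 2$ satisfies $\median(p,C(p))>\alpha\cdot\median(p,C')$ for some cluster $C'\neq C(p)$, move $p$ into $C'$; among all such \quotes{envious} pairs we always select one according to a fixed rule (the most natural being: pick $p$ maximizing $\median(p,C(p))$, ties broken by index). Since moves occur only out of clusters of size $\ge 2$ and into nonempty clusters, exactly $k$ nonempty clusters are maintained, so the output is a genuine $k$-clustering, and the case $k=1$ is vacuous so we may take $k\ge 2$. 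The theorem then reduces to two claims: (i) this process terminates, whence its output is $(\alpha,\median)$-IP stable by definition; and (ii) it can be implemented in deterministic polynomial time.

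The starting point for (i) is a handful of elementary monotonicity facts about order statistics. Deleting a point from a set $S$ weakly decreases $\median(p,S)$ for every $p$; inserting a point into $S$ weakly increases $\median(p,S)$ when $|S|$ is even but can only weakly decrease it when $|S|$ is odd; and, because $d(p,p)=0$, inserting $p$ itself into any set $C'$ gives $\median(p,C'\cup\{p\})\le\median(p,C')$. Applied to a single move of $p$ from $C$ to $C'$, these yield: the moving point's median-distance strictly drops, from $\median(p,C)$ to at most $\median(p,C')<\median(p,C)/\alpha$; the median-distance of every point that stays in $C$ weakly decreases; and the only median-distances that can increase are those of points in $C'$, and only if $|C'|$ is even, each by at most the gap between two consecutive order statistics of that point's distances within $C'$, a quantity the triangle inequality bounds in terms of $d(q',p)$ and $\median(p,C')$ for the roughly half of $C'$ lying near $p$.

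The crux is to package these one-sided changes into a potential $\Phi(\mathcal{C})$ that strictly decreases at every move; since there are only finitely many $k$-clusterings, any such $\Phi$ forces termination. The obvious candidate $\sum_{C}\sum_{p\in C}\median(p,C)$ fails: moving $p$ into a cluster made of one group near $p$ and another group far away can simultaneously push up the medians of the entire far group by a large amount, swamping the gain from $p$. Beating this pathology is the main obstacle, and the plan is to combine three ingredients: taking $\alpha$ a large enough constant so the guaranteed drop $(1-1/\alpha)\median(p,C)$ has slack; exploiting the parity phenomenon above (a \quotes{bad} move turns an even cluster odd, after which further insertions into it are harmless) to amortize increases; and using the move-selection rule so that the destination's points were not more envious than $p$, which controls their increases against $\median(p,C)$. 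Concretely I expect $\Phi$ to be either a size/parity-weighted sum of per-point medians or a lexicographic order on the sorted-decreasing tuple of per-point median-distances (or of their $\log_2$-buckets), with the one-sided structure delivering the key lemma \quotes{every move decreases $\Phi$}. The improvement to $O(1)$ here — versus $O(\log n)$ for $\avg$ — is precisely the payoff of the robustness of a single order statistic, which moves only by a bounded \emph{gap} per insertion, whereas an average can be dragged arbitrarily by one outlier.

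For (ii) we adapt the data structures and the move-selection/batching strategy behind Theorem~\ref{:thm:fast-algorithm-exists} to the $\median$ setting: for every point maintain sorted lists of its distances to each current cluster, so that an envious pair and the chosen move are found, and the affected structures updated, in $\poly(n)$ time per move; and bound the number of moves by $\poly(n)$ via the same mechanism that makes the $\avg$ local search efficient (showing the designed $\Phi$ decreases by a multiplicative $(1-\Omega(1))$ factor per move, equivalently that $\log\Phi$ drops by a constant). Randomization enters Theorem~\ref{:thm:fast-algorithm-exists} only to reach near-linear time, so for a merely polynomial bound the $\median$ implementation stays deterministic. Combining (i) and (ii) produces a deterministic polynomial-time algorithm outputting an $(O(1),\median)$-IP stable $k$-clustering, as required. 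I expect essentially all the difficulty to sit in the potential design of the third paragraph; the implementation and the constant-factor sharpening are then consequences of the order-statistic monotonicity facts.
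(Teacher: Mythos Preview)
Your proposal leaves the central step unresolved: you acknowledge that \quotes{essentially all the difficulty} lies in designing the potential $\Phi$, and then only sketch candidates (\quotes{size/parity-weighted sum of per-point medians} or a lexicographic order on sorted median-distances) without verifying that any of them actually works. The parity observation is correct but does not by itself yield a potential that drops at every swap; in particular, a single insertion into an even cluster can raise the medians of $\Theta(|C'|)$ points by amounts comparable to $\median(p,C')$ each, so a weighted sum can still increase, and a lexicographic argument must control every coordinate that rises, not just the largest --- you have not shown how the move-selection rule achieves this. As written, the proposal is a plan to search for a potential, not a proof that one exists.

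The paper's route is quite different from the directions you are exploring, which suggests those directions may not close easily. It passes through the auxiliary function $\sqrt{\median}$: it first builds a $(O(1),\sqrt{\median})$-IP stable clustering and then observes that any such clustering is automatically $(O(1),\median)$-IP stable. The potential used for $\sqrt{\median}$ is not a sum of per-point medians at all; it is $\Phi_{\sqrt{\median}}(C)=\Theta(1)\cdot\max_{\Pi}\sum_{(p,q)\in\Pi}\sqrt{d(p,q)}$, the length of a \emph{maximum} TSP tour on $C$ under edge weights $\sqrt{d}$. The sandwich inequality $\sqrt{\median(p,S)}\le \Phi(S\cup\{p\})-\Phi(S)\le O(1)\cdot\sqrt{\median(p,S)}$ is proved by rerouting arguments on the tour (using that more than half the points of $S$ lie at distance $\ge\median(p,S)$ from $p$, so some edge of the optimal tour has both endpoints in that set). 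Polynomial time comes not from the pure natural local search but from the merge-and-split framework of \cref{:sec:basic-merge-and-split-algorithm}, instantiated with this potential: one needs a split procedure, a merge-cost bound, and a good initialization, all of which are supplied for $\Phi_{\sqrt{\median}}$. Your plan to get polynomial time by showing $\Phi$ drops by a constant multiplicative factor per swap is optimistic --- even for $\avg$ the paper could not do this and needed merge-and-split to avoid an $n^{\Theta(k)}$ bound.
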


Furthermore, we analyze $\max$-IP stability, which is the case where $f(p,C)\defeq\max_{p' \in C}d(p,p')$, and show that every metric space admits an exact $\max$-IP stable $k$-clustering for every value of $k$. 
\begin{theorem} \label{:thm:existence-of-max-IP}
For every metric space $(X,d)$ and every $2 \leq k \leq |X|$, there exists a $(1, \max)$-IP stable $k$-clustering of $X$.
\end{theorem}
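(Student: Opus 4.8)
The plan is to construct a $(1,\max)$-IP stable $k$-clustering directly, using a greedy/agglomerative idea driven by the farthest-point (Gonzalez-type) structure of the metric. First I would reformulate the stability condition: for $f=\max$, a point $p\in C(p)$ is happy exactly when $\max_{q\in C(p)}d(p,q)\le \max_{q\in C'}d(p,q)$ for every other cluster $C'$; equivalently, the radius of $p$'s own cluster as seen from $p$ is no larger than the radius of any foreign cluster as seen from $p$. A clean sufficient condition is that clusters are ``far apart'' relative to their diameters: if for every pair of clusters $C_i,C_j$ and every $p\in C_i$ we have $\mathrm{diameter}(C_i)\le d(p,q)$ for some $q\in C_j$, we are done. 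So the real goal is to partition $X$ into $k$ nonempty parts so that each part's diameter is dominated, from the viewpoint of each of its members, by the distance to a well-chosen representative of every other part.

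The key steps, in order. (1) Run farthest-first traversal to pick centers $c_1,\dots,c_k$: $c_1$ arbitrary, and $c_{t+1}=\argmax_{x\in X}\min_{i\le t}d(x,c_i)$; let $r=\min_{i\le t}d(c_{t+1},c_i)$ be the final ``separation radius'' at the moment we have $k$ centers, i.e.\ the covering radius of $\{c_1,\dots,c_k\}$. By the standard Gonzalez argument, every point is within $r$ of some center, and all pairwise center distances are at least $r$. (2) Assign each point to its nearest center, breaking ties arbitrarily; this yields $k$ nonempty clusters $C_1,\dots,C_k$, each of radius at most $r$ around its center, hence $\mathrm{diameter}(C_i)\le 2r$. (3) Verify stability: take $p\in C_i$ and a foreign cluster $C_j$. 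We have $\max_{q\in C_i}d(p,q)\le 2r$ (diameter bound), while $d(p,c_j)\ge d(c_i,c_j)-d(p,c_i)\ge r - r$, which is only $\ge 0$ and too weak. This is the main obstacle: the crude nearest-center assignment does not by itself give the needed $\max_{q\in C_j}d(p,q)\ge \max_{q\in C_i}d(p,q)$, because $p$ could sit near the boundary between $C_i$ and $C_j$.

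To overcome this I would refine step (2)/(3) by a local-repair or charging argument rather than relying on the raw Gonzalez partition. One workable route: process points in decreasing order of their distance to the nearest center and maintain the invariant that within each current cluster, the member realizing that cluster's radius is ``protected'' — formally, whenever moving $p$ into $C_i$ would make $p$ envious of some $C_j$, instead move $p$ (or the current radius-realizing point of $C_i$) to $C_j$, and argue by a monotone potential (e.g.\ the multiset of cluster radii ordered lexicographically, or $\sum_i \mathrm{diameter}(C_i)$ decreasing) that this process terminates; at termination no point is envious, so the clustering is $(1,\max)$-IP stable, and it still has exactly $k$ nonempty parts because the farthest-first centers guarantee each part stays nonempty (the center $c_i$ itself is at distance $\ge r>0$ from every other center and never wants to leave $C_i$). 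An alternative, possibly cleaner, route is to define the clusters by thresholding: recursively peel off the set of points within distance $\rho$ of the current farthest center for a carefully tuned $\rho$ tied to $r$, which forces the diameter of each peeled cluster to be at most the inter-cluster gap. I expect the crux of the writeup to be exactly this termination/feasibility bookkeeping — showing the repair loop cannot cycle and cannot empty a cluster — while the metric inequalities (triangle inequality on $d(p,c_j)$, diameter bounds) are routine. Once termination is established, the $\alpha=1$ bound is immediate from the definition of the stopping condition.
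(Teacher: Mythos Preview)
Your approach diverges from the paper's, and the gap is precisely in the termination argument for the repair loop. You correctly recognize that the raw Gonzalez partition fails and that some kind of local reassignment is needed; in fact that repair loop \emph{is} the natural local search the paper analyzes. But neither of the potentials you propose actually decreases. Take $X=\{a,b,c,x\}$ on the line at positions $0,10,10.01,5$, with $k=2$ and initial clustering $\{a,b,c\},\{x\}$. Point $b$ is $1$-envious of $\{x\}$ since $\max(b,\{a,c\})=10>5=\max(b,\{x\})$, so it moves. Before the move the cluster diameters are $\{10.01,0\}$; after, they are $\{10.01,5\}$. Thus $\sum_i \mathrm{diameter}(C_i)$ strictly increases, and the sorted multiset of diameters increases lexicographically as well. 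So neither potential certifies termination, and without that your argument has no content: you cannot rule out cycling. Your nonemptiness claim also only holds for the \emph{initial} Gonzalez partition (where indeed $\max(c_i,C_i)\le r\le d(c_i,c_j)\le\max(c_i,C_j)$); once the repair loop has moved points into $C_i$ from elsewhere, $c_i$ can acquire a neighbor at distance $>r$ and become envious.

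The paper's fix is a different potential that is insensitive to diameters: order all $\binom{n}{2}$ pairs by distance, largest first, and associate to a clustering $\cC$ the binary string $s_{\cC}$ whose $i$th bit is $1$ iff the $i$th-longest pair lies inside a single cluster. When $p$ leaves $C$ for $C'$ because $\max(p,C\setminus\{p\})>\max(p,C')$, the longest edge whose bit flips is the one realizing $\max(p,C\setminus\{p\})$, and it flips from $1$ to $0$; every edge whose bit flips to $1$ is an edge $(p,q')$ with $q'\in C'$, hence strictly shorter. So $s_{\cC}$ strictly decreases in lexicographic order, the local search terminates, and the terminal clustering is $(1,\max)$-IP stable. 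This is the missing idea in your proposal.
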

This in particular, improves upon the $3$-approximate guarantee of~\cite{aamand2023constant} for max-IP stable $k$-clustering. Additionally, we note that for a related problem of min-IP stable $k$-clustering, exact polynomial time algorithms are known~\citep{aamand2023constant,laber2023optimization}.

\subsection{Our Techniques}
To prove our main result (\cref{:thm:natural-local-search-terminates}), we employ a potential function argument.
Specifically, we show that if there exists a potential function $\Phi:2^{X} \to \bbR_{\geq 0}$ with the property that, for every $S \subset X$ and every $p \in X \setminus S$,
\begin{equation} \label{:eq:potential-function-main-property}
    \avg(p,S) \leq \Phi(S \cup \{p\}) - \Phi(S) \leq \alpha \cdot \avg(p,S),
\end{equation}
then every step of the natural local search, Algorithm~\ref{:alg:natural-local-search}, reduces the sum of potentials of the clusters in the maintained $k$-clustering.
Furthermore, we show that when $\alpha = O(\log n)$, such a potential function exists.
The potential function $\Phi(C)$ that we choose is defined in terms of the average distances of the points $p \in C$ to the cluster $C$; $\sum_{p\in C} \avg(p,C)$.
A key observation in the proof of~\cref{:eq:potential-function-main-property} is~\cref{:lem:average-distance-lemma} which relates the value of $\avg(p,C)$ to the value of $\Phi(C)$.

\paragraph{Faster Algorithm.}
At its core, the algorithm from \cref{:thm:fast-algorithm-exists} is based on the natural local search, but with some modifications that speed it up significantly.
The idea behind the main adaptation is to identify the cases in which a local update does not make large enough progress with respect to the potential function $\Phi$, and use a different type of local step to handle these cases.
This modification by itself reduces the running time to $\Tilde{O}(n^2 k)$.
Shaving off the last factor of $n$ requires a few more modifications. 
Essentially, we exploit the fact that $\avg(p,C)$ values do not often change drastically, enabling the use of calculated values for $\avg(p,C)$ as estimates in multiple forthcoming steps, instead of refreshing these values after every single step.
Furthermore, once these previously computed $\avg(p,C)$ no longer provide good enough estimates for the current values, we show how to quickly recompute these values, employing a statistical technique known as Importance Sampling.

\paragraph{Approximate IP Stability.}
In contrast to our other results, the algorithm for \cref{:thm:when-there-is-a-very-good-clustering} is not a local search algorithm. Instead, it uses dynamic programming, while exploiting the structure that a highly IP-stable clustering imposes on the metric space.
Indeed, there is little hope of using a local search algorithm similar to Algorithm \ref{:alg:natural-local-search} with $\alpha < 1$, since it may get stuck repeatedly moving the same point $p$ between two clusters that are equidistant from $p$.
We remark that, with a more optimized implementation of the algorithm from \cref{:thm:when-there-is-a-very-good-clustering}, a running time of $O(n^2)$ is achievable.

\paragraph{Median-IP Stability.}
The median-IP stable algorithm of \cref{:thm:constant-median-IP} 
is a simplified version of the algorithm from \cref{:thm:fast-algorithm-exists}.
In fact, in the appendix, we present a framework for constructing such algorithms for general IP stability functions, provided that an appropriate potential function exists.
While \cref{:thm:constant-median-IP} only guarantees polynomial running time, we note that an upper-bound of $\Tilde{O}(n^3)$ can be shown with a tighter analysis of the same algorithm, and that an algorithm with a running time of $\Tilde{O}(n^2 k)$ can be achieved if the stability guarantee is relaxed from $\alpha=O(1)$ to $\alpha=O(\log n)$.

\paragraph{Max-IP Stability.}
The result presented in \cref{:thm:existence-of-max-IP} is derived via a natural local search algorithm. Analogously to Algorithm \ref{:alg:natural-local-search}, there is no guarantee that this local search algorithm runs in polynomial time.

\subsection{Related Work}
\paragraph{Clustering Stability.} 
Designing efficient clustering algorithms that consider various notions of stability is a well-studied problem. Notably, the concept of ``average stability''~\citep{balcan2008discriminative} is particularly relevant to our model.
For a comprehensive survey on stability in clustering, refer to~\citep{awasthi2014center}. 
While existing research on clustering stability~\citep{ackerman2009clusterability,awasthi2012center,bilu2012stable, balcan2013clustering, balcan2016clustering,makarychev2016metric} mainly drives the development of faster algorithms by leveraging stability conditions, IP stable clustering explores whether such stability properties can be identified in generic instances, either exactly or approximately.

\paragraph{Individual Fairness in Clustering.} As introduced by~\citep{ahmadi2022individual}, a primary motivation for IP-stability lies in its connection to individual fairness. Individual fairness was initially formulated in the seminal work of~\cite{dwork2012fairness}, aiming to ensure classification problems provided ``similar predictions'' for ``similar points''. In recent years, several notions of individual fairness for clustering have been proposed~\citep{jung2020,anderson2020distributional,brubach2020pairwise, chakrabarti2022new,kar2023feature}. In a line of research by~\cite{jung2020,mahabadi2020,negahbani2021better,vakilian2022improved}, a notion of fairness has been studied, wherein each point expects to find its closest center at a distance comparable to that from its $(n/k)$-th closest neighbor.~\cite{brubach2020pairwise,brubach2021fairness} explored settings where individuals benefit from being clustered together, aiming to discover a clustering where all individuals gain equitable community benefits. Lastly, \cite{anderson2020distributional} introduced a formulation whereby the output is a distribution over centers, and ``similar'' points must have ``similar'' centers distributions. 

Besides, clustering has also been explored in the context of various other fairness notions, including equitable representation in clusters~\citep{chierichetti2017fair,bera2019fair,bercea2019cost,backurs2019scalable,schmidt2019fair,ahmadian2019clustering,dai2022fair}, equitable representation in centers~\citep{krishnaswamy2011matroid,chen2016matroid,krishnaswamy2018constant,kleindessner2019fair,chiplunkar2020solve,jones2020fair,hotegni2023approximation}, proportional fairness~\citep{chen2019proportionally,micha2020proportionally}, and min-max fairness~\citep{abbasi2020fair,ghadiri2020fair,makarychev2021approximation,chlamtavc2022approximating,ghadiri2022constant}.

IP-stability is also relevant to the notion of {\em envy-freeness}. Within conventional fair division problems, such as cake cutting~\citep{robertson1998cake,procaccia2013cake} and rent division~\citep{edward1999rental,gal2016fairest}, envy-freeness requires that each participant should (weakly) favor their own allocation over that of others~\citep{foley1966resource,varian1973equity}. More recently, the concept was studied for the problem of classification~\citep{zafar2017parity,balcan2019envy}.

\paragraph{Hedonic Games in Clustering.} 
Hedonic games represent another game-theoretic approach to studying clustering. This approach has been explored in several studies, including those by~\citep{dreze1980hedonic, bogomolnaia2002stability, elkind2020price,stoica2018strategic}.
Within these frameworks, players strategically form coalitions, aiming to optimize their respective utilities. Our work diverges from this paradigm as we do not treat data points as selfish players. 

\subsection{Paper Outline}
In~\cref{:sec:analysis-of-local-search}, we analyze the correctness of the natural local search (\cref{:thm:natural-local-search-terminates}).
In \cref{:sec:basic-merge-and-split-algorithm}, we present a slower but simpler version of the algorithm from \cref{:thm:fast-algorithm-exists}, achieving a running time of $\Tilde{O}(n^2 k)$.
This simpler variant highlights the most crucial modification that distinguishes the final $\Tilde{O}(nk)$ time algorithm from the natural local search algorithm, without being obscured by the numerous details inherent in the full algorithm.
In the appendix, we show how to shave off the last factor $n$ from the running time, resulting in the fast algorithm from \cref{:thm:fast-algorithm-exists}.
All missing proofs are also in the appendix.

\subsection{Preliminaries}
In all algorithms in this paper, the input metric space $(X,d)$ is given in the form of query access to the distance metric $d$, where each query takes $\Theta(1)$ time.
We always use $n = |X|$ to denote the number of points in the input.
\begin{definition}[average distance]
Given a metric space $(X,d)$, a non-empty subset $S \subseteq X$, and a point $p \in X$, we let $\avg(p,S)$ denote the average distance of $p$ to the points in $S$, $\avg(p,S) \defeq \frac{1}{|S|}\sum_{p' \in S}d(p,p')$.
\end{definition}

The following lemma is a crucial component in our analysis of the natural local search algorithm and the subsequent fast modifications of it.
This lemma is derived in an elementary manner, utilizing the triangle inequality.
\begin{lemma}\label{:lem:average-distance-lemma}
    In every metric space $(X,d)$, for every non-empty subset $S \subseteq X$, and every point $p \in X$,
    \[
     \frac{1}{|S|}\sum_{p' \in S} \avg(p',S) \leq 2 \cdot \avg(p,S)
    \]
\end{lemma}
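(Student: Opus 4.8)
The plan is to prove the inequality by expanding the left-hand side as a double sum over pairs of points in $S$ and using the triangle inequality to bound each term by distances involving the fixed point $p$. Write
\[
 \frac{1}{|S|}\sum_{p' \in S} \avg(p',S) = \frac{1}{|S|^2}\sum_{p' \in S}\sum_{q \in S} d(p',q).
\]
For each ordered pair $(p',q) \in S \times S$, the triangle inequality gives $d(p',q) \leq d(p',p) + d(p,q)$. Summing this bound over all $|S|^2$ ordered pairs, each of the two terms on the right contributes a sum of the form $|S| \sum_{r \in S} d(p,r) = |S|^2 \cdot \avg(p,S)$, since in $\sum_{p',q} d(p',p)$ the index $q$ is free (giving a factor $|S|$) and similarly for the other term. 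Hence the double sum is at most $2|S|^2 \cdot \avg(p,S)$, and dividing by $|S|^2$ yields the claim.

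The steps in order are: (i) rewrite $\frac{1}{|S|}\sum_{p'\in S}\avg(p',S)$ as the normalized double sum $\frac{1}{|S|^2}\sum_{p',q\in S} d(p',q)$ by unfolding the definition of $\avg$; (ii) apply $d(p',q)\le d(p',p)+d(p,q)$ termwise; (iii) observe that both resulting sums equal $|S|\sum_{r\in S}d(p,r)$ by symmetry and by the freeness of one summation index; (iv) recombine and divide by $|S|^2$ to obtain the factor $2$. Each step is a one-line manipulation, so there is essentially no hard part here — the only thing to be careful about is the bookkeeping of the normalization factors ($|S|$ versus $|S|^2$) and the fact that $p$ need not belong to $S$, which is exactly why inserting $p$ via the triangle inequality is legitimate and loses only a factor of $2$. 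I would also note in passing that the constant $2$ is essentially tight (e.g. when $S$ has two diametrically opposed points and $p$ is the midpoint in an appropriate metric), though the statement does not ask for this.
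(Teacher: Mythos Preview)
Your proof is correct and is essentially identical to the paper's argument. The paper merely factors out the triangle-inequality step as a separate observation for two (possibly different) sets $S_1,S_2$, then specializes to $S_1=S_2=S$; your direct version with $S_1=S_2=S$ is the same computation.
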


First, we prove the following observation, and then we show that it implies the lemma.
\begin{observation}\label{:obs:triangle-inequality-for-average-distance}
    For every metric space $(X,d)$, every point $p \in X$, and every two non-empty subsets $S_1,S_2 \subseteq X$,
    \[
     \frac{1}{|S_1|\cdot|S_2|}\sum_{p_1 \in S_1}\sum_{p_2 \in S_2} d(p_1,p_2) \leq \avg(p,S_1) + \avg(p,S_2).
    \]
\end{observation}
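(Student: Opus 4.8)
The plan is to prove the inequality by a single application of the triangle inequality, applied term by term inside the double sum, followed by a routine rearrangement.

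First I would fix an arbitrary point $p\in X$ and arbitrary non-empty $S_1,S_2\subseteq X$, and observe that for every pair $p_1\in S_1$, $p_2\in S_2$ the triangle inequality gives $d(p_1,p_2)\le d(p_1,p)+d(p,p_2)$. Summing this bound over all $|S_1|\cdot|S_2|$ pairs yields
\[
\sum_{p_1\in S_1}\sum_{p_2\in S_2} d(p_1,p_2)\;\le\; |S_2|\sum_{p_1\in S_1} d(p,p_1)\;+\;|S_1|\sum_{p_2\in S_2} d(p,p_2),
\]
where I have used that, in the first term, $d(p_1,p)$ does not depend on $p_2$ (contributing a factor $|S_2|$), and symmetrically for the second term.

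Dividing both sides by $|S_1|\cdot|S_2|$ and recognizing each of the two resulting sums as an average distance, i.e. $\frac{1}{|S_1|}\sum_{p_1\in S_1} d(p,p_1)=\avg(p,S_1)$ and likewise $\frac{1}{|S_2|}\sum_{p_2\in S_2} d(p,p_2)=\avg(p,S_2)$, gives exactly the claimed bound
\[
\frac{1}{|S_1|\cdot|S_2|}\sum_{p_1 \in S_1}\sum_{p_2 \in S_2} d(p_1,p_2) \leq \avg(p,S_1) + \avg(p,S_2).
\]

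There is no real obstacle here: the statement is an averaged form of the triangle inequality, and the only thing to be careful about is the bookkeeping of which index a summand depends on when splitting the double sum into two single sums. (To then deduce \cref{:lem:average-distance-lemma}, one would apply this observation with $S_1=S_2=S$ and a generic $p'\in S$ in place of $p$, average the resulting inequalities over $p'\in S$, and identify the left-hand side with $\frac{1}{|S|}\sum_{p'\in S}\avg(p',S)$ and the right-hand side with $2\avg(p,S)$ — but that is the next step, not part of this proof.)
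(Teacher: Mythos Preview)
Your proof of the observation is correct and is essentially identical to the paper's own argument. One small remark on your parenthetical about \cref{:lem:average-distance-lemma}: no averaging over $p'\in S$ is needed---simply apply the observation once with $S_1=S_2=S$ and the given point $p$, and observe that the left-hand side already equals $\frac{1}{|S|}\sum_{p'\in S}\avg(p',S)$ while the right-hand side is $2\avg(p,S)$.
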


\begin{proof}
For every metric space $(X,d)$, point $p \in X$, and non-empty sets $S_1,S_2 \subseteq X$, by the triangle inequality,
\begin{align*}
    \frac{1}{|S_1|\cdot|S_2|}\sum_{p_1 \in S_1}\sum_{p_2 \in S_2} d(p_1,p_2)
    &\leq \frac{1}{|S_1|\cdot|S_2|}\sum_{p_1 \in S_1}\sum_{p_2 \in S_2} (d(p_1,p) + d(p,p_2))\\
    &= \frac{1}{|S_1|} \sum_{p_1 \in S_1} d(p_1,p) + \frac{1}{|S_2|} \sum_{p_2 \in S_2} d(p,p_2) \\
    &= \avg(p,S_1) + \avg(p,S_2).
\end{align*}
\end{proof}
\begin{proof}[Proof of \cref{:lem:average-distance-lemma}]
By~\cref{:obs:triangle-inequality-for-average-distance},
\[
 \frac{1}{|S|^2} \sum_{p_1 \in S}\sum_{p_2 \in S} d(p_1,p_2) \leq \avg(p,S) + \avg(p,S).
\]
By the definition of $\avg(p_1,S)$ and by the above inequality,
\[
 \frac{1}{|S|}\sum_{p_1 \in S} \avg(p_1,S) = \frac{1}{|S|^2} \sum_{p_1 \in S}\sum_{p_2 \in S} d(p_1,p_2) \leq \avg(p,S) + \avg(p,S).
\]
\end{proof}

\section{Analysis of the Natural Local Search} \label{:sec:analysis-of-local-search}
The goal of this section is to prove \cref{:thm:natural-local-search-terminates}. 
The main technique is a potential function argument.
More precisely, given a metric space $(X,d)$ and a parameter $k$, we define a potential function on the set of all possible $k$-clusterings. Next, we show that each step of the natural local search algorithm reduces the potential of the clustering maintained by the local search. To achieve this, we define a potential function on clusters $\Phi:2^{X} \to \bbR_{\geq 0}$, and say that the potential of a clustering $\cC$ is the sum of the potentials of its clusters $\Phi(\cC) \defeq \sum_{C \in \cC}\Phi(C)$.
Our main important observation is the following:
\begin{observation}\label{:obs:property-of-potential-function}
    Given $\alpha \geq 1$ and a metric space $(X,d)$, if a potential function $\Phi:(2^{X}\setminus\{\emptyset\}) \to \bbR_{\geq 0}$ satisfies that, for every non-empty set $S \subset X$ and every point $p \notin S$,
    \begin{align*}
     \avg(p,S) \leq \Phi(S \cup \{p\}) - \Phi(S) \leq \alpha \cdot \avg(p,S),        
    \end{align*}
    then each step of the natural local search algorithm with parameter $\alpha$ reduces the value of $\sum_{C \in \cC} \Phi(C)$, where $\cC$ is the clustering maintained by the algorithm.
\end{observation}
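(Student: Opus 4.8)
The plan is to verify the claimed monotonicity by a direct local computation: consider a single step of Algorithm~\ref{:alg:natural-local-search} in which a point $p$ is moved from its current cluster $C = C(p)$ to a different cluster $C'$, and compare $\sum_{D \in \cC} \Phi(D)$ before and after the move. Only two clusters change: $C$ loses $p$ (becoming $C \setminus \{p\}$, which is non-empty since the move is triggered by the condition $\avg(p, C \setminus \{p\}) > \alpha \cdot \avg(p,C')$, so in particular $|C| \geq 2$) and $C'$ gains $p$ (becoming $C' \cup \{p\}$). Hence the net change in potential is
\begin{align*}
\Delta \;=\; \bigl[\Phi(C \setminus \{p\}) - \Phi(C)\bigr] + \bigl[\Phi(C' \cup \{p\}) - \Phi(C')\bigr].
\end{align*}

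Next I would bound each bracket using the hypothesized two-sided inequality. Apply the hypothesis with $S = C \setminus \{p\}$ and the point $p \notin S$: this gives $\Phi(C) - \Phi(C \setminus \{p\}) = \Phi(S \cup \{p\}) - \Phi(S) \geq \avg(p, S) = \avg(p, C \setminus \{p\})$, so the first bracket is at most $-\avg(p, C \setminus \{p\})$. Apply the hypothesis with $S = C'$ and the point $p \notin S$ (valid since $p \notin C'$): this gives $\Phi(C' \cup \{p\}) - \Phi(C') \leq \alpha \cdot \avg(p, C')$, so the second bracket is at most $\alpha \cdot \avg(p, C')$. Combining,
\begin{align*}
\Delta \;\leq\; -\avg(p, C \setminus \{p\}) + \alpha \cdot \avg(p, C').
\end{align*}

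Finally I would invoke the fact that the local search step was taken precisely because $p$ is $\alpha$-envious of $C'$, i.e.\ $\avg(p, C \setminus \{p\}) > \alpha \cdot \avg(p, C')$. Substituting this strict inequality yields $\Delta < -\alpha \cdot \avg(p, C') + \alpha \cdot \avg(p, C') = 0$, so the total potential strictly decreases at every step, which is exactly the claim. (One should also note $\avg(p, C \setminus \{p\}) > 0$ here since otherwise the inequality $\avg(p, C\setminus\{p\}) > \alpha \cdot \avg(p,C') \geq 0$ would fail; this guarantees the step is genuinely progress-making.) There is no real obstacle in this observation itself — it is a short telescoping-style argument — and the only mild subtlety is bookkeeping the non-emptiness conditions so that every application of the hypothesis and every reference to $\Phi$ is on a non-empty set. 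The genuine difficulty, deferred to the rest of the section, is constructing an actual $\Phi$ satisfying~\eqref{:eq:potential-function-main-property} with $\alpha = O(\log n)$; that is where \cref{:lem:average-distance-lemma} will be used.
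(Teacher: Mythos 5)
Your proof is correct and follows essentially the same argument as the paper: you bound the drop in $\Phi(C)$ from below by $\avg(p,C\setminus\{p\})$ and the rise in $\Phi(C')$ from above by $\alpha\cdot\avg(p,C')$, then invoke the step condition $\avg(p,C\setminus\{p\}) > \alpha\cdot\avg(p,C')$ to conclude the net change is strictly negative. The only difference is that your write-up is more explicit about which set $S$ and point $p$ you substitute into the hypothesis and about the non-emptiness bookkeeping, which is fine but not a different route.
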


\begin{proof}
The property of the potential function implies that removing a point $p$ from a cluster $C \neq \{p\}$ decreases the potential of that cluster by at least $\avg(p,C \setminus \{p\})$.
This property also implies that adding a point $p$ to a cluster $C'$ increases its potential by at most $\alpha \cdot \avg(p,C')$.
Note that in each step of the local search algorithm a point $p$ is removed from a cluster $C\neq \{p\}$ and is added to a cluster $C' \neq C$ such that $\avg(p,C \setminus \{p\}) > \alpha \cdot \avg(p,C')$.
Hence, in each step, the decrease in the potential of the respective cluster $C$ is larger than the increase in the potential of the respective cluster $C'$, and there is no change in the potential of any other clusters.
This implies that each local update reduces the total potential of the clustering, as desired.
\end{proof}
The main challenge in the proof of~\cref{:thm:natural-local-search-terminates} is to show the existence of a potential function satisfying the required property of~\cref{:obs:property-of-potential-function} with $\alpha = 2\log n$.

\subsection{Potential Function}\label{:sec:potential-function}

In this subsection, we prove that there exists a potential function $\Phi:(2^{X}\setminus\{\emptyset\}) \to \bbR_{\geq 0}$ that satisfies the property from \cref{:obs:property-of-potential-function} for every $\alpha \geq 2\log n$. 
\begin{theorem} \label{:thm:existence-of-potential-function}
    For every metric space $(X,d)$ with $|X| = n \geq 3$, there exists a potential function $\Phi:(2^{X}\setminus\{\emptyset\}) \to \bbR_{\geq 0}$ on subsets of $X$ with the property that, for every non-empty set $S \subset X$ and every point $p \notin S$,
    \[
     \avg(p,S) \leq \Phi(S \cup \{p\}) - \Phi(S) \leq 2\log n \cdot \avg(p,S).
    \]
\end{theorem}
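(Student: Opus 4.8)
The plan is to take the potential function hinted at in the techniques section, namely
\[
\Phi(C) \defeq \sum_{p \in C} \avg(p,C) = \frac{1}{|C|}\sum_{p,q \in C} d(p,q),
\]
possibly rescaled by a factor depending only on $|C|$, and verify the two-sided bound. Write $S$ with $|S| = s$ and let $p \notin S$; set $T = S \cup \{p\}$ so $|T| = s+1$. The quantity $\Phi(T) - \Phi(S)$ expands, using the double-sum form, into a weighted combination of $\avg(p,S)$ and the ``self-interaction'' term $\frac{1}{s}\sum_{q,q'\in S} d(q,q')$. Concretely, $\frac{1}{s+1}\sum_{x,y\in T} d(x,y) - \frac{1}{s}\sum_{x,y\in S} d(x,y) = \frac{2}{s+1}\sum_{q\in S} d(p,q) + \left(\frac{1}{s+1} - \frac{1}{s}\right)\sum_{q,q'\in S} d(q,q')$. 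The first term is $\frac{2s}{s+1}\avg(p,S)$, which is already sandwiched between $\avg(p,S)$ and $2\avg(p,S)$; the trouble is the second term, which is negative and, by \cref{:lem:average-distance-lemma}, can be as large in magnitude as roughly $\frac{2s}{s+1}\avg(p,S)$, so the unmodified $\Phi$ fails the lower bound. This is the main obstacle, and it is why a size-dependent rescaling is needed.

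So instead I would look for weights $w_s > 0$ and define $\Phi(C) \defeq w_{|C|}\cdot \frac{1}{|C|}\sum_{x,y\in C} d(x,y)$ (equivalently $w_{|C|}\sum_{p\in C}\avg(p,C)$). Then $\Phi(T) - \Phi(S) = \frac{2w_{s+1} s}{s+1}\avg(p,S) + \left(\frac{w_{s+1}}{s+1} - \frac{w_s}{s}\right)\sum_{q,q'\in S} d(q,q')$. Using $0 \le \frac{1}{s^2}\sum_{q,q'\in S} d(q,q') \le 2\avg(p,S)$ (from \cref{:lem:average-distance-lemma}, and the left inequality is just nonnegativity of distances), the increment lies in an interval whose endpoints are explicit linear functions of $\avg(p,S)$ with coefficients built from $w_s, w_{s+1}$. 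Choosing the $w_s$ so that $\frac{w_{s+1}}{s+1}$ is decreasing in $s$ but not too fast — the natural candidate being $w_s$ proportional to the harmonic-type sum so that $\frac{w_s}{s} \approx \sum_{i\le s} \frac{1}{i} \approx \ln s$, e.g. $w_s = s\sum_{i=1}^{s}\frac1i$ or $w_s = s H_s$ — makes the negative correction term bounded by $O(1)\cdot\avg(p,S)$ while keeping the leading coefficient $\Theta(1)$; the cumulative effect over all sizes up to $n$ is what produces the $\log n$ in the upper bound rather than in the lower bound. I would pin down the constants by writing the lower-bound requirement as $\frac{2w_{s+1}s}{s+1} - 2s^2\left(\frac{w_s}{s} - \frac{w_{s+1}}{s+1}\right) \ge 1$ and the upper-bound requirement as $\frac{2w_{s+1}s}{s+1} \le 2\log n$ (the negative term only helps the upper bound), and check both hold for all $1 \le s \le n-1$ with the chosen $w_s$, also confirming $\Phi \ge 0$ and $\Phi(\{p\}) = 0$.

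The key steps, in order: (1) fix the double-sum identity for $\Phi(S\cup\{p\}) - \Phi(S)$ in terms of $\avg(p,S)$ and the normalized self-interaction term of $S$; (2) invoke \cref{:lem:average-distance-lemma} to bound that self-interaction term by $2\avg(p,S)$ from above and $0$ from below; (3) reduce the theorem to a pair of scalar inequalities on the weight sequence $(w_s)_{s=1}^{n}$; (4) exhibit an explicit $w_s$ (of harmonic type, so $w_s/s = \Theta(\log s)$) and verify the inequalities, getting the lower bound $\avg(p,S)$ with room to spare and the upper bound with constant $2\log n$. I expect step (4) — tuning the weight sequence so the telescoping correction term stays $O(1)\cdot\avg(p,S)$ on the low side while the leading term never exceeds $2\log n \cdot \avg(p,S)$ on the high side, uniformly in $s$ — to be where the real work lies; everything else is bookkeeping with the triangle inequality.
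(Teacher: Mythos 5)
Your overall strategy matches the paper's: start from the unweighted potential $\sum_{p\in C}\avg(p,C) = \frac{1}{|C|}\sum_{x,y\in C}d(x,y)$, multiply by a size-dependent weight, expand the increment, invoke \cref{:lem:average-distance-lemma} to sandwich the self-interaction term, and reduce to two scalar inequalities on the weight sequence. The identity for the increment and the reduction are correct, and this is exactly what the paper does, with $\Phi(S)=\frac{\log|S|}{|S|}\sum_{p,q\in S}d(p,q)$ and the increment estimated via the derivative of $y\mapsto\frac{\log y}{y}$.

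The gap is that the specific weight you exhibit is wrong, and it breaks the upper bound. In your parametrization $\Phi(C)=\frac{w_{|C|}}{|C|}\sum_{x,y\in C}d(x,y)$, the paper's choice is $w_s=\log s$, so the per-pair coefficient is $\frac{w_s}{s}=\frac{\log s}{s}$, which is indeed decreasing for $s\geq 3$. You instead propose $w_s=sH_s$, making $\frac{w_s}{s}=H_s\approx\ln s$ --- increasing, contradicting the very requirement you state in the same sentence. With $w_s=sH_s$, the leading term is $\frac{2w_{s+1}s}{s+1}\avg(p,S)=2sH_{s+1}\avg(p,S)=\Theta(s\log s)\cdot\avg(p,S)$, which already for moderate $s$ exceeds $2\log n\cdot\avg(p,S)$; your own upper-bound inequality $\frac{2w_{s+1}s}{s+1}\le 2\log n$ fails. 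The intended target is $w_s\approx\log s$, not $w_s\approx s\log s$; then the leading term is $\approx 2\log(s+1)\cdot\avg(p,S)\leq 2\log n\cdot\avg(p,S)$, and the negative correction --- controlled by the derivative $\frac{d}{dy}\frac{\log y}{y}=\frac{\log e-\log y}{y^2}$ together with the $2\avg(p,S)$ bound from the lemma --- subtracts roughly $2(\log s-\log e)\avg(p,S)$, leaving a lower bound of about $2\log e\cdot\avg(p,S)\geq\avg(p,S)$. You also omit the base cases $|S|\in\{1,2\}$, where $\frac{\log s}{s}$ is not yet monotone; the paper checks these by hand, and that is where the hypothesis $n\geq 3$ enters.
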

In order to prove the theorem, we will use the following potential function:
\begin{align*}
 \Phi(S) = \log |S| \cdot \sum_{p \in S} \avg(p,S) = \frac{\log |S|}{|S|} \sum_{p,q \in S} d(p,q).    
\end{align*}
We provide a brief outline of the proof for the existence of a potential function here and the full proof is deferred to~\cref{:sec:existence-of-potential-function}.
\begin{proof}[Proof sketch of~\cref{:thm:existence-of-potential-function}]
We provide upper and lower bounds for $\Phi(S \cup \{p\}) - \Phi(S)$, the amount by which the potential of the set $S$ increases when a point $p$ is added to it.
Let us define the function $f(x,y) = \frac{x\log y}{y}$ and use the notation $x_{S} = \sum_{p_1,p_2 \in S} d(p_1,p_2)$ and $y_{S} = |S|$. Note that $\Phi(S) \equiv f(x_S, y_S)$.
Thus, using this notation, the quantity that we need to analyze is the amount of change in the value of $f(x_S,y_S)$ as the point $p$ is added to $S$. Our goal is to show that this change is at least $\avg(p,S)$ and is at most $2\log n\cdot\avg(p,S)$.
At a high-level, we estimate this change by approximating the function $f$ linearly, utilizing its derivative at the point $(x_S,y_S)$. Note that, as $f$ is linear in its first argument, and the value of $y_S$ only changes by $1$, it provides a ``good'' estimate for the amount of change in $f(x_S,y_S)$.

Moreover, note that by adding a point $p$ to a set $S$, the value of $x_S$ increases by exactly 
\begin{align*}
 x_{S \cup \{p\}} - x_S = \sum_{p_1,p_2 \in S \cup \{p\}} d(p_1,p_2) - \sum_{p_1,p_2 \in S} d(p_1,p_2)
 = d(p,p) + 2\sum_{p' \in S} d(p,p') 
 = 2|S|\cdot \avg(p,S),   
\end{align*}
and the value of $y_S$ increases by exactly $1$.
So,
\begin{align*}
    \Phi(S \cup \{p\}) - \Phi(S)
    \simeq 2|S|\cdot\avg(p,S) \cdot \frac{\partial f}{\partial x}(x_S,y_S) + \frac{\partial f}{\partial y}(x_S,y_S)
\end{align*}
It is straightforward to verify that $\frac{\partial f}{\partial x}(x_S,y_S) = \frac{\log y_S}{y_S}$ and $\frac{\partial f}{\partial y}(x_S,y_S) = \frac{\log e - \log y_s}{y_S^2}x_S$. 
So,
\begin{align*}
    \Phi(S \cup \{p\}) - \Phi(S) 
    &\simeq 2|S|\cdot\avg(p,S) \cdot \frac{\log y_S}{y_S} + \frac{\log e - \log y_S}{y_S^2}x_S \\
    &= 2\cdot\avg(p,S) \cdot \log y_S + \frac{\log e - \log y_S}{y_S^2}x_S,
\end{align*}

Furthermore, using this notation,~\cref{:lem:average-distance-lemma} exactly translates to $\frac{x_S}{y_S^2} \leq 2 \avg(p,S)$. Since $\frac{x_S}{y_S^2} \geq 0$, when $y_S \geq e$,
\begin{align*}
    \Phi(S \cup \{p\}) - \Phi(S) &\ge 2 \cdot \avg(p,S) \cdot \log e, \;\mathrm{and} \\
    \Phi(S \cup \{p\}) - \Phi(S) &\le 2 \cdot \avg(p,S) \cdot \log y_S \le 2\log n \cdot \avg(p,S).
\end{align*}
\end{proof}

Now, we are ready to prove~\cref{:thm:natural-local-search-terminates}.
\begin{proof}[Proof of~\cref{:thm:natural-local-search-terminates}]
Given a metric space $(X,d)$, \cref{:thm:existence-of-potential-function} implies a potential function $\Phi$ that satisfies the condition of \cref{:obs:property-of-potential-function} for $\alpha = 2 \log n$.
Thus, \cref{:obs:property-of-potential-function} implies that $\sum_{C \in \cC} \Phi(C)$ decreases at each step of the local search (Algorithm \ref{:alg:natural-local-search}) when $\alpha \geq 2 \log n$.
Therefore, the state of the clustering $\cC$ can never repeat.
So, since the number of possible $k$-clusterings of $(X,d)$ is finite, the algorithm must terminate.
\end{proof}

\section{Adapting Local Search to Run In Polynomial Time}\label{:sec:basic-merge-and-split-algorithm}
In this section, we present a simplified version of the algorithm from \cref{:thm:fast-algorithm-exists}, with run-time $\Tilde{O}(n^2k)$.
This version presents most of the main ideas in the algorithm, and is much simpler.
In the appendix, we present the fast algorithm, thus proving \cref{:thm:fast-algorithm-exists}.
We first show why the natural local search is slow, and then present an overview of our approach to fix it, achieving the $\Tilde{O}(n^2 k)$-time algorithm.

While \cref{:obs:property-of-potential-function} showed that by setting $\alpha = 2\log n$, after each local search step, moving $p\in C$ to a new cluster $C'$, the total potential of the clustering decreases, it is not hard to see that the same analysis implies that if we instead set $\alpha = \gamma \cdot (2\log n)$, after each local search step, the potential decreases by at least $\left(1-\frac{1}{\gamma}\right) \cdot \avg(p, C\setminus\{p\})$.  

However, the natural local search may still be notably slow since the value of $\avg(p,C \setminus \{p\})$ might be significantly smaller than the total potential of the clustering, denoted by $\Phi(\cC) = \sum_{C \in \cC}\Phi(C)$.
Moreover, there exist instances where, for all $p\in X$ that envy any cluster, $\avg(p, C(p))$ is small. This scenario occurs when a large fraction of $\Phi(\cC)$ is due to $\avg(p', C(p'))$ for points $p'$ that do not envy any other cluster. 

In this section we remedy this issue of the natural local search, thus achieving an $\Tilde{O}(n^2 k)$-time algorithm.
To do this, we show that when the value of $\avg(p, C \setminus \{p\})$ in the current step is significantly smaller than $\Phi(\cC)$, then there is a different ``modififcation'' of the current clustering that significantly reduces the value of $\Phi(\cC)$.
A key observation is that any cluster $C^*$ can efficiently be split into two clusters $C^*_1$ and $C^*_2$ such that $\Phi(C^*_1) + \Phi(C^*_2)$ is smaller than $\Phi(C^*)$.
Indeed, by splitting a cluster $C^*$ into two equally-sized randomly chosen parts, 
\begin{align*}
 \expect{\Phi(C^*_1) + \Phi(C^*_2)}
 &= \frac{\log (|C^*|/2)}{|C^*|/2} \cdot \sum_{p,p' \in C^*}d(p,p')\cdot\Pr[p,p' \in C_1^* \lor p,p' \in C_2^*] \\  
 &\leq \frac{\log (|C^*|/2)}{|C^*|/2} \cdot \sum_{p,p' \in C^*}d(p,p')\cdot\frac{1}{2} \\  
 &= \frac{\log (|C^*|)-1}{|C^*|} \cdot \sum_{p,p' \in C^*}d(p,p') \\  
 &= (1-\frac{1}{\log |C^*|}) \cdot \Phi(C^*).
\end{align*}
Intuitively speaking, if we randomly split a cluster $C^*$ such that $\Phi(C^*) \geq \frac{\Phi(\cC)}{k}$, we can expect $\Phi(\cC)$ to decrease by a factor of approximately $\left(1-\frac{1}{k \log |C^*|}\right)$.
However, splitting a cluster is not a feasible operation on its own since it increases the number of clusters. So, to leverage the splitting operation, we also consider a ``merge'', particularly beneficial if merging two existing clusters increases the total potential by less than the amount decreased by the split.

Indeed, we demonstrate that a ``merge and split'' step exists when the standard local search step scarcely decreases the potential function.
When moving $p\in C$ to a new cluster $C'$ only decreases the potential by a little, the point $p$ is ``close'' to both clusters $C$ and $C'$. This implies that $C$ and $C'$ are also ``close to each other''.
Furthermore, we define a distance measure between two clusters providing an upper bound on the amount by which merging the clusters will increase the total potential $\Phi(\cC)$.

To summarize, the high-level idea is that if the current step of the natural local search only marginally reduces $\Phi(\cC)$, then both $\avg(p,C(p) \setminus \{p\})$ and $\avg(p,C')$ are ``small''. Consequently, we can merge the clusters $C(p)$ and $C'$, incurring only a ``small'' increase in $\Phi(\cC)$. In such cases, $C(p)$ and $C'$ can be merged and a cluster $C^*$, for which $\Phi(C^*) \ge \Phi(\cC)/k$, can instead be split. Then, we show that the increase in the total potential from the merge is offset by the potential's decrease due to the split operation.

The above ideas culminate in an algorithm that can quickly make progress on reducing the potential of its maintained clustering by repeatedly finding a point $p$ that is envious, and performing either the \quotes{swap} step of the natural local search, or a \quotes{merge and split} step.
The algorithm stops when no point is $\alpha$-envious for $\alpha = O(\log n)$, which implies that the current clustering is $\alpha$-IP stable. 
We show that the described algorithm reduces the potential of the clustering by a constant factor in time $\Tilde{O}(n^2 k)$.
One last issue is that the ratio between the potential of the initial clustering and the potential of the final clustering can be very large, resulting in a possibly $\Omega(\log(\frac{\Phi(\cC_{\mathrm{initial}})}{\Phi(\cC_{\mathrm{final}})}))$ iterations of ``reducing the clustering's potential by a constant factor''. 
To get around this issue, we show that initializing with an approximate $k$-center solution guarantees that $\Phi(\cC_{\mathrm{initial}})\le \poly(n) \cdot \Phi(\cC_{\mathrm{final}})$; hence, the total number of iterations in the ``modified'' local search is $O(\log n)$. 

The previous discussion shows that formally analyzing the algorithm requires the next three technical lemmas.
In Section~\ref{:sec:implementing-basic-merge-and-split}, we will present the algorithm and its analysis using the following three lemmas.

\begin{algorithm}[ht]
\caption{$\split$ is the partitioning procedure from \cref{:lem:slow-split}.}\label{:alg:split-procedure}
\KwData{$(X,d)$ where $|X|=n$, and a $(k-1)$-clustering $\cC$.}
\KwResult{a cluster $C^* \in \cC$ and a partition of $C^*$ satisfying the condition described in \cref{:lem:slow-split}.}
\textbf{compute} the potential of the clustering $\cC$; $\Phi(\cC) = \sum_{C \in \cC} \Phi(C)$ \label{:line:split:compute-potential-of-clustering}\\
$C^* \leftarrow \argmax_{\{C \mid C\in \cC, |C| > 1\}} \Phi(C)$ \label{:line:split:choose-cluster-to-split}\\
\Repeat{$\Phi(C^*_1) + \Phi(C^*_2) \leq \Phi(C^*) - \Omega(\frac{\Phi(C^*)}{\log n})$}{\label{:line:split:loop-start}
\textbf{sample} $C^*_1 \subset C^*$ of size $\ceil{|C^*|/2}{}$ chosen uniformly at random \label{:line:split:sample-partition}\\ 
$C^*_2 \leftarrow C^* \setminus C^*_1$\\
}\label{:line:split:repeat-until-condition}
\Return{$C^*$ and $(C^*_1,C^*_2)$}
\end{algorithm}

\begin{lemma}\label{:lem:slow-split}
    There exists a $\Tilde{O}(n^2)$-time Monte-Carlo randomized algorithm that, given a metric space $(X,d)$ with $n$ points and a $(k-1)$-clustering $\cC$, finds a cluster $C^* \in \cC$ and a partition $(C^*_1,C^*_2)$ of $C^*$ such that $\Phi(C^*_1) + \Phi(C^*_2) \leq \Phi(C^*) - \Omega(\frac{\Phi(\cC)}{k\log n})$.
    The error probability can be made as small as $O(\frac{1}{n^c})$, for any predefined constant $c$.
\end{lemma}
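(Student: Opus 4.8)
The plan is to analyze Algorithm~\ref{:alg:split-procedure} directly. First I would establish that the cluster $C^*$ chosen on line~\ref{:line:split:choose-cluster-to-split} has large potential relative to the whole clustering: since $\cC$ is a $(k-1)$-clustering and $\Phi$ is non-negative, the cluster with the largest potential among clusters of size $>1$ satisfies $\Phi(C^*) \geq \frac{\Phi(\cC)}{k-1} \geq \frac{\Phi(\cC)}{k}$ — here I need to be slightly careful about singleton clusters, but those contribute $\Phi(\{p\}) = \log 1 \cdot 0 = 0$ to $\Phi(\cC)$, so restricting the $\argmax$ to clusters of size $>1$ loses nothing, and at least one such cluster exists whenever $\Phi(\cC) > 0$ (and if $\Phi(\cC)=0$ the statement is vacuous). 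Consequently, once the loop produces a partition with $\Phi(C^*_1)+\Phi(C^*_2) \leq \Phi(C^*) - \Omega\!\left(\frac{\Phi(C^*)}{\log n}\right)$, we get $\Phi(C^*_1)+\Phi(C^*_2) \leq \Phi(C^*) - \Omega\!\left(\frac{\Phi(\cC)}{k\log n}\right)$, which is exactly the guarantee claimed.

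Next I would bound the number of iterations of the \textbf{Repeat} loop. The computation already displayed in the text shows that for a uniformly random equal-split $(C^*_1,C^*_2)$ of $C^*$, $\expect{\Phi(C^*_1)+\Phi(C^*_2)} \leq \left(1-\frac{1}{\log|C^*|}\right)\Phi(C^*)$; more precisely the "savings" $\Phi(C^*) - \Phi(C^*_1) - \Phi(C^*_2)$ is a non-negative random variable with expectation $\Omega\!\left(\frac{\Phi(C^*)}{\log|C^*|}\right) = \Omega\!\left(\frac{\Phi(C^*)}{\log n}\right)$, and it is bounded above by $\Phi(C^*)$. A random variable in $[0,M]$ with mean $\mu$ is at least $\mu/2$ with probability at least $\mu/(2M)$ (reverse Markov), so each trial succeeds with probability $\Omega(1/\log n)$. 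Hence the expected number of trials is $O(\log n)$, and after $O(\log n \cdot \log(n^{c+1}))$ independent trials the failure probability is at most $O(n^{-c})$; this is where the Monte-Carlo error comes from, and capping the loop at that many iterations makes the algorithm always terminate.

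For the running time: computing $\Phi(\cC)$ on line~\ref{:line:split:compute-potential-of-clustering} requires summing $d(p,q)$ over all pairs within each cluster, which is $O(\sum_C |C|^2) = O(n^2)$ distance queries; likewise each evaluation of $\Phi(C^*_1)+\Phi(C^*_2)$ costs $O(|C^*|^2) = O(n^2)$. Sampling a uniform random subset of size $\lceil |C^*|/2\rceil$ is $O(n)$. Since there are $O(\polylog n)$ loop iterations (after capping), the total is $\Tilde{O}(n^2)$, as required. I expect the main obstacle to be the probabilistic step — specifically, getting a clean \emph{per-trial} lower bound on the success probability rather than just a bound in expectation — since the displayed calculation only controls $\expect{\Phi(C^*_1)+\Phi(C^*_2)}$; the reverse-Markov argument above, using the boundedness of the savings by $\Phi(C^*)$, is the cleanest way I see to convert that into the claimed high-probability guarantee, and one should double-check the constant in $\Omega(\tfrac{1}{\log|C^*|})$ when $|C^*|$ is as small as $2$ or $3$ (where $\log|C^*|$ is $\Theta(1)$ and the bound is trivially fine).
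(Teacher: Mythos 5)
Your proposal follows essentially the same route as the paper's proof: bound $\Phi(C^*) \geq \Phi(\cC)/k$ by counting non-singleton clusters, use the expected-value calculation for a random balanced split, convert the expectation bound into a per-trial constant success probability via a Markov-type argument, and then amplify and cap the loop for the Monte-Carlo guarantee. The paper phrases the probabilistic step as ordinary Markov applied to the non-negative variable $Y = \Phi(C^*_1)+\Phi(C^*_2)$, namely $\Pr\bigl[Y \geq (1-\tfrac{1}{4\log n})\Phi(C^*)\bigr] \leq \expect{Y}\big/\bigl((1-\tfrac{1}{4\log n})\Phi(C^*)\bigr)$; your ``reverse Markov on the savings'' is the identical inequality viewed through the complement $S = \Phi(C^*) - Y$, so there is no substantive difference.

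One small inaccuracy worth flagging: you assert that the savings $S = \Phi(C^*) - \Phi(C^*_1) - \Phi(C^*_2)$ is a non-negative random variable, but this is false in general. For example, take $C^*$ to be $64$ points partitioned into two groups $A,B$ of size $32$, with all intra-group distances equal to $2$ and all cross-group distances equal to $1$ (one can check this is a metric, with equality in the triangle inequality for $p,q$ in one group and $r$ in the other). Then $\Phi(C^*) = \tfrac{6}{64}\cdot 6016 = 564$ while $\Phi(A)+\Phi(B) = 2\cdot\tfrac{5}{32}\cdot 1984 = 620$, so the split $(A,B)$ has $S = -56 < 0$. Since the uniform random split assigns this outcome positive probability, $S$ is not a.s.\ non-negative. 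Fortunately this does not break your argument: the reverse-Markov bound $\Pr[S \geq \mu/2] \geq \mu/(2M)$ needs only the upper bound $S \leq M = \Phi(C^*)$ (which holds because $\Phi \geq 0$) together with $\expect{S} = \mu$, and not a lower bound on $S$, so the conclusion stands. You may simply drop the non-negativity claim. Your caveat about $|C^*| \in \{2,3\}$ is sensible and checks out: $\lceil|C^*|/2\rceil \leq 2|C^*|/3$ still holds there, so the expectation bound goes through; these small cases require no special treatment.
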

\begin{proof}
The $\split$ algorithm is described in Algorithm \ref{:alg:split-procedure}. $\split$ is a Las-Vegas algorithm, but it can be turned into a Monte-Carlo algorithm via standard reductions.

\paragraph{Correctness Analysis.}
Since the only clusters with non-zero potential have size at least $2$, and there are at most $k$ of these, we get that $\Phi(C^*) \geq \frac{1}{k}\sum_{C \in \cC \text{ s.t. } |C| \geq 2} \Phi(C) = \frac{1}{k}\Phi(\cC)$.
Thus, since the returned partition must satisfy the inequality from line \ref{:line:split:repeat-until-condition} of the algorithm, it also satisfies $\Phi(C^*_1) + \Phi(C^*_2) \leq \Phi(C^*) - \Omega(\frac{\Phi(\cC)}{k\log n})$, which proves the correctness of the algorithm.

\paragraph{Time Analysis.}
Firstly, the potential $\Phi(C)$ of each cluster $C$ can be calculated in time $O(|C|^2) \leq O(n|C|)$, so lines \ref{:line:split:compute-potential-of-clustering} and \ref{:line:split:choose-cluster-to-split} of the algorithm can be easily implemented in time $O(n^2)$.
Furthermore, each iteration of the loop can be performed in time $O(n^2)$, so we just need to show that at most $\Tilde{O}(1)$ iterations occur in expectation.
This is shown by the following \cref{:obs:probability-of-good-split}.
\end{proof}

\begin{observation}\label{:obs:probability-of-good-split}
In each execution of line \ref{:line:split:sample-partition} of Algorithm \ref{:alg:split-procedure}, the generated partition has a probability of $\Tilde{\Omega}(1)$ to satisfy the inequality $\Phi(C^*_1) + \Phi(C^*_2) \leq \Phi(C^*) - \frac{\Phi(C^*)}{4\log n}$.
\end{observation}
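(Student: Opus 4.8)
The plan is a one-sided (``reverse'') Markov argument sitting on top of the expectation computation already displayed just before \cref{:lem:slow-split}. Let $Y \defeq \Phi(C^*_1) + \Phi(C^*_2)$ denote the random value produced by one execution of line~\ref{:line:split:sample-partition}, and let $m \defeq |C^*|$. Two degenerate cases are immediate and I would dispatch them first: if $\Phi(C^*) = 0$ then all pairwise distances inside $C^*$ vanish, so $Y = 0$; and if $m = 2$ the split yields two singletons, again with $Y = 0$. In either case the target inequality $\Phi(C^*_1) + \Phi(C^*_2) \le \Phi(C^*) - \frac{\Phi(C^*)}{4\log n}$ holds with probability $1$. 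So from now on assume $\Phi(C^*) > 0$ and $m \ge 3$.

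The first real step is to bound $\expect{Y}$, essentially re-running the computation preceding \cref{:lem:slow-split} but being honest about the rounding: the algorithm samples $C^*_1$ uniformly among the size-$\ceil{m/2}$ subsets of $C^*$, so writing $a = \ceil{m/2}$, $b = \lfloor m/2 \rfloor$ and using $\Pr[p,p' \in C^*_1] = \frac{a(a-1)}{m(m-1)}$ for distinct $p,p'$ (and symmetrically for $C^*_2$), one obtains
\[
 \expect{Y} = \frac{(a-1)\log a + (b-1)\log b}{m(m-1)}\sum_{p,p' \in C^*} d(p,p') \;\le\; \frac{(m-2)\log\ceil{m/2}}{m(m-1)}\sum_{p,p' \in C^*} d(p,p').
\]
Since $\log\ceil{m/2} \le \log m - 1 + \log\tfrac{m+1}{m} \le \log m - 1 + \tfrac{\log e}{m}$ and $\Phi(C^*) = \frac{\log m}{m}\sum_{p,p'\in C^*}d(p,p')$, a short manipulation shows $\expect{Y} \le \bigl(1 - \tfrac{1}{\log m} + \tfrac{\log e}{m\log m}\bigr)\Phi(C^*)$; and for $m \ge 3$ the correction term $\tfrac{\log e}{m\log m}$ is at most $\tfrac{1}{2\log m}$, so $\expect{Y} \le \bigl(1 - \tfrac{1}{2\log m}\bigr)\Phi(C^*) \le \bigl(1 - \tfrac{1}{2\log n}\bigr)\Phi(C^*)$, using $m \le n$.

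The argument then closes with Markov's inequality applied to the nonnegative variable $Y$ at the threshold $t \defeq \Phi(C^*) - \frac{\Phi(C^*)}{4\log n} = \bigl(1 - \tfrac{1}{4\log n}\bigr)\Phi(C^*) > 0$:
\[
 \Pr[\,Y > t\,] \le \frac{\expect{Y}}{t} \le \frac{1 - \tfrac{1}{2\log n}}{1 - \tfrac{1}{4\log n}} \le 1 - \frac{1}{4\log n},
\]
where the last step is the elementary inequality $\frac{1 - x/2}{1 - x/4} \le 1 - \frac{x}{4}$ for $x = \frac{1}{\log n} \in (0,4)$ (which, after clearing the positive denominator, is just $0 \le x^2/16$). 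Hence $\Pr[\,Y \le t\,] \ge \frac{1}{4\log n} = \Tilde{\Omega}(1)$, which is exactly the claim of \cref{:obs:probability-of-good-split}.

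I do not expect a genuine obstacle. The one spot that needs a little care is the mean bound: one must verify that the multiplicative gap of $\tfrac{1}{\log m}$ between $\expect{Y}$ and $\Phi(C^*)$ — which would be exact if the two parts had precisely equal size — survives the $\ceil{m/2}$ rounding, degrading only to $\tfrac{1}{2\log m}$, still comfortably larger than the $\tfrac{1}{4\log n}$ slack we are allowed to lose. The other thing to be careful about is the \emph{direction} in which Markov is applied: we want $Y$ to be \emph{small} with non-negligible probability, which works precisely because $\expect{Y}$ sits a $\Theta(1/\log n)$ multiplicative distance \emph{below} $t$. For context, this $\Theta(1/\log n)$ per-iteration success probability is exactly what the time analysis in the proof of \cref{:lem:slow-split} uses to conclude that the repeat--until loop of \cref{:alg:split-procedure} runs for $O(\log n) = \Tilde{O}(1)$ iterations in expectation.
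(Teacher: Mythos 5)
Your proposal is correct, and the overall strategy — bound $\expect{\Phi(C^*_1)+\Phi(C^*_2)}$ by $(1-\tfrac{1}{2\log n})\Phi(C^*)$ and then apply the ``reverse Markov'' step at threshold $(1-\tfrac{1}{4\log n})\Phi(C^*)$ — is exactly what the paper does. Where you diverge is in how the expectation bound is obtained. The paper's derivation works point-by-point: it writes $\Phi(C^*_j) \le \log\ceil{|C^*|/2}{}\sum_{p\in C^*_j}\avg(p,C^*_j)$, then introduces a coupling in which, after sampling the random partition, one samples a uniformly random $p_{\mathrm{sample}}$ from the side containing $p$; a symmetry observation (every point of $C^*\setminus\{p\}$ is equally likely to be $p_{\mathrm{sample}}$) yields $\expect{\avg(p,C^*_{i(p)})}\le\avg(p,C^*)$, and the $1/(2\log n)$ gap comes from $\log\ceil{|C^*|/2}{}\le\log|C^*|-1/2$ via $\ceil{|C^*|/2}{}\le \tfrac{2}{3}|C^*|$. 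You instead compute $\expect{\Phi(C^*_1)+\Phi(C^*_2)}$ in closed form from the pair-inclusion probabilities $\frac{a(a-1)}{m(m-1)}$ and $\frac{b(b-1)}{m(m-1)}$, then absorb the $\ceil{m/2}$ rounding through the bound $\log\ceil{m/2}\le\log m-1+\tfrac{\log e}{m}$ and check that the $\tfrac{\log e}{m\log m}$ error is dominated by $\tfrac{1}{2\log m}$ for $m\ge 3$ (handling $m=2$ and $\Phi(C^*)=0$ separately). Your route is more elementary and more explicit about the constants, at the modest cost of a case split on $m$; the paper's route is slicker and sidesteps the exact combinatorics, but the resulting bound and the subsequent Markov step are the same, so the two arrive at interchangeable conclusions.
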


\begin{proof}
To prove the observation, it is enough to show that, for a given execution of line \ref{:line:split:sample-partition} of the algorithm,
\begin{equation}\label{:eq:split-procedure-expectation-argument}
    \expect{\Phi(C^*_1) + \Phi(C^*_2)} \leq (1 - \frac{1}{2\log n}) \cdot \Phi(C^*).
\end{equation}
To see why this is enough, notice that, by Markov's inequality, \cref{:eq:split-procedure-expectation-argument} implies that
\begin{align*}
    \Pr\left[ \Phi(C^*_1) + \Phi(C^*_2) \geq (1 - \frac{1}{4\log n})\Phi(C^*) \right] \leq \frac{\left(1 - \frac{1}{2\log n}\right)}{\left(1 - \frac{1}{4\log n}\right)} = \frac{4\log n-2}{4\log n - 1}.
\end{align*}
Hence,
\begin{align*}
     \Pr\left[ \Phi(C^*_1) + \Phi(C^*_2) \leq (1 - \frac{1}{4\log n})\Phi(C^*) \right] \geq 
     1- \frac{4\log n - 2}{4\log n -1}
     = \frac{1}{4\log n -1}
     = \Tilde{\Omega}(1).   
\end{align*}

So, our goal for the rest of the proof is to show that, for a given execution of line \ref{:line:split:sample-partition} of the algorithm, \cref{:eq:split-procedure-expectation-argument} holds:
For each point $p \in C^*$, let $C^*_{i(p)}$ denote the side of the partition to which $p$ belongs.
Now, since $\Phi(C^*_1) = \log |C^*_1| \sum_{p \in C^*_1} \avg(p,C^*_1) = \log (\ceil{|C^*|/2}{})\sum_{p \in C^*_1} \avg(p,C^*_1)$ and similarly $\Phi(C^*_2) \leq \log (\ceil{|C^*|/2}{})\sum_{p \in C^*_2} \avg(p,C^*_2)$, we get by linearity of expectation that
\begin{equation} \label{:eq:split-procedure-potentials-of-resulting-clusters}
    \expect{\Phi(C^*_1) + \Phi(C^*_2)} \leq \log (\ceil{|C^*|/2}{}) \sum_{p \in C^*} \expect{\avg(p,C^*_{i(p)})}.
\end{equation}
Now, for a given point $p$, consider the process that samples a partition $(C^*_1,C^*_2)$ as in line \ref{:line:split:sample-partition} of the algorithm, and then samples a point $p_{\mathrm{sample}}$ from $C^*_{i(p)}$ uniformly at random.
Notice that $d(p,p_{\mathrm{sample}})$ can be non-zero only when $p_{\mathrm{sample}}$ is a point from $C^* \setminus \{p\}$.
Furthermore, by a symmetry argument, each point in $C^* \setminus \{p\}$ has the same probability of being chosen as $p_{sample}$ by the aforementioned process.
Thus,
\[
 \expect[(C^*_1,C^*_2)]{\expect[p_{\mathrm{sample}} \sim C^*_{i(p)}]{p_{\mathrm{sample}}}}
 = \frac{\Pr[p_{\mathrm{sample}} \in C^* \setminus \{p\}]}{|C^* \setminus \{p\}|} \sum_{p' \in C^* \setminus \{p\}}d(p,p').
\]
So, since
\[
 \Pr[p_{\mathrm{sample}} \in C^* \setminus \{p\}]
 = \expect[(C^*_1,C^*_2)]{(|C^*_{i(p)}|-1)/|C^*_{i(p)}|} \leq (|C^*|-1)/|C^*|,
\]
we get that
\[
 \expect[(C^*_1,C^*_2)]{\expect[p_{\mathrm{sample}} \sim C^*_{i(p)}]{p_{\mathrm{sample}}}}
 \leq \frac{1}{|C^*|} \sum_{p' \in C^* \setminus \{p\}}d(p,p') = \avg(p,C^*).
\]
However, for each such point $p$ and each possible partition $(C^*_1,C^*_2)$, the value of $\expect[p_{\mathrm{sample}} \sim C^*_{i(p)}]{d(p,p_{\mathrm{sample}})}$ is exactly $\avg(p,C^*_{i(p)})$, so plugging this into the previous inequality gives
\[
 \forall p \in C^*, \quad \expect[(C^*_1,C^*_2)]{\avg(p,C^*_{i(p)})} \leq \avg(p,C^*).
\]
Plugging this inequality into \cref{:eq:split-procedure-potentials-of-resulting-clusters}, we get that
\[
 \expect{\Phi(C^*_1) + \Phi(C^*_2)}
 \leq \log (\ceil{|C^*|/2}{}) \sum_{p \in C^*} \avg(p,C^*)
\]
Furthermore, since $|C^*|\geq 2$, we have that $\ceil{|C^*|/2}{} \leq 2|C^*|/3$ and thus $\log (\ceil{|C^*|/2}{}) \leq \log |C^*| - 1/2$, so we get that
\[
 \expect{\Phi(C^*_1) + \Phi(C^*_2)}
 \leq (\log |C^*|- 1/2) \cdot \sum_{p \in C^*} \avg(p,C^*)
 = \frac{\log |C^*|- 1/2}{\log |C^*|} \cdot \Phi(C^*)
\]
which proves \cref{:eq:split-procedure-expectation-argument}, as we needed.
\end{proof}

\begin{lemma} \label{:lem:merge-cost-bound}
    Given a metric space $(X,d)$ and two disjoint clusters $C,C' \subseteq X$, the increase in the potential $\Phi$ after merging $C$ and $C'$ is at most,
    \begin{align}\label{eq:merge-gain}
        \frac{2\log n}{\max\{|C|,|C'|\}} \sum_{p_1 \in C} \sum_{p_2 \in C'} d(p_1,p_2).
    \end{align}
    Furthermore, for every point $p \in X$, Eq.~\eqref{eq:merge-gain} is at most $2\log n \cdot \min\{|C|,|C'|\} \cdot (\avg(p,C) + \avg(p,C'))$.
\end{lemma}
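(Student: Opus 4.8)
The plan is to compute the change in potential directly from the closed form $\Phi(S) = \frac{\log|S|}{|S|}\sum_{p,q\in S} d(p,q)$ and then bound it crudely. Write $n_1 = |C|$, $n_2 = |C'|$, $D_C = \sum_{p,q\in C} d(p,q)$, $D_{C'} = \sum_{p,q\in C'} d(p,q)$, and $D_{\mathrm{cross}} = \sum_{p_1\in C}\sum_{p_2\in C'} d(p_1,p_2)$. Then $\Phi(C) = \frac{\log n_1}{n_1} D_C$, $\Phi(C') = \frac{\log n_2}{n_2} D_{C'}$, and for the merged cluster $\Phi(C\cup C') = \frac{\log(n_1+n_2)}{n_1+n_2}\bigl(D_C + D_{C'} + 2 D_{\mathrm{cross}}\bigr)$. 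So the increase is
\begin{align*}
\Phi(C\cup C') - \Phi(C) - \Phi(C')
= \frac{2\log(n_1+n_2)}{n_1+n_2} D_{\mathrm{cross}}
+ \Bigl(\tfrac{\log(n_1+n_2)}{n_1+n_2} - \tfrac{\log n_1}{n_1}\Bigr) D_C
+ \Bigl(\tfrac{\log(n_1+n_2)}{n_1+n_2} - \tfrac{\log n_2}{n_2}\Bigr) D_{C'}.
\end{align*}

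The first key step is to argue that the last two terms are non-positive, so they can be dropped. This follows because $x \mapsto \frac{\log x}{x}$ is non-increasing for $x \ge e$, and $n_1 + n_2 > \max\{n_1, n_2\}$; the only edge case is when one of the clusters is a singleton, but then $D_C = 0$ (respectively $D_{C'} = 0$) and that term vanishes regardless of sign, and the coefficient of $D_{\mathrm{cross}}$ is still bounded as needed. The second step is to bound the surviving term: $\frac{2\log(n_1+n_2)}{n_1+n_2} D_{\mathrm{cross}} \le \frac{2\log n}{\max\{n_1,n_2\}} D_{\mathrm{cross}}$, using $n_1 + n_2 \le n$ (so $\log(n_1+n_2) \le \log n$) and $n_1 + n_2 \ge \max\{n_1,n_2\}$. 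This gives exactly Eq.~\eqref{eq:merge-gain}.

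For the "furthermore" part, I would apply the triangle inequality to each cross distance: $d(p_1,p_2) \le d(p_1,p) + d(p,p_2)$ for any fixed $p \in X$, so
\begin{align*}
D_{\mathrm{cross}} = \sum_{p_1\in C}\sum_{p_2\in C'} d(p_1,p_2)
\le n_2 \sum_{p_1 \in C} d(p_1,p) + n_1 \sum_{p_2 \in C'} d(p,p_2)
= n_1 n_2\bigl(\avg(p,C) + \avg(p,C')\bigr).
\end{align*}
Dividing by $\max\{n_1,n_2\}$ leaves $\min\{n_1,n_2\}$, so $\frac{2\log n}{\max\{n_1,n_2\}} D_{\mathrm{cross}} \le 2\log n \cdot \min\{n_1,n_2\} \cdot (\avg(p,C) + \avg(p,C'))$, as claimed. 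This last bound is essentially a restatement of \cref{:obs:triangle-inequality-for-average-distance} scaled appropriately, so I could alternatively invoke that observation directly.

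I do not anticipate a serious obstacle here; the only point requiring care is the monotonicity argument for $\frac{\log x}{x}$ and handling small values of $n_1, n_2$ (in particular $n_1 = 1$ or $n_2 = 1$, and ensuring $n_1 + n_2 \ge e$ so that monotonicity applies — which holds since merging two non-empty disjoint clusters gives at least $2$ points, and when it gives exactly $2$ both original clusters are singletons with $D_C = D_{C'} = 0$). Everything else is bookkeeping with the closed-form expression for $\Phi$.
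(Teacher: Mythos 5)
Your approach for $\min\{|C|,|C'|\}\geq 2$ is essentially the paper's: expand $\Phi(C\cup C')-\Phi(C)-\Phi(C')$ and argue that the coefficients of $D_C$ and $D_{C'}$ are non-positive, then bound the surviving $D_{\mathrm{cross}}$ term. One imprecision: ``$f$ non-increasing for $x\geq e$'' does not by itself give $f(n_1+n_2)\leq f(n_1)$ when $n_1\in\{2,3\}$, since $n_1<e$; the paper supplements monotonicity on $[4,\infty)$ with the observation $f(2),f(3)\geq f(4)$, which together yield the needed inequality whenever $n_1+n_2\geq 4$. The ``furthermore'' part is correct and matches the paper's invocation of \cref{:obs:triangle-inequality-for-average-distance}.

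The genuine gap is the case $\{|C|,|C'|\}=\{1,2\}$. Take $|C|=2$, $|C'|=1$: you correctly observe $D_{C'}=0$, but the coefficient of $D_C$ is $\frac{\log 3}{3}-\frac{\log 2}{2}=\frac{\log 3}{3}-\frac{1}{2}>0$, not non-positive, so that term cannot simply be dropped. The claimed bound still holds, but absorbing the leftover positive term requires an extra triangle-inequality step, for instance $D_C\leq 2D_{\mathrm{cross}}$, so that the increase is at most $(4f(3)-2f(2))D_{\mathrm{cross}}$, and one then checks $4f(3)-2f(2)\leq\log n$ for $n\geq 3$. The paper avoids this entirely by treating $\min\{|C|,|C'|\}=1$ as a separate case and reducing it to \cref{:thm:existence-of-potential-function}, whose proof for $|S|=2$ already contains exactly this triangle-inequality argument.
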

\begin{proof}
Firstly, for every point $p \in X$, the inequality
\begin{align*}
    \frac{2\log n}{\max\{|C|,|C'|\}} \sum_{p_1 \in C} \sum_{p_2 \in C'} d(p_1,p_2)
    \leq 2\log n \cdot \min\{|C|,|C'|\} \cdot (\avg(p,C) + \avg(p,C'))
\end{align*}
follows easily from \cref{:obs:triangle-inequality-for-average-distance}, so all we need to prove is that the increase in potential due to merging two clusters, $C$ and $C'$, is bounded by
\begin{align*}
    \Phi(C \cup C') - \Phi(C) - \Phi(C')
    \leq \frac{2\log n}{\max\{|C|,|C'|\}} \sum_{p_1 \in C} \sum_{p_2 \in C'} d(p_1,p_2).
\end{align*}
We will divide this proof to two cases, based on whether one of the clusters is of size $1$.

\paragraph{Case $\min\{|C|,|C'|\}=1$.} 
For this case, we assume without loss of generality that $C'$ is the cluster of size $1$, and let $p$ denote the single point in $C'$.
Then, $\Phi(C')=0$, and increase in potential can be rewritten as
\[
\Phi(C \cup C') - \Phi(C) - \Phi(C') = \Phi(C \cup \{p\}) - \Phi(C)
\]
Furthermore, since $\Phi$ is the potential function form \cref{:thm:existence-of-potential-function}, that theorem promises that $\Phi(C \cup \{p\}) - \Phi(C) \leq 2\log_2(n)\cdot\avg(p,C)$.
So,
\begin{align*}
 \Phi(C \cup C') - \Phi(C) - \Phi(C')
 \leq 2\log n \cdot\avg(p,C)
 &= \frac{2\log n}{|C|}\sum_{p_1 \in C}d(p_1,p) \\
 &= \frac{2\log n}{\max\{|C|,|C'|\}} \sum_{p_1 \in C} \sum_{p_2 \in C'} d(p_1,p_2),    
\end{align*}
as needed.

\paragraph{Case $\min\{|C|,|C'|\}\geq2$.}
For this case, we let $f:\bbN_{\geq 2} \to \bbR_{> 0}$ denote the function $f(\ell) = \frac{\log \ell}{\ell}$.
Then, since $|C \cup C'|\geq 4$, since $f$ is decreasing in the interval $[4,\infty)$, and since $f(2),f(3) \geq f(4)$, it must be that $f(|C|),f(|C'|)\geq f(|C \cup C'|)$.
Furthermore, for this function $f$, the potential of any set $S$ with $|S|\geq 2$ is $\Phi(S)=f(|S|)\sum_{p_1 \in S}\sum_{p_2 \in S} d(p_1,p_2)$.
Together, these facts imply that
\[
 \Phi(C \cup C')
 = f(|C \cup C'|) \sum_{p_1 \in C \cup C'}\sum_{p_2 \in C \cup C'} d(p_1,p_2)
 \leq \Phi(C) + \Phi(C') + 2f(|C \cup C'|) \sum_{p_1 \in C}\sum_{p_2 \in C'} d(p_1,p_2);
\]
hence,
\[
 \Phi(C \cup C') - \Phi(C) - \Phi(C')
 \leq 2\frac{\log |C \cup C'|}{|C \cup C'|} \sum_{p_1 \in C}\sum_{p_2 \in C'} d(p_1,p_2)
 \leq \frac{2\log n}{\max\{|C|,|C'|\}} \sum_{p_1 \in C} \sum_{p_2 \in C'} d(p_1,p_2).
\]
\end{proof}

\begin{lemma} \label{:lem:k-center-initialization}
    Consider a metric space $(X,d)$ with $n$ points and a $k$-clustering $\cC$. If $\cC$ is a $O(1)$-approximate solution for $k$-center on $X$, then it $\poly(n)$-approximately minimizes the potential function $\Phi$; more precisely, $\Phi(\cC) \le \poly(n) \cdot \Phi(\cC^*)$, where $\cC^*$ is a $k$-clustering minimizing $\Phi$.
\end{lemma}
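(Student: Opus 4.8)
The plan is to upper-bound $\Phi(\cC)$ and lower-bound $\Phi(\cC^*)$ in terms of natural scale parameters of the metric, and then compare. Let $r$ denote the optimal $k$-center radius for $X$, so that the $O(1)$-approximate solution $\cC$ has every cluster of diameter $O(r)$. Then for each cluster $C \in \cC$, $\sum_{p,q \in C} d(p,q) \le |C|^2 \cdot O(r)$, so $\Phi(C) = \frac{\log|C|}{|C|}\sum_{p,q\in C} d(p,q) \le O(r)\cdot |C|\log|C| \le O(r \log n)\cdot |C|$; summing over clusters gives $\Phi(\cC) \le O(r n \log n)$. This is the easy direction.

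For the lower bound on $\Phi(\cC^*)$, the key point is that $r$ cannot be too large relative to typical distances in $X$. First I would show that the diameter $D \defeq \diameter(X)$ satisfies $r \le D$ trivially, but more usefully that $r$ is comparable to $D$ up to a factor depending on $k$ and $n$: since any $k$-clustering splits $X$ into $k$ parts, by pigeonhole some cluster contains at least $n/k \ge 2$ points, and there exist two points at distance $\ge D/(\text{something polynomial})$ --- more carefully, I would instead argue directly. In \emph{any} $k$-clustering $\cC'$ (in particular $\cC^*$), consider two points $p^\star, q^\star$ achieving $d(p^\star,q^\star) = D$. If they lie in the same cluster $C'$ of $\cC'$, then $\Phi(C') \ge \frac{\log|C'|}{|C'|}\cdot 2D \ge \frac{2D\log 2}{n} = \Omega(D/n)$ (using $|C'|\ge 2$). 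If they lie in different clusters, then one of those two clusters, say the one containing $p^\star$, has the property that... hmm, this is exactly the obstacle; see below. The cleaner route: every $k$-clustering has at least one cluster of size $\ge n/k$, say $C'$; such a cluster has average pairwise distance at least $D \cdot \Omega(1/n)$ is false in general (a large tight cluster can have tiny diameter). So instead I would pick the cluster $C^*_0 \in \cC^*$ that contains the point of $X$ that is ``far from everything'': choose $p_0$ maximizing $\min_{q \ne p_0} d(p_0,q)$? That is not obviously large either. The correct handle is that in any $k$-clustering, by pigeonhole among $k+1$ well-separated points (which exist at scale $\ge r/2$ since the optimal $k$-center cost is $r$: if fewer than... actually if there were $k+1$ points pairwise at distance $> 2r$, no $k$-center solution of radius $r$ could exist), so there are $k+1$ points pairwise at distance $> r$ is \emph{impossible}; rather at scale $\approx 2r$ there are at most $k$ points pairwise separated, but there ARE two points at distance $\ge r$ (else the optimal radius would be smaller). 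Two points $p_1, p_2$ with $d(p_1,p_2) \ge r$ lie in clusters $C^*_1, C^*_2$ of $\cC^*$; at least one of these two clusters, WLOG $C^*_1$, satisfies $\avg(p_1, C^*_1) \ge \Omega(r/n)$ OR $\avg(p_1, C^*_2) \le \avg(p_1,C^*_1)\cdot \alpha^*$-type stability --- but we have no stability assumption on $\cC^*$ here, only that it minimizes $\Phi$. Since $\cC^*$ minimizes $\Phi$ and $\Phi(\cC) \ge$ some value for the specific balanced-ish clustering, I can simply note $\Phi(\cC^*) \le \Phi(\cC)$, which is the wrong direction. So the honest lower bound must come from: \emph{some} pair of points at distance $\ge r$ must be grouped together in any clustering that uses only $k$ clusters, PROVIDED $k$ is smaller than the ``packing number'' at scale $r$. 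Indeed, there exist at least two points at mutual distance $\ge r$ (otherwise $\diameter(X) < r$, contradicting that the $1$-center, hence $k$-center, cost is $\ge$ something $< r$); iterating, there exist points witnessing that the $k$-center cost is exactly $r$, meaning there are $k+1$ points $x_0,\dots,x_k$ with $\min_{i\ne j} d(x_i,x_j) \ge r$ would force cost $> r/2$ but could still be consistent — the precise statement I will use is: the optimal $(k)$-center radius $r$ equals (up to factor $2$) the largest $\rho$ such that some $k+1$ points are pairwise $\ge \rho$ apart (a standard fact, since Gonzalez's greedy gives a $2$-approximation whose radius is the $(k+1)$-st farthest-point distance). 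Hence there exist $k+1$ points pairwise at distance $\ge r/2$; any $k$-clustering, including $\cC^*$, must put two of them in a common cluster $C^*_0$, giving $\Phi(C^*_0) \ge \frac{\log|C^*_0|}{|C^*_0|}\cdot 2\cdot(r/2) \ge \frac{r\log 2}{n} = \Omega(r/n)$. Therefore $\Phi(\cC^*) = \Omega(r/n)$.

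Combining, $\Phi(\cC)/\Phi(\cC^*) \le O(rn\log n)/\Omega(r/n) = O(n^2 \log n) = \poly(n)$, as desired.

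The step I expect to be the main obstacle is the lower bound $\Phi(\cC^*) = \Omega(r/n)$ --- specifically, pinning down the clean combinatorial fact that the optimal $k$-center radius $r$ forces the existence of $k+1$ points pairwise $\Omega(r)$-separated (so that pigeonhole applies to any $k$-clustering). Everything else is a routine application of the triangle inequality and the closed form of $\Phi$. I would handle this obstacle by invoking the standard analysis of the Gonzalez/farthest-first greedy $k$-center algorithm: it returns a solution of radius $r_{\mathrm{gr}} \le 2r$, and the point that would be chosen as the $(k+1)$-st center is at distance $r_{\mathrm{gr}}$ from all $k$ chosen centers, yielding $k+1$ points pairwise at distance $\ge r_{\mathrm{gr}}/1 \ge r$ (after a short argument that the $k$ greedy centers are also pairwise $\ge r_{\mathrm{gr}}$ apart), or alternatively $\ge r$ directly since the greedy radius lower-bounds $2r$ only in one direction — I will simply take the $k+1$ points to be pairwise at distance $\ge r_{\mathrm{gr}} \ge r_{\mathrm{opt}}\cdot 1$ is not quite it, so to be safe I will state the separation as $\ge r/2$, which still suffices since it only affects constants inside the $\poly(n)$.
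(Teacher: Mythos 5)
Your proof is correct, but the route you take is more elaborate than necessary, and in particular the "main obstacle" you identify (needing $k+1$ points pairwise $\Omega(r)$-separated, via the Gonzalez farthest-first analysis) is not actually required. The paper's proof instead observes that for \emph{every} $k$-clustering $\cC'$, the quantity $\diameter(\cC') \defeq \max_{C \in \cC'}\max_{p,q\in C}d(p,q)$ sits within a $\poly(n)$ factor of $\Phi(\cC')$ (since $\Phi(\cC') \geq \frac{1}{n}\diameter(\cC')$ from the witnessing pair, and $\Phi(\cC') \leq kn^2\log n\cdot \diameter(\cC')$ by crudely bounding each pairwise distance) and within a constant factor of the $k$-center cost of $\cC'$; the lemma then follows by chaining $\Phi(\cC) \lesssim_{\poly} \text{cost}(\cC) \lesssim \text{opt} \leq \text{cost}(\cC^*) \lesssim_{\poly} \Phi(\cC^*)$. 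The step you worked hard to establish --- that $\cC^*$ must contain a cluster of diameter $\geq \Omega(r)$ --- follows immediately without any packing argument: $\cC^*$ is itself a $k$-center solution of radius at most $\diameter(\cC^*)$, hence $\diameter(\cC^*) \geq r_{\mathrm{opt}}$. Your packing lemma (there exist $k+1$ points pairwise $\geq r$ apart, since the Gonzalez centers plus the $(k+1)$-st farthest point are pairwise $\geq r_{\mathrm{gr}} \geq r_{\mathrm{opt}}$ apart) is also a valid way to see this, just a longer one; note in passing that the separation is in fact $\geq r_{\mathrm{opt}}$, not merely $r_{\mathrm{opt}}/2$, since $r_{\mathrm{gr}}$ is the radius of \emph{some} $k$-center solution and hence is at least the optimum. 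Both approaches yield a $\poly(n)$ bound, so the extra machinery only costs you exposition, not correctness.
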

\begin{proof}
To prove this lemma, we just need to show that, for every $k$-clustering $\cC$ of a metric space $(X,d)$, the $k$-centers value of $\cC$ is a $\poly(n)$ approximation for the potential of $\cC$.
Let $\diameter(\cC) \defeq \max_{C \in \cC} \diameter(C) = \max_{C \in \cC} \max_{p,p' \in C}d(p,p')$.
It is a known fact that $Diam(\cC)$ is a $2$-approximation for the $k$-centers value of $\cC$, so it is enough if we prove that $\diameter(\cC)$ is a $\poly(n)$ approximation for the potential of $\cC$.
Indeed, since $\Phi(\cC) = \sum_{C \in \cC} \frac{\log |C|}{|C|}\sum_{p,p' \in C}d(p,p')$, it holds that $\Phi(\cC) \geq \frac{1}{n} \diameter(\cC)$ and that $\Phi(\cC) \leq kn^2\log n  \cdot \diameter(\cC)$.\footnote{While a tighter bound on $\Phi(\cC)$ relative to the diameter is attainable, the current bound is sufficient for our purposes.}
\end{proof}

Note that there exists a $2$-approximation algorithm for $k$-center, e.g.,~\citep{gonzalez1985clustering}, that runs in time $O(nk)$. So, the initialization runs in $O(nk)$ time. 
\subsection{Algorithm Description and its Analysis} \label{:sec:implementing-basic-merge-and-split}
We show that Algorithm \ref{:alg:basic-merge-and-split} runs in time $\Tilde{O}(n^2 k)$. This is a slower but simpler variant of the algorithm from \cref{:thm:fast-algorithm-exists}.

In our analysis, we will proceed under the assumption that the split procedure of \cref{:lem:slow-split} does not encounter any errors.
This assumption is justified since the error probability of the split procedure can be made to be an arbitrarily small polynomial at the cost of only a $\polylog(n)$ blowup in the running time.

\begin{algorithm}[ht]
\caption{Adapting the local search algorithm for a runtime of $\Tilde{O}(n^2 k)$.}\label{:alg:basic-merge-and-split}
\KwData{$(X,d)$, $n = |X|$, $k \leq n$, $\alpha=4\log n$}
\KwResult{an $\alpha$-IP stable $k$-clustering of $(X,d)$}
{\bf set} $\cC$ be the clustering returned by the greedy algorithm of $k$-center~\citep{gonzalez1985clustering} on $X$\\
\For{every $p \in X$ and $C \in \cC$\label{:line:initializing-averages-start}}{
    {\bf compute} $\avg(p,C)$\label{:line:initializing-averages-end}
}
\While{exist $p\in X$ and $C' \in \cC$ s.t. $C(p)\neq\{p\}$ and $\avg(p,C(p)\setminus\{p\}) > \alpha \cdot \avg(p,C')$}{
    \uIf{$\avg(p,C(p)\setminus\{p\}) \geq \Omega(\frac{\Phi(\cC)}{k\log(n)})/(5n\log n)$ \label{:line:if-statement}}{
        {\bf move} $p$ from $C=C(p)$ to $C'$\label{:line:swap-step-start}\\
        \For{every $p'\in X$ \label{:line:swap-step-update-loop}}{
            {\bf update} $\avg(p', C)$ and $\avg(p',C')$\label{:line:swap-step-end}
        }
    }
    \Else{
        $C'' \leftarrow C \cup C'\quad\rhd$ merge clusters $C,C'$\label{:line:merge-and-split-step-start}\\
        $\cC \leftarrow (\cC\setminus \{C, C'\}) \cup \{C''\}$ \\
        {\sc Split}($\cC$) $\quad\rhd$ a cluster $C^* \in \cC$ is split into $C^*_1$ and $C^*_2$ by {\sc Split} (as in \cref{:lem:slow-split}) \label{:line:split}\\
        $\cC \leftarrow (\cC\setminus \{C^*\}) \cup \{C^*_1, C^*_2\}$ \\
        \For{every $p'\in X$ and $C\in \{C'', C^*_1, C^*_2\}$}{
            {\bf compute} $\avg(p',C)$ \label{:line:merge-and-split-step-end}
        }
    }
}
\Return $\cC$
\end{algorithm}
In the analysis of this algorithm, we refer to each iteration of the while loop as a \emph{step} of the algorithm. We refer to those iterations where the condition of the if statement in line \ref{:line:if-statement} was met as \emph{swap} steps, and to those iterations where this condition was not met as \emph{merge and split} steps.
It is easy to see that algorithm only terminates once it reaches an $\alpha$-approximate IP stable clustering, so the entire analysis is focused on bounding the runtime by $\Tilde{O}(n^2k)$.
We do this by proving the following two claims.
\begin{claim}\label{:cl:potential-reduction-from-steps}
    Each swap step reduces the clustering's potential by a factor of $\left(1-\Tilde{\Omega}(\frac{1}{nk})\right)$, and each merge and split step reduces the potential by a factor of $\left(1-\Tilde{\Omega}(\frac{1}{k})\right)$.
\end{claim}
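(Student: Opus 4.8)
The plan is to analyze the two step types separately, in each case tracking how $\Phi(\cC)$ changes.

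\textbf{Merge-and-split steps.} This is the cleaner case. In a merge-and-split step we first merge $C$ and $C'$ into $C''=C\cup C'$, then run $\split$ on the resulting $(k-1)$-clustering, which splits some cluster $C^*$ into $(C^*_1,C^*_2)$. By \cref{:lem:slow-split}, the split decreases the potential: $\Phi(C^*_1)+\Phi(C^*_2)\le \Phi(C^*)-\Omega\!\left(\frac{\Phi(\cC')}{k\log n}\right)$, where $\cC'$ is the $(k-1)$-clustering after the merge. I would next bound the increase from the merge: by \cref{:lem:merge-cost-bound}, for any point $p$, the merge increases the potential by at most $2\log n\cdot\min\{|C|,|C'|\}\cdot(\avg(p,C)+\avg(p,C'))$; applying this with $p$ the envious point and using that we are in the \emph{else} branch (so $\avg(p,C\setminus\{p\})$ is small, namely $< \Omega(\frac{\Phi(\cC)}{k\log n})/(5n\log n)$) together with $\avg(p,C')\le \avg(p,C\setminus\{p\})/\alpha$ from the envy condition, the merge cost is at most something like $\frac{\Phi(\cC)}{k\log n}\cdot\frac{2\log n\cdot n}{5n\log n}=\frac{2}{5}\cdot\Omega(\frac{\Phi(\cC)}{k\log n})$ — i.e., strictly less than the gain from the split. (One must be a little careful: $\avg(p,C)$ vs $\avg(p,C\setminus\{p\})$, and $\Phi(\cC)\ge\Phi(\cC')$ since the merge comes before the split; both go the favorable direction.) Hence net decrease is $\Omega\!\left(\frac{\Phi(\cC)}{k\log n}\right)$, i.e. a factor $\bigl(1-\tilde\Omega(1/k)\bigr)$, as claimed.

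\textbf{Swap steps.} Here I would invoke the refined version of \cref{:obs:property-of-potential-function}: moving $p\in C$ to $C'$ with $\avg(p,C\setminus\{p\})>\alpha\cdot\avg(p,C')$ and $\alpha=4\log n=2\cdot(2\log n)=\gamma\cdot(2\log n)$ with $\gamma=2$ decreases $\Phi(\cC)$ by at least $(1-\tfrac1\gamma)\avg(p,C\setminus\{p\})=\tfrac12\avg(p,C\setminus\{p\})$. Since this is a swap step, the if-condition guarantees $\avg(p,C\setminus\{p\})\ge \Omega(\frac{\Phi(\cC)}{k\log n})/(5n\log n)=\tilde\Omega\!\left(\frac{\Phi(\cC)}{nk}\right)$. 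Combining, the decrease is $\tilde\Omega\!\left(\frac{\Phi(\cC)}{nk}\right)$, which is a factor $\bigl(1-\tilde\Omega(1/(nk))\bigr)$ reduction. The only subtlety is making sure $\Phi(\cC)$ in the if-condition refers to the potential at the start of the step, so the bound is in terms of the current potential — this is exactly how the algorithm is written.

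\textbf{Main obstacle.} The delicate point is the merge-and-split case: one must verify that the constants line up so that the merge cost is genuinely dominated by the split gain. This requires threading together (i) the $\frac{1}{5n\log n}$ slack in the if-statement, (ii) the $\min\{|C|,|C'|\}\le n$ and $+(\avg(p,C)+\avg(p,C'))\le 2\avg(p,C\setminus\{p\})+O(\cdot)$ bookkeeping from \cref{:lem:merge-cost-bound} (including the harmless discrepancy between $\avg(p,C)$ and $\avg(p,C\setminus\{p\})$, and the factor from the envy inequality on $\avg(p,C')$), and (iii) the fact that the split's guarantee is stated against the post-merge potential, which is at most the pre-step $\Phi(\cC)$. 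None of these is deep, but the constant-chasing is where an error would hide, so that is the step I would write out most carefully.
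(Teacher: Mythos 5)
Your proposal is correct and follows essentially the same route as the paper's proof: for swap steps, remove a point losing $\ge\avg(p,C\setminus\{p\})$ and add it gaining $\le 2\log n\cdot\avg(p,C')<\frac12\avg(p,C\setminus\{p\})$, then invoke the if-condition lower bound; for merge-and-split steps, bound the merge cost via \cref{:lem:merge-cost-bound} with the envy inequality and the else-branch upper bound on $\avg(p,C\setminus\{p\})$, then observe it is dominated by the split gain from \cref{:lem:slow-split}. The constants work out exactly as the paper carries them: $\avg(p,C)+\avg(p,C')\le\frac54\avg(p,C\setminus\{p\})$, $\min\{|C|,|C'|\}\le n$, and the $\frac{1}{5n\log n}$ slack together make the merge cost at most half the split gain.

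One small slip in your parenthetical (iii): you write that the post-merge potential ``is at most the pre-step $\Phi(\cC)$.'' It is actually the other inequality --- merging two clusters never decreases $\Phi$, so $\Phi(\cC_{\mathrm{post\text{-}merge}})\ge\Phi(\cC_{\mathrm{pre}})$ --- and that is the direction the argument actually needs, since the split guarantee of \cref{:lem:slow-split} is stated against the $(k-1)$-clustering it receives, and a lower bound of $\Omega\bigl(\Phi(\cC_{\mathrm{post\text{-}merge}})/(k\log n)\bigr)$ then implies the same lower bound with $\Phi(\cC_{\mathrm{pre}})$ in the numerator. Your conclusion is unaffected, but as written the justification points the wrong way.
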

\begin{claim}\label{:cl:time-required-for-steps}
    Each swap step runs in time $\Tilde{O}(n)$, and each merge and split step can be implemented in time $\Tilde{O}(n^2)$. This includes the operations required to check the conditions of the while loop and the if statement.
\end{claim}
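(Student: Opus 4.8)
The plan is to carry, alongside the clustering $\cC$, a handful of auxiliary structures so that every quantity in the while-loop test and the if-statement test can be read off in $O(1)$ or $O(\polylog n)$ time, and so that each step type only disturbs a bounded portion of this bookkeeping. Concretely I would maintain: (i) the table $\{\avg(p,C)\}_{p\in X,\,C\in\cC}$ --- exactly what lines~\ref{:line:initializing-averages-start}--\ref{:line:initializing-averages-end} and line~\ref{:line:merge-and-split-step-end} produce; (ii) for each point $p$, a balanced search tree on the entries $(\avg(p,C),C)$ over $C\in\cC$, so that $\min_{C'\neq C(p)}\avg(p,C')$ is obtained in $O(\log k)$ time (take the smallest entry; if its cluster is $C(p)$ --- which happens for at most one entry --- take its successor); (iii) the per-cluster values $\Phi(C)$ and $\sum_{q,r\in C}d(q,r)$, together with their sum $\Phi(\cC)$; and (iv) the set $E$ of envious points, i.e.\ the $p$ with $C(p)\neq\{p\}$ and $\avg(p,C(p)\setminus\{p\})>\alpha\cdot\min_{C'\neq C(p)}\avg(p,C')$, each tagged with a witnessing cluster. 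Since $d(p,p)=0$ we have $\avg(p,C(p)\setminus\{p\})=\frac{|C(p)|}{|C(p)|-1}\avg(p,C(p))$, read off from (i) in $O(1)$. Given (iii) and (iv), testing the while condition is ``is $E$ nonempty?'' and testing line~\ref{:line:if-statement} needs only $\avg(p,C(p)\setminus\{p\})$ and $\Phi(\cC)$: all $O(1)$.

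It then remains to bound the maintenance cost of (i)--(iv) per step. In a \emph{swap} step moving $p$ from $C$ to $C'$: for each $p'\in X$ the two affected table entries update in $O(1)$, via $\avg(p',C\setminus\{p\})=\frac{|C|\avg(p',C)-d(p',p)}{|C|-1}$ and $\avg(p',C'\cup\{p\})=\frac{|C'|\avg(p',C')+d(p',p)}{|C'|+1}$, and we correspondingly delete and re-insert those two entries in $p'$'s tree in $O(\log k)$. The only own-cluster averages that change are those of $p$ and of the points of $C$ and of $C'$, recomputed in $O(1)$ each from the refreshed table. Hence for every $p'$ we re-evaluate the envy predicate in $O(\log n)$ and fix its membership in $E$; this covers all points whose status could have changed, because for $p'\notin C\cup C'\cup\{p\}$ only the clusters $C,C'$ moved and both are already reflected in $p'$'s tree. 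Finally $\Phi(\cC)$ is patched in $O(1)$: removing $p$ from $C$ drops $\sum_{q,r\in C}d(q,r)$ by $2(|C|-1)\avg(p,C\setminus\{p\})$ and $|C|$ by one (symmetrically for $C'$), so $\Phi(C),\Phi(C'),\Phi(\cC)$ adjust with $O(1)$ arithmetic. Summing the $O(\log n)$ per-point cost over $X$ gives $\Tilde{O}(n)$.

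In a \emph{merge and split} step: forming $C''=C\cup C'$ is $O(1)$, and $\Phi(C'')$ needs $\sum_{q,r\in C''}d(q,r)=\sum_{q,r\in C}d(q,r)+\sum_{q,r\in C'}d(q,r)+2\sum_{q\in C}\sum_{r\in C'}d(q,r)$, costing $O(|C|\,|C'|)\le O(n^2)$; the $\split$ call is $\Tilde{O}(n^2)$ by \cref{:lem:slow-split} and already furnishes $\Phi(\cC)$. Recomputing $\avg(p',C)$ over all $p'\in X$ and $C\in\{C'',C^*_1,C^*_2\}$ costs $O\big(n(|C''|+|C^*_1|+|C^*_2|)\big)=O(n^2)$ since $|C^*_1|+|C^*_2|=|C^*|\le n$. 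Then, for each of the $n$ points, we delete its (at most three) stale entries for $C,C',C^*$ from its tree, insert the (at most three) entries for $C'',C^*_1,C^*_2$, recompute its own-cluster average if its cluster is one of these, re-test the envy predicate, and fix $E$ --- all $O(\log n)$ per point, i.e.\ $\Tilde{O}(n)$ total. Everything is dominated by the $O(n^2)$ terms, giving $\Tilde{O}(n^2)$.

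The one genuinely delicate point --- and the step I would scrutinize most --- is (iv): verifying that after a swap it really suffices to re-examine only the points we touched (all other points' changes being captured by the two tree updates), that after a merge and split the localized tree edits plus one re-test per point suffice, and that the trees correctly expose $\min_{C'\neq C(p)}\avg(p,C')$ while clusters are created, destroyed, and resized. This is routine once the size-one corner cases are dispatched: a cluster of size $1$ has potential $0$ and is excluded from the envy predicate by the ``$C(p)\neq\{p\}$'' guard, and the identity $\avg(p,C(p)\setminus\{p\})=\frac{|C(p)|}{|C(p)|-1}\avg(p,C(p))$ is applied only when $|C(p)|\ge 2$. With that in place the bounds $\Tilde{O}(n)$ per swap step and $\Tilde{O}(n^2)$ per merge and split step follow.
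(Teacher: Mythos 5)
Your proof is correct and follows essentially the same approach as the paper: maintain the $\avg(p,C)$ table, per-point ordered structures (you use balanced trees over all clusters; the paper uses min-heaps over non-own clusters), and update incrementally in $\Tilde{O}(1)$ per point per swap, with the merge-and-split cost dominated by the $\Tilde{O}(n^2)$ \split{} call and the $O(n^2)$ recomputation of averages for the three new clusters. The extras you add — an explicit set $E$ of envious points and an $O(1)$-time incremental update of $\Phi(\cC)$ — are harmless but unnecessary, since the paper simply rescans all $n$ points to test the while condition and recomputes $\Phi(\cC)=\sum_{p'}\log(|C(p')|)\avg(p',C(p'))$ in $O(n)$, which already fits the $\Tilde{O}(n)$ budget.
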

To see why these two claims imply $\Tilde{O}(n^2 k)$ runtime of the algorithm, notice that, by \cref{:lem:k-center-initialization}, the total reduction in potential of the clustering across all steps is at most a $\poly(n)$ multiplicative factor.
Thus, \cref{:cl:potential-reduction-from-steps} implies that there are at most $\Tilde{O}(nk)$ swap steps and at most $\Tilde{O}(k)$ merge and split steps.
Together with \cref{:cl:time-required-for-steps}, this implies that the algorithm spends at most $\Tilde{O}(n^2 k)$ time across all steps.
Furthermore, the time that the algorithm spends outside of the while loop is at most $O(n^2 k)$, since the greedy algorithm of $k$-center runs in time $O(nk)$ and calculating each $\avg(p,C)$ requires only $O(n)$ time.
\begin{proof}[Proof of \cref{:cl:potential-reduction-from-steps}]
Let $\cC$ be the clustering at the beginning of a given step.
We will now show that the total change in potential during this step is a decrease, by $\Phi(\cC) \cdot \Tilde{\Omega}(\frac{1}{nk})$ if the step is a swap step, and by $\Phi(\cC) \cdot \Tilde{\Omega}(\frac{1}{k})$ if the step is a merge and split step.

For a swap step, the same analysis from the proof of \cref{:obs:property-of-potential-function} shows that removing the point $p$ from the cluster $C$ decreases the potential of $C$ by at least $\avg(p,C\setminus\{p\})$, and that adding the point $p$ to the cluster $C'$ increases the potential of that cluster by at most $\alpha \cdot \avg(p,C')$, where $\alpha = 2\log n$ as in \cref{:thm:existence-of-potential-function}. 
However, unlike in \cref{:sec:analysis-of-local-search}, Algorithm \ref{:alg:basic-merge-and-split} guarantees that $\avg(p,C\setminus\{p\}) > (4\log n) \cdot \avg(p,C')$, so the increase in the potential of the cluster $C'$ is smaller than $\frac{1}{2}\avg(p,C\setminus\{p\})$.
Thus, the total change in the potential of the clustering in this swap step is a decrease by more than $\frac{1}{2}\avg(p,C\setminus\{p\})$.
Furthermore, since this is a swap step, the condition of the if statement in line \ref{:line:if-statement} of the algorithm must have been met, which means that 
\begin{align*}
 \frac{1}{2}\avg(p,C\setminus\{p\}) 
 \geq \Omega(\frac{\Phi(\cC)}{k\log(n)})/(10n\log n) = \Phi(\cC) \cdot \Tilde{\Omega}(\frac{1}{nk}).   
\end{align*}

Next, we analyze the merge and split step. First, note that splitting $C^*$ decreases the potential by $\Omega(\frac{\Phi(\cC)}{k\log(n)})$.
Second, \cref{:lem:merge-cost-bound} guarantees that merging $C$ and $C'$ increases the potential by at most 
\begin{align*}
    (2\log n) \cdot \min\{|C|,|C'|\} \cdot (\avg(p,C) + \avg(p,C'))
    \leq (2\log n) \cdot n \cdot (\avg(p,C) + \avg(p,C')).
\end{align*}
Since $\avg(p,C) \leq \avg(p,C \setminus\{p\})$ and $\avg(p,C') < \avg(p,C \setminus\{p\})/(4\log n)$, this increase in the potential is at most $(\log n) \cdot n \cdot \frac{5}{2}\avg(p,C \setminus\{p\})$.
Furthermore, since this is a merge and split step, the condition of the if statement in line \ref{:line:if-statement} does not hold.
Thus, if we set the value of $\Omega(\frac{\Phi(\cC)}{k\log(n)})$ in the if statement to be the same as the one guaranteed by \cref{:lem:slow-split}, we get that the increase in the potential due to the merge is at most a half of the decrease in the potential due to the split.
So, the overall change in the potential due to the merge and split step is a decrease by at least $\frac{1}{2} \cdot \Omega(\frac{\Phi(\cC)}{k\log(n)}) = \Phi(\cC) \cdot \Tilde{\Omega}(\frac{1}{k})$.
\end{proof}

\begin{proof}[Proof of \cref{:cl:time-required-for-steps}]
We will start by explaining how to check the conditions of the while statement and the if statement in time $\Tilde{O}(n)$.
Then we analyze the running time of implementing lines \ref{:line:swap-step-start}-\ref{:line:swap-step-end} and lines \ref{:line:merge-and-split-step-start}-\ref{:line:merge-and-split-step-end}.
For the if statement, the only non-trivial operation involves calculating $\Phi(\cC)$. 
However, this can be efficiently computed using the equality $\Phi(\cC) = \sum_{p' \in X} \log (|C(p')|)\cdot\avg(p',C(p'))$ and considering that the algorithm already maintains the values of $\avg(p',C(p'))$.
The most challenging part is implementing the condition of the while loop efficiently.
To do so, the algorithm maintains, for each $p \in X$, a min-heap $\mathrm{Heap}_{p}$ that contains the values of $\avg(p,C')$ for all clusters $C' \neq C(p)$.
Then, to check the condition of the while loop, the algorithm uses $\mathrm{Heap}_{p}$ to find $\min_{C' \neq C(p)} \avg(p,C')$ in time $\Tilde{O}(1)$ for each point $p$.
Furthermore, $\avg(p,C(p) \setminus \{p\})$ can be easily computed using the maintained value of $\avg(p,C(p))$ and the equality 
$\avg(p,C(p) \setminus \{p\}) = \frac{|C(p)|}{|C(p) \setminus \{p\})|}\avg(p,C(p))$, 
given that the algorithm is also maintaining the size of the clusters. 
Moreover, once all $\avg(p,C)$ values in lines \ref{:line:initializing-averages-start}-\ref{:line:initializing-averages-end} are computed, initializing these heaps, $\{\mathrm{Heap}_p \;|\;  p\in P\}$, require only $\Tilde{O}(nk)$ time. 

For a swap step, line~\ref{:line:swap-step-start} runs simply in $O(1)$. Moreover, each iteration of the for loop in line~\ref{:line:swap-step-update-loop} can be implemented in time $\Tilde{O}(1)$, using the equalities $\avg(p',C \setminus \{p\}) = \frac{1}{|C \setminus \{p\}|}\left(|C|\avg(p,C)- d(p,p')\right)$ and $\avg(p',C' \cup \{p\}) = \frac{1}{|C' \cup \{p\}|}\left(|C'|\avg(p,C') + d(p,p')\right)$.
Note that the we need to update the min-heaps, which can be done in $\Tilde{O}(1)$ since only two elements of each heap have changed.

For a merge and split step, \cref{:lem:slow-split} guarantees that line \ref{:line:split} runs in time $\Tilde{O}(n^2)$. Furthermore, it is straightforward to verify that the rest of operations can also be implemented in that time.
\end{proof}
This concludes the analysis of Algorithm \ref{:alg:basic-merge-and-split}.

\section{Conclusion}
In our study of the IP-stable clustering problem, we examined the natural local search algorithm. Notably, recent works~\citep{ahmadi2022individual,aamand2023constant} have left open the efficacy of existing natural clustering algorithms concerning the IP-stability metric. We established that by employing a carefully selected update rule, the local search terminates, yielding an $O(\log n)$-IP stable clustering. Moreover, with further refinements, we achieved an algorithm runtime of $\Tilde{O}(nk)$, surpassing the runtime of the existing $O(1)$-IP stable clustering by~\cite{aamand2023constant}.

Additionally, we studied IP-stable clustering using alternative functions, including $\max$ and $\mathrm{median}$, and provided provable guarantees.

Finally, analyzing a global clustering objective for our proposed local search algorithms remains an interesting open question.

\bibliographystyle{abbrvnat}
\bibliography{ip-cluster}
\appendix

\section{Proof of Theorem \texorpdfstring{\ref{:thm:existence-of-potential-function}}{Theorem 6}}\label{:sec:existence-of-potential-function}
\begin{proof}[Proof of~\cref{:thm:existence-of-potential-function}]
As described in \cref{:sec:potential-function}, we define the potential function
$\Phi(S) \defeq \log |S| \cdot \sum_{p \in S} \avg(p,S)$.
Then, our goal is to show that, for every non-empty set $S \subset X$ and every point $p \in X \setminus S$,
\begin{equation} \label{:eq:potential-function-proof:property-of-potential-function}
    \avg(p,S) \leq \Phi(S \cup \{p\}) - \Phi(S) \leq 2\log_2(n) \cdot \avg(p,S).
\end{equation}
We will take care of the special cases $|S|=1$ and $|S|=2$ separately, and then do the more general case $|S| \geq 3$. We note that the constant factor in the value $2\log n$ could be slightly improved by changing the potential function so that all sets of size at least $3$ have a slightly smaller potential, but we choose not to do this for the sake of simplicity.
\begin{itemize}[leftmargin=*]
    \item \textbf{Case $|S| = 1$.}
    In this case, $S = \{p'\}$ for some point $p' \neq p$.
    So, we have that $\avg(p,S) = d(p,p')$, that $\Phi(S)=0$, and that $\Phi(S \cup \{p\}) = \log 2 \cdot (\avg(p', S \cup \{p\}) + \avg(p,S \cup \{p\})) = (\frac{d(p,p')}{2} + \frac{d(p,p')}{2}) = d(p,p')$.
    Thus $\Phi(S \cup \{p\}) - \Phi(S) = d(p,p') = \avg(p,S)$.
    \item \textbf{Case $|S| = 2$.}
    In this case, $S = \{p_1,p_2\}$ for some $p_1,p_2 \in X \setminus \{p\}$.
    So, $\avg(p,S) = \frac{d(p,p_1)+d(p,p_2)}{2}$, and $\Phi(S) = \avg(p_1,S) + \avg(p_2,S) = \frac{d(p_1,p_2)}{2} + \frac{d(p_1,p_2)}{2} = d(p_1,p_2)$.
    Furthermore,
    \[
     \Phi(S \cup \{p\}) = \log 3 \cdot \sum_{p' \in \{p,p_1,p_2\}} \avg(p', \{p,p_1,p_2\}) = \log 3 \cdot \frac{2}{3} \cdot (d(p,p_1) + d(p,p_2) + d(p_1,p_2)).
    \]
    Thus, since $\frac{2}{3}\log 3 \geq 1$, we have
    \[
     \Phi(S \cup \{p\}) - \Phi(S) \geq d(p,p_1) + d(p,p_2) = 2 \avg(p,S) \geq \avg(p,S)
    \]
    and since $\frac{2}{3}\log 3 \leq \frac{11}{10}$, we have $\Phi(S \cup \{p\}) - \Phi(S) \leq \frac{11}{10}(d(p,p_1) + d(p,p_2)) + \frac{1}{10}d(p_1,p_2)$, which, by the triangle inequality means that
    \[
     \Phi(S \cup \{p\}) - \Phi(S) \leq \frac{6}{5}(d(p,p_1) + d(p,p_2)) < \log n\cdot (d(p,p_1) + d(p,p_2)) = 2\log n \cdot \avg(p,S)
    \]
    \item \textbf{Case $|S| \geq 3$.}
    Let
    \begin{align*}
        y_S = |S| \quad \mathrm{and} \quad x_S = \sum_{p_1,p_2 \in S} d(p_1,p_2) \quad \mathrm{and} \quad x_{S \cup \{p\}} = \sum_{p_1,p_2 \in S \cup \{p\}} d(p_1,p_2).
    \end{align*}
    Then 
    \begin{align*}
        \Phi(S) = \frac{\log y_S}{y_S}x_S \quad \mathrm{and} \quad \Phi(S \cup \{p\}) = \frac{\log (y_S + 1)}{y_S + 1}x_{S \cup \{p\}},
    \end{align*}
    and
    \begin{align*}
        x_{S \cup \{p\}} - x_S = \sum_{p_1,p_2 \in S \cup \{p\}} d(p_1,p_2) - \sum_{p_1,p_2 \in S} d(p_1,p_2) = d(p,p) + 2\sum_{p' \in S} d(p,p') = 2y_S\cdot \avg(p,S).
    \end{align*}
    Therefore,
    \begin{equation} \label{:eq:potential-function-proof:exact-difference-in-potential}
    \begin{aligned}
        \Phi(S \cup \{p\}) - \Phi(S)
        &= \frac{\log (y_S + 1)}{y_S + 1} \cdot x_{S \cup \{p\}} - \frac{\log y_S}{y_S} \cdot x_S \\
        &= \left(\frac{\log (y_S + 1)}{y_S + 1} - \frac{\log y_S}{y_S}\right)x_S + \frac{\log (y_S + 1)}{y_S + 1} \cdot 2y_S\cdot \avg(p,S).
    \end{aligned}
    \end{equation}
    Furthermore, since $y_S = |S| \geq 3$ and the function $\frac{\log y}{y}$ is decreasing for $y\ge 3$,
    \begin{align} \label{:eq:potential-function-proof:negativity-of-first-term}
        \left(\frac{\log (y_S + 1)}{y_S + 1} - \frac{\log y_S}{y_S}\right) < 0.
    \end{align}
    \cref{:lem:average-distance-lemma} exactly says that $x_S \leq 2y_S^2\avg(p,S)$. \cref{:eq:potential-function-proof:negativity-of-first-term} lets us plug $x_S \geq 0$ and $x_S \leq 2y_S^2\avg(p,S)$ into \cref{:eq:potential-function-proof:exact-difference-in-potential}, respectively giving
    \begin{equation} \label{:eq:potential-function-proof:proof-of-righthand-inequality}
    \begin{aligned}
        \Phi(S \cup \{p\}) - \Phi(S)
        \leq \frac{\log (y_S + 1)}{y_S + 1} \cdot 2y_S\cdot \avg(p,S) 
        &\leq \frac{\log n}{y_S + 1} \cdot 2y_S\cdot \avg(p,S) \\ 
        &\leq 2 \log n\cdot \avg(p,S)
    \end{aligned}
    \end{equation}
    and
    \begin{equation} \label{:eq:potential-function-proof:proof-of-lefthand-inequality}
    \begin{aligned}
        \Phi(S \cup \{p\}) - \Phi(S) 
        &\geq \left(\frac{\log (y_S + 1)}{y_S + 1} - \frac{\log y_S}{y_S}\right) \cdot 2y_S^2\avg(p,S) + \frac{\log (y_S + 1)}{y_S + 1} \cdot 2y_S\cdot \avg(p,S) \\
        &= \left(\log (y_S + 1) - \log y_S\right) \cdot 2y_S\avg(p,S) \\
        &\geq \frac{\log e}{y_S + 1} \cdot 2y_S\avg(p,S) 
        \geq 2\avg(p,S),
    \end{aligned}
    \end{equation}
    where the second to last inequality follows from the fact that the derivative of the function $\log x$ in greater than $\frac{\log e}{y_S+1}$ in the whole interval $(y_S,y_S + 1)$, and the last inequality follows from the fact that $\frac{y_S}{y_S + 1}\log e \geq \frac{3}{4}\log e \geq 1$.
    \cref{:eq:potential-function-proof:proof-of-lefthand-inequality} and \cref{:eq:potential-function-proof:proof-of-righthand-inequality} prove the left-hand and right-hand inequalities in \cref{:eq:potential-function-proof:property-of-potential-function}, as we needed.
\end{itemize}
\end{proof}

\section{Faster Implementation of Local Search}\label{:sec:faster-implementation}
The goal of this section is to prove \cref{:thm:fast-algorithm-exists}, by designing an adaptation of local search that runs in time $\Tilde{O}(nk)$.
For the sake of simplicity, we will separate the steps of the local search into epochs, where each epoch reduces the potential of the maintained clustering by a constant factor.
In \cref{:sec:fast-alg-from-fast-epoch}, we formally prove that if such an epoch can be implemented fast, then the local search can be implemented fast.
Thus, our main focus is to prove that an epoch can be implement fast, as stated in the following theorem.
We prove this theorem in \cref{:sec:proof-of-fast-epoch}

\begin{theorem} \label{:thm:fast-epoch}
    There exists a Monte-Carlo randomized algorithm {\sc Epoch} (Algorithm \ref{:alg:fast-epoch}) that runs in time $\Tilde{O}(nk)$, and given a metric space $(X,d)$ with $|X|=n$ and a $k$-clustering $\cC$ of $(X,d)$, computes a $k$-clustering $\cC'$ of $(X,d)$ that is either $O(\log n)$-IP stable or has $\sum_{C \in \cC'} \Phi(C) < \frac{3}{4} \cdot \sum_{C \in \cC} \Phi(C)$.
    The error probability can be made as low as $O(\frac{1}{n^c})$, for any predefined constant $c\geq1$.
\end{theorem}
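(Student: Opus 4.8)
The plan is to build \textsc{Epoch} by re-engineering Algorithm \ref{:alg:basic-merge-and-split} so that a single ``constant-factor potential reduction phase'' costs only $\Tilde O(nk)$ rather than $\Tilde O(n^2 k)$, and to stop the phase as soon as either the clustering becomes $O(\log n)$-IP stable or its potential has dropped by a factor $3/4$. The two bottlenecks in Algorithm \ref{:alg:basic-merge-and-split} are (i) the $\Tilde O(n)$ cost of a swap step (because we refresh $\avg(p',C)$ and $\avg(p',C')$ for all $p'$ after moving a single point) and (ii) the $\Tilde O(n^2)$ cost of a merge-and-split step (dominated by \textsc{Split} recomputing cluster potentials from scratch). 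I would attack (i) first: instead of maintaining exact $\avg(p,C)$ values, maintain \emph{estimates} $\widetilde{\avg}(p,C)$ that are refreshed lazily. The key structural fact to exploit, flagged in the introduction, is that $\avg(p,C)$ changes slowly — moving one point in or out of a size-$m$ cluster perturbs $\avg(p,C)$ by at most a $(1\pm 1/m)$ factor in the relevant regime — so a stale estimate computed $t$ steps ago is still within a $(1\pm O(t/m))$ factor. Thus I would let an estimate ``expire'' only after the host cluster has changed by a constant fraction of its size, charge the $O(|C|)$ recomputation cost to the $\Omega(|C|)$ point-moves that caused the expiry (an amortization argument), and widen the envy threshold $\alpha$ by a constant factor to absorb the estimation slack — this is exactly why the theorem only promises $O(\log n)$-stability rather than $4\log n$.

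The second ingredient is a faster \textsc{Split}. Recomputing $\Phi(C^*)=\frac{\log|C^*|}{|C^*|}\sum_{p,q\in C^*}d(p,q)$ exactly costs $\Theta(|C^*|^2)$; instead I would estimate $\sum_{p,q\in C^*}d(p,q)$ by Importance Sampling — sample $O(\polylog n)$ points $q$ proportional to a cheaply-computable proxy for their ``contribution'' $\sum_{p\in C^*}d(p,q)$ (e.g. distances to a random reference point, via the triangle inequality), and average the sampled row-sums $\sum_{p\in C^*}d(q,p)$, each of which costs $O(|C^*|)$. This gives a $(1\pm\varepsilon)$-estimate of every cluster potential in total time $\Tilde O(n)$ per epoch, so that choosing the heaviest cluster, verifying the split inequality, and maintaining $\Phi(\cC)$ all become $\Tilde O(n)$ instead of $\Tilde O(n^2)$; the sampling error is again absorbed into constants. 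Combined with \cref{:lem:slow-split}'s argument that a random balanced split succeeds with probability $\Tilde\Omega(1)$, a merge-and-split step now costs $\Tilde O(n)$ and still decreases $\Phi(\cC)$ by a $\bigl(1-\Tilde\Omega(1/k)\bigr)$ factor.

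Given these two speedups, the accounting mirrors Section \ref{:sec:implementing-basic-merge-and-split}: within one epoch there are $\Tilde O(nk)$ swap steps (each $\Tilde O(1)$ amortized, plus $\Tilde O(n)$ amortized recomputation spread over expiries, for $\Tilde O(nk)$ total — here the heaps $\mathrm{Heap}_p$ and the identity $\Phi(\cC)=\sum_{p}\log|C(p)|\cdot\avg(p,C(p))$ keep the while/if checks at $\Tilde O(1)$ each) and $\Tilde O(k)$ merge-and-split steps (each $\Tilde O(n)$), for a grand total of $\Tilde O(nk)$; when no envious point remains, the output is $O(\log n)$-IP stable, and we cut the epoch off the moment $\Phi$ has fallen by $3/4$. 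Error probability: each \textsc{Split} failure and each Importance-Sampling estimate fails with probability $O(n^{-c'})$ by a union/Chernoff bound, and there are $\poly(n)$ of them, so a union bound gives overall error $O(n^{-c})$ after adjusting the $\polylog$ sampling sizes. The main obstacle I expect is the amortization for lazy estimates interacting with merge-and-split: a merge creates a brand-new cluster whose estimates must all be (re)initialized, and a split destroys a cluster, so the ``charge recomputation to point-moves'' argument must be set up carefully to ensure every $\Tilde O(|C|)$ recomputation is paid for — this is where most of the technical work in Algorithm \ref{:alg:fast-epoch} and its analysis will go, and getting the bookkeeping exactly right (rather than the high-level idea) is the delicate part.
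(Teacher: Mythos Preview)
Your high-level plan---lazy average estimates plus importance-sampling for a $\Tilde{O}(n)$ \textsc{Split}---matches the paper, and you correctly flag the lazy-estimate bookkeeping as the crux. But two supporting claims are wrong and hide a genuine missing idea. The perturbation bound is \emph{additive}, not multiplicative: swapping $p$ into or out of $C$ changes $\avg(p',C)$ by at most $\avg(p,C)/|C|$ (\cref{:lem:change-in-average}), and $\avg(p,C)$ need not be comparable to $\avg(p',C)$, so your ``$(1\pm O(t/m))$-factor'' claim fails in general. And a recompute for cluster $C$ costs $\Tilde{O}(n)$, not $O(|C|)$, since you must refresh $\widehat{\avg}(p',C)$ for \emph{every} $p'\in X$ to keep the envy heaps valid; hence ``charge $O(|C|)$ to $\Omega(|C|)$ moves'' does not balance as stated.

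The paper closes both gaps with a mechanism you omit: immediately after each recompute of $C$, if some other cluster $C'$ satisfies $\frac{\min\{|C|,|C'|\}}{|C'|}\sum_{p\in C'}\widehat{\avg}(p,C)<t^*$ (i.e.\ $C$ and $C'$ are too close), a merge-and-split fires right away. This enforces (\cref{:lem:cluster-distances-lemma}) that for any two live clusters and any point $p$, $\avg(p,C)+\avg(p,C')\ge \Omega(t^*/\min\{|C|,|C'|\})$, so the averages appearing in envy checks are bounded below by the very scale $t^*/|C|$ at which additive error is tolerated---this is what converts the additive perturbation into a tolerable relative one. The amortization then runs through a \emph{progress} variable (roughly the accumulated $\avg(p,C(p))$ over swaps touching $C$): progress is upper-bounded by the total potential drop (\cref{:cl:progress-implies-reduction-in-potential}) and must reach $\Tilde{\Omega}(t^*)=\Tilde{\Omega}(\hat\Phi(\cC)/k)$ before a recompute is triggered (\cref{:cor:recompute-only-once-large-progress}), yielding only $\Tilde{O}(k)$ recomputes at $\Tilde{O}(n)$ each. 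So the merge-and-split is not just a fallback for unproductive swaps---it is also the separation mechanism without which lazy estimates cannot correctly certify envy.
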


We will now present a sketch of what additional modifications need to be made to the local search strategy from Algorithm \ref{:alg:basic-merge-and-split} in order to make it run in time $\Tilde{O}(nk)$. 
To do this, we focus on a single epoch of the local search, and let $\Phi_{0}$ be the potential of the clustering at the beginning of the epoch.

As long as $\Phi(\cC) \geq \Omega(\Phi_{0})$, a cluster split can reduce the potential of the clustering by $\Tilde{\Omega}(\Phi_0/k)$, so the strategy from Algorithm \ref{:alg:basic-merge-and-split} either performs a ``merge and split'' step that reduces $\Phi(\cC)$ by $\Tilde{\Omega}(\Phi_0/k)$ in time $\Tilde{O}(n^2)$, or performs a ``swap'' step that reduces the potential by $\Tilde{\Omega}(\Phi_0/(nk))$ in time $\Tilde{O}(n)$.
In both these operations, the ratio between the time spent and the decrease in potential is $\Tilde{O}(n^2 k / \Phi_0)$, resulting in the running time of $\Tilde{O}(n^2 k)$ for Algorithm \ref{:alg:basic-merge-and-split}.
Furthermore, once $\Phi(\cC) < \Omega(\Phi_{0})$, we can simply end the current epoch.

In order to reduce the running time by a factor of $n$, we need to reduce the ratio between the time spent and the decrease in potential to be $\Tilde{O}(nk / \Phi_0)$, which means that we need to implement each merge and split step in time $\Tilde{O}(n)$, and implement each swap step in time $\Tilde{O}(1)$.

\paragraph{Merge and Split Step.} In evaluating the runtime of a merge and split step, the primary bottlenecks are the split procedure itself and the time taken to compute the values of $\avg(p,C)$ for each new cluster and for every $p \in X$. Moreover, within the split procedure, the bottleneck is calculating the potential of each potential random split to determine if it is a good split. This computation is essentially about finding values of $\avg(p,C)$, given that $\Phi(\cC) = \log(|C|)\cdot\sum_{p \in C} \avg(p,C)$.
Thus, to implement a merge and split step quickly, we just need to know how to quickly compute $\avg(p,C)$ for a specified cluster $C$ and all points $p \in C$.
It turns out that if we want to compute these values approximately, we can do it via {\em importance sampling}.

\paragraph{Swap Step.} In the swap operation, the bottlenecks are finding the point $p$ that needs to be swapped, and updating $\avg(p',C)$ and $\avg(p',C')$ for all points $p' \in X$.
In order to find a candidate point $p$ for the swap operation, the implementation from \cref{:sec:implementing-basic-merge-and-split} was already supporting a min-heap for each point $p$, which gives the algorithm access to $\min_{C' \in \cC\setminus\{C(p)\}}\avg(p,C')$.
To speed up this process further and get $\Tilde{O}(1)$ runtime, we just need to maintain an additional min-heap that contains the values $\left\{\frac{\min_{C' \in \cC\setminus\{C(p)\}} \avg(p,C')}{\avg(p,C(p)\setminus\{p\})}\right\}_{p \in X}$.
The more challenging part is updating the values $\avg(p',C)$ and $\avg(p',C')$ for all points $p' \in X$.
Since updating these $\Theta(n)$ different values requires $\Omega(n)$ time regardless of how fast we can compute them, we cannot update them after every swap step.
Instead, we show that it is possible to keep track of an upper-bound on the additive difference between the old, outdated, value $\widehat{\avg}(p,C)$ and the true unknown value $\avg(p,C)$, and only recompute these values once $|\widehat{\avg}(p,C) - \avg(p, C)|$ becomes too large.

We further demonstrate that for each $\avg(p',C)$ and $\avg(p',C')$, the increase in this upper-bound due to a swap step is proportionate to the corresponding decrease in $\Phi(\cC)$. As a result, the cumulative increase in all these upper bounds throughout the entire epoch aligns proportionally with $\Phi_0$.

Finally, we need to ensure that the additive error $|\widehat{\avg}(p,C) - \avg(p,C)|$ remains roughly on par with $\Phi_0$ before recalculating $\widehat{\avg}(p,C)$.
To do this, we show that distinguishing between the case $p$ is $\alpha$-envious of $C'$ and the case $p$ is less than $(\alpha/2)$-envious of $C'$ only requires estimates of $\avg(p,C(p))$ and $\avg(p,C')$ that are correct up to an additive term of $\approx \frac{1}{\alpha}(\avg(p,C(p)) + \avg(p,C'))$. So, by \cref{:obs:triangle-inequality-for-average-distance}, we can tolerate an additive error that is roughly equal to the ``distance'' $\frac{1}{|C|\cdot|C'|}\sum_{p_1 \in C}\sum_{p_2 \in C'} d(p_1,p_2)$.
Furthermore, we can ensure that the distance $\frac{1}{|C|\cdot|C'|}\sum_{p_1 \in C}\sum_{p_2 \in C'} d(p_1,p_2)$ is roughly comparable to $\Phi_0$. If not, we can perform a merge and split step that combines the clusters $C$ and $C'$.

\subsection{Preliminaries for the Faster Implementation}\label{:sec:preliminaries-for-fast-implementation}

\paragraph{Remark Regarding High-Probability Guarantees.}
In the analysis of all algorithms in \cref{:sec:faster-implementation}, we assume that the outputs of the probabilistic subroutines always satisfy the high-probability guarantee. 
Since the probability of failure in all of these subroutines can be made to be an arbitrarily small polynomial at the cost of only $\polylog(n)$ factors blowup in the running time, and all algorithms in this section run in polynomial time, this assumption is justified.
Furthermore, we note that it is possible to deterministically check, in time $O(nk)$, whether the final clustering returned by the algorithm from \cref{:thm:fast-algorithm-exists} satisfies the required IP-stability condition, so the algorithm can be made to be a Las-Vegas algorithm with expected running time $\Tilde{O}(nk)$.

\paragraph{Importance Sampling.}
Importance sampling is a statistical technique used to estimate properties of a particular distribution, while only having samples generated from a different distribution than the distribution of interest. It achieves this by reweighting the samples in such a way that their distribution matches the target distribution, thus enhancing the efficiency of Monte Carlo algorithms, as in~\citep{karp1989monte}.\footnote{For more details, see this lecture note: \href{https://tinyurl.com/lect-imp-samp}{https://tinyurl.com/lect-imp-samp}.}
\begin{theorem}[Importance Sampling~\citep{karp1989monte}] \label{:thm:importance-sampling}
Let $x_1,\ldots,x_n \geq 0$ be non-negative values, and let $\hat{x}_1,\ldots,\hat{x}_n \geq 0$ be estimates of $x_1,\ldots,x_n$ such that $x_i \leq \hat{x}_i$ holds for all $i\in [n]$, and $\sum_{i = 1}^{n} \hat{x}_i = O\left(\sum_{i = 1}^{n} x_i\right)$.
Furthermore, let $i_1,\ldots,i_t$ be independent samples from the distribution over $\{1,\ldots,n\}$ that gives probability $\frac{\hat{x}_i}{\sum_{i=1}^{n}\hat{x}_i}$ to each $i$.
Then, $t=O(1/\epsilon^2)$ suffices so that, w.h.p,
\[
 (1-\epsilon)\sum_{i=1}^{n} x_i \leq \frac{\left(\sum_{i=1}^{n}\hat{x}_i\right)}{t}\sum_{j=1}^{t}\frac{x_{i_j}}{\hat{x}_{i_j}} \leq (1+\epsilon) \sum_{i=1}^{n} x_i.
\]
\end{theorem}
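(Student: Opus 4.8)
The plan is to recognize the estimator $\widehat{S} := \frac{1}{t}\bigl(\sum_{i=1}^{n}\hat x_i\bigr)\sum_{j=1}^{t}\frac{x_{i_j}}{\hat x_{i_j}}$ as an \emph{unbiased} Monte Carlo estimator of $S := \sum_{i=1}^{n} x_i$ built from i.i.d.\ random variables that are bounded in $[0,1]$, and then apply a Chernoff bound. Write $H := \sum_{i=1}^{n}\hat x_i$ and, for $j\in[t]$, set $Y_j := x_{i_j}/\hat x_{i_j}$ (with the convention that an index $i$ with $\hat x_i=0$ — which then forces $x_i=0$ and is sampled with probability $0$ — contributes $0$, so this is harmless). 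The $Y_j$ are i.i.d.

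First I would compute the mean: by definition of the sampling distribution,
\[
\mathbb{E}[Y_1] \;=\; \sum_{i=1}^{n}\frac{\hat x_i}{H}\cdot\frac{x_i}{\hat x_i} \;=\; \frac{1}{H}\sum_{i=1}^{n} x_i \;=\; \frac{S}{H} \;=:\; \mu ,
\]
so $\mathbb{E}[\widehat S] = \frac{H}{t}\cdot t\mu = S$. Next I would record two structural facts: (i) $0\le Y_j\le 1$, since $0\le x_i\le\hat x_i$ for every $i$; and (ii) $\mu = S/H$ is bounded below by a positive constant, because the hypothesis $H=O(S)$ gives $H\le c_0 S$ for some constant $c_0$, whence $\mu\ge 1/c_0$.

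The heart of the argument is then a concentration bound. Since $Y_1,\dots,Y_t$ are i.i.d.\ in $[0,1]$ with mean $\mu$, a multiplicative Chernoff bound gives
\[
\Pr\!\left[\,\Bigl|\tfrac1t\sum_{j=1}^{t} Y_j - \mu\Bigr| > \epsilon\mu\,\right] \;\le\; 2\exp\!\bigl(-\Omega(t\,\epsilon^2\mu)\bigr) \;\le\; 2\exp\!\bigl(-\Omega(t\,\epsilon^2)\bigr),
\]
using fact (ii) in the last inequality. Choosing $t=\Theta(\epsilon^{-2}\log n)$ — which is $O(\epsilon^{-2})$ up to a $\polylog(n)$ factor, hence consistent with the claimed bound once logarithmic factors are hidden — makes this probability at most $n^{-c}$ for any prescribed constant $c$. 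On the complementary good event, $\tfrac1t\sum_{j} Y_j \in (1\pm\epsilon)\mu$; multiplying through by $H$ yields $\widehat S \in (1\pm\epsilon)H\mu = (1\pm\epsilon)S$, which is exactly the claimed two-sided bound.

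I do not expect a genuine obstacle — the statement is classical — but the one point that deserves attention is \emph{why} $O(1/\epsilon^2)$ (up to logs) samples suffice: it is precisely the hypothesis $\sum_i\hat x_i = O(\sum_i x_i)$ that keeps the mean $\mu$ of the bounded samples $Y_j$ at a constant level, which upgrades the additive concentration $\pm\epsilon$ of an empirical average of $[0,1]$-valued variables into the desired \emph{multiplicative} $(1\pm\epsilon)$ error; without it, $\mu$ could be arbitrarily small and the sample complexity would blow up like $\epsilon^{-2}\mu^{-1}$. A secondary, purely bookkeeping point is to confirm that indices with $\hat x_i=0$ are harmless, as noted above.
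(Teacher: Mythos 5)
The paper does not prove this theorem at all: it is imported as a black box with a citation to Karp, Luby, and Madras (1989), so there is no in-paper proof to compare against. Your argument is the standard and correct one. You define $Y_j = x_{i_j}/\hat x_{i_j}$, observe it is an i.i.d.\ sequence valued in $[0,1]$ (using $x_i \le \hat x_i$) with mean $\mu = S/H$, note that the crucial hypothesis $H = O(S)$ forces $\mu = \Omega(1)$, apply a multiplicative Chernoff bound, and then rescale by $H$ to get the two-sided estimate on $\widehat S = H\cdot\frac1t\sum_j Y_j$. Your observation about why the hypothesis $\sum_i\hat x_i = O(\sum_i x_i)$ is load-bearing — it bounds $\mu$ away from $0$, converting additive concentration of a $[0,1]$-valued average into the desired multiplicative error — is exactly the right thing to emphasize, and your handling of the degenerate indices with $\hat x_i = 0$ is fine.

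One small thing worth noting for precision: the theorem as stated in the paper says $t = O(1/\epsilon^2)$ suffices for a w.h.p.\ guarantee, but a Chernoff argument actually requires $t = O(\epsilon^{-2}\log n)$ to push the failure probability down to $n^{-c}$; with $t = O(1/\epsilon^2)$ alone you only get constant failure probability (which can then be amplified by a median trick at the cost of the same $\log n$ factor). You noticed this yourself and inserted the $\log n$. The paper is implicitly relying on this same hidden logarithmic factor throughout (e.g., in \cref{:lem:fast-average-distances} the claimed running time is $\Tilde O(\cdot)$, so the extra $\log n$ is absorbed by the $\Tilde O$), so this is a presentational imprecision in the paper's statement rather than a gap in your proof.
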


By the importance sampling technique, we prove the following lemma. Proof of~\cref{:lem:fast-average-distances} is deferred to~\cref{:sec:fast-average-distances}.

\begin{lemma}\label{:lem:fast-average-distances}
    Given a metric space $(X,d)$, a non-empty cluster $C \subseteq X$, a set $S \subseteq X$, and $0 < \epsilon \leq 1$, $\CalcAverage$ (Algorithm \ref{:alg:fast-average-distances}) runs in time $\Tilde{O}(\frac{|C|+|S|}{\epsilon^2})$, and w.h.p., for each $p \in S$, computes a $(1+\epsilon)$-approximation of $\avg(p,C)$.
    
    More precisely, for each $p \in S$, $\CalcAverage$ returns $\widehat{\avg}(p,C)$ s.t. $\avg(p,C) \leq \widehat{\avg}(p,C) \leq (1+\epsilon)\avg(p,C)$.
\end{lemma}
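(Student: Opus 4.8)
The plan is to apply the Importance Sampling theorem (\cref{:thm:importance-sampling}) separately for each point $p \in S$, using cheap-to-compute upper estimates $\hat{x}_{p'}$ of the quantities $x_{p'} = d(p,p')$ for $p' \in C$, whose sum over $C$ is within a constant factor of $\sum_{p' \in C} d(p,p') = |C|\cdot\avg(p,C)$. The natural choice of such estimates comes from the triangle inequality: fix a reference point $q \in C$, and for every $p' \in C$ set $\hat{x}_{p'} = d(p,q) + d(q,p')$. Then $x_{p'} = d(p,p') \le \hat{x}_{p'}$ by the triangle inequality, and $\sum_{p' \in C} \hat{x}_{p'} = |C|\cdot d(p,q) + \sum_{p' \in C} d(q,p') = |C|\cdot d(p,q) + |C|\cdot\avg(q,C)$. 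To get the ``$O(\sum x_{p'})$'' hypothesis of \cref{:thm:importance-sampling}, I would choose $q$ to be (approximately) the point of $C$ minimizing $\avg(q,C)$, so that $\avg(q,C) \le \avg(p,C)$ for the given $p$; and I would handle $d(p,q)$ via $d(p,q) \le \avg(p,C) + \avg(q,C) \le 2\avg(p,C)$ — wait, that is not immediate, so instead I would note $\avg(p,C) \ge \frac{1}{|C|}d(p,q)$ only when $q$ is far, so the cleaner route is: by \cref{:obs:triangle-inequality-for-average-distance} applied with $S_1=\{p\}$ and $S_2 = C$ against the center $q$, hmm. Let me restructure: choose $q$ to minimize $\avg(\cdot,C)$ over $C$, giving $\avg(q,C)\le\avg(p',C)$ for all $p'$, in particular (after it's shown) bounding $d(p,q)$ by $\avg(p,C)+\avg(q,C)$ is false in general — so the estimates need one more idea.

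The fix: use $\hat{x}_{p'} = d(p,q) + d(q,p')$ but bound the sum as $\sum_{p'\in C}\hat x_{p'} = |C|d(p,q) + |C|\avg(q,C)$, and separately observe $d(p,q) = d(p,q)$ appears in the true sum too: $\sum_{p'\in C} d(p,p') \ge \sum_{p'\in C}(d(p,q) - d(q,p')) $ is not sign-definite, but $\sum_{p'\in C} d(p,p') \ge |C|\cdot d(p,q) - |C|\cdot\avg(q,C)$, so if $d(p,q) \ge 2\avg(q,C)$ then $\sum_{p'}d(p,p') \ge \tfrac12 |C| d(p,q)$, hence $\sum\hat x_{p'} \le |C|d(p,q) + |C|\avg(q,C) \le \tfrac32|C|d(p,q) \le 3\sum x_{p'}$. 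And if instead $d(p,q) < 2\avg(q,C) \le 2\avg(p,C) \le \tfrac2{|C|}\sum_{p'}x_{p'}$... here again I need $\avg(q,C)\le\avg(p,C)$, which holds by the minimality of $q$, but I also need $\avg(p,C)\le \frac1{|C|}\sum x_{p'}$, which is an equality. So in this second case $\sum\hat x_{p'} = |C|d(p,q)+|C|\avg(q,C) \le 3|C|\avg(q,C) \le 3|C|\avg(p,C) = 3\sum x_{p'}$. Either way $\sum\hat x_{p'} \le 3\sum x_{p'}$, so \cref{:thm:importance-sampling} applies with $t = O(1/\epsilon^2)$ samples. I must also make these estimates one-sided as required — to guarantee $\widehat{\avg}(p,C)\ge\avg(p,C)$ rather than merely a $(1\pm\epsilon)$ two-sided bound, I would run the importance-sampling estimator at accuracy $\epsilon/3$ and then multiply the output by $(1+\epsilon/3)/(1-\epsilon/3) \le$ an appropriate factor, or re-derive the one-sided version directly; this is a standard rescaling.

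For the running time: computing $q$ exactly costs $\Theta(|C|^2)$, which is too slow, so I would instead take $q$ to be a uniformly random point of $C$ (or the best of $O(\log n)$ random points), for which $\avg(q,C) \le 2\avg(\bar q,C)$ in expectation / with good probability relative to a near-minimizer, or even just note that a random $q$ has $\avg(q,C) \le 2\cdot\frac1{|C|}\sum_{q'}\avg(q',C) \le 4\avg(p,C)$ by \cref{:lem:average-distance-lemma} — that is the clean bound, giving $\sum\hat x_{p'} = O(\sum x_{p'})$ with a slightly larger constant, for every $p$ simultaneously, with no failure probability at all in the choice of $q$ once \cref{:lem:average-distance-lemma} is invoked (since the bound $\frac1{|C|}\sum_{q'\in C}\avg(q',C)\le 2\avg(p,C)$ is deterministic). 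So: pick any fixed $q \in C$; the hypothesis $\sum\hat x_{p'} = O(\sum x_{p'})$ need not hold for this particular $q$ and $p$, so I do need $q$ with small $\avg(q,C)$. The resolution: precompute $\avg(q',C)$ for all $q' \in C$ approximately... that is circular. The honest plan is: draw $q$ uniformly at random from $C$; by Markov and \cref{:lem:average-distance-lemma}, with probability $\ge 1/2$ we have $\avg(q,C) \le \frac{4}{|C|}\sum_{q'}\avg(q',C) \le 8\avg(p,C)$ — but this must hold for all $p\in S$ at once; since the bound is $\avg(q,C)\le 4\cdot\frac1{|C|}\sum\avg(q',C)$, which by \cref{:lem:average-distance-lemma} is $\le 8\avg(p,C)$ for every $p$, a single good draw of $q$ (probability $\ge 3/4$ boosted by $O(\log n)$ repetitions) works for all $p\in S$ simultaneously. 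Computing $\avg(q,C)$ to check takes $O(|C|)$ time. Then the importance sampling distribution for each $p$ is supported on the multiset $\{\hat x_{p'}\}$, samples are drawn in $\Tilde O(1)$ time after an $O(|C|)$ preprocessing of the weights, and the estimator for each of the $|S|$ points costs $O(1/\epsilon^2)$ evaluations of $d$, for a total of $\Tilde O((|C|+|S|)/\epsilon^2)$. The main obstacle is exactly this interplay — getting a reference point $q$ with $\avg(q,C) = O(\avg(p,C))$ for all $p\in S$ cheaply so that the ``$\sum\hat x_i = O(\sum x_i)$'' precondition of \cref{:thm:importance-sampling} holds — and \cref{:lem:average-distance-lemma} is precisely the tool that makes a random $q$ work.
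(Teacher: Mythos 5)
Your high-level plan coincides with the paper's: pick a ``near-central'' reference point $q$ (the paper calls it $p^*$) by taking the best of $O(\log n)$ uniformly random draws, use the triangle-inequality estimates $\hat{x}_{p'} = d(p,q) + d(q,p')$, invoke \cref{:lem:average-distance-lemma} to bound $\avg(q,C)$ by a constant multiple of $\avg(p,C)$ uniformly over all query points $p$, and then apply \cref{:thm:importance-sampling} with $\epsilon' = \Theta(\epsilon)$ and a one-sided rescaling. That part is fine, and after the meandering you land on exactly the right bound: $\sum_{p' \in C}\hat{x}_{p'} \leq |C|d(p,q) + |C|\avg(q,C) = O\bigl(|C|\avg(p,C)\bigr)$, using $d(p,q) \leq \avg(p,C) + \avg(q,C)$ (triangle inequality averaged over $C$) and $\avg(q,C) = O(\avg(p,C))$ from \cref{:lem:average-distance-lemma}.

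The gap is in the running-time claim. You assert that ``samples are drawn in $\tilde{O}(1)$ time after an $O(|C|)$ preprocessing of the weights,'' but the importance-sampling distribution $\propto \hat{x}_{p'} = d(p,q) + d(q,p')$ \emph{depends on the query point $p$}. A fixed preprocessed sampler for the weights $\{d(q,p')\}_{p' \in C}$ does not directly let you sample from a distribution with a per-query additive shift $d(p,q)$, and rebuilding the sampler per query would cost $O(|C|\cdot|S|)$, not $O(|C|+|S|)$. The missing idea — and the reason the paper's $\CalcAverage$ is nontrivial — is the mixture decomposition: for any $p$, the distribution $\propto d(p,q) + d(q,p')$ is a convex combination of the \emph{uniform} distribution on $C$ (with mixture weight $\propto d(p,q)$) and the \emph{fixed} distribution $\propto d(q,p')$ (with mixture weight $\propto \avg(q,C)$). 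The algorithm therefore draws $t$ samples from each of the two fixed component distributions once, in $O(|C| + t)$ time, and for each query $p$ just flips $t$ independent biased coins (with bias $d(p,q)/(d(p,q)+\avg(q,C))$, computable in $O(1)$ per query) to decide which pre-drawn sample to use for each index $i$. This is precisely what lines \ref{:line:fast-averages:final-sample} and the preceding sampling loop of Algorithm \ref{:alg:fast-average-distances} do, and \cref{:cl:distribution-of-samples} verifies that the resulting marginal is exactly $\hat{x}_{p}/\sum_{p''}\hat{x}_{p''}$. Without spelling this out, your time bound is asserted rather than proved. (You also omit the degenerate case where all of $C$ lies at the same location as $q$, which the paper handles separately because the weighted component is then ill-defined — a minor point, but worth a line.)
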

\begin{corollary}\label{:cor:fast-potential-of-one-cluster}
    There exists an algorithm that runs in time $\Tilde{O}(|C|/\epsilon^2)$, and for any non-empty subset $C \in X$, w.h.p. computes a $(1+\epsilon)$-approximation of $\Phi(C)$, where $0<\epsilon \le 1$.
    
    More precisely, for every non-empty $C\subset X$, the algorithm returns $\hat{\Phi}(C)$ s.t. $\Phi(C) \leq \hat{\Phi}(C) \leq (1+\epsilon)\Phi(C)$.
\end{corollary}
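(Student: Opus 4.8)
The plan is to observe that $\Phi(C)$ is, up to the scalar factor $\log|C|$, simply the sum $\sum_{p\in C}\avg(p,C)$ of quantities that $\CalcAverage$ already knows how to approximate. Concretely, I would invoke \cref{:lem:fast-average-distances} with the cluster $C$, the set $S = C$, and the given parameter $\epsilon$, to obtain values $\widehat{\avg}(p,C)$ for every $p\in C$, computed in time $\Tilde{O}((|C|+|S|)/\epsilon^2) = \Tilde{O}(|C|/\epsilon^2)$, such that (w.h.p.) $\avg(p,C)\le\widehat{\avg}(p,C)\le(1+\epsilon)\avg(p,C)$ for all $p\in C$. Then I would define the estimator
\[
 \hat{\Phi}(C) \defeq \log|C| \cdot \sum_{p\in C}\widehat{\avg}(p,C).
\]

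The correctness is immediate: recall $\Phi(C) = \log|C|\cdot\sum_{p\in C}\avg(p,C)$, and $\log|C|\ge 0$ for all $|C|\ge 1$, so summing the per-point inequalities and multiplying through by $\log|C|$ yields $\Phi(C)\le\hat{\Phi}(C)\le(1+\epsilon)\Phi(C)$ on the high-probability event. (The degenerate case $|C|=1$ is trivially fine since then $\log|C| = 0$ and both sides vanish.) For the running time, the call to $\CalcAverage$ dominates and costs $\Tilde{O}(|C|/\epsilon^2)$; forming the sum of the $|C|$ returned values and multiplying by $\log|C|$ adds only $O(|C|)$ more, so the total is $\Tilde{O}(|C|/\epsilon^2)$ as claimed.

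There is essentially no hard step here — the statement is a direct packaging of \cref{:lem:fast-average-distances}. The only point that deserves a sentence of care is that the approximation factor is preserved under the (nonnegative) linear aggregation $x\mapsto \log|C|\cdot\sum_p x_p$: one-sided multiplicative errors add coherently because all terms and the scalar are nonnegative, so no cancellation can blow up the relative error. Everything else is bookkeeping.
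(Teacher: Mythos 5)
Your proposal is correct and takes essentially the same approach as the paper: invoke $\CalcAverage(C,C,\epsilon)$, set $\hat{\Phi}(C) = \log|C|\cdot\sum_{p\in C}\widehat{\avg}(p,C)$, and observe that the one-sided $(1+\epsilon)$ per-point guarantees are preserved under this nonnegative linear aggregation. Your extra remark about the degenerate $|C|=1$ case is a small point of care the paper leaves implicit, but the argument is the same.
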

\begin{proof}
The algorithm calculates each estimate $\hat{\Phi}(C)$ by calling $\CalcAverage(C, C, \epsilon)$ and then setting $\hat{\Phi}(C) = \log(|C|)\cdot \sum_{p \in C}\widehat{\avg}(p,C)$, where $\{\widehat{\avg}(p,C)\}_{p \in C}$ are the outputs of $\CalcAverage$.
It is straightforward to see that the algorithm runs in time $\Tilde{O}(|C|/\epsilon^2)$.
Furthermore, since $\Phi(C) = \log(|C|) \cdot \sum_{p \in C} \avg(p,C)$ (see \cref{:sec:potential-function}), and the estimates returned by $\CalcAverage$ satisfy $\avg(p,C) \leq \widehat{\avg}(p,C) \leq (1+\epsilon)\avg(p,C)$,
\[
 \Phi(C) \leq \log(|C|)\cdot \sum_{p \in C}\widehat{\avg}(p,C) \leq (1+\epsilon)\Phi(C).
\]
Hence, $\Phi(C) \leq \hat{\Phi}(C) \leq (1+\epsilon)\Phi(C)$.
\end{proof}
\begin{corollary}\label{:cor:fast-potential-of-clustering}
    Given a metric space $(X,d)$ with $|X|=n$, a clustering $\cC$, and $0 < \epsilon \leq 1$, $\CalcPotential$ computes a $(1+\epsilon)$-approximation of $\Phi(\cC) = \sum_{C \in \cC}\Phi(C)$ in time $\Tilde{O}(n/\epsilon^2)$ w.h.p.
    
    More precisely, given a clustering $\cC$, $\CalcPotential$ returns an estimate $\hat{\Phi}(\cC)$ s.t. $\Phi(\cC) \leq \hat{\Phi}(\cC) \leq (1+\epsilon)\Phi(\cC)$ holds w.h.p.
\end{corollary}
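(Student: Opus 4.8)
\textbf{Proof proposal for Corollary \ref{:cor:fast-potential-of-clustering}.}
The plan is to reduce directly to the single-cluster estimator of \cref{:cor:fast-potential-of-one-cluster}. Define $\CalcPotential$ as follows: for each cluster $C \in \cC$, invoke the algorithm of \cref{:cor:fast-potential-of-one-cluster} on $C$ with error parameter $\epsilon$ to obtain an estimate $\hat{\Phi}(C)$ satisfying $\Phi(C) \leq \hat{\Phi}(C) \leq (1+\epsilon)\Phi(C)$ (w.h.p.), and then return $\hat{\Phi}(\cC) \defeq \sum_{C \in \cC} \hat{\Phi}(C)$.

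For the running time, \cref{:cor:fast-potential-of-one-cluster} guarantees that processing a single cluster $C$ takes time $\Tilde{O}(|C|/\epsilon^2)$. Since the clusters of $\cC$ partition $X$, we have $\sum_{C \in \cC} |C| = n$, so the total running time is $\sum_{C \in \cC} \Tilde{O}(|C|/\epsilon^2) = \Tilde{O}(n/\epsilon^2)$, as claimed. (There are at most $k \le n$ clusters, so no additional overhead beyond the $\polylog$ hidden in $\Tilde{O}$.)

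For correctness, condition on the event that every one of the (at most $k \le n$) single-cluster estimates is accurate. By \cref{:cor:fast-potential-of-one-cluster} the failure probability per cluster can be driven to $O(1/n^{c+1})$ for any constant $c$ at only a $\polylog(n)$ cost, so a union bound over the clusters leaves overall failure probability $O(1/n^{c})$, consistent with the ``Remark Regarding High-Probability Guarantees'' in \cref{:sec:preliminaries-for-fast-implementation}. On this event, summing the per-cluster inequalities $\Phi(C) \leq \hat{\Phi}(C) \leq (1+\epsilon)\Phi(C)$ over all $C \in \cC$ yields
\[
 \Phi(\cC) = \sum_{C \in \cC}\Phi(C) \;\leq\; \sum_{C \in \cC}\hat{\Phi}(C) = \hat{\Phi}(\cC) \;\leq\; (1+\epsilon)\sum_{C \in \cC}\Phi(C) = (1+\epsilon)\Phi(\cC),
\]
which is exactly the desired guarantee.

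There is no real obstacle here; the statement is a routine aggregation of \cref{:cor:fast-potential-of-one-cluster}. The only point that requires a moment's care is the union bound over clusters, which is handled by the standard observation that the per-cluster error probability can be made an arbitrarily small inverse polynomial at a $\polylog(n)$ multiplicative cost in running time.
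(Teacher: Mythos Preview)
Your proposal is correct and matches the paper's approach exactly: the paper's proof is the single sentence ``Simply follows from the guarantee of \cref{:cor:fast-potential-of-one-cluster} for each cluster $C \in \cC$,'' and you have just spelled out the obvious summation and union-bound details behind that sentence.
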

\begin{proof}
Simply follows from the guarantee of \cref{:cor:fast-potential-of-one-cluster} for each cluster $C \in \cC$.
\end{proof}

We will use the above results throughout all of \cref{:sec:faster-implementation}. For now, we will use them in order to prove the following lemma, which provides a faster version of the $\split$ algorithm from \cref{:lem:slow-split}.
Both the procedure and it's analysis are analogous to the ones from \cref{:lem:slow-split}, with the main difference being that we use \cref{:cor:fast-potential-of-one-cluster} in order to compute (estimates of) potentials of clusters, rather than using the slow trivial way of computing potentials.
The proof of \cref{:lem:fast-split} is found in \cref{:sec:proof-of-fast-split}.

\begin{lemma}\label{:lem:fast-split}
    
    There exists an algorithm $\fastSplit$ running in time $\tilde{O}(n)$ that given a metric space $(X,d)$ with $n$ points and a $(k-1)$-clustering $\cC$, w.h.p., finds a cluster $C^* \in \cC$ and a partition $(C^*_1, C^*_2)$ of $C^*$ such that $\Phi(C^*_1)+\Phi(C^*_2) \leq \Phi(C^*) - \Omega(\frac{\Phi(\cC)}{k \log n})$.
\end{lemma}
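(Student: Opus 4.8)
The plan is to mirror the proof of \cref{:lem:slow-split} almost verbatim, swapping the exact potential computations for the approximate ones provided by \cref{:cor:fast-potential-of-one-cluster} and \cref{:cor:fast-potential-of-clustering}. Concretely, $\fastSplit$ will first call $\CalcPotential(\cC,\epsilon)$ with $\epsilon$ a suitably small constant (say $\epsilon = 1/10$) to obtain an estimate $\hat{\Phi}(\cC)$ with $\Phi(\cC) \le \hat{\Phi}(\cC) \le (1+\epsilon)\Phi(\cC)$, then pick $C^* \leftarrow \argmax_{\{C \mid C \in \cC,\ |C|>1\}} \hat{\Phi}(C)$ using the per-cluster estimates $\hat{\Phi}(C)$, and then repeatedly sample a uniformly random half $C^*_1 \subset C^*$ of size $\ceil{|C^*|/2}{}$, setting $C^*_2 \leftarrow C^* \setminus C^*_1$, until the estimated potentials $\hat{\Phi}(C^*_1) + \hat{\Phi}(C^*_2)$ (recomputed via \cref{:cor:fast-potential-of-one-cluster}) drop below $\hat{\Phi}(C^*) - \Omega(\hat{\Phi}(C^*)/\log n)$.

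For correctness, I would argue in two steps. First, since only clusters of size $\ge 2$ have nonzero potential and there are at most $k$ of them, the cluster maximizing $\hat{\Phi}(C)$ satisfies $\hat{\Phi}(C^*) \ge \frac1k \sum_{C : |C| \ge 2} \hat{\Phi}(C) \ge \frac1k \Phi(\cC)$, and because each $\hat{\Phi}(C) \le (1+\epsilon)\Phi(C)$ we also get $\Phi(C^*) \ge \hat{\Phi}(C^*)/(1+\epsilon) \ge \Phi(\cC)/((1+\epsilon)k) = \Omega(\Phi(\cC)/k)$. Second, when the loop terminates, the stopping condition together with the two-sided bound $\Phi(\cdot) \le \hat{\Phi}(\cdot) \le (1+\epsilon)\Phi(\cdot)$ translates into a genuine bound on the true potentials: $\Phi(C^*_1)+\Phi(C^*_2) \le \hat{\Phi}(C^*_1)+\hat{\Phi}(C^*_2) \le \hat{\Phi}(C^*) - \Omega(\hat{\Phi}(C^*)/\log n) \le (1+\epsilon)\Phi(C^*) - \Omega(\Phi(C^*)/\log n)$. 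Here the constant $\epsilon$ must be chosen small enough, and the constant inside $\Omega(\cdot/\log n)$ large enough, so that the $(1+\epsilon)\Phi(C^*)$ term is absorbed; this works because $\Phi(C^*)/\log n$ is not negligible relative to $\epsilon \Phi(C^*)$ only when $\log n$ is small, so one either picks a target decrease like $\Phi(C^*)/(3\log n)$ and $\epsilon$ comparably small, or argues directly that \cref{:obs:probability-of-good-split} still gives a true decrease of $\Phi(C^*)/(4\log n)$ in expectation, leaving a margin. Combined with $\Phi(C^*) = \Omega(\Phi(\cC)/k)$, this yields $\Phi(C^*_1)+\Phi(C^*_2) \le \Phi(C^*) - \Omega(\Phi(\cC)/(k\log n))$, as required.

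For the running time, each call to $\CalcAverage$/$\CalcPotential$ on a cluster of size $m$ costs $\Tilde{O}(m/\epsilon^2) = \Tilde{O}(m)$ with $\epsilon$ constant, so computing $\hat{\Phi}(\cC)$ and choosing $C^*$ costs $\Tilde{O}(n)$ in total; each loop iteration recomputes $\hat{\Phi}(C^*_1)$ and $\hat{\Phi}(C^*_2)$ in $\Tilde{O}(|C^*|) = \Tilde{O}(n)$ time. It therefore remains to bound the number of loop iterations. This is exactly the role of \cref{:obs:probability-of-good-split}: a uniformly random half satisfies $\Phi(C^*_1)+\Phi(C^*_2) \le (1-\frac{1}{4\log n})\Phi(C^*)$ with probability $\Tilde{\Omega}(1)$, and since the estimates are within a $(1\pm\epsilon)$ factor, with $\epsilon$ small this event also makes the estimated stopping condition hold with probability $\Tilde{\Omega}(1)$; hence the expected number of iterations is $\Tilde{O}(1)$, and by standard truncation (aborting and re-reporting after $\Tilde{O}(1)$ rounds of no success) the algorithm runs in $\Tilde{O}(n)$ time with high probability, at the cost of an arbitrarily small polynomial failure probability (which also accounts for the failure probability of the $\CalcAverage$ calls via a union bound over the $\Tilde{O}(1)$ invocations).

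The main obstacle I anticipate is purely bookkeeping: making sure the interaction between the multiplicative error $(1\pm\epsilon)$ in the potential estimates and the additive-looking target decrease $\Omega(\Phi(C^*)/\log n)$ is handled cleanly, i.e. fixing the constants so that (i) the loop is guaranteed to terminate quickly because a true good split also passes the \emph{estimated} test, and (ii) passing the estimated test implies a true decrease of the right order. Everything else — the $\argmax$ lower bound, the per-cluster and per-clustering running times, and the Las-Vegas-to-Monte-Carlo / truncation step — is a direct transcription of the arguments already given for \cref{:lem:slow-split} and \cref{:obs:probability-of-good-split}, now using \cref{:cor:fast-potential-of-one-cluster} in place of the naive $O(|C|^2)$ potential evaluation.
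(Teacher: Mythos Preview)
Your approach is the same as the paper's, but there is one concrete gap in the parameter choice that you flag as ``bookkeeping'' and then handle inconsistently. You start by fixing $\epsilon = 1/10$, yet later acknowledge that the $(1+\epsilon)\Phi(C^*)$ overshoot must be absorbed by the $\Omega(\Phi(C^*)/\log n)$ target. With any constant $\epsilon$ this simply fails for large $n$: from the stopping condition you only get
\[
\Phi(C^*_1)+\Phi(C^*_2)\;\le\;\hat{\Phi}(C^*_1)+\hat{\Phi}(C^*_2)\;\le\;(1+\epsilon)\Phi(C^*)\Bigl(1-\tfrac{c}{\log n}\Bigr),
\]
and $(1+\epsilon)(1-c/\log n)>1$ once $\log n > c/\epsilon$, so no true decrease follows. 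The same issue bites the termination direction: a genuinely good split satisfying $\Phi(C^*_1)+\Phi(C^*_2)\le(1-\tfrac{1}{4\log n})\Phi(C^*)$ need not pass the estimated test when $\epsilon$ is a constant.

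The fix, which you hint at (``$\epsilon$ comparably small'') but do not commit to, is exactly what the paper does: take $\epsilon=\Theta(1/\log n)$ --- specifically $\epsilon=\tfrac{1}{100\log n}$ --- and loosen the loop threshold from $1/(4\log n)$ to $1/(5\log n)$. Then $(1+\epsilon)(1-\tfrac{1}{4\log n})\le 1-\tfrac{1}{5\log n}$ handles termination, and $(1+\epsilon)(1-\tfrac{1}{5\log n})\le 1-\tfrac{1}{6\log n}$ handles correctness. Crucially, your running-time claim ``$\tilde{O}(m/\epsilon^2)=\tilde{O}(m)$ with $\epsilon$ constant'' survives this change, since $1/\epsilon^2=O(\log^2 n)$ is absorbed by the $\tilde{O}(\cdot)$. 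Once you make this one adjustment, everything else in your outline (choice of $C^*$ via estimated potentials, the appeal to \cref{:obs:probability-of-good-split}, the Las-Vegas-to-Monte-Carlo truncation) matches the paper's proof.
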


Lastly, the following lemma will be useful in order to bound the amount by which values such as $\avg(p',C)$ change as we swap points in and out of a cluster $C$.
The proof of this lemma is provided in \cref{:sec:proof-of-change-in-average}.

\begin{lemma}\label{:lem:change-in-average}
    For every metric space $(X,d)$, every non-empty subset $C \subset X$, every point $p \in X \setminus C$, and every point $p' \in X$,
    \[
     |\avg(p',C\cup\{p\}) - \avg(p',C)| \leq \avg(p,C)/(|C|+1) = \avg(p,C \cup \{p\})/|C|.
    \]
\end{lemma}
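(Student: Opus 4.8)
The plan is to write out both quantities $\avg(p',C\cup\{p\})$ and $\avg(p',C)$ explicitly and subtract, then control the difference using the triangle inequality. Writing $m = |C|$, we have $\avg(p',C) = \frac{1}{m}\sum_{q \in C} d(p',q)$ and $\avg(p',C\cup\{p\}) = \frac{1}{m+1}\bigl(\sum_{q \in C} d(p',q) + d(p',p)\bigr)$. Putting both over the common denominator $m(m+1)$, the difference becomes
\begin{align*}
 \avg(p',C\cup\{p\}) - \avg(p',C)
 &= \frac{m\bigl(\sum_{q \in C} d(p',q)\bigr) + m\, d(p',p) - (m+1)\sum_{q \in C} d(p',q)}{m(m+1)} \\
 &= \frac{m\, d(p',p) - \sum_{q \in C} d(p',q)}{m(m+1)}
 = \frac{d(p',p) - \avg(p',C)}{m+1}.
\end{align*}
So the first step reduces everything to bounding $|d(p',p) - \avg(p',C)|$ by $\avg(p,C)$.

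The second step is exactly that bound. Since $\avg(p',C) = \frac{1}{m}\sum_{q \in C} d(p',q)$ is an average of the numbers $d(p',q)$ over $q \in C$, to show $|d(p',p) - \avg(p',C)| \le \avg(p,C) = \frac{1}{m}\sum_{q\in C} d(p,q)$ it suffices (by averaging) to argue termwise-ish: we have $d(p',p) - d(p',q) \le d(p,q)$ and $d(p',q) - d(p',p) \le d(p,q)$ for every $q$, both instances of the triangle inequality, hence $|d(p',p) - d(p',q)| \le d(p,q)$; averaging over $q \in C$ gives $|d(p',p) - \avg(p',C)| \le \frac{1}{m}\sum_{q \in C}|d(p',p) - d(p',q)| \le \frac{1}{m}\sum_{q\in C} d(p,q) = \avg(p,C)$. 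Combining with the identity from the first step yields $|\avg(p',C\cup\{p\}) - \avg(p',C)| \le \avg(p,C)/(m+1)$, which is the left inequality of the claim.

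For the final equality $\avg(p,C)/(|C|+1) = \avg(p,C\cup\{p\})/|C|$, just note $\avg(p,C\cup\{p\}) = \frac{1}{m+1}\bigl(\sum_{q\in C} d(p,q) + d(p,p)\bigr) = \frac{1}{m+1}\sum_{q\in C} d(p,q) = \frac{m}{m+1}\avg(p,C)$, so dividing by $m$ gives $\avg(p,C\cup\{p\})/m = \avg(p,C)/(m+1)$. There is no real obstacle here; the only mild subtlety is keeping the algebra of the denominators straight in the first step and remembering that $d(p,p)=0$ so that the self-term does not contribute. I would present the computation of the difference first, then the triangle-inequality averaging bound, then the trivial rewriting at the end.
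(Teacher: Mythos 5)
Your proof is correct and takes essentially the same route as the paper's: both derive the identity $\avg(p',C\cup\{p\}) - \avg(p',C) = \frac{1}{|C|+1}\bigl(d(p',p) - \avg(p',C)\bigr)$ and then bound $|d(p',p) - \avg(p',C)|$ by $\avg(p,C)$ via a termwise triangle inequality followed by averaging over $q \in C$. You additionally spell out the final rewriting $\avg(p,C)/(|C|+1) = \avg(p,C\cup\{p\})/|C|$, which the paper leaves implicit, but the substance is identical.
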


\subsubsection{Proof of \texorpdfstring{\cref{:lem:fast-split}}{Lemma 6}} \label{:sec:proof-of-fast-split}
\begin{proof}
The procedure follows the same meta-algorithm as $\split$ (Algorithm \ref{:alg:split-procedure}), except that now potentials of clusters are computed via the algorithm from \cref{:cor:fast-potential-of-one-cluster} with $\epsilon = \frac{1}{100 \log n}$.
Furthermore, the constant in the $\Omega$ notation in the condition of the loop is now $\frac{1}{5}$ instead of $\frac{1}{4}$.

\paragraph{Correctness Analysis.}
When the algorithm returns, the computed estimates must satisfy $\hat{\Phi}(C^*_1)+\hat{\Phi}(C^*_2) \leq \hat{\Phi}(C^*)\left(1 - \frac{1}{5\log n}\right)$.
Therefore, since the estimates are $(1+\epsilon)$-approximations of the true values, with $\epsilon = \frac{1}{100\log n}$,
\begin{align*}
    \Phi(C^*_1)+\Phi(C^*_2)
    \leq \Phi(C^*)\cdot (1+\epsilon)\left(1 - \frac{1}{5\log n}\right)
    \leq \Phi(C^*)\cdot \left(1 - \frac{1}{6\log n}\right)
    = \Phi(C^*) - \Omega(\frac{\Phi(C^*)}{\log n})
\end{align*}

\paragraph{Running Time Analysis.}
This analysis follows the same template as the one of the slow $\split$ algorithm from \cref{:lem:slow-split};
Firstly, since we invoke the algorithm from \cref{:cor:fast-potential-of-one-cluster} with parameter $\epsilon=\frac{1}{\polylog(n)}$, we get that the running time of each such invocation on each cluster $C$ is $\Tilde{O}(|C|)$.
Thus, out implementation of lines \ref{:line:split:compute-potential-of-clustering} and \ref{:line:split:choose-cluster-to-split} of the $\split$ meta-algorithm (Algorithm \ref{:alg:split-procedure}) runs in time $\Tilde{O}(\sum_{C \in \cC}|C|) = \Tilde{O}(n)$, and our implementation for checking the condition of the loop in lines \ref{:line:split:loop-start}-\ref{:line:split:repeat-until-condition} runs in time $\Tilde{O}(|C^*_1|+|C^*_2|)\leq\Tilde{O}(n)$.
Furthermore, it is not hard to see that line \ref{:line:split:sample-partition} can be implemented in time $\Tilde{O}(|C^*|) \leq \Tilde{O}(n)$.
Therefore, to complete the analysis, we just need to show that only $\Tilde{O}(1)$ iterations of the loop occur in expectation.
Specifically, we will show that after each iteration of the loop, there is a probability of at least $\Tilde{\Omega}(1)$ that the condition for terminating the loop holds, i.e. that $\hat{\Phi}(C^*_1) + \hat{\Phi}(C^*_2) \leq \hat{\Phi}(C^*) \cdot (1-\frac{1}{5 \log n})$ holds, where $\hat{\Phi}(\cdot)$ represents our estimate of the real value $\Phi(\cdot)$ that we get by running \cref{:cor:fast-potential-of-one-cluster}:

Indeed, according to \cref{:obs:probability-of-good-split}, each split generated by line \ref{:line:split:sample-partition} has a probability of $\Tilde{\Omega}(1)$ to satisfy $\Phi(C^*_1) + \Phi(C^*_2) \leq \Phi(C^*) \cdot (1-\frac{1}{4\log n})$.
Whenever this event occurs, since the estimates produced by \cref{:cor:fast-potential-of-one-cluster} are $(1+\epsilon)$ approximations of the true values, we have
\begin{align*}
    \hat{\Phi}(C^*_1)+\hat{\Phi}(C^*_2)
    \leq \hat{\Phi}(C^*)\cdot (1+\epsilon)\left(1 - \frac{1}{4\log n}\right)
    \leq \hat{\Phi}(C^*)\cdot \left(1 - \frac{1}{5\log n}\right)
\end{align*}
as we needed.
This concludes the proof that, for each iteration of the loop of the meta-algorithm (Algorithm \ref{:alg:split-procedure}), there is a probability of at $\Tilde{\Omega}(1)$ that the loop terminates at the end of this iteration (regardless of what happened in previous iterations).
This implies that the expected number of iterations is $\Tilde{O}(1)$.

This concludes the running time analysis, and thus concludes the proof of \cref{:lem:fast-split}.
\end{proof}

\subsubsection{Proof of \texorpdfstring{\cref{:lem:change-in-average}}{Lemma 7}} \label{:sec:proof-of-change-in-average}

\begin{proof}[Proof of \cref{:lem:change-in-average}]
Since
\[
 \avg(p',C\cup\{p\}) = \frac{1}{|C \cup \{p\}|}\sum_{p'' \in C \cup \{p\}}d(p',p'')
  = \frac{1}{|C|+1}\sum_{p'' \in C \cup \{p\}}d(p',p'')
\]
and since
\[
 \frac{|C|}{|C|+1}\avg(p',C) = \frac{1}{|C|+1}\sum_{p'' \in C}d(p',p''),
\]
we get that $\avg(p',C\cup\{p\}) =  \frac{1}{|C|+1}d(p',p) + \frac{|C|}{|C|+1}\avg(p',C)$ and thus
\begin{equation}\label{:eq:inserting-point-to-cluster-changes-average-distances}
    \avg(p',C\cup\{p\}) - \avg(p',C) = \frac{1}{|C|+1}\left(d(p',p) - \avg(p',C)\right).
\end{equation}
Furthermore, since $\avg(p',C) = \frac{1}{|C|}\sum_{p'' \in C}d(p',p'')$, we have that
\[
 d(p',p) - \avg(p',C) = \frac{1}{|C|}\left(\sum_{p'' \in C}d(p',p) - d(p',p'')\right).
\]
By the triangle inequality for real numbers, the last equality implies that
\[
 |d(p',p) - \avg(p',C)| \leq \frac{1}{|C|}\left(\sum_{p'' \in C}|d(p',p) - d(p',p'')|\right)
\]
and by the triangle inequality in the metric space $(X,d)$, this implies that
\[
 |d(p',p) - \avg(p',C)| \leq \frac{1}{|C|}\left(\sum_{p'' \in C}|d(p,p'')|\right) = \avg(p,C)
\]
which means that
\[
 |d(p',p) - \avg(p',C)| \leq \frac{1}{|C|}\left(\sum_{p'' \in C}d(p,p'')\right)
 = \avg(p,C).
\]
Plugging the last inequality into \cref{:eq:inserting-point-to-cluster-changes-average-distances}, we get that
\[
 |\avg(p',C\cup\{p\}) - \avg(p',C)|
 = \frac{1}{|C|+1}|d(p',p) - \avg(p',C)|
 \leq \frac{1}{|C|+1}\avg(p,C),
\]
as we needed to prove.
\end{proof}

\subsection{Proof of \texorpdfstring{\cref{:thm:fast-algorithm-exists}}{Theorem 2}} \label{:sec:fast-alg-from-fast-epoch}
Now, we show that \cref{:thm:fast-algorithm-exists} can easily be proved using the algorithm {\sc Epoch} from \cref{:thm:fast-epoch} as a subroutine.

\begin{algorithm}
\caption{\fastLS~runs in time $\Tilde{O}(n k)$.}\label{:alg:fast-merge-and-split}
\KwData{$(X,d)$, $n = |X|$, $k \leq n$, $\alpha=16\log n$}
\KwResult{An $\alpha$-IP stable $k$-clustering of $(X,d)$}
$\cC' \leftarrow $ the output of the greedy algorithm of $k$-center~\citep{gonzalez1985clustering} on $X$, and $\epsilon \leftarrow 1/10$ \\
\Repeat{$\CalcPotential(\cC', \epsilon) \geq \frac{7}{8} \cdot \CalcPotential(\cC, \epsilon)$}{
    $\cC \leftarrow \cC'$, $\cC' \leftarrow ${\sc Epoch}($\cC$)
}
\Return $\cC'$
\end{algorithm}
\begin{proof}[Proof of \cref{:thm:fast-algorithm-exists}]
The implementation of \cref{:thm:fast-algorithm-exists} is described in Algorithm \ref{:alg:fast-merge-and-split}.
Next, we analyze the correctness and running time of the algorithm.

\paragraph{Correctness Analysis.}
The algorithm only returns a clustering $\cC'$ if the condition
\begin{align*}
    ``\CalcPotential(\cC', \epsilon) \geq \frac{7}{8} \cdot \CalcPotential(\cC, \epsilon)"
\end{align*}
is satisfied.
Since $\CalcPotential$ returns a $(1+\epsilon)$-approximation of the potential of a clustering, with one-sided error, this condition implies that $(1+\epsilon)\Phi(\cC') \geq \frac{7}{8}\cdot\Phi(\cC)$.
Since the algorithm always uses $\epsilon=1/10$, the above condition implies $\Phi(\cC') \geq \frac{10}{11}\cdot\frac{7}{8}\Phi(\cC) \geq \frac{3}{4} \cdot \Phi(\cC)$.
Therefore, when the above condition is satisfied, because $\cC'$ is the clustering returned by a call of {\sc Epoch} on $\cC$, \cref{:thm:fast-epoch} implies that $\cC'$ is an $O(\log n)$-IP stable clustering of $(X,d)$.
In summary, a clustering $\cC'$ is only returned if it is $O(\log n)$-IP stable.

\paragraph{Time Analysis.}
Note that the greedy algorithm of $k$-center algorithm runs in time $O(nk)$.
Since, by~\cref{:thm:fast-epoch}, {\sc Epoch} runs in time $\Tilde{O}(nk)$, and by~\cref{:cor:fast-potential-of-clustering}, each call to $\CalcPotential$ runs in time $\Tilde{O}(n)$, we get that each iteration of the loop in Algorithm~\ref{:alg:fast-merge-and-split} takes time $\Tilde{O}(nk)$.
So, it suffices to bound the number of iterations.
By~\cref{:lem:k-center-initialization}, the potential of the initial clustering is at most $\poly(n)$ times larger than the minimum possible potential of a $k$-clustering of $(X,d)$.
Furthermore, whenever the condition of the loop is not satisfied, since $\CalcPotential$ computes a $(1+\epsilon)$-approximation, we must have $\Phi(\cC') < (1+\epsilon) \cdot \frac{7}{8} \cdot \Phi(\cC)=\frac{77}{80}\cdot \Phi(\cC)$.
Thus, each iteration except the last one must decrease the potential of the maintained clustering by a constant factor; hence, that there can be at most $O(\log n)$ iterations.
So, the total runtime of Algorithm \ref{:alg:fast-merge-and-split} is $\Tilde{O}(nk)$.
\end{proof}

\subsection{Proof of \texorpdfstring{\cref{:thm:fast-epoch}}{Theorem 7}} \label{:sec:proof-of-fast-epoch}

In this section, we analyze Algorithm \ref{:alg:fast-epoch}, and show that it satisfies the conditions of \cref{:thm:fast-epoch}.
Initially, we introduce certain notations to facilitate the algorithm's analysis, which can be found in \cref{:def:step-types} and \cref{:def:old-averages}. Following this, we show that the algorithm is well defined, i.e.,  it never tries to read uninitialized values (\cref{:cl:values-defined}).
Then, we state some basic invariants that hold during the run of the algorithm (\cref{:cl:basic-invariants-of-fast-epoch}), and two additional lemmas that are proven using the invariants (\cref{:lem:cluster-distances-lemma} and \cref{:lem:steps-are-effective}). 
The correctness analysis of the algorithm is in \cref{:sec:correctness-analysis-of-fast-epoch}.
Prior to analyzing the algorithm's running time, we need to bound the number of various steps the algorithm executes. This is detailed in \cref{:sec:num-swap-and-merge-and-split-steps} and \cref{:sec:num-compute-steps}.
Then, we do the analysis of the running time in \cref{:sec:runtime-analysis-of-fast-epoch}.

\begin{proof}[Proof of \cref{:thm:fast-epoch}] 
By \cref{:cl:return-IP-stable-clustering} and \cref{:cl:return-clustering-with-less-potential}, when Algorithm \ref{:alg:fast-epoch} returns a clustering $\cC$, it is either $\alpha$-IP stable, or satisfies $\Phi(\cC) < \frac{3}{4} \cdot \Phi(\cC_{\mathrm{input}})$.
Furthermore, by \cref{:cor:final-proof-of-running-time-of-fast-epoch}, the algorithm runs in time $\Tilde{O}(nk)$.
\end{proof}

\begin{algorithm}
\caption{The procedure {\sc Epoch} from \cref{:thm:fast-epoch}. It implements a single epoch in the $\Tilde{O}(nk)$-time adaptation of the local search. The variables denoted by $progress(C)$ for each $C\in \cC$ are introduced solely for the sake of analysis. 
}\label{:alg:fast-epoch}
\KwData{$(X,d)$, $n = |X|$, $k \leq n$, $\alpha=16\log n$, and a $k$-clustering $\cC_{input}$ of $X$} 
\KwResult{A $k$-clustering $\cC$ of $(X,d)$ that is either $\alpha$-IP stable or has $\Phi(\cC) < 3\Phi(\cC_{input})/4$}
$\cC \leftarrow \cC_{input}$, $\epsilon \leftarrow 1/10$\\
$\hat{\Phi}(\cC) \leftarrow \CalcPotential(\cC,\epsilon)$ \label{:line:fast-epoch:computing-initial-estimated-potential}\\
$t^* \leftarrow \Omega(\frac{\hat{\Phi}(\cC)}{k\log(n)})/(4\log n)$ \label{:line:fast-epoch:definition-of-t-start}\\
$\clusterToRecompute \leftarrow \cC \qquad\rhd\text{the set of clusters whose average estimates need to be recomputed.}$\\
\While{$\clusterToRecompute\neq \emptyset$ \textbf{or} exists $p \in X$ and $C' \in \cC \setminus \{C(p)\}$ s.t. $C(p) \neq \{p\}$ and $\frac{|C(p)|}{|C(p)\setminus\{p\}|}\cdot \widehat{\avg}(p,C(p)) > \frac{\alpha}{2}\cdot\widehat{\avg}(p,C')$\label{:line:fast-epoch:main-loop}}{
    \uIf{$\clusterToRecompute = \emptyset$\label{:line:fast-epoch:swap-or-recompute-if-statement}}{
        {\bf choose} $p \in X$ and $C' \in \cC$ as in the condition of the {\bf while} loop \label{:line:fast-epoch:choose-point-to-swap} \label{:line:fast-epoch:before-swap}\\
        {\bf move} $p$ from $C=C(p)$ to $C'$ \label{:line:fast-epoch:actual-swap}\\
        $progressIncrease \leftarrow (\frac{1}{1+\epsilon}\widehat{\avg}(p,C) - error(C))/2$ \label{:line:fast-epoch:set-progress-increase}\\
        \For{$C'' \in \{C,C'\}$}{
            $error(C'') \leftarrow error(C'') + (\widehat{\avg}(p,C'')+error(C''))/|C''|$ \label{:line:fast-epoch:increase-error}\\
            $progress(C'') \leftarrow progress(C'') + progressIncrease$\label{:line:fast-epoch:increase-progress}\\
            $numSwaps(C'') \leftarrow numSwaps(C'') + 1$\label{:line:fast-epoch:increase-numSwaps}\\
            \If{$error(C'') > t^*/(100\alpha|C''|)$ \textbf{or} $numSwaps(C'') > \widehat{size}(C'')/2$\label{:line:fast-epoch:if-statement-error-too-large}}{
                {\bf add} $C''$ to $\clusterToRecompute$ \label{:line:fast-epoch:recompute-because-error-too-large}
            }
        }
    }
    \Else{
        $C \leftarrow pop(\clusterToRecompute)$ \label{:line:fast-epoch:select-cluster-to-recompute}\label{:line:fast-epoch:remove-cluster-from-clusters-to-recompute}\\
        $\{\widehat{\avg}(p,C)\}_{p \in X} \leftarrow \CalcAverage(C,X,\epsilon)$\label{:line:fast-epoch:recompute-averages}\\
        $error(C) \leftarrow 0$ \label{:line:fast-epoch:reset-error}\\
        $progress(C) \leftarrow 0$\label{:line:fast-epoch:reset-progress}\\
        $\widehat{size}(C) \leftarrow |C|$, $numSwaps(C) \leftarrow 0$\\
        \If{$\CalcPotential(\cC,\epsilon)< \frac{1+\epsilon}{2} \cdot \hat{\Phi}(\cC)$}{
            \Return $\cC$ \label{:line:fast-epoch:return-clustering-with-less-potential}
        }
        \If{exists cluster $C' \in \cC \setminus \{C\}$ such that $\frac{\min\{|C|,|C'|\}}{|C'|}\sum_{p \in C'}\widehat{\avg}(p,C) < t^*$\label{:line:fast-epoch:merge-and-split-if-statement}}{
            $C'' \leftarrow C \cup C'\quad\rhd$ merge clusters $C,C'$\\
            $\cC \leftarrow (\cC\setminus \{C, C'\}) \cup \{C''\}$ \label{:line:fast-epoch:changing-clustering-in-merge} \\
            $\clusterToRecompute \leftarrow (\clusterToRecompute \setminus \{C,C'\}) \cup \{C''\}$\label{:line:fast-epoch:changing-clusterstorecompute-in-merge}\\
            $\fastSplit(\cC) \qquad \rhd$ a cluster $C^* \in \cC$ is split into $(C^*_1, C^*_2)$ (see \cref{:lem:fast-split})\\
            $\cC \leftarrow (\cC\setminus \{C^*\}) \cup \{C^*_1, C^*_2\}$ \label{:line:fast-epoch:changing-clustering-in-split}\\
            $\clusterToRecompute \leftarrow (\clusterToRecompute\setminus \{C^*\}) \cup \{C^*_1, C^*_2\}$ \label{:line:fast-epoch:changing-clusterstorecompute-in-split}
        }
    }
}
\Return $\cC$ \label{:line:fast-epoch:return-IP-stable-clustering}
\end{algorithm}

\begin{definition}[Step Types]\label{:def:step-types}
We refer to each iteration of the while loop in line \ref{:line:fast-epoch:main-loop} of Algorithm \ref{:alg:fast-epoch} as a ``step''.
We say that an iteration is a swap step if the condition in line \ref{:line:fast-epoch:swap-or-recompute-if-statement} was satisfied during that iteration, and otherwise we say that it is a \emph{recompute} step on the cluster $C$ selected in line \ref{:line:fast-epoch:select-cluster-to-recompute}.
If the condition in line \ref{:line:fast-epoch:merge-and-split-if-statement} is satisfied during a recompute step, then we furthermore refer to this step as a merge and split step.
\end{definition}

\begin{definition} \label{:def:old-averages}
At any point during the execution of Algorithm \ref{:alg:fast-epoch}, for every point $p$ and cluster $C$, we let $\widetilde{\avg}(p,C)$ denote the value that was the true value of $\avg(p,C)$ during the previous time the algorithm executed line \ref{:line:fast-epoch:recompute-averages} on $C$.
By \cref{:lem:fast-average-distances}, $\widetilde{\avg}(p,C) \leq \widehat{\avg}(p,C) \leq (1+\epsilon)\widetilde{\avg}(p,C)$ holds for every $p \in X, C \in \cC$ at all times after $\widehat{\avg}(p,C)$ is first initialized.
\end{definition}

\begin{claim} \label{:cl:values-defined}
At the beginning of every iteration of the loop in line \ref{:line:fast-epoch:main-loop} of Algorithm \ref{:alg:fast-epoch}, for  every cluster $C \in \cC \setminus ClusterToRecompute$, and every point $p \in X$, the values $error(C)$, $\widehat{size}(C)$, $numSwaps(C)$, $progress(C)$, and $\widehat{\avg}(p,C)$ have already been initialized in a previous recompute step on cluster $C$.
\end{claim}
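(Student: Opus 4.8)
The plan is a structural induction over the iterations of the \textbf{while} loop in line~\ref{:line:fast-epoch:main-loop}, taking the statement of \cref{:cl:values-defined} itself as the loop invariant. For the base case, before the first iteration we have $\clusterToRecompute = \cC = \cC_{\mathrm{input}}$, so $\cC \setminus \clusterToRecompute = \emptyset$ and there is nothing to check. For the inductive step, assuming the invariant holds at the start of an iteration, I would show it holds again at the start of the next one, distinguishing the two step types of \cref{:def:step-types}.

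First I would fix the bookkeeping convention that makes the statement meaningful: a cluster retains its identity --- together with its counters $error(\cdot)$, $progress(\cdot)$, $\widehat{size}(\cdot)$, $numSwaps(\cdot)$ and its table of stored averages $\widehat{\avg}(\cdot,C)$ --- when a single point is moved into or out of it during a swap step, whereas the merge operation and the call to $\fastSplit$ create \emph{new} clusters ($C''$, and $C^*_1, C^*_2$) carrying no data yet. With this convention, a cluster acquires initialized data exactly when it is processed by a recompute step (lines~\ref{:line:fast-epoch:recompute-averages}--\ref{:line:fast-epoch:reset-progress}, together with the line setting $\widehat{size}$ and $numSwaps$), and the only way a cluster can enter $\cC$ without a recompute step having happened to it is by being created in a merge or split; but every such new cluster is inserted into $\clusterToRecompute$ immediately (lines~\ref{:line:fast-epoch:changing-clusterstorecompute-in-merge} and~\ref{:line:fast-epoch:changing-clusterstorecompute-in-split}), so it never appears in $\cC \setminus \clusterToRecompute$ prematurely.

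For a \emph{swap} step, the guard in line~\ref{:line:fast-epoch:swap-or-recompute-if-statement} gives $\clusterToRecompute = \emptyset$, hence $\cC \setminus \clusterToRecompute = \cC$, so by the inductive hypothesis every cluster of $\cC$ --- in particular $C = C(p)$ and $C'$ --- already has all its data initialized; this justifies that lines~\ref{:line:fast-epoch:set-progress-increase}--\ref{:line:fast-epoch:recompute-because-error-too-large} read only initialized values, and the in-place updates there leave $C$ and $C'$ initialized. Such a step can only \emph{add} clusters to $\clusterToRecompute$ (possibly $C$ or $C'$), which shrinks $\cC \setminus \clusterToRecompute$ and touches nothing else, so the invariant survives. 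For a \emph{recompute} step, the cluster $C$ popped in line~\ref{:line:fast-epoch:select-cluster-to-recompute} has all five quantities freshly set by the time line~\ref{:line:fast-epoch:reset-progress} is reached, and the subsequent reads (by $\CalcPotential$ and by the test in line~\ref{:line:fast-epoch:merge-and-split-if-statement}) involve only this $C$ and clusters still in $\cC \setminus \clusterToRecompute$. If no merge-and-split occurs, $\cC$ is unchanged and $C$ has merely moved from $\clusterToRecompute$ into $\cC \setminus \clusterToRecompute$ with initialized data; if one does occur, the clusters $C, C', C^*$ that leave $\cC$ drop out of $\cC \setminus \clusterToRecompute$, the newly created clusters go into $\clusterToRecompute$, and every surviving member of $\cC \setminus \clusterToRecompute$ was left untouched --- so in either case the invariant is preserved.

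The argument uses no inequality or estimate, so the only thing one must be careful about --- and which I would state explicitly at the outset --- is this identity convention for clusters under swaps versus merges and splits; once it is pinned down, \cref{:cl:values-defined} is just the assertion that the algorithm's lazy-initialization scheme is internally consistent, and the induction goes through mechanically.
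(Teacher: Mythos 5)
Your proof is correct and takes a different route from the paper. You run a forward induction on iterations, taking the claim itself as the loop invariant and checking that every step type (swap, recompute, merge-and-split) preserves it. The paper instead proves an auxiliary observation (\cref{:obs:cluster-to-recompute-get-recomputed}) --- a cluster passing from $\cC\cap\clusterToRecompute$ to $\cC\setminus\clusterToRecompute$ must have undergone a recompute step in between --- then, for an arbitrary cluster $C$ at an arbitrary iteration $\tau$, traces back to the first iteration at which $C$ was in $\cC$, shows it was then in $\clusterToRecompute$, and invokes the observation. Both arguments rest on the same two facts you also isolate explicitly (newly created clusters are always put into $\clusterToRecompute$ at birth, and a cluster leaves $\clusterToRecompute$ while remaining in $\cC$ only via a recompute step). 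Your induction makes the invariant-maintenance bookkeeping explicit and self-contained; the paper's backward trace is slightly shorter since it never needs to enumerate how $\cC\setminus\clusterToRecompute$ evolves under each of the three step types. One tiny imprecision: $\widehat{size}(C)$ and $numSwaps(C)$ are set on the line \emph{after} line \ref{:line:fast-epoch:reset-progress}, not by the time it is reached, but since you already cite that line separately the argument stands as written.
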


\begin{claim} \label{:cl:basic-invariants-of-fast-epoch}
At the beginning of every iteration of the loop in line \ref{:line:fast-epoch:main-loop} of Algorithm \ref{:alg:fast-epoch}, for every cluster $C \in \cC \setminus ClustersToRecompute$, the following invariants hold:
\begin{enumerate}
    \item $numSwaps(C) \leq \widehat{size}(C)/2$,
    \item $error(C) \leq t^*/(100\alpha|C|)$,
    \item $\left|\widehat{size}(C) - |C|\right| \leq numSwaps(C)$; and,
    \item for every point $p' \in X$, $|\widetilde{\avg}(p',C) - \avg(p',C)| \leq error(C)$.
\end{enumerate}
\end{claim}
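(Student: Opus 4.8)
The plan is to prove all four invariants simultaneously by induction on the number of completed iterations of the \textbf{while} loop in line \ref{:line:fast-epoch:main-loop}. In the base case, Algorithm \ref{:alg:fast-epoch} sets $\clusterToRecompute \leftarrow \cC$, so $\cC \setminus \clusterToRecompute = \emptyset$ and all four statements hold vacuously. Throughout, I will use \cref{:cl:values-defined} to guarantee that $error(C)$, $\widehat{size}(C)$, $numSwaps(C)$, $progress(C)$, and $\widehat{\avg}(p,C)$ are all well-defined whenever referenced for a cluster in $\cC \setminus \clusterToRecompute$.

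For the inductive step, assume the invariants hold at the start of a given iteration and track how that iteration changes the state. A cluster lying in $\cC \setminus \clusterToRecompute$ at the start of the next iteration falls into one of three categories. First, it may be a cluster that was already in $\cC \setminus \clusterToRecompute$ and was not touched by this iteration; then its values and membership are unchanged and the invariants carry over by the inductive hypothesis. Second, it may be the cluster $C$ just processed by a recompute step that was \emph{not} a merge-and-split; then immediately after line \ref{:line:fast-epoch:recompute-averages} and the resets immediately following it we have $error(C)=0$, $numSwaps(C)=0$, $\widehat{size}(C)=|C|$, and, by \cref{:def:old-averages}, $\widetilde{\avg}(p',C)=\avg(p',C)$ for every $p'$, so all four invariants hold for $C$ with equality or slack. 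Third, a cluster created by a merge-and-split step ($C'' = C\cup C'$, or the split pieces $C^*_1, C^*_2$) is put into $\clusterToRecompute$ on lines \ref{:line:fast-epoch:changing-clusterstorecompute-in-merge} and \ref{:line:fast-epoch:changing-clusterstorecompute-in-split}, hence is not in $\cC\setminus\clusterToRecompute$ and needs no check, while the consumed clusters leave $\cC$ entirely. The only remaining possibility is that the iteration is a \emph{swap} step, and here I must verify the invariants for the two clusters $A\setminus\{p\}$ and $B\cup\{p\}$, where $p$ is moved from $A := C(p)$ to $B := C'$.

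Consider such a swap step, and note that because it is a swap step the \textbf{if}-condition in line \ref{:line:fast-epoch:swap-or-recompute-if-statement} held, so $\clusterToRecompute=\emptyset$ and the inductive hypothesis applies to every cluster, in particular to $A$ and $B$. Fix $C'' \in \{A\setminus\{p\}, B\cup\{p\}\}$. If, after the updates on lines \ref{:line:fast-epoch:increase-error}--\ref{:line:fast-epoch:increase-numSwaps}, the condition in line \ref{:line:fast-epoch:if-statement-error-too-large} fires, then $C''$ is added to $\clusterToRecompute$ and there is nothing to prove; otherwise invariants 1 and 2 are precisely the negation of that condition. Invariant 3 is preserved because the swap alters $|C''|$ by exactly one, increments $numSwaps(C'')$ by exactly one, and leaves $\widehat{size}(C'')$ unchanged. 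The heart of the argument is invariant 4. Since a swap step does not change $\widetilde{\avg}(p',C'')$ (that only happens in a recompute step), by the triangle inequality it suffices to show that the increment $(\widehat{\avg}(p,C'')+error(C''))/|C''|$ added to $error(C'')$ on line \ref{:line:fast-epoch:increase-error} is at least $|\avg(p', C''_{\text{new}}) - \avg(p', C''_{\text{old}})|$ for every $p' \in X$. By \cref{:lem:change-in-average} — using its two equivalent right-hand sides for the add-a-point case ($C'' = B\cup\{p\}$, so $|C''|=|B|+1$) and the remove-a-point case ($C'' = A\setminus\{p\}$, so $|C''| = |A\setminus\{p\}|$) — this change is at most $\avg(p, \cdot)/|C''|$, where $\avg(p,\cdot)$ denotes $\avg(p,B)$ or $\avg(p,A)$ respectively. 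It then remains to bound that quantity by $\widehat{\avg}(p,C'') + error(C'')$ (with the values current at line \ref{:line:fast-epoch:increase-error}): this follows from invariant 4 of the inductive hypothesis applied to the point $p$, which gives $|\widetilde{\avg}(p,C'') - \avg(p,\cdot)| \le error(C'')$, together with $\widetilde{\avg}(p,C'') \le \widehat{\avg}(p,C'')$ from \cref{:def:old-averages}. Dividing through by $|C''|$ yields the claimed inequality and closes the induction.

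The step I expect to be the main obstacle is exactly this verification of invariant 4 in the swap case: one has to keep careful track of whether $|C''|$ refers to the pre-swap or post-swap size of the cluster, match the correct one of the two equal expressions in \cref{:lem:change-in-average} to the add-versus-remove case, and make sure that the frozen estimates $\widehat{\avg}(p,C'')$ and $\widetilde{\avg}(p,C'')$ — set at the last recompute of that cluster, possibly many swaps ago — are still legitimately comparable to the current true average, which is precisely what invariant 4 of the hypothesis provides. The other three invariants, once the case analysis above is in place, follow essentially immediately.
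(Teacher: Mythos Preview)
Your proposal is correct and follows essentially the same approach as the paper's proof: induction on iterations, with the base case vacuous, recompute steps handled by the resets, merge-and-split steps handled because new clusters go into $\clusterToRecompute$, and the swap case handled via \cref{:lem:change-in-average} combined with the inductive hypothesis (invariant~4 for the point $p$) and the bound $\widetilde{\avg}\le\widehat{\avg}$ from \cref{:def:old-averages}. The only cosmetic difference is that you organize the case analysis by cluster category rather than by step type, and you are explicit about matching the two forms of the right-hand side in \cref{:lem:change-in-average} to the add-versus-remove case; the paper does the same computation with slightly different bookkeeping notation ($C''_\tau$ versus $C''_{\tau+1}$).
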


\begin{corollary} \label{:cor:size-didnt-change-much}
At the beginning of every iteration of the loop in line \ref{:line:fast-epoch:main-loop} of Algorithm \ref{:alg:fast-epoch}, for every cluster $C \in \cC \setminus \clusterToRecompute$, the inequalities $|C|/2 \leq \widehat{size}(C) \leq 2|C|$ hold.
\end{corollary}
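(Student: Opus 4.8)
The plan is to derive this directly from \cref{:cl:basic-invariants-of-fast-epoch}, since it is an immediate algebraic consequence of two of the invariants listed there. Fix an iteration of the loop in line \ref{:line:fast-epoch:main-loop} and a cluster $C \in \cC \setminus \clusterToRecompute$. I would invoke invariant~3 of \cref{:cl:basic-invariants-of-fast-epoch}, namely $\left|\widehat{size}(C) - |C|\right| \leq numSwaps(C)$, together with invariant~1, namely $numSwaps(C) \leq \widehat{size}(C)/2$. Chaining these gives $\left|\widehat{size}(C) - |C|\right| \leq \widehat{size}(C)/2$.

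From here the two desired inequalities fall out by unpacking the absolute value. The upper bound on $\widehat{size}(C)$ comes from $\widehat{size}(C) - |C| \leq \widehat{size}(C)/2$, i.e. $\widehat{size}(C)/2 \leq |C|$, hence $\widehat{size}(C) \leq 2|C|$. The lower bound comes from $|C| - \widehat{size}(C) \leq \widehat{size}(C)/2$, i.e. $|C| \leq \tfrac{3}{2}\widehat{size}(C)$, hence $\widehat{size}(C) \geq \tfrac{2}{3}|C| \geq |C|/2$. Combining the two yields $|C|/2 \leq \widehat{size}(C) \leq 2|C|$, as claimed.

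There is essentially no obstacle here; the content of the corollary is entirely contained in \cref{:cl:basic-invariants-of-fast-epoch}, and the only work is the one-line rearrangement above. The reason to state it separately is that the bound $|C|/2 \leq \widehat{size}(C) \leq 2|C|$ is the form actually used in later running-time and correctness arguments (e.g. when treating $\widehat{size}(C)$ as a constant-factor proxy for $|C|$ inside the various thresholds), so isolating it as a corollary keeps those later arguments clean. I would present it in two or three sentences rather than as a displayed multi-step derivation.
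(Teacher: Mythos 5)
Your proof is correct and follows exactly the same route as the paper, which simply observes that the corollary "follows immediately from invariants 1 and 3 of \cref{:cl:basic-invariants-of-fast-epoch}." You have merely spelled out the short chain of inequalities the paper leaves implicit.
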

\begin{proof}[Proof of \cref{:cor:size-didnt-change-much}]
This corollary follows immediately from invariants 1 and 3 of \cref{:cl:basic-invariants-of-fast-epoch}
\end{proof}

\begin{lemma}\label{:lem:cluster-distances-lemma}
At the beginning of every iteration of the loop in line \ref{:line:fast-epoch:main-loop} of Algorithm \ref{:alg:fast-epoch}, for every two different clusters $C,C' \in \cC \setminus \clusterToRecompute$, and every point $p \in X$, the inequalities $\widetilde{\avg}(p,C) + \widetilde{\avg}(p,C') \geq (3/16)\frac{t^*}{\min\{|C|,|C'|\}}$ and $\avg(p,C)+\avg(p,C') \geq (1/8)\frac{t^*}{\min\{|C|,|C'|\}}$ hold.
\end{lemma}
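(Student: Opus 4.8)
The two inequalities are essentially equivalent up to the $(1+\epsilon)$ slack between $\widetilde{\avg}$ and the true $\avg$ values, so the plan is to prove the statement about $\widetilde{\avg}$ directly and then deduce the statement about $\avg$ by invoking Definition~\ref{:def:old-averages} and Claim~\ref{:cl:basic-invariants-of-fast-epoch}(4). The core of the argument concerns why two distinct clusters $C, C' \in \cC \setminus \clusterToRecompute$ must be ``far apart'' in the sense measured by $\sum_{p \in C'} \widetilde{\avg}(p,C)$.

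The key observation is that the only way a pair of clusters $C, C'$ can both be present in $\cC \setminus \clusterToRecompute$ is if, at some point, a recompute step was executed on one of them (say the later of the two, WLOG $C$) and at that moment the merge-and-split condition in line~\ref{:line:fast-epoch:merge-and-split-if-statement} \emph{failed} for the pair $(C, C')$ — otherwise $C$ and $C'$ would have been merged. So immediately after that recompute step, using the freshly computed estimates $\widehat{\avg}(p,C)$ (which at that moment equal the stored $\widetilde{\avg}(p,C)$ up to the $(1+\epsilon)$ factor by Lemma~\ref{:lem:fast-average-distances}), we had
\[
 \frac{\min\{|C|,|C'|\}}{|C'|}\sum_{p \in C'}\widehat{\avg}(p,C) \geq t^*.
\]
The remaining work is to track how $|C|$, $|C'|$, and $\sum_{p \in C'}\widetilde{\avg}(p,C)$ can have changed between that recompute step and the current iteration, and to show that the bound degrades by at most a constant factor. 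First I would argue that $C$ and $C'$ have not been merged, split, or emptied in the interim (otherwise they would have been put into $\clusterToRecompute$ and a fresh recompute step would have reset things), so the only changes are single-point swaps in and out of each cluster. By Claim~\ref{:cl:basic-invariants-of-fast-epoch}(1) and Corollary~\ref{:cor:size-didnt-change-much}, the sizes $|C|, |C'|$ have changed by at most a factor of $2$. For the sum, each swap into or out of $C$ changes each term $\widetilde{\avg}(p,C)$ by a controlled amount via Lemma~\ref{:lem:change-in-average}, and since $numSwaps(C) \le \widehat{size}(C)/2$ these cumulative changes are bounded by a constant fraction of the sum; a symmetric argument handles swaps in and out of $C'$ (a point leaving $C'$ removes one term, a point joining adds one, but the number of such events is again at most $\widehat{size}(C')/2$). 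Combining the at-most-factor-$2$ size changes with the at-most-constant-factor change in the sum, and absorbing the $(1+\epsilon) = 11/10$ factors, yields $\widetilde{\avg}(p,C) + \widetilde{\avg}(p,C') \ge \sum_{p \in C'}\widetilde{\avg}(p,C)/|C'| \cdot (\text{something}) \ge (3/16)\frac{t^*}{\min\{|C|,|C'|\}}$ after bookkeeping the constants; here one uses that the single term $\widetilde{\avg}(p,C')$ is nonnegative and $\widetilde{\avg}(p,C) \ge$ (a suitable average), or more directly that $\avg(p,C) + \avg(p,C') \ge \frac{1}{|C'|}\sum_{p' \in C'} d(p,p')$ via Observation~\ref{:obs:triangle-inequality-for-average-distance}-type reasoning; I would pick whichever inequality makes the averaging step clean.

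Finally, the second inequality follows by relating $\avg$ to $\widetilde{\avg}$: by Claim~\ref{:cl:basic-invariants-of-fast-epoch}(4), $|\widetilde{\avg}(p',C) - \avg(p',C)| \le error(C) \le t^*/(100\alpha|C|)$ (invariant~2), and similarly for $C'$, so $\avg(p,C) + \avg(p,C') \ge \widetilde{\avg}(p,C) + \widetilde{\avg}(p,C') - t^*/(100\alpha|C|) - t^*/(100\alpha|C'|) \ge (3/16)\frac{t^*}{\min\{|C|,|C'|\}} - \frac{2}{100\alpha}\cdot\frac{t^*}{\min\{|C|,|C'|\}} \ge (1/8)\frac{t^*}{\min\{|C|,|C'|\}}$, since $\alpha = 16\log n \ge 1$ makes the error term negligible. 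The main obstacle I anticipate is the careful bookkeeping in the second paragraph: precisely bounding the drift of $\sum_{p \in C'}\widetilde{\avg}(p,C)$ under a bounded number of swaps affecting \emph{both} the ``source'' cluster $C$ (which changes every term of the sum slightly, via Lemma~\ref{:lem:change-in-average}) and the ``index set'' cluster $C'$ (which adds/removes whole terms), while keeping all the constants tight enough that $t^*$ only loses a factor of roughly $16/3$ overall rather than something larger. Everything else is a routine chain of the already-established invariants.
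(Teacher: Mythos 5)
Your choice of anchor point is the same as the paper's: for two clusters $C,C' \in \cC \setminus \clusterToRecompute$, the later of their two most recent recomputes --- say on $C$ --- must have failed the merge test in line~\ref{:line:fast-epoch:merge-and-split-if-statement}, giving $\frac{\min\{|C|,|C'|\}}{|C'|}\sum_{p \in C'}\widehat{\avg}(p,C) \geq t^*$ at that moment. But the drift analysis you plan for carrying this forward has two concrete problems, one a misconception and one a genuine gap. The misconception: by \cref{:def:old-averages}, $\widetilde{\avg}(p,C)$ is the value of $\avg(p,C)$ frozen at the last recompute of $C$; it does not change during swap steps, so invoking \cref{:lem:change-in-average} on ``each term $\widetilde{\avg}(p,C)$'' is a type error (that lemma governs the live $\avg$, whose cumulative drift the algorithm tracks through $error(C)$ and invariants~(2),(4) of \cref{:cl:basic-invariants-of-fast-epoch}, not through ad hoc re-derivation). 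The genuine gap: even after correcting to $\avg$, the sum $\sum_{p \in C'}\avg(p,C)$ has a changing index set. Bounding the number of points that leave or join $C'$ by $\widehat{size}(C')/2$ tells you nothing about the magnitude of the removed terms --- a handful of departing points whose $\avg(p,C)$ is much larger than average can wipe out a constant fraction of the sum, so ``the sum changes by at most a constant factor'' does not follow, and your own ``main obstacle'' paragraph essentially flags this without resolving it.

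The paper sidesteps both problems by making the bound pointwise and in frozen coordinates \emph{before} anything can drift. Immediately after the recompute on $C$, \cref{:obs:triangle-inequality-for-average-distance} turns the failed merge test into $\avg(p,C)+\avg(p,C') \geq \frac{t^*}{(1+\epsilon)\min\{|C|,|C'|\}}$ for \emph{every} $p \in X$ (\cref{:cl:clusters-far-after-recompute}), which is then re-expressed in terms of $\widetilde{\avg}$ and $\widehat{size}$ (\cref{:cor:estimates-show-clusters-far-after-recompute}). Since $\widetilde{\avg}(p,C)$, $\widetilde{\avg}(p,C')$, $\widehat{size}(C)$, $\widehat{size}(C')$ are all literally unchanged between recompute steps, the inequality carries over verbatim to the current iteration with no drift argument at all; at that point $\widehat{size}$ is traded for $|C|,|C'|$ via \cref{:cor:size-didnt-change-much} (a factor of $2$), and $\widetilde{\avg}$ is traded for $\avg$ via invariants~(2),(4). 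You do gesture at the pointwise inequality via \cref{:obs:triangle-inequality-for-average-distance} as an option, but you propose applying it at the \emph{current} iteration, which still leaves you needing to lower-bound $\frac{1}{|C'|}\sum_{p' \in C'}\avg(p',C)$ at the current iteration --- exactly the quantity the index-set drift obstructs. Applying it at the anchor recompute instead, and then riding the frozen snapshots forward, is the missing idea.
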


\begin{lemma}\label{:lem:steps-are-effective}
At every swap step made by Algorithm \ref{:alg:fast-epoch}, when the execution is at line \ref{:line:fast-epoch:before-swap}, the inequality $\avg(p,C(p)) \geq \frac{t^*}{10\min\{|C(p)|,|C'|\}}$ holds.
\end{lemma}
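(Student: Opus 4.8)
The plan is to chain three ingredients: the inequality that triggered the swap step (the second disjunct of the \textbf{while} condition in line~\ref{:line:fast-epoch:main-loop}), the ``error'' bookkeeping invariants of \cref{:cl:basic-invariants-of-fast-epoch}, and the geometric lower bound of \cref{:lem:cluster-distances-lemma}. Fix a swap step and let $(p, C')$ be the pair selected in line~\ref{:line:fast-epoch:choose-point-to-swap}; write $C = C(p)$ and $m = \min\{|C|, |C'|\}$. Because a swap step requires $\clusterToRecompute = \emptyset$ (line~\ref{:line:fast-epoch:swap-or-recompute-if-statement}), every cluster of $\cC$ lies in $\cC \setminus \clusterToRecompute$, so all invariants of \cref{:cl:basic-invariants-of-fast-epoch} and the conclusion of \cref{:lem:cluster-distances-lemma} are available for $C$, for $C'$, and for the pair $(C, C')$; moreover the \textbf{while} condition forces $C \neq \{p\}$, hence $|C| \geq 2$ and $\frac{|C|}{|C| - 1} \leq 2$. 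By \cref{:cl:values-defined} all the quantities $error(C), error(C'), \widehat{\avg}(p, C), \widehat{\avg}(p, C')$ referenced below have already been set.

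First I would prove that $p$ is on average far closer to $C'$ than to $C$, i.e. that $\avg(p, C')$ is tiny compared with both $\avg(p, C)$ and $t^*/m$. The \textbf{while} condition reads $\frac{|C|}{|C \setminus \{p\}|}\widehat{\avg}(p, C) > \frac{\alpha}{2}\widehat{\avg}(p, C')$, which together with $\frac{|C|}{|C| - 1} \leq 2$ gives $\widehat{\avg}(p, C') < \frac{4}{\alpha}\widehat{\avg}(p, C)$. Now I insert the two-sided relations linking stored estimates, previous true values, and current true values: by \cref{:def:old-averages}, $\widehat{\avg}(p, C) \leq (1 + \epsilon)\widetilde{\avg}(p, C)$ and $\widehat{\avg}(p, C') \geq \widetilde{\avg}(p, C')$; and by invariant~4 of \cref{:cl:basic-invariants-of-fast-epoch}, $\widetilde{\avg}(p, C) \leq \avg(p, C) + error(C)$ and $\widetilde{\avg}(p, C') \geq \avg(p, C') - error(C')$. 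Chaining these yields
\[
 \avg(p, C') < error(C') + \frac{4(1 + \epsilon)}{\alpha}\bigl(\avg(p, C) + error(C)\bigr).
\]
Invariant~2 of \cref{:cl:basic-invariants-of-fast-epoch} together with $|C|, |C'| \geq m$ bounds both $error(C)$ and $error(C')$ by $\frac{t^*}{100\alpha m}$, and since $\alpha = 16\log n \geq 16\log 3 > 25$ and $\epsilon = 1/10$, the factor $\frac{4(1 + \epsilon)}{\alpha}$ is well below $\frac{1}{5}$; being generous with the constants, this gives $\avg(p, C') < \frac{1}{5}\avg(p, C) + \frac{t^*}{2000\,m}$.

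Finally I would combine this with \cref{:lem:cluster-distances-lemma}, namely $\avg(p, C) + \avg(p, C') \geq \frac{1}{8}\cdot\frac{t^*}{m}$. Substituting the bound on $\avg(p, C')$ and rearranging gives $\bigl(1 + \tfrac{1}{5}\bigr)\avg(p, C) > \frac{t^*}{m}\bigl(\tfrac{1}{8} - \tfrac{1}{2000}\bigr)$, so
\[
 \avg(p, C) > \frac{t^*}{m}\cdot\frac{\tfrac18 - \tfrac{1}{2000}}{1 + \tfrac15} > \frac{t^*}{10\,m} = \frac{t^*}{10\min\{|C(p)|, |C'|\}},
\]
which is exactly the asserted inequality. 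The only real work is the bookkeeping: checking that every quantity referenced is legitimately defined and accurate at line~\ref{:line:fast-epoch:before-swap} (precisely the content of \cref{:cl:values-defined} and \cref{:cl:basic-invariants-of-fast-epoch}) and verifying the constant slack, i.e. that $\frac{1/8 - o(1)}{1 + 1/5} > \frac{1}{10}$, which holds comfortably for every $n \geq 3$. I do not expect any conceptual obstacle beyond this constant-chasing.
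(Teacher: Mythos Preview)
Your proof is correct and uses the same three ingredients as the paper: the swap-triggering inequality from the \textbf{while} condition, the error invariants of \cref{:cl:basic-invariants-of-fast-epoch}, and the lower bound of \cref{:lem:cluster-distances-lemma}. The only difference is cosmetic: the paper first packages the \textbf{while} condition as $\widetilde{\avg}(p,C(p)) \geq 2\,\widetilde{\avg}(p,C')$ (their \cref{:obs:swap-decided-by-estimates}) and then combines it with the $\widetilde{\avg}$-version of \cref{:lem:cluster-distances-lemma}, whereas you pass to the true averages $\avg$ first and then invoke the $\avg$-version of \cref{:lem:cluster-distances-lemma}; your route has the minor advantage of landing directly on $\avg(p,C(p))$, which is what the lemma actually asserts.
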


\subsubsection{Proof of \texorpdfstring{\cref{:cl:values-defined}}{Claim 3}} \label{:sec:proof-of-cl-values-defined}
We start by stating and proving the following observation.
\begin{observation} \label{:obs:cluster-to-recompute-get-recomputed}
If $C$ belongs to $\cC \setminus \clusterToRecompute$ at the beginning of iteration $\tau$ of the loop in line \ref{:line:fast-epoch:main-loop} of Algorithm \ref{:alg:fast-epoch}, and if this cluster belonged to $\cC \cap \clusterToRecompute$ at the beginning of an earlier iteration $\tau' < \tau$, then there exists some iteration $\tau' \leq \tau'' < \tau$ such that $\tau''$ is a recompute step on cluster $C$.
\end{observation}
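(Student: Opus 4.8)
The plan is to track how a single cluster $C$ (as a fixed subset of $X$, identified across iterations) can move between the three relevant states: (i) $C \notin \cC$ (the cluster has been destroyed by a merge or split), (ii) $C \in \cC \cap \clusterToRecompute$, and (iii) $C \in \cC \setminus \clusterToRecompute$. The hypothesis is that $C$ is in state (ii) at the start of iteration $\tau'$ and in state (iii) at the start of iteration $\tau$, with $\tau' < \tau$; the goal is to exhibit a recompute step on $C$ strictly between them.

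First I would examine the only lines of Algorithm~\ref{:alg:fast-epoch} that modify $\clusterToRecompute$: line~\ref{:line:fast-epoch:remove-cluster-from-clusters-to-recompute} (the $pop$ in a recompute step), lines~\ref{:line:fast-epoch:recompute-because-error-too-large}, \ref{:line:fast-epoch:changing-clusterstorecompute-in-merge}, and \ref{:line:fast-epoch:changing-clusterstorecompute-in-split}. The key structural point is that a cluster, viewed as a concrete subset, can only be \emph{removed} from $\clusterToRecompute$ in two ways: it is $pop$-ed in line~\ref{:line:fast-epoch:remove-cluster-from-clusters-to-recompute} (which is exactly the start of a recompute step on it), or it ceases to be a cluster of $\cC$ because it was consumed by the merge (line~\ref{:line:fast-epoch:changing-clusterstorecompute-in-merge}) or by the split (line~\ref{:line:fast-epoch:changing-clusterstorecompute-in-split}). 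In the latter case the resulting new cluster(s) are added back to $\clusterToRecompute$. Symmetrically, a cluster can only be \emph{added} to $\clusterToRecompute$ in lines~\ref{:line:fast-epoch:recompute-because-error-too-large}, \ref{:line:fast-epoch:changing-clusterstorecompute-in-merge}, \ref{:line:fast-epoch:changing-clusterstorecompute-in-split}; and note that whenever the clustering $\cC$ changes (merge/split), $\clusterToRecompute$ is updated in lockstep so that \emph{every newly created cluster is placed in $\clusterToRecompute$}.

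Now consider the sequence of iterations $\tau', \tau'+1, \ldots, \tau$. I would argue by looking at the last iteration $\tau''$ in the range $[\tau', \tau)$ at which the membership status of $C$ with respect to $\cC \cap \clusterToRecompute$ changes — more precisely, the last iteration $\tau'' < \tau$ such that $C \in \cC \cap \clusterToRecompute$ at the start of $\tau''$ but $C \notin \cC \cap \clusterToRecompute$ at the start of $\tau''+1$. Such a $\tau''$ exists because $C \in \cC \cap \clusterToRecompute$ at the start of $\tau'$ but $C \in \cC \setminus \clusterToRecompute$ (hence $C \notin \clusterToRecompute$) at the start of $\tau$. Between $\tau''+1$ and $\tau$, $C$ is never in $\clusterToRecompute$; in particular $C$ is never re-added, and since every merge/split that would destroy $C$ re-adds the resulting clusters to $\clusterToRecompute$ (and $C$ stays in $\cC$ until $\tau$ by hypothesis), the only way $C$ left $\cC \cap \clusterToRecompute$ at the end of iteration $\tau''$ is the $pop$ in line~\ref{:line:fast-epoch:remove-cluster-from-clusters-to-recompute} — i.e.\ $\tau''$ is a recompute step on $C$. (If instead $C$ had been destroyed by a merge or split in iteration $\tau''$, the corresponding new cluster would be added to $\clusterToRecompute$, and for $C$ to reappear in $\cC$ by iteration $\tau$ some later merge/split would have to recreate exactly the subset $C$, which again puts it into $\clusterToRecompute$, forcing a later membership change and contradicting maximality of $\tau''$; I would spell this contradiction out carefully.)

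The main obstacle I anticipate is the bookkeeping around clusters being destroyed and possibly recreated as identical subsets: the cleanest argument treats clusters purely as subsets of $X$ and uses the invariant ``whenever $\cC$ is modified in lines~\ref{:line:fast-epoch:changing-clustering-in-merge}, \ref{:line:fast-epoch:changing-clusterstorecompute-in-merge}, \ref{:line:fast-epoch:changing-clustering-in-split}, \ref{:line:fast-epoch:changing-clusterstorecompute-in-split}, every cluster that is added to $\cC$ is simultaneously added to $\clusterToRecompute$, and $\clusterToRecompute \subseteq \{\,$subsets that are currently clusters of $\cC$, or were clusters of $\cC$ and have not since been removed$\,\}$.'' Once this invariant is stated and verified by inspection of the four relevant lines, the argument above goes through: the transition of $C$ out of $\clusterToRecompute$ at iteration $\tau''$ cannot be caused by a structural change of $\cC$ (which would re-add it), so it must be the $pop$, which by Definition~\ref{:def:step-types} means $\tau''$ is a recompute step on $C$, and $\tau' \le \tau'' < \tau$ as required.
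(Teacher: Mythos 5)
Your proof is correct, but it takes a more roundabout route than the paper's. The paper's proof hinges on a single, clean structural observation that you never state: \emph{clusters that get removed from $\cC$ never get inserted into $\cC$ again} (a merge or split always discards the old cluster objects and creates fresh ones). Combined with the hypothesis that $C\in\cC$ at both $\tau'$ and $\tau$, this immediately yields $C\in\cC$ at every iteration in $[\tau',\tau]$, so the paper only needs to find the iteration at which $C$ is removed from $\clusterToRecompute$ and note that the only line that removes a cluster from $\clusterToRecompute$ \emph{without} removing it from $\cC$ is the pop in line~\ref{:line:fast-epoch:remove-cluster-from-clusters-to-recompute}. You instead insist on treating clusters ``purely as subsets of $X$,'' which forces you to confront the possibility that $C$ is destroyed and later recreated as an identical subset; you handle this by picking $\tau''$ as the \emph{last} transition and deriving a contradiction from maximality. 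That argument does go through (the recreated $C$ would be placed in $\clusterToRecompute$, forcing a later exit and contradicting maximality), but you leave the contradiction unspelled. A further caveat: the ``subsets'' framing is actually \emph{not} the one the algorithm uses --- a swap step changes the underlying set of $C(p)$ while keeping the same cluster object, so identity is precisely what the algorithm tracks; under the pure-subset reading the observation's hypothesis that the same set $C$ is a cluster at both $\tau'$ and $\tau$ would already rule out any intervening swap on $C$. Your claim that the subset view is ``the cleanest'' is backwards; the identity view is what makes the paper's one-line monotonicity fact trivial and the whole recreation case vacuous.
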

\begin{proof}
The only line of the algorithm that might remove a cluster in $\cC$ from $\clusterToRecompute$ is line \ref{:line:fast-epoch:select-cluster-to-recompute}, and clusters that get removed from $\cC$ never get inserted into $\cC$ again.
So, since $C \in \cC \setminus \clusterToRecompute$ holds at the beginning of iteration $\tau$, and $C \in \cC \cap \clusterToRecompute$ held at the beginning of iteration $\tau'$, and since clusters that got removed from $\cC$ never get reinserted into $\cC$, there must be some iteration $\tau' \leq \tau'' < \tau$ such that $C$ was removed from $\clusterToRecompute$ while being in $\cC$ during iteration $\tau''$.
So since the only way for a cluster to get removed from $\clusterToRecompute$ without getting removed from $\cC$ is by line \ref{:line:fast-epoch:select-cluster-to-recompute} of the algorithm, it must be that $\tau'$ was a recompute step on cluster $C$.
\end{proof}
We are now ready to prove \cref{:cl:values-defined}.
\begin{proof}[Proof of \cref{:cl:values-defined}]
Let $\tau$ be the current iteration, and let $\tau_{C}$ be the first iteration such that $C$ was in $\cC$ at the beginning of iteration $\tau_C$.
We will begin by proving that $C$ was in the set $\clusterToRecompute$ at the beginning of iteration $\tau_{C}$:
On the one hand, if $\tau_{C}$ was the first iteration of the algorithm, then $C$ must have been part of the input clustering, which means that $C$ was in the set $\clusterToRecompute$ at the start of iteration $\tau_{C}$.
On the other hand, if $\tau_{C}$ was not the first iteration of the algorithm, then $C$ must have been added to $\cC$ during iteration $\tau_{C}-1$.
In that case, since the only lines that can add a cluster to $\cC$ during an iteration are line \ref{:line:fast-epoch:changing-clustering-in-merge} and line \ref{:line:fast-epoch:changing-clustering-in-split}, and since these lines are immediately followed by lines that insert the same cluster into $\clusterToRecompute$, it must be that $C$ was added to $\clusterToRecompute$ during iteration $\tau_{C}-1$.
Furthermore, it is impossible that $C$ was removed from $\clusterToRecompute$ in iteration $\tau_{C}-1$ after being added to it in that same iteration, because the only way that could happen is if it was added by line \ref{:line:fast-epoch:changing-clustering-in-merge} and then removed by line \ref{:line:fast-epoch:changing-clustering-in-split}, which would imply that $C$ was removed from $\cC$ by line \ref{:line:fast-epoch:changing-clusterstorecompute-in-split} in iteration $\tau_{C}-1$, which contradicts the fact that it belongs to $\cC$ at the beginning of iteration $\tau_{C}$.
To summarize, we proved that $C$ must have belonged to $\clusterToRecompute$ at the beginning of iteration $\tau_{C}$.
Therefore, by \cref{:obs:cluster-to-recompute-get-recomputed}, there must have been an iteration $\tau_{C} \leq \tau' < \tau$ that was a recompute step on $C$.
\end{proof}

\subsubsection{Proof of \texorpdfstring{\cref{:cl:basic-invariants-of-fast-epoch}}{Claim 4}} \label{:sec:proof-of-cl-basic-invariants-of-fast-epoch}
\begin{proof}[Proof of \cref{:cl:basic-invariants-of-fast-epoch}]
We will prove the claim by induction on the iteration number.
The base of the induction holds trivially, since there are no clusters in $\cC \setminus \clusterToRecompute$ at the beginning of the first iteration.
So, from now on, we assume the invariants held at the beginning of iteration $\tau$, and our goal is to prove that they hold at the beginning of iteration $\tau+1$.
We divide the proof into three cases based on whether $\tau$ is a swap step, a merge and split step, or a recompute step that isn't a merge and split step.

\paragraph{Case: $\tau$ is a swap step.}
In this case, no new clusters were added to the set $\cC \setminus \clusterToRecompute$ during iteration $\tau$, and the only clusters for which any invariant may have become invalidated are the clusters $C$ and $C'$ selected by the swap step.
So, we just need to show that the invariants hold for each $C'' \in \{C,C'\}$ such that $C''$ is still in $\cC \setminus \clusterToRecompute$ at the beginning of iteration $\tau+1$:
\begin{itemize}[leftmargin=*]
    \item \textbf{invariants 1 and 2.} Since $C''$ is still in $\cC \setminus \clusterToRecompute$ at the beginning of iteration $\tau+1$, the condition of the if statement in line \ref{:line:fast-epoch:if-statement-error-too-large} of the algorithm must not have been met for this $C''$, which exactly means that $error(C'') \geq t^*/(100\alpha|C''|)$ and $numSwaps(C'') \leq \widehat{size}(C'')/2$, as stated in the invariants.
    \item \textbf{invariant 3.} Since the invariant $\left|\widehat{size}(C'') - |C''|\right| \leq numSwaps(C'')$ held at the beginning of iteration $\tau$, since $\widehat{size}(C'')$ did not change during iteration $\tau$, since $|C''|$ changed by at most $1$ during iteration $\tau$, and since $numSwaps(C'')$ was increased by $1$ during iteration $\tau$, the inequality $\left|\widehat{size}(C'') - |C''|\right| \leq numSwaps(C'')$ must still hold at the beginning of iteration $\tau+1$.
    \item \textbf{invariant 4.} Fix some point $p' \in X$, and we'll show that the invariant holds for this point at the start of iteration $\tau+1$.
    Since the invariant held at the beginning of iteration $\tau$, and since $\widetilde{\avg}(p',C'')$ did not change during iteration $\tau$, it is enough if we upperbound the absolute value of the change to $\avg(p',C'')$ during iteration $\tau$ by the amount that $error(C'')$ was increased during iteration $\tau$:
    For each $\tau' \in \{\tau,\tau+1\}$, let $C''_{\tau'}$ and $error(C''_{\tau'})$ denote the state of the cluster $C''$ and the value of $error(C'')$ at the beginning of iteration $\tau'$.
    Furthermore, since, for every $p'' \in X$, the values of $\widehat{\avg}(p'',C'')$ and $\widetilde{\avg}(p'',C'')$ did not change during iteration $\tau$, we will write these without a subscript to denote their value during the whole of iteration $\tau$.
    So, let $p$ be the point selected by line \ref{:line:fast-epoch:choose-point-to-swap} of the algorithm during iteration $\tau$.
    By \cref{:lem:change-in-average}, we must have
    \[
     |\avg(p',C'_{\tau+1}) - \avg(p',C'_{\tau})| = |\avg(p',C'_{\tau} \cup \{p\}) - \avg(p',C'_{\tau})| \leq \avg(p,C'_{\tau})/|C'_{\tau+1}|
    \]
    and
    \begin{align*}
        |\avg(p',C_{\tau}) - \avg(p',C_{\tau+1})| = |\avg(p',C_{\tau+1} \cup \{p\}) - \avg(p',C_{\tau+1})| 
        &\leq \frac{\avg(p,C_{\tau+1} \cup \{p\})}{|C_{\tau+1}|} \\
        &= \frac{\avg(p,C_{\tau})}{|C_{\tau+1}|},
    \end{align*}
    which means that $|\avg(p',C''_{\tau}) - \avg(p',C''_{\tau+1})| \leq \avg(p,C''_{\tau})/|C''_{\tau+1}|$ holds regardless of which $C'' \in \{C,C'\}$ we are dealing with.
    Since the invariant held at the beginning of iteration $\tau$, we know that $\avg(p,C''_{\tau}) \leq \widetilde{\avg}(p,C'') + error(C''_{\tau}) \leq \widehat{\avg}(p,C'') + error(C''_{\tau})$.
    Together, the last two inequalities imply that $|\avg(p',C''_{\tau}) - \avg(p',C''_{\tau+1})| \leq \left( \widehat{\avg}(p,C'') + error(C''_{\tau}) \right)/|C''_{\tau+1}|$.
    Since $|C''_{\tau+1}|$ is exactly the size of $C''$ after line \ref{:line:fast-epoch:actual-swap} is executed in iteration $\tau$, this means that the absolute value of the change to $\avg(p',C'')$ during iteration $\tau$ is upperbounded by the amount that $error(C'')$ was increased during iteration $\tau$, as we needed to prove.
\end{itemize}

\paragraph{Case: $\tau$ is a merge and split step.}
In this case, during step $\tau$, all the clusters that got removed from $\clusterToRecompute$ also got removed from $\cC$, and all the clusters that got added to $\cC$ also got added to $\clusterToRecompute$.
There, all clusters that belong to $\cC \setminus \clusterToRecompute$ at the beginning of iteration $\tau+1$ were also in that set at the beginning of iteration $\tau$.
Furthermore, since the cluster $C$ selected at line \ref{:line:fast-epoch:select-cluster-to-recompute} of the algorithm is one of the clusters that for removed from $\cC$ at step $\tau$ (line \ref{:line:fast-epoch:changing-clustering-in-merge}), we get that, for every cluster $\overline{C}$ that belongs to $\cC \setminus \clusterToRecompute$ at the beginning of iteration $\tau+1$, the values of $numSwaps(\overline{C})$, $\hat{size}(\overline{C})$, $error(\overline{C})$, $|\overline{C}|$, $\widetilde{\avg}(p',\overline{C})$, $\avg(p',\overline{C})$ were not changed during step $\tau$, so the invariants all remain.

\paragraph{Case: $\tau$ is a recompute step that isn't a merge and split step.}
Let $C$ be the cluster chosen at line \ref{:line:fast-epoch:select-cluster-to-recompute} during step $\tau$.
Then, for all clusters other than $C$, it's clear that none of the invariants became invalidated during step $\tau$.
So, we just need to prove that the invariants hold for cluster $C$ at the beginning of iteration $\tau+1$:
\begin{itemize}[leftmargin=*]
    \item \textbf{invariant 1.} Since iteration $\tau$ sets $numSwaps(C)$ to $0$, this invariant holds.
    \item \textbf{invariant 2.} Since iteration $\tau$ sets $error(C)$ to $0$, this invariant holds.
    \item \textbf{invariant 3.} Since iteration $\tau$ sets $\widehat{size}(C)$ to $|C|$, this invariant holds.
    \item \textbf{invariant 4.} For all $p' \in X$, since the value of $\widetilde{\avg}(p',C)$ at the beginning of iteration $\tau + 1$ is defined as the value of $\avg(p',C)$ during the last recompute step on cluster $C$ that happened before iteration $\tau+1$ (see \cref{:def:old-averages}), and since iteration $\tau$ is a recompute step on cluster $C$, we get $\widetilde{\avg}(p',C)$ at the beginning of iteration $\tau+1$ is equal to the value of $\avg(p',C)$ at iteration $\tau$. Since the value of $\avg(p',C)$ does not change during iteration $\tau$, this implies that $|\widetilde{\avg}(p',C) - \avg(p',C)|=0$ holds at the beginning of iteration $\tau+1$, so the invariant holds at that time.
\end{itemize}
This concludes the proof of \cref{:cl:basic-invariants-of-fast-epoch}.
\end{proof}

\subsubsection{Proof of Lemma \ref{:lem:cluster-distances-lemma}} \label{:sec:proof-of-lem-cluster-distances-lemma}

We begin by proving the following claim and corollary.

\begin{claim}\label{:cl:clusters-far-after-recompute}
At the end of every iteration of the loop in line \ref{:line:fast-epoch:main-loop} of Algorithm \ref{:alg:fast-epoch}, if this iteration was a recompute step for cluster $C$, and if at the end of the iteration $C$ is in $\cC$, then, for every cluster $C' \in \cC$, and every point $p \in X$, the inequality $\avg(p,C)+\avg(p,C') \geq \frac{t^*}{(1+\epsilon)\min\{|C|,|C'|\}}$ holds.
\end{claim}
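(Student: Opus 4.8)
The plan is to combine the negation of the merge-and-split condition in line~\ref{:line:fast-epoch:merge-and-split-if-statement} with the freshly established accuracy guarantee for $\widehat{\avg}(\cdot,C)$ and the triangle-inequality bound of Observation~\ref{:obs:triangle-inequality-for-average-distance}. Essentially, ``$C$ survives the recompute step'' says precisely that $C$ is far (in the appropriate averaged sense) from every other surviving cluster, and Observation~\ref{:obs:triangle-inequality-for-average-distance} converts this into the per-point statement we want.

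First I would observe that if the recompute step on $C$ reaches its end with $C\in\cC$, then the iteration did not return at line~\ref{:line:fast-epoch:return-clustering-with-less-potential}, and the condition in line~\ref{:line:fast-epoch:merge-and-split-if-statement} was not satisfied (had it been, $C$ would have been removed from $\cC$ at line~\ref{:line:fast-epoch:changing-clustering-in-merge}). Moreover, the merge-and-split block is the only place in a recompute step where $\cC$ or cluster sizes change, so the clustering $\cC$ and the values $|C|,|C'|,\widehat{\avg}(p',C),\avg(p',C)$ at the end of the iteration coincide with their values right after line~\ref{:line:fast-epoch:recompute-averages}. Hence, for every $C'\in\cC\setminus\{C\}$, $\frac{\min\{|C|,|C'|\}}{|C'|}\sum_{p'\in C'}\widehat{\avg}(p',C)\ge t^*$.

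Next I would invoke Lemma~\ref{:lem:fast-average-distances}: since line~\ref{:line:fast-epoch:recompute-averages} set $\{\widehat{\avg}(p,C)\}_{p\in X}\leftarrow\CalcAverage(C,X,\epsilon)$, we have $\avg(p',C)\le\widehat{\avg}(p',C)\le(1+\epsilon)\avg(p',C)$ for every $p'\in X$ (using, as throughout this section, that the probabilistic subroutines succeed). Substituting the upper bound into the previous inequality gives $\frac{1}{|C'|}\sum_{p'\in C'}\avg(p',C)\ge\frac{t^*}{(1+\epsilon)\min\{|C|,|C'|\}}$. Rewriting the left-hand side as $\frac{1}{|C|\,|C'|}\sum_{p'\in C'}\sum_{p''\in C}d(p',p'')$ and applying Observation~\ref{:obs:triangle-inequality-for-average-distance} with $S_1=C$ and $S_2=C'$ yields, for every $p\in X$,
\[
\avg(p,C)+\avg(p,C')\ \ge\ \frac{1}{|C|\,|C'|}\sum_{p'\in C'}\sum_{p''\in C}d(p',p'')\ \ge\ \frac{t^*}{(1+\epsilon)\min\{|C|,|C'|\}},
\]
which is the claimed bound (to be read for $C'$ distinct from $C$, matching its use in Lemma~\ref{:lem:cluster-distances-lemma}).

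I do not anticipate a serious obstacle: the only points needing care are the bookkeeping observation that $\cC$ and the relevant quantities are unchanged between line~\ref{:line:fast-epoch:recompute-averages} and the end of the iteration, and the fact that the guarantee of Lemma~\ref{:lem:fast-average-distances} is available for \emph{all} of $X$ rather than just for $C$ (which is exactly the form $\CalcAverage(C,X,\epsilon)$ outputs). The early-return case at line~\ref{:line:fast-epoch:return-clustering-with-less-potential} need not be treated, since this claim is used only to establish invariants at the \emph{beginning} of a subsequent iteration, and no such iteration exists after a return.
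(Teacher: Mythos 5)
Your proposal is correct and follows essentially the same route as the paper's proof: negate the condition at line~\ref{:line:fast-epoch:merge-and-split-if-statement} to get the estimated-average lower bound, convert it to a bound on true averages via the one-sided $(1+\epsilon)$ accuracy of $\CalcAverage$, rewrite the sum as $\frac{1}{|C|\,|C'|}\sum_{p_1\in C'}\sum_{p_2\in C}d(p_1,p_2)$, and finish with \cref{:obs:triangle-inequality-for-average-distance}. The additional remarks you make (that nothing relevant changes between line~\ref{:line:fast-epoch:recompute-averages} and the end of the iteration, that the $C'=C$ case is irrelevant to how the claim is used, and that the early return at line~\ref{:line:fast-epoch:return-clustering-with-less-potential} makes the statement vacuous) are correct clarifications of points the paper treats implicitly, not a different argument.
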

\begin{proof}
At the end of a recompute step on cluster $C$, if $C$ is still in $\cC$, then line \ref{:line:fast-epoch:changing-clustering-in-merge} of the algorithm must not have been reached in this iteration, which means that the condition of the if statement in line \ref{:line:fast-epoch:merge-and-split-if-statement} must not have been satisfied.
Therefore, for every cluster $C' \in \cC$, it must be that the estimated averages $\{\avg(p,C)\}_{p \in X}$ computed in this iteration satisfy $\frac{\min\{|C|,|C'|\}}{|C'|}\sum_{p \in C'}\widehat{\avg}(p,C) \geq t^*$.
Since these estimates were computed in this iteration with accuracy $(1+\epsilon)$, and since true values, as well as the sizes of the clusters, did not change during the iteration, the true values must satisfy $\frac{\min\{|C|,|C'|\}}{|C'|}\sum_{p \in C'}\avg(p,C) \geq t^*/(1+\epsilon)$ at the end of the iteration.
Thus, at the end of the iteration,
\[
 \frac{t^*}{(1+\epsilon)\min\{|C|,|C'|\}}
 \leq \frac{1}{|C'|}\sum_{p_1 \in C'}\avg(p,C)
 = \frac{1}{|C'| \cdot |C|}\sum_{p_1 \in C'} \sum_{p_2 \in C}d(p_1,p_2).
\]
So, for every point $p \in X$, by \cref{:obs:triangle-inequality-for-average-distance},
\[
 \avg(p,C) + \avg(p,C') \geq \frac{1}{|C'| \cdot |C|}\sum_{p_1 \in C'} \sum_{p_2 \in C}d(p_1,p_2) \geq \frac{t^*}{(1+\epsilon)\min\{|C|,|C'|\}},
\]
as we needed to prove.
\end{proof}
\begin{corollary}\label{:cor:estimates-show-clusters-far-after-recompute}
At the beginning of every iteration of the loop in line \ref{:line:fast-epoch:main-loop} of Algorithm \ref{:alg:fast-epoch}, if the previous iteration was a recompute step for cluster $C$, and if at the beginning of the current iteration $C$ is in $\cC \setminus \clusterToRecompute$, then, for every cluster $C' \in \cC \setminus \clusterToRecompute$, and every point $p \in X$, the inequality $\widetilde{\avg}(p,C)+\widetilde{\avg}(p,C') \geq (3/8)\frac{t^*}{\min\{\widetilde{size}(C),\widetilde{size}(C')\}}$ holds.
\end{corollary}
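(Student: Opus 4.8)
The plan is to transport the conclusion of \cref{:cl:clusters-far-after-recompute}, which is stated in terms of the true values $\avg(\cdot,\cdot)$ at the \emph{end} of the recompute step, to a statement about the stored values $\widetilde{\avg}(\cdot,\cdot)$ and $\widehat{size}(\cdot)$ (the quantity denoted $\widetilde{size}$ in the corollary, the two being equal since the cluster size at the last recompute is stored exactly) at the \emph{beginning} of the following iteration, losing only a constant factor. First I would show that the previous iteration cannot have been a merge and split step: a merge and split step deletes its recomputed cluster $C$ from $\cC$ in line \ref{:line:fast-epoch:changing-clustering-in-merge} and never re-inserts it, whereas by hypothesis $C \in \cC \setminus \clusterToRecompute$ at the start of the current iteration, hence $C \in \cC$ at the end of the previous one. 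So the previous step is an ordinary recompute step on $C$; in particular it changes neither the clustering $\cC$ nor the contents of any cluster, so $\cC$, all sizes $|C'|$, and all true averages $\avg(p,C')$ agree at the end of the previous iteration and at the start of the current one. Applying \cref{:cl:clusters-far-after-recompute} at the end of the previous iteration therefore gives, for every $C'\in\cC$ and every $p\in X$,
\[
 \avg(p,C)+\avg(p,C') \;\geq\; \frac{t^*}{(1+\epsilon)\min\{|C|,|C'|\}}.
\]

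Next I would replace the true averages by the stored ones. By invariants 2 and 4 of \cref{:cl:basic-invariants-of-fast-epoch}, for any cluster $\bar C\in\cC\setminus\clusterToRecompute$ and any $p\in X$ we have $\widetilde{\avg}(p,\bar C)\geq \avg(p,\bar C)-error(\bar C)\geq \avg(p,\bar C)-\frac{t^*}{100\alpha|\bar C|}$; using this for $\bar C\in\{C,C'\}$ and the trivial bound $|C|,|C'|\geq\min\{|C|,|C'|\}$,
\[
 \widetilde{\avg}(p,C)+\widetilde{\avg}(p,C') \;\geq\; \Bigl(\tfrac{1}{1+\epsilon}-\tfrac{1}{50\alpha}\Bigr)\frac{t^*}{\min\{|C|,|C'|\}} \;\geq\; \frac{3}{4}\cdot\frac{t^*}{\min\{|C|,|C'|\}},
\]
the last inequality using $\epsilon=1/10$ and $\alpha=16\log n$. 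Finally, I would convert the denominator: the previous recompute step set $\widehat{size}(C)=|C|$ (and nothing has changed it), and \cref{:cor:size-didnt-change-much} gives $\widehat{size}(C')\geq|C'|/2$, so $\min\{\widehat{size}(C),\widehat{size}(C')\}\geq\tfrac12\min\{|C|,|C'|\}$ and the factor $\tfrac34$ becomes $\tfrac38$, yielding $\widetilde{\avg}(p,C)+\widetilde{\avg}(p,C')\geq(3/8)\,t^*/\min\{\widehat{size}(C),\widehat{size}(C')\}$ as required.

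The whole argument is bookkeeping; the only delicate point is the first step — correctly establishing that the preceding iteration was a plain recompute step, so that the hypotheses and conclusion of \cref{:cl:clusters-far-after-recompute} transfer unchanged and no cluster memberships or sizes have drifted in between — together with tracking which of the stored quantities $\widetilde{\avg}$, $\widehat{size}$, $error$ are faithful to the true values and at what accuracy, separately for the freshly recomputed cluster $C$ and for the other clusters $C'$.
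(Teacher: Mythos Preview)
Your proposal is correct and follows essentially the same route as the paper's proof: apply \cref{:cl:clusters-far-after-recompute} to get the bound in terms of $\avg$ and $|C|,|C'|$, then use invariants 2 and 4 of \cref{:cl:basic-invariants-of-fast-epoch} to pass to $\widetilde{\avg}$, and finally \cref{:cor:size-didnt-change-much} to replace $|C|,|C'|$ by $\widehat{size}(C),\widehat{size}(C')$. Your extra step of explicitly ruling out a merge-and-split in the previous iteration is a correct justification of a point the paper leaves implicit (it simply invokes the hypothesis of \cref{:cl:clusters-far-after-recompute} that $C$ is still in $\cC$), and your observation that $\widehat{size}(C)=|C|$ exactly after a fresh recompute is a slight sharpening the paper does not bother with, applying \cref{:cor:size-didnt-change-much} uniformly to both clusters.
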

\begin{proof}
Let $\tau$ be the current iteration.
By \cref{:cl:clusters-far-after-recompute}, $\avg(p,C)+\avg(p,C') \geq \frac{t^*}{(1+\epsilon)\min\{|C|,|C'|\}}$ must hold at the end of the previous iteration, and thus at the start of the current iteration.
Thus, by invariants 4 and 2 from \cref{:cl:basic-invariants-of-fast-epoch}, at the beginning of the current iteration,
\begin{align*}
    \widetilde{\avg}(p,C)+\widetilde{\avg}(p,C')
    &\geq \frac{t^*}{(1+\epsilon)\min\{|C|,|C'|\}} - error(C) - error(C')\\
    &\geq \frac{t^*}{(1+\epsilon)\min\{|C|,|C'|\}} - t^*/(100\alpha |C|) - t^*/(100\alpha |C'|)\\
    &\geq \frac{t^*}{(1+\epsilon)\min\{|C|,|C'|\}} - t^*/(50 \min\{|C|,|C'|\})\\
    &\geq \frac{3}{4} \cdot \frac{t^*}{\min\{|C|,|C'|\}}.
\end{align*}
So, by \cref{:cor:size-didnt-change-much}, at the beginning of the current iteration, 
\begin{align*}
    \widetilde{\avg}(p,C)+\widetilde{\avg}(p,C') \geq (3/8)\frac{t^*}{\min\{\widetilde{size}(C),\widetilde{size}(C')\}},    
\end{align*}
as we needed to prove.
\end{proof}
We are now ready to prove the lemma.
\begin{proof}[Proof of \cref{:lem:cluster-distances-lemma}]
Let $\tau$ be the current iteration, let $p \in X$ be some point, and let $C$ and $C'$ be two different clusters that belong to $\cC \setminus \clusterToRecompute$ at the beginning of iteration $\tau$.
We need to prove that, at the beginning of iteration $\tau$, the inequalities $\widetilde{\avg}(p,C) + \widetilde{\avg}(p,C') \geq \frac{t^*}{2\min\{|C|,|C'|\}}$ and $\avg(p,C)+\avg(p,C') \geq \frac{t^*}{4\min\{|C|,|C'|\}}$ hold.

Let $\tau' < \tau$ be the last step before $\tau$ that was a recompute step on either of the clusters $C$ and $C'$ (such a step exists by \cref{:cl:values-defined}).
By the definition of $\widetilde{\avg}(p,C)$ and $\widetilde{\avg}(p,C')$ (\cref{:def:old-averages}), this means that the values of $\widetilde{\avg}(p,C)$ and $\widetilde{\avg}(p,C')$ did not change between iteration $\tau'$ and iteration $\tau$, and by looking at the algorithm, it means that $\widehat{size}(C)$ and $\widehat{size}(C')$ did not change between iteration $\tau'$ and iteration $\tau'$.
Furthermore, by \cref{:cor:estimates-show-clusters-far-after-recompute}, at the beginning of iteration $\tau'+1$, the inequality
\[
 \widetilde{\avg}(p,C)+\widetilde{\avg}(p,C') \geq (3/8)\frac{t^*}{\min\{\widetilde{size}(C),\widetilde{size}(C')\}}
\]
held.
So, at the beginning of the current iteration $\tau$, this inequality still holds.
Therefore, at the beginning of iteration $\tau$, by \cref{:cor:size-didnt-change-much},
\[
 \widetilde{\avg}(p,C)+\widetilde{\avg}(p,C') \geq (3/16)\frac{t^*}{\min\{|C|,|C'|\}},
\]
which is the first inequality that we need for the proof of the lemma.
Furthermore, by invariants 4 and 2 of \cref{:cl:basic-invariants-of-fast-epoch}, the above inequality implies
\begin{align*}
    \avg(p,C) + \avg(p,C')
    &\geq \frac{3}{16} \cdot \frac{t^*}{\min\{|C|,|C'|\}} - error(C) - error(C')\\
    &\geq \frac{3}{16} \cdot \frac{t^*}{\min\{|C|,|C'|\}} - t^*/(100\alpha |C'|) - t^*/(100\alpha |C|)\\
    &\geq \frac{2}{16} \cdot \frac{t^*}{\min\{|C|,|C'|\}},
\end{align*}
which is the second inequality that we need for the proof of the lemma.
This concludes the proof of \cref{:lem:cluster-distances-lemma}.
\end{proof}

\subsubsection{Proof of \texorpdfstring{\cref{:lem:steps-are-effective}}{Lemma 9}} \label{:sec:proof-of-lem-steps-are-effective}
We begin with the following observation.
\begin{observation} \label{:obs:swap-decided-by-estimates}
At every swap step made by Algorithm \ref{:alg:fast-epoch}, when the execution is at line \ref{:line:fast-epoch:before-swap}, it must be that $\widetilde{\avg}(p,C(p)) \geq 2\widetilde{\avg}(p,C')$.
\end{observation}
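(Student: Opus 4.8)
The plan is to read off the claimed inequality directly from the condition that causes Algorithm~\ref{:alg:fast-epoch} to take a swap step, translating that condition, which is phrased in terms of the estimates $\widehat{\avg}$, into a condition on the ``snapshot'' values $\widetilde{\avg}$ via the sandwich $\widetilde{\avg}(p,C) \leq \widehat{\avg}(p,C) \leq (1+\epsilon)\widetilde{\avg}(p,C)$ of \cref{:def:old-averages}. First I would note that when the execution reaches line~\ref{:line:fast-epoch:before-swap} during a swap step, the set $\clusterToRecompute$ is empty (this is exactly the condition of the if statement in line~\ref{:line:fast-epoch:swap-or-recompute-if-statement}), so every cluster lies in $\cC \setminus \clusterToRecompute$; hence by \cref{:cl:values-defined} the values $\widehat{\avg}(p,C(p))$ and $\widehat{\avg}(p,C')$ have been initialized in earlier recompute steps, and the sandwich of \cref{:def:old-averages} is available for both $C(p)$ and $C'$.

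Next, since $p$ and $C'$ were chosen in line~\ref{:line:fast-epoch:choose-point-to-swap} to satisfy the condition of the while loop, we have $C(p) \neq \{p\}$ and
\[
 \frac{|C(p)|}{|C(p)\setminus\{p\}|}\cdot \widehat{\avg}(p,C(p)) > \frac{\alpha}{2}\cdot\widehat{\avg}(p,C').
\]
Substituting $\widehat{\avg}(p,C(p)) \leq (1+\epsilon)\widetilde{\avg}(p,C(p))$ on the left and $\widehat{\avg}(p,C') \geq \widetilde{\avg}(p,C')$ on the right and rearranging gives
\[
 \widetilde{\avg}(p,C(p)) > \frac{\alpha}{2(1+\epsilon)}\cdot\frac{|C(p)\setminus\{p\}|}{|C(p)|}\cdot\widetilde{\avg}(p,C').
\]
Finally I would bound the numerical factor from below: $C(p) \neq \{p\}$ forces $|C(p)| \geq 2$, so $|C(p)\setminus\{p\}|/|C(p)| = (|C(p)|-1)/|C(p)| \geq 1/2$; combined with $\epsilon = 1/10$ and $\alpha = 16\log n \geq 16$ this yields a multiplier of at least $\frac{16}{2\cdot(11/10)}\cdot\frac12 = \frac{40}{11} > 2$, so (using also $\widetilde{\avg}(p,C') \geq 0$) we conclude $\widetilde{\avg}(p,C(p)) > 2\,\widetilde{\avg}(p,C')$, which is even stronger than what the observation asks.

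I expect no genuine obstacle here: the statement is a one-line arithmetic manipulation once the right ingredients are assembled. The only point requiring care is justifying that the sandwich $\widetilde{\avg} \leq \widehat{\avg} \leq (1+\epsilon)\widetilde{\avg}$ actually holds at line~\ref{:line:fast-epoch:before-swap}, which is precisely why the argument must invoke the fact that $\clusterToRecompute = \emptyset$ at the start of a swap step (and hence \cref{:cl:values-defined}). The gap between the constant obtained ($\approx 40/11$) and the required factor $2$ is comfortable, so no tightness issue arises regardless of the base of the logarithm in $\alpha = 16\log n$.
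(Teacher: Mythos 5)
Your proposal is correct and follows essentially the same route as the paper's own proof: read off the while-loop condition, bound the factor $|C(p)|/|C(p)\setminus\{p\}|$ by $2$, and convert the $\widehat{\avg}$ inequality into a $\widetilde{\avg}$ inequality via the one-sided $(1+\epsilon)$-approximation, finishing with the arithmetic $\alpha/(4(1+\epsilon)) \geq 2$. The only (minor) difference is that you make explicit the appeal to $\clusterToRecompute = \emptyset$ and \cref{:cl:values-defined} to ensure the estimates are initialized, which the paper's proof leaves implicit by pointing to \cref{:def:old-averages}.
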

\begin{proof}
By the definition of line \ref{:line:fast-epoch:before-swap} of the algorithm, the chosen point $p$ and cluster $C'$ satisfy $C(p) \neq \{p\}$ and $(|C(p)|/|C(p)\setminus\{p\}|)\widehat{\avg}(p,C(p)) > (\alpha/2) \widehat{\avg}(p,C')$.
Since $C(p) \neq \{p\}$, we have $|C(p)\setminus\{p\}| = |C(p)| - 1 \geq 1$, so $(|C(p)|/|C(p)\setminus\{p\}|) \leq 2$, so the above inequality implies that
\[
 \frac{\widehat{\avg}(p,C(p))}{\widehat{\avg}(p,C')} \geq \alpha/4.
\]
Furthermore, as explained in \cref{:def:old-averages}, $\frac{\widehat{\avg}(p,C(p))}{\widehat{\avg}(p,C')} \leq (1+\epsilon)\frac{\widetilde{\avg}(p,C(p))}{\widetilde{\avg}(p,C')}$, so the previous inequality implies that
\[
 \frac{\widetilde{\avg}(p,C(p))}{\widetilde{\avg}(p,C')} \geq \alpha/(4(1+\epsilon)) \geq 2,
\]
which means that $\widetilde{\avg}(p,C(p)) \geq 2\widetilde{\avg}(p,C')$, as we needed to prove.
\end{proof}
We are now ready to prove \cref{:lem:steps-are-effective}.
\begin{proof}[Proof of \cref{:lem:steps-are-effective}]
Since the step is a swap step, the condition in the if statement at line \ref{:line:fast-epoch:swap-or-recompute-if-statement} must have been met, which means that $\clusterToRecompute$ is empty.
So, since the clusters $C'$ and $C=C(p)$ are in $\cC$, we get that they are in $\cC \setminus ClustersToRecompute$ at the start of this step, so \cref{:lem:cluster-distances-lemma} tells us that $\widetilde{\avg}(p,C) + \widetilde{\avg}(p,C') \geq (3/16)\frac{t^*}{\min\{|C|,|C'|\}}$.
Furthermore, by \cref{:obs:swap-decided-by-estimates}, we have $\widetilde{\avg}(p,C) \geq (2/3)(\widetilde{\avg}(p,C) + \widetilde{\avg}(p,C'))$, so the previous inequality gives use that $\widetilde{\avg}(p,C) \geq (2/16)\frac{t^*}{\min\{|C|,|C'|\}} \geq \frac{t^*}{10\min\{|C|,|C'|\}}$, as we needed to prove.
\end{proof}

\subsubsection{Correctness Analysis of Algorithm \ref{:alg:fast-epoch}} \label{:sec:correctness-analysis-of-fast-epoch}
In this section, we prove the correctness of the algorithm, by proving the following \cref{:cl:return-IP-stable-clustering} and \cref{:cl:return-clustering-with-less-potential}

\begin{claim}\label{:cl:return-IP-stable-clustering}
If the algorithm returns at line \ref{:line:fast-epoch:return-IP-stable-clustering}, then the output is $\alpha$-IP stable.
\end{claim}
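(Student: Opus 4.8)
The algorithm reaches line \ref{:line:fast-epoch:return-IP-stable-clustering} only after the while loop of line \ref{:line:fast-epoch:main-loop} terminates, so at that moment $\clusterToRecompute = \emptyset$ and there is no pair $(p,C')$ with $p \in X$, $C' \in \cC \setminus \{C(p)\}$, $C(p) \neq \{p\}$, and $\frac{|C(p)|}{|C(p)\setminus\{p\}|}\widehat{\avg}(p,C(p)) > \frac{\alpha}{2}\widehat{\avg}(p,C')$. Since $\clusterToRecompute = \emptyset$, \emph{every} cluster of $\cC$ lies in $\cC \setminus \clusterToRecompute$, so by \cref{:cl:values-defined} all the bookkeeping quantities are initialized, and the invariants of \cref{:cl:basic-invariants-of-fast-epoch} together with the inequalities of \cref{:lem:cluster-distances-lemma} hold for all clusters of $\cC$ at this point. (Although these statements are phrased for ``the beginning of an iteration,'' the inductive step in each of their proofs establishes the conclusion for the state reached immediately after any iteration, whether or not the loop continues, so the conclusions hold in particular for the state at which the loop exits.) I will show that for an arbitrary $p$ with $C(p)\neq\{p\}$ and an arbitrary $C'\neq C(p)$ one has $\avg(p,C(p)\setminus\{p\}) \leq \alpha\cdot\avg(p,C')$; together with the trivial case $C(p)=\{p\}$ this is precisely the definition of $\alpha$-IP stability.

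Write $C = C(p)$ and $r = |C|/|C\setminus\{p\}| \in (1,2]$, so that $\avg(p,C\setminus\{p\}) = r\cdot\avg(p,C)$. The idea is to transfer the termination condition, which speaks about the \emph{estimated} averages $\widehat{\avg}$, to the \emph{true} averages. By \cref{:def:old-averages} we have $\widetilde{\avg}(q,D) \leq \widehat{\avg}(q,D) \leq (1+\epsilon)\widetilde{\avg}(q,D)$ for all $q,D$, and by the fourth invariant of \cref{:cl:basic-invariants-of-fast-epoch}, $|\widetilde{\avg}(q,D) - \avg(q,D)| \leq error(D)$; combining these, $\widehat{\avg}(p,C) \geq \avg(p,C) - error(C)$ and $\widehat{\avg}(p,C') \leq (1+\epsilon)\big(\avg(p,C') + error(C')\big)$. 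Next I bound the errors: \cref{:lem:cluster-distances-lemma} gives $\avg(p,C) + \avg(p,C') \geq \frac{t^*}{8\min\{|C|,|C'|\}}$, while the second invariant of \cref{:cl:basic-invariants-of-fast-epoch} gives $error(C), error(C') \leq \frac{t^*}{100\alpha\min\{|C|,|C'|\}} \leq \frac{2}{25\alpha}\big(\avg(p,C)+\avg(p,C')\big)$. Thus each error term is only an $O(1/\alpha)$-fraction of $\avg(p,C)+\avg(p,C')$.

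Finally, combine everything. Since the loop has terminated, $r\cdot\widehat{\avg}(p,C) \leq \frac{\alpha}{2}\widehat{\avg}(p,C')$, and substituting the two estimate-versus-truth bounds yields $r\big(\avg(p,C) - error(C)\big) \leq \frac{\alpha}{2}(1+\epsilon)\big(\avg(p,C') + error(C')\big)$. Plugging in the error bounds from the previous paragraph and using $\epsilon = 1/10$, $r \leq 2$, and $\alpha = 16\log n \geq 16$, an elementary calculation --- move the $O(1/\alpha)$-fraction error contributions to the left-hand side and absorb them, together with the $(1+\epsilon)$ factor, into the factor-$2$ gap between the ``$\alpha$'' of the true condition and the ``$\alpha/2$'' of the estimated one --- gives $r\cdot\avg(p,C) \leq \alpha\cdot\avg(p,C')$, i.e.\ $\avg(p,C(p)\setminus\{p\}) \leq \alpha\cdot\avg(p,C')$, as required. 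The step requiring real care --- and the whole reason \cref{:lem:cluster-distances-lemma} is invoked --- is precisely the guarantee that the accumulated $error(C)$ is small relative to $\avg(p,C)+\avg(p,C')$; once that is in hand, the remaining arithmetic is routine.
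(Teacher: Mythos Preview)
Your proposal is correct and follows essentially the same route as the paper's proof: both arguments hinge on \cref{:lem:cluster-distances-lemma} to show that $error(C)$ and $error(C')$ are only an $O(1/\alpha)$-fraction of the relevant true averages, and then absorb that slack (together with the $(1+\epsilon)$ estimation factor) into the factor-$2$ gap between $\alpha$ and $\alpha/2$. The only cosmetic difference is that the paper argues by contradiction---assuming $\avg(p,C(p)\setminus\{p\}) > \alpha\,\avg(p,C')$, deducing $\avg(p,C(p)) > \avg(p,C')$, and then bounding the errors relative to $\avg(p,C(p))$ alone---whereas you argue directly and bound the errors relative to $\avg(p,C)+\avg(p,C')$; the arithmetic is slightly different but equally routine.
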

\begin{proof}
The proof is by contradiction.
So, assume that when the algorithm returned at line \ref{:line:fast-epoch:return-IP-stable-clustering}, there existed some point $p \in X$ and cluster $C' \in \cC \setminus \{C(p)\}$ such that $C(p) \neq \{p\}$ and $\avg(p,C(p)\setminus\{p\}) > \alpha \avg(p,C')$, and our goal is to show that this leads to a contradiction.

When the algorithm returned, since $C(p) \neq \{p\}$, it must be that $|C(p) \setminus \{p\}| \geq 1$.
So, since $\avg(p,C(p)\setminus\{p\}) = \frac{1}{|C(p)\setminus\{p\}|}\sum_{p' \in C(p)} d(p,p') = \frac{|C(p)|}{|C(p)\setminus\{p\}|}\avg(p,C(p))$, the inequality $\avg(p,C(p)\setminus\{p\}) > \alpha \avg(p,C')$ gives us
\begin{align} \label{:eq:fast-epoch:proof-of-return-IP-stable-clustering:contradiction-assumption}
    \frac{|C(p)|}{|C(p)\setminus\{p\}|}\avg(p,C(p)) > \alpha\avg(p,C')
\end{align}
and
\begin{align} \label{:eq:fast-epoch:proof-of-return-IP-stable-clustering:result-of-contradiction-assumption}
    \avg(p,C(p)) > \avg(p,C').
\end{align}
Just before returning in line \ref{:line:fast-epoch:return-IP-stable-clustering}, the algorithm must have exited the while loop in line \ref{:line:fast-epoch:main-loop}, which means that the condition of that loop was not satisfied.
Therefore, $\clusterToRecompute$ was empty when the algorithm exited the loop, which means that $C(p)$ and $C'$ were both in $\cC \setminus \clusterToRecompute$.
Thus, by \cref{:lem:cluster-distances-lemma}, $\avg(p,C(p))+\avg(p,C') \geq (1/8)\frac{t^*}{\min\{|C(p)|,|C'|\}}$ held when the algorithm exited the loop.
So, since \cref{:eq:fast-epoch:proof-of-return-IP-stable-clustering:result-of-contradiction-assumption} also held when the algorithm exited the loop, we get that $\avg(p,C(p)) \geq (1/16)\frac{t^*}{\min\{|C(p)|,|C'|\}}$ must have held.
Therefore, by invariants 4 and 2 from \cref{:cl:basic-invariants-of-fast-epoch},
\begin{equation}\label{:eq:fast-epoch:proof-of-return-IP-stable-clustering:relating-old-avg-to-C}
\begin{aligned}
    \widetilde{\avg}(p,C(p))
    \geq \avg(p,C(p)) - error(C(p))
    &\geq \avg(p,C(p)) - t^*/(100\alpha|C(p)|)\\
    &\geq \avg(p,C(p)) - \frac{16}{100\alpha}\avg(p,C(p))\\
    &\geq (84/100)\avg(p,C(p)),
\end{aligned}
\end{equation}
and
\begin{equation}\label{:eq:fast-epoch:proof-of-return-IP-stable-clustering:relating-old-avg-to-C'}
\begin{aligned}
    \widetilde{\avg}(p,C')
    \leq \avg(p,C') + error(C')
    \leq \avg(p,C') + t^*/(100\alpha|C'|)
    \leq \avg(p,C') + (16/(100\alpha))\avg(p,C(p))
\end{aligned}
\end{equation}
held while the algorithm exited the loop.
\cref{:eq:fast-epoch:proof-of-return-IP-stable-clustering:relating-old-avg-to-C'} and \cref{:eq:fast-epoch:proof-of-return-IP-stable-clustering:contradiction-assumption} together give us that
\begin{align*}
    \widetilde{\avg}(p,C')
    &< \frac{|C(p)|}{\alpha|C(p) \setminus \{p\}|}\avg(p,C(p)) + (16/(100\alpha))\avg(p,C(p)) \\
    &\leq (116/(100\alpha))\frac{|C(p)|}{|C(p) \setminus \{p\}|}\avg(p,C(p)).
\end{align*}
So, by \cref{:eq:fast-epoch:proof-of-return-IP-stable-clustering:relating-old-avg-to-C},
\[
 \widetilde{\avg}(p,C(p))
 \geq \frac{84}{116} \cdot \avg(p,C(p))
 > \frac{84}{116} \cdot \alpha\frac{|C(p) \setminus \{p\}|}{|C(p)|} \cdot \widetilde{\avg}(p,C').
\]
Therefore, since $\widehat{\avg}(p,C(p))$ and $\widehat{\avg}(p,C')$ are $(1+\epsilon)$-estimates of $\widetilde{\avg}(p,C(p))$ and $\widetilde{\avg}(p,C')$ (see \cref{:def:old-averages}), we get that
\[
 \widehat{\avg}(p,C(p)) > \frac{84}{116} \cdot \alpha \cdot \frac{|C(p) \setminus \{p\}|}{|C(p)|}\cdot \widetilde{\avg}(p,C')
\]
held while the algorithm exited the loop, which means that
\[
 \frac{|C(p)|}{\alpha|C(p) \setminus \{p\}|}\cdot \widehat{\avg}(p,C(p))
 > (\alpha/2)\widetilde{\avg}(p,C').
\]
Since $C(p) \neq \{p\}$ and $C' \in \cC \setminus \{C(p)\}$, and the above inequality implies that the condition of the while loop at line \ref{:line:fast-epoch:main-loop} of the algorithm was satisfied while the algorithm exited that loop, which is a contradiction.

To summarize, we assumed that the algorithm returned at line \ref{:line:fast-epoch:return-IP-stable-clustering} while there existed some point $p \in X$ and cluster $C' \in \cC$ such that $C(p) \neq \{p\}$ and $\avg(p,C(p)\setminus\{p\}) > \alpha \avg(p,C')$, and got a contradiction.
Therefore, if the algorithm returns at line \ref{:line:fast-epoch:return-IP-stable-clustering}, there must not exist such $p$ and $C'$, which exactly means that the returned clustering is $\alpha$-IP stable.
This concludes the proof of \cref{:cl:return-IP-stable-clustering}.
\end{proof}

\begin{claim}\label{:cl:return-clustering-with-less-potential}
If the algorithm returns at line \ref{:line:fast-epoch:return-clustering-with-less-potential}, then the returned cluster $\cC$ has $\Phi(\cC) < \frac{3}{4}\Phi(\cC_{\mathrm{input}})$, where $\cC_{\mathrm{input}}$ denote the clustering that the algorithm received as input.
\end{claim}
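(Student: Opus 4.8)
The plan is to simply unwind the return condition at line~\ref{:line:fast-epoch:return-clustering-with-less-potential} and combine it with the one-sided approximation guarantee of $\CalcPotential$ from \cref{:cor:fast-potential-of-clustering}. The first step is to observe that the algorithm only reaches line~\ref{:line:fast-epoch:return-clustering-with-less-potential} when the test $\CalcPotential(\cC,\epsilon) < \frac{1+\epsilon}{2}\cdot\hat{\Phi}(\cC)$ passes, and that the stored value $\hat{\Phi}(\cC)$ is assigned exactly once, in line~\ref{:line:fast-epoch:computing-initial-estimated-potential}, at a moment when $\cC$ is still equal to $\cC_{\mathrm{input}}$; it is never modified afterwards (it is only read again in line~\ref{:line:fast-epoch:definition-of-t-start} and in the return test). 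Hence $\hat{\Phi}(\cC) = \CalcPotential(\cC_{\mathrm{input}},\epsilon)$ throughout the execution, so the triggering condition is $\CalcPotential(\cC,\epsilon) < \frac{1+\epsilon}{2}\CalcPotential(\cC_{\mathrm{input}},\epsilon)$.

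The second step is to invoke \cref{:cor:fast-potential-of-clustering} twice, using that it has one-sided error: on the current clustering it gives $\Phi(\cC) \leq \CalcPotential(\cC,\epsilon)$, and on the input clustering it gives $\CalcPotential(\cC_{\mathrm{input}},\epsilon) \leq (1+\epsilon)\Phi(\cC_{\mathrm{input}})$. Chaining these with the return condition yields $\Phi(\cC) \leq \CalcPotential(\cC,\epsilon) < \frac{1+\epsilon}{2}\CalcPotential(\cC_{\mathrm{input}},\epsilon) \leq \frac{(1+\epsilon)^2}{2}\Phi(\cC_{\mathrm{input}})$. The final step is to plug in $\epsilon = 1/10$, which Algorithm~\ref{:alg:fast-epoch} always uses, giving $\frac{(1+\epsilon)^2}{2} = \frac{121}{200} < \frac{3}{4}$, which is exactly the claimed bound.

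There is essentially no obstacle here. The only point requiring a bit of care is the direction of the one-sided error of $\CalcPotential$: it never underestimates, which is why the worst case absorbs a factor $(1+\epsilon)$ on the input side and nothing on the current-clustering side, and why the resulting constant $\tfrac{121}{200}$ sits comfortably below $\tfrac34$ (so the bound is robust even though $\CalcPotential$ is only an approximation). As in the rest of \cref{:sec:faster-implementation}, this argument is carried out under the assumption that the probabilistic subroutine $\CalcPotential$ meets its high-probability guarantee, which is justified by the remark on high-probability guarantees.
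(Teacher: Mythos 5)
Your proof is correct, and it is the natural argument that the paper evidently had in mind but omitted (Claim~\ref{:cl:return-clustering-with-less-potential} is stated in the paper without an explicit proof). You correctly identify the three ingredients: $\hat{\Phi}(\cC)$ is set once (line~\ref{:line:fast-epoch:computing-initial-estimated-potential}) while $\cC=\cC_{\mathrm{input}}$ and never reassigned; the one-sided guarantee $\Phi(\cdot)\le\CalcPotential(\cdot,\epsilon)\le(1+\epsilon)\Phi(\cdot)$ from \cref{:cor:fast-potential-of-clustering} is applied in the correct direction on each side; and the arithmetic $\tfrac{(1+\epsilon)^2}{2}=\tfrac{121}{200}<\tfrac34$ closes the bound. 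You also correctly defer to the paper's standing high-probability convention for the randomized subroutine. No gaps.
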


\subsubsection{Number of Swap Steps and Merge and Split Steps} \label{:sec:num-swap-and-merge-and-split-steps}
In this section, we bound the total number of swap steps, as well as the total number of merge and split step, that may occur during the execution of Algorithm \ref{:alg:fast-epoch}. For further details, refer to~\cref{:cor:number-of-swap-and-merge-and-split-steps} and~\cref{:def:step-types}.

\begin{claim} \label{:cl:swaps-are-valid}
At every swap step made by Algorithm \ref{:alg:fast-epoch}, when the execution is at line \ref{:line:fast-epoch:before-swap}, it must be that $\avg(p,C(p) \setminus \{p\}) > 4\log(n) \cdot \avg(p,C')$.
\end{claim}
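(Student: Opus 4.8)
The plan is to start from the while-loop condition that triggered this swap step and progressively replace the estimated averages $\widehat{\avg}$ by the true averages $\avg$. Since the step is a swap step, the condition in line \ref{:line:fast-epoch:swap-or-recompute-if-statement} held, so $\clusterToRecompute = \emptyset$; hence $C(p)$ and $C'$ both lie in $\cC \setminus \clusterToRecompute$, so the invariants of \cref{:cl:basic-invariants-of-fast-epoch} apply to them, and so does the accuracy guarantee of \cref{:def:old-averages}. Writing $\rho := |C(p)|/|C(p)\setminus\{p\}|$ (note $\rho \in [1,2]$ because $C(p)\neq\{p\}$ forces $|C(p)|\geq 2$) and $\alpha = 16\log n$, the choice of $p,C'$ at line \ref{:line:fast-epoch:before-swap} means
\[
 \rho\,\widehat{\avg}(p,C(p)) > \tfrac{\alpha}{2}\,\widehat{\avg}(p,C').
\]
Since $\avg(p,C(p)\setminus\{p\}) = \rho\,\avg(p,C(p))$, it suffices to show $\rho\,\avg(p,C(p)) > 4\log n\cdot\avg(p,C')$.

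The main quantitative input is that the additive errors are tiny. By \cref{:lem:steps-are-effective}, at line \ref{:line:fast-epoch:before-swap} we have $\avg(p,C(p)) \geq t^*/(10\min\{|C(p)|,|C'|\})$, while invariant 2 of \cref{:cl:basic-invariants-of-fast-epoch} gives $error(C(p)) \leq t^*/(100\alpha|C(p)|)$ and $error(C') \leq t^*/(100\alpha|C'|)$. Since $|C(p)|,|C'| \geq \min\{|C(p)|,|C'|\}$, these combine to $error(C(p)),\,error(C') \leq \avg(p,C(p))/(10\alpha)$.

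To assemble the bound, abbreviate $a := \avg(p,C(p))$ and $b := \avg(p,C')$. By \cref{:def:old-averages}, $\widehat{\avg}(p,C(p)) \leq (1+\epsilon)\widetilde{\avg}(p,C(p))$ and $\widehat{\avg}(p,C') \geq \widetilde{\avg}(p,C')$; by invariant 4 of \cref{:cl:basic-invariants-of-fast-epoch}, $\widetilde{\avg}(p,C(p)) \leq a + error(C(p))$ and $\widetilde{\avg}(p,C') \geq b - error(C')$. Plugging these into the triggering inequality and rearranging yields
\[
 b < \tfrac{2\rho(1+\epsilon)}{\alpha}\bigl(a + error(C(p))\bigr) + error(C') \leq \rho a\left[\tfrac{2(1+\epsilon)}{\alpha}\Bigl(1 + \tfrac{1}{10\alpha}\Bigr) + \tfrac{1}{10\alpha}\right],
\]
using the error bounds above together with $\rho \geq 1$. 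With $\epsilon = 1/10$ and $\alpha = 16\log n \geq 16$ (valid since $n \geq 3$), a direct calculation shows the bracket is below $1/(4\log n)$, so $\rho a > 4\log n\cdot b$, which is exactly $\avg(p,C(p)\setminus\{p\}) > 4\log n\cdot\avg(p,C')$.

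The one delicate point is the bookkeeping of constants: the while-loop threshold is $\alpha/2 = 8\log n$ while the target is $4\log n$, so there is only a factor-of-$2$ of room, essentially all of which is consumed by $\rho \le 2$; the argument goes through only because the leftover slack — the two $(1+\epsilon)$ factors and the additive $error$ terms — is $O(1/\alpha)$-small, which is precisely how \cref{:lem:steps-are-effective} and invariant 2 of \cref{:cl:basic-invariants-of-fast-epoch} are calibrated. Everything else is a mechanical chain of inequalities.
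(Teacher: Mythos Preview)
Your proof is correct and follows essentially the same route as the paper's: start from the while-loop inequality on the estimates, convert $\widehat{\avg}$ to $\widetilde{\avg}$ via \cref{:def:old-averages}, then to $\avg$ via invariants 2 and 4 of \cref{:cl:basic-invariants-of-fast-epoch}, with \cref{:lem:steps-are-effective} supplying the scale bound that makes the additive errors $O(1/\alpha)$-small relative to $\avg(p,C(p))$. The only cosmetic differences are that you package the ratio as $\rho$ and bound the bracket in one line, whereas the paper carries the constants through a longer chain; your closing remark about ``two $(1+\epsilon)$ factors'' is a slight slip (only one appears), but it does not affect the argument.
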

\begin{proof}[Proof of \cref{:cl:swaps-are-valid}]
By the definition of line \ref{:line:fast-epoch:before-swap}, the chosen point $p$ and cluster $C'$ satisfy $C(p) \neq \{p\}$ and $(|C(p)|/|C(p)\setminus\{p\}|)\widehat{\avg}(p,C(p)) > (\alpha/2)\widehat{\avg}(p,C')$.
Since the values $\widehat{\avg}(p,C(p))$ and $\widehat{\avg}(p,C')$ are $(1+\epsilon)$-approximations of $\widetilde{\avg}(p,C(p))$ and $\widetilde{\avg}(p,C')$ (see \cref{:def:old-averages}), the aforementioned inequality implies that
\begin{equation}\label{:eq:fast-epoch:proof-of-swaps-are-valid:estimates-say-point-is-envious}
    (|C(p)|/|C(p)\setminus\{p\}|)\widetilde{\avg}(p,C(p)) > (\alpha/(2+2\epsilon))\widetilde{\avg}(p,C').
\end{equation}
Furthermore, when the execution is at line \ref{:line:fast-epoch:before-swap}, by \cref{:lem:steps-are-effective}, $\avg(p,C(p)) \geq \frac{t^*}{10\min\{|C(p)|,|C'|\}}$.
So, by invariants 4 and 2 of \cref{:cl:basic-invariants-of-fast-epoch},
\begin{equation}\label{:eq:fast-epoch:proof-of-swaps-are-valid:relating-old-avg-to-C}
\begin{aligned}
    \widetilde{\avg}(p,C(p))
    \leq \avg(p,C(p)) + error(C(p))
    &\leq \avg(p,C(p)) + t^*/(100\alpha|C(p)|) \\
    &\leq \avg(p,C(p)) + (1/(10\alpha))\avg(p,C(p))\\
    &\leq (11/10)\avg(p,C(p)),
\end{aligned}
\end{equation}
and
\begin{equation}\label{:eq:fast-epoch:proof-of-swaps-are-valid:relating-old-avg-to-C'}
\begin{aligned}
    \avg(p,C')
    \leq \widetilde{\avg}(p,C') + error(C')
    &\leq \widetilde{\avg}(p,C') + t^*/(100\alpha|C'|) \\
    &\leq \widetilde{\avg}(p,C') + (1/(10\alpha))\avg(p,C(p))\\
    &\leq \widetilde{\avg}(p,C') + (1/(10\alpha))\avg(p,C(p)\setminus\{p\}).
\end{aligned}
\end{equation}
By \cref{:eq:fast-epoch:proof-of-swaps-are-valid:estimates-say-point-is-envious},
\[
 (|C(p)\setminus\{p\}|/|C(p)|)\frac{\widetilde{\avg}(p,C')}{\widetilde{\avg}(p,C(p))} < \frac{2(1+\epsilon)}{\alpha},
\]
so, by \cref{:eq:fast-epoch:proof-of-swaps-are-valid:relating-old-avg-to-C},
\begin{align*}
    (|C(p)\setminus\{p\}|/|C(p)|)\frac{\widetilde{\avg}(p,C')}{\avg(p,C(p))}
    \leq (|C(p)\setminus\{p\}|/|C(p)|)\frac{\widetilde{\avg}(p,C')}{(10/11)\widetilde{\avg}(p,C(p))}
    < (11/10)\frac{2(1+\epsilon)}{\alpha},
\end{align*}
which implies that
\[
 \frac{\widetilde{\avg}(p,C')}{\avg(p,C(p)\setminus\{p\})} < (11/10)\frac{2(1+\epsilon)}{\alpha}.
\]
So, by \cref{:eq:fast-epoch:proof-of-swaps-are-valid:relating-old-avg-to-C'},
\begin{align*}
    \frac{\avg(p,C')}{\avg(p,C(p)\setminus\{p\})}
    \leq \frac{\widetilde{\avg}(p,C')}{\avg(p,C(p)\setminus\{p\})} + (1/(10\alpha))
    < (11/10)\frac{2(1+\epsilon)}{\alpha} + (1/(10\alpha))
    \leq 4/\alpha.
\end{align*}
Since $\alpha = 16\log n$, the above inequality exactly say that $\avg(p,C(p)\setminus\{p\}) > 4\log(n) \cdot \avg(p,C')$, as we needed.
This concludes the proof of \cref{:cl:swaps-are-valid}.
\end{proof}

\begin{corollary} \label{:cor:potential-reduction-by-swap-steps}
Each swap step reduces the potential $\Phi(\cC)$ by more than $\Tilde{\Omega}(\hat{\Phi}(\cC)/(nk))$. 
\end{corollary}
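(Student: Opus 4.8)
The plan is to combine the potential-function property from \cref{:thm:existence-of-potential-function} with the two structural facts already established for swap steps, namely \cref{:cl:swaps-are-valid} and \cref{:lem:steps-are-effective}, in essentially the same way as the swap-step analysis in the proof of \cref{:cl:potential-reduction-from-steps}.

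First I would analyze the net change in $\Phi(\cC)$ caused by a swap step, which moves a point $p$ from $C = C(p)$ to $C'$. Exactly as in the proof of \cref{:obs:property-of-potential-function}, applying the left inequality of \cref{:thm:existence-of-potential-function} with $S = C \setminus \{p\}$ shows that removing $p$ from $C$ decreases $\Phi(C)$ by at least $\avg(p, C \setminus \{p\})$ (this uses that at a swap step $C(p) \neq \{p\}$, so $S$ is non-empty), and applying the right inequality with $S = C'$ shows that inserting $p$ into $C'$ increases $\Phi(C')$ by at most $2\log n \cdot \avg(p, C')$; the potentials of all other clusters are unchanged. By \cref{:cl:swaps-are-valid}, $\avg(p, C \setminus \{p\}) > 4\log n \cdot \avg(p, C')$, so $2\log n \cdot \avg(p, C') < \tfrac{1}{2}\avg(p, C \setminus \{p\})$, and hence the swap step decreases $\Phi(\cC)$ by more than $\tfrac{1}{2}\avg(p, C \setminus \{p\})$.

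Next I would lower-bound $\avg(p, C \setminus \{p\})$. Since $d(p,p) = 0$ we have $\avg(p, C \setminus \{p\}) = \tfrac{|C|}{|C| - 1}\avg(p, C(p)) \geq \avg(p, C(p))$, and \cref{:lem:steps-are-effective} (which applies since, at line \ref{:line:fast-epoch:before-swap}, the step is a swap step) gives $\avg(p, C(p)) \geq \tfrac{t^*}{10 \min\{|C(p)|, |C'|\}} \geq \tfrac{t^*}{10n}$. Recalling the definition $t^* = \Omega\!\big(\tfrac{\hat{\Phi}(\cC)}{k\log n}\big)/(4\log n) = \Tilde{\Omega}(\hat{\Phi}(\cC)/k)$ from line \ref{:line:fast-epoch:definition-of-t-start}, I would conclude that the swap step reduces $\Phi(\cC)$ by more than $\tfrac{1}{2}\cdot\tfrac{t^*}{10n} = \tfrac{t^*}{20n} = \Tilde{\Omega}(\hat{\Phi}(\cC)/(nk))$, which is exactly the claimed bound.

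There is no real obstacle here: the corollary is a routine consequence of the two preceding results and the properties of the potential function $\Phi$. The only points that need a little care are checking that the potential-function property is being applied to non-empty sets (guaranteed because a swap step requires $C(p) \neq \{p\}$, so $C(p)\setminus\{p\}$ is non-empty) and correctly tracking the polylogarithmic factors hidden inside $t^*$ — in particular that $t^* = \Tilde{\Omega}(\hat{\Phi}(\cC)/k)$, so that after dividing by $10n$ the resulting quantity is still $\Tilde{\Omega}(\hat{\Phi}(\cC)/(nk))$.
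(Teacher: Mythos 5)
Your proposal is correct and follows essentially the same argument as the paper: you decompose the potential change into the removal and insertion contributions via \cref{:thm:existence-of-potential-function}, use \cref{:cl:swaps-are-valid} to bound the insertion cost by half the removal gain, and then invoke \cref{:lem:steps-are-effective} and the definition of $t^*$ to reach the $\Tilde{\Omega}(\hat{\Phi}(\cC)/(nk))$ bound.
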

\begin{proof}
By \cref{:thm:existence-of-potential-function}, the potential function $\Phi$ satisfies that \ref{:line:fast-epoch:actual-swap} of the algorithm reduces the potential of the cluster $C$ by at least $\avg(p,C\setminus\{p\})$ and increases the potential of the cluster $C'$ by at most $2\log(n)\cdot\avg(p,C')$.
By \cref{:cl:swaps-are-valid}, this increase to the potential of $C'$ is less than $\frac{1}{2}\avg(p,C\setminus\{p\})$, so the total change to $\Phi(\cC)$ is a reduction by more than $\frac{1}{2}\avg(p,C\setminus\{p\})$.
Furthermore, by the fact that $\avg(p,C\setminus\{p\}) \geq \avg(p,C)$ and by \cref{:lem:steps-are-effective},
\[
 \frac{1}{2}\avg(p,C\setminus\{p\}) \geq \frac{1}{2}\avg(p,C) \geq \frac{t^*}{20\min\{|C(p)|,|C|\}}
 \geq \frac{t^*}{20n}.
\]
In summary, each swap step reduce the potential $\Phi(\cC)$ by at least $\frac{t^*}{20n}$, which is $\Tilde{\Omega}(\hat{\Phi}(\cC)/(nk))$ as we needed. (See the definition of $t^*$ at line \ref{:line:fast-epoch:definition-of-t-start} of the algorithm.)
\end{proof}

\begin{claim} \label{:cl:potential-reduction-by-merge-and-split-steps}
Each merge and split step reduces the potential of the clustering by more than $\Tilde{\Omega}(\hat{\Phi}(\cC)/k)$. I.e., $\Phi(\cC_{\mathrm{after}}) < \Phi(\cC_{\mathrm{before}}) - \Tilde{\Omega}(\hat{\Phi}(\cC_{\mathrm{before}})/k)$.
\end{claim}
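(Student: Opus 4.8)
The plan is to account for the net change in the total potential $\Phi(\cC)$ during a merge and split step as the sum of an increase caused by merging $C$ and $C'$ into $C''=C\cup C'$ and a decrease caused by $\fastSplit$ replacing a cluster $C^*$ by $(C^*_1,C^*_2)$, and to show the decrease dominates by a $\Tilde{\Omega}(\hat\Phi(\cC)/k)$ margin; here $\hat\Phi(\cC)$ is the estimate fixed once at line~\ref{:line:fast-epoch:computing-initial-estimated-potential} of Algorithm~\ref{:alg:fast-epoch}, which is within a constant factor of $\Phi(\cC_{\mathrm{before}})$ by the return check used below.

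First I would bound the merge increase. By \cref{:lem:merge-cost-bound}, $\Phi(C'')-\Phi(C)-\Phi(C')\le\frac{2\log n}{\max\{|C|,|C'|\}}\sum_{p_1\in C}\sum_{p_2\in C'}d(p_1,p_2)$; rewriting the double sum as $|C|\sum_{p\in C'}\avg(p,C)$ and using $|C|\,|C'|=\min\{|C|,|C'|\}\cdot\max\{|C|,|C'|\}$ turns the right-hand side into $2\log n\cdot\frac{\min\{|C|,|C'|\}}{|C'|}\sum_{p\in C'}\avg(p,C)$. Since the step is a merge and split step, the condition of line~\ref{:line:fast-epoch:merge-and-split-if-statement} held, i.e.\ $\frac{\min\{|C|,|C'|\}}{|C'|}\sum_{p\in C'}\widehat\avg(p,C)<t^*$, and since these $\widehat\avg(p,C)$ were just produced by $\CalcAverage$ in this very recompute step, \cref{:lem:fast-average-distances} gives $\avg(p,C)\le\widehat\avg(p,C)$. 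Hence the merge increases $\Phi(\cC)$ by strictly less than $2\log n\cdot t^*$, i.e.\ the $(k-1)$-clustering $\cC_{\mathrm{merged}}$ produced by the merge satisfies $\Phi(\cC_{\mathrm{merged}})<\Phi(\cC_{\mathrm{before}})+2\log n\cdot t^*$.

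Next, \cref{:lem:fast-split} applied to $\cC_{\mathrm{merged}}$ (the clustering on which $\fastSplit$ is called) gives $\Phi(C^*_1)+\Phi(C^*_2)\le\Phi(C^*)-c_1\frac{\Phi(\cC_{\mathrm{merged}})}{k\log n}$ for the absolute constant $c_1$ hidden there, so the final clustering satisfies $\Phi(\cC_{\mathrm{after}})\le\Phi(\cC_{\mathrm{merged}})\bigl(1-\frac{c_1}{k\log n}\bigr)$. Since $0<1-\frac{c_1}{k\log n}<1$, I can substitute the bound on $\Phi(\cC_{\mathrm{merged}})$ from the previous step, expand, and drop the (negative) cross term to obtain $\Phi(\cC_{\mathrm{after}})\le\Phi(\cC_{\mathrm{before}})+2\log n\cdot t^*-\frac{c_1}{k\log n}\Phi(\cC_{\mathrm{before}})$. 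Two inputs then finish the job: (i) because this recompute step did not return at line~\ref{:line:fast-epoch:return-clustering-with-less-potential}, $\CalcPotential(\cC_{\mathrm{before}},\epsilon)\ge\frac{1+\epsilon}{2}\hat\Phi(\cC)$, and combined with the one-sided guarantee $\CalcPotential(\cC_{\mathrm{before}},\epsilon)\le(1+\epsilon)\Phi(\cC_{\mathrm{before}})$ of \cref{:cor:fast-potential-of-clustering} this yields $\Phi(\cC_{\mathrm{before}})\ge\tfrac12\hat\Phi(\cC)$, so $\frac{c_1}{k\log n}\Phi(\cC_{\mathrm{before}})\ge\frac{c_1\hat\Phi(\cC)}{2k\log n}$; and (ii) $t^*=\Omega\bigl(\frac{\hat\Phi(\cC)}{k\log n}\bigr)/(4\log n)$ by line~\ref{:line:fast-epoch:definition-of-t-start}, so $2\log n\cdot t^*=\tfrac12\,\Omega\bigl(\frac{\hat\Phi(\cC)}{k\log n}\bigr)$, which by taking the absolute constant hidden in the definition of $t^*$ to be at most $\tfrac12 c_1$ is at most $\frac{c_1\hat\Phi(\cC)}{4k\log n}$. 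Combining these gives $\Phi(\cC_{\mathrm{after}})\le\Phi(\cC_{\mathrm{before}})-\frac{c_1\hat\Phi(\cC)}{4k\log n}=\Phi(\cC_{\mathrm{before}})-\Tilde{\Omega}(\hat\Phi(\cC)/k)$, as claimed.

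The one point that needs care — and where this differs from the slow variant of \cref{:sec:implementing-basic-merge-and-split} — is that \cref{:lem:fast-split}'s guarantee is stated relative to $\Phi(\cC_{\mathrm{merged}})$, which need not be comparable to $\hat\Phi(\cC)$: the estimate $\hat\Phi(\cC)$ is stale (from the start of the epoch), and a merge can even \emph{lower} the potential when the two merged clusters are large and very close. Keeping the split bound in the multiplicative form $\Phi(\cC_{\mathrm{after}})\le\Phi(\cC_{\mathrm{merged}})(1-\tfrac{c_1}{k\log n})$ avoids ever needing a lower bound on $\Phi(\cC_{\mathrm{merged}})$ — if the merge shrinks the potential, that only helps — so the only external facts required are the lower bound $\Phi(\cC_{\mathrm{before}})\ge\tfrac12\hat\Phi(\cC)$ supplied by the return check and the harmless freedom to shrink the constant in $t^*$.
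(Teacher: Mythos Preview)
Your proof is correct and uses the same ingredients as the paper's: the merge-cost bound of \cref{:lem:merge-cost-bound} combined with the if-condition at line~\ref{:line:fast-epoch:merge-and-split-if-statement}, the split guarantee of \cref{:lem:fast-split}, the non-return check at line~\ref{:line:fast-epoch:return-clustering-with-less-potential} giving $\Phi(\cC_{\mathrm{before}})\ge\tfrac12\hat\Phi(\cC)$, and the freedom to choose the constant in $t^*$. The only difference is bookkeeping: the paper handles the possibility that $\Phi(\cC_{\mathrm{merged}})$ is already small (so that the split guarantee, which scales with $\Phi(\cC_{\mathrm{merged}})$, might be weak) by an explicit case split, whereas you keep the split bound in its multiplicative form $\Phi(\cC_{\mathrm{after}})\le\Phi(\cC_{\mathrm{merged}})\bigl(1-\tfrac{c_1}{k\log n}\bigr)$, substitute the merge upper bound, and drop the non-positive cross term --- this collapses both of the paper's cases into one line. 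One small aside: your remark that a merge ``can even lower the potential'' is not actually possible for this particular $\Phi$, since \cref{:thm:existence-of-potential-function} makes $\Phi$ monotone under point insertion; but your argument never relies on the merge increasing $\Phi$, so the extra caution is harmless.
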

\begin{proof}
Fix a merge and split step during the execution of the algorithm, which is defined as a step where the algorithm enters into the if statement at line \ref{:line:fast-epoch:merge-and-split-if-statement}. (See \cref{:def:step-types}.)
Let $\cC$ denote the start of the clustering at the beginning of the step, let $\cC'$ denote the state of the clustering after the algorithm executed line \ref{:line:fast-epoch:changing-clustering-in-merge} during this step, and let $\cC''$ denote the state of the clustering at the end of the step (i.e. after line \ref{:line:fast-epoch:changing-clustering-in-split}).
Our goal is to show that $\Phi(\cC'') < \Phi(\cC) - \Tilde{\Omega}(\hat{\Phi}(\cC)/k)$.
We begin by separately analyzing the merge and the split, in order to relate $\Phi(\cC')$ to $\Phi(\cC)$, and relate $\Phi(\cC'')$ to $\Phi(\cC')$,

\paragraph{Separately Analyzing the Merge and the Split.} Since this is a merge and split step, the condition of the if statement at line \ref{:line:fast-epoch:merge-and-split-if-statement} must be satisfied when the algorithm reaches it. 
So, when the algorithm reached this line, it must be that $\frac{\min\{|C|,|C'|\}}{|C'|}\sum_{p \in C'} \widehat{\avg}(p,C) < t^*$.
Furthermore, this happens just after the estimates $\{\widehat{\avg}(p,C)\}_{p \in C}$ are computed in line \ref{:line:fast-epoch:recompute-averages} of the algorithm.
So, since the estimates produced by the subroutine {\sc CalcAverages} can only err from above, the inequality $\frac{\min\{|C|,|C'|\}}{|C'|}\sum_{p \in C'} \avg(p,C) < t^*$ must hold when the algorithm reaches line \ref{:line:fast-epoch:merge-and-split-if-statement} in this step.
So,
\begin{align*}
    t^*
    > \frac{\min\{|C|,|C'|\}}{|C'|}\sum_{p \in C'} \avg(p,C)
    &= \frac{\min\{|C|,|C'|\}}{|C'||C|}\sum_{p_1 \in C'} \sum_{p_2 \in C} \avg(p_1,p_2) \\
    &= \frac{1}{\max\{|C|,|C'|\}}\sum_{p_1 \in C'} \sum_{p_2 \in C} \avg(p_1,p_2)
\end{align*}
must hold at that time.
So, by \cref{:lem:merge-cost-bound},
\begin{equation}\label{:eq:resulting-merge-potential-change-bound}
    \Phi(\cC') < \Phi(\cC) + \log(n) \cdot t^*.
\end{equation}
Furthermore, by the guarantee of the subroutine $\fastSplit$ (see \cref{:lem:fast-split}),
\begin{equation}\label{:eq:analysis-of-split-reduction}
    \Phi(\cC'') \leq \Phi(\cC') - \Omega(\frac{\Phi(\cC')}{k \log (n)}).
\end{equation}

\paragraph{Putting it Together.}
Since the algorithm did not return in line \ref{:line:fast-epoch:return-clustering-with-less-potential} during this iteration, it must be that $\CalcPotential(\cC,\epsilon)$ returned a value which is greater or equal to $((1+\epsilon)/2)\hat{\Phi}(\cC)$.
Therefore, it must be that $(1+\epsilon)\Phi(\cC) \geq ((1+\epsilon)/2)\hat{\Phi}(\cC)$ (see \cref{:cor:fast-potential-of-clustering}), which means that
\begin{equation}\label{:eq:relating-hat-Phi-C-to-Phi-C}
    \Phi(\cC) \geq \frac{\hat{\Phi}(\cC)}{2}.
\end{equation}
Now, if $\Phi(\cC') < \Phi(\cC)-\hat{\Phi}(\cC)/100$, then by \cref{:eq:analysis-of-split-reduction} we also have $\Phi(\cC'') < \Phi(\cC)-\hat{\Phi}(\cC)/100$, in which case we are done.
So, for the rest of the analysis, we assume that $\Phi(\cC') \geq \Phi(\cC) - \hat{\Phi}(\cC)/100$.
By this assumption and by, \cref{:eq:relating-hat-Phi-C-to-Phi-C}, $\Phi(\cC') \geq \Phi(\cC) - \Phi(\cC)/50 = (49/50)\Phi(\cC)$, which, by \cref{:eq:relating-hat-Phi-C-to-Phi-C}, means
\[
 \Phi(\cC') \geq \frac{49}{100} \cdot \hat{\Phi}(\cC).
\]
Therefore, by \cref{:eq:analysis-of-split-reduction},
\[
 \Phi(\cC'') \leq \Phi(\cC') - \frac{49}{100} \cdot \Omega(\frac{\hat{\Phi}(\cC)}{k \log (n)}),
\]
so, by \cref{:eq:resulting-merge-potential-change-bound},
\[
 \Phi(\cC'') < \Phi(\cC) + \log(n) \cdot t^* - \frac{49}{100} \cdot \Omega(\frac{\Phi(\cC)}{k \log (n)}).
\]
Therefore, if we set the constant in the $\Omega$ notation of the definition of $t^*$ to the same as the one in the $\Omega$ notation of the above inequality (which is the same as the one in the guarantee from \cref{:lem:fast-split}), then we get that
\[
 \Phi(\cC'') < \Phi(\cC) - \frac{24}{100} \cdot \Omega(\frac{\Phi(\cC)}{k \log (n)}),
\]
which is what we needed to prove.
This concludes the proof of \cref{:cl:potential-reduction-by-merge-and-split-steps}.
\end{proof}

\begin{corollary} \label{:cor:number-of-swap-and-merge-and-split-steps}
There are at most $\Tilde{O}(nk)$ swap steps, and at most $\Tilde{O}(k)$ merge-and-split steps.
\end{corollary}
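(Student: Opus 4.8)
The plan is a direct amortization argument, charging both kinds of steps against the single fixed quantity $\hat{\Phi}(\cC)$ computed once in line~\ref{:line:fast-epoch:computing-initial-estimated-potential} of Algorithm~\ref{:alg:fast-epoch}. The first thing I would establish is that the potential $\Phi(\cC)$ of the clustering maintained by the algorithm is monotonically non-increasing over the course of the epoch. A swap step strictly decreases it by \cref{:cor:potential-reduction-by-swap-steps}, and a merge and split step net-decreases it by \cref{:cl:potential-reduction-by-merge-and-split-steps} (the intermediate merge may raise the potential, but we only care about the per-step change). Every remaining step is a recompute step that is \emph{not} a merge and split step, and by \cref{:def:step-types} together with a glance at the pseudocode, such a step only recomputes the estimates $\widehat{\avg}(\cdot,C)$ and resets bookkeeping variables — it leaves the clustering $\cC$, and hence $\Phi(\cC)$, unchanged (it may also return in line~\ref{:line:fast-epoch:return-clustering-with-less-potential}, which merely ends the epoch). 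So the three step types partition all iterations, and $\Phi(\cC)$ never increases.

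Next I would bound the total decrease. Since $\Phi$ is non-negative and the initial value of $\Phi(\cC)$ is $\Phi(\cC_{\mathrm{input}})$, the sum over all steps of the decrease in $\Phi(\cC)$ contributed by that step is at most $\Phi(\cC_{\mathrm{input}})$. Because $\CalcPotential$ has one-sided error (\cref{:cor:fast-potential-of-clustering}), we have $\hat{\Phi}(\cC) \geq \Phi(\cC_{\mathrm{input}})$, so the total decrease across the epoch is at most $\hat{\Phi}(\cC)$. Combining this with \cref{:cor:potential-reduction-by-swap-steps}, which guarantees that each swap step contributes a decrease of at least $\Tilde{\Omega}(\hat{\Phi}(\cC)/(nk))$, yields at most $\hat{\Phi}(\cC)\big/\Tilde{\Omega}(\hat{\Phi}(\cC)/(nk)) = \Tilde{O}(nk)$ swap steps; combining it with \cref{:cl:potential-reduction-by-merge-and-split-steps}, which guarantees a decrease of at least $\Tilde{\Omega}(\hat{\Phi}(\cC)/k)$ per merge and split step, yields at most $\Tilde{O}(k)$ merge and split steps. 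The division is legitimate when $\hat{\Phi}(\cC) > 0$; in the degenerate case $\hat{\Phi}(\cC) = 0$ we have $\Phi(\cC_{\mathrm{input}}) = 0$, and one checks directly that for any non-singleton cluster $C$ all estimates $\widehat{\avg}(p,C)$ vanish, so the envy condition of the while loop and the condition of line~\ref{:line:fast-epoch:merge-and-split-if-statement} can never be met and no swap or merge and split steps occur.

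I do not expect a real obstacle here: the work of expressing the per-step potential drop in terms of the fixed $\hat{\Phi}(\cC)$ rather than the current (shrinking) potential has already been done inside \cref{:cor:potential-reduction-by-swap-steps} and \cref{:cl:potential-reduction-by-merge-and-split-steps}, and the definition of $t^*$ in line~\ref{:line:fast-epoch:definition-of-t-start} is exactly calibrated to make these bounds clean. The only point requiring care is the monotonicity bookkeeping — specifically, verifying that recompute steps which are not merge and split steps do not modify the clustering — and the separate treatment of the $\hat{\Phi}(\cC) = 0$ corner case.
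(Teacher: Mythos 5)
Your proposal is correct and takes essentially the same route as the paper: both arguments amortize the per-step potential decreases from \cref{:cor:potential-reduction-by-swap-steps} and \cref{:cl:potential-reduction-by-merge-and-split-steps} against the fixed budget $\hat{\Phi}(\cC) \geq \Phi(\cC_{\mathrm{input}})$, using non-negativity of $\Phi$ and the fact that non-merge-and-split recompute steps leave the clustering unchanged. You spell out the monotonicity bookkeeping and the degenerate $\hat{\Phi}(\cC)=0$ case a bit more carefully than the paper does (the paper's proof also contains a typo, writing ``never decreases'' where it means ``never increases''), but the underlying argument is identical.
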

\begin{proof}[Proof of \cref{:cor:number-of-swap-and-merge-and-split-steps}]
At the beginning of the algorithm, it set $\hat{\Phi}(\cC)$ to a value which is at least as large as the initial value of $\Phi(\cC)$. (See line \ref{:line:fast-epoch:computing-initial-estimated-potential} and \cref{:cor:fast-potential-of-clustering}.)
Furthermore, since the only types of steps which affect the clustering are swap steps and merge and split steps (see \cref{:def:step-types}), \cref{:cor:potential-reduction-by-swap-steps} and \cref{:cl:potential-reduction-by-merge-and-split-steps} imply that the potential of the maintained clustering $\cC$ never decreases.
So, since the potential of clustering can never be negative (see \cref{:sec:potential-function}), \cref{:cor:potential-reduction-by-swap-steps} and \cref{:cl:potential-reduction-by-merge-and-split-steps} imply that there can be at most $\Tilde{O}(nk)$ swap steps, and at most $\Tilde{O}(k)$ merge-and-split steps, as we needed to prove.
\end{proof}

\subsubsection{Number of Recompute Steps} \label{:sec:num-compute-steps}
In this section, we bound the total number of recompute steps that may occur during the execution of Algorithm \ref{:alg:fast-epoch}. (See \cref{:cor:number-of-recompute-steps} and \cref{:def:step-types}.)

\begin{claim} \label{:cl:progress-increase-relative-to-error-increase}
whenever lines \ref{:line:fast-epoch:increase-error}-\ref{:line:fast-epoch:increase-numSwaps} are executed, if $x$ is the amount by which line \ref{:line:fast-epoch:increase-progress} increases $progress(C)$, then the amount by which line \ref{:line:fast-epoch:increase-error} increases $error(C)$ is at most $O(x/\widehat{size}(C''))$, and the amount by which line \ref{:line:fast-epoch:increase-numSwaps} increases $numSwaps(C'')$ is at most $O(x \cdot \widehat{size}(C'')/t^*)$.
\end{claim}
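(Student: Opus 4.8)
The plan is to observe that lines \ref{:line:fast-epoch:increase-error}--\ref{:line:fast-epoch:increase-numSwaps} only run inside a swap step, so at that moment $\clusterToRecompute=\emptyset$, the variable $C$ equals the pre-swap cluster $C(p)$, both $C$ and $C'$ belonged to $\cC\setminus\clusterToRecompute$ at the start of the iteration, and $x=progressIncrease=\tfrac12\bigl(\tfrac1{1+\epsilon}\widehat{\avg}(p,C)-error(C)\bigr)$ (I read the ``$progress(C)$'' in the statement as ``$progress(C'')$'', which is increased by this same $x$ for both $C''\in\{C,C'\}$). Since line \ref{:line:fast-epoch:increase-error} increases $error(C'')$ by exactly $\bigl(\widehat{\avg}(p,C'')+error(C'')\bigr)/|C''|$ and line \ref{:line:fast-epoch:increase-numSwaps} increases $numSwaps(C'')$ by exactly $1$, and since \cref{:cor:size-didnt-change-much} lets me replace $|C''|$ by $\Theta(\widehat{size}(C''))$ (one swap changes $|C''|$ by at most one, so this stays true for the post-swap size too), the claim reduces to the two estimates $\widehat{\avg}(p,C'')+error(C'')=O(x)$ and $x=\Omega\!\bigl(t^*/\widehat{size}(C'')\bigr)$.

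First I would pin down $x$ from both sides. By \cref{:lem:steps-are-effective}, $\avg(p,C)\ge \tfrac{t^*}{10\min\{|C|,|C'|\}}\ge\tfrac{t^*}{10|C|}$, while invariant~2 of \cref{:cl:basic-invariants-of-fast-epoch} gives $error(C)\le\tfrac{t^*}{100\alpha|C|}\le\tfrac1{10}\avg(p,C)$; combining this with invariant~4 and \cref{:def:old-averages} (so that $\avg(p,C)\le\widetilde{\avg}(p,C)+error(C)\le\widehat{\avg}(p,C)+error(C)$) yields $error(C)\le\tfrac19\widehat{\avg}(p,C)$. Plugging back, $x\ge\tfrac12\bigl(\tfrac1{1+\epsilon}-\tfrac19\bigr)\widehat{\avg}(p,C)\ge\tfrac13\widehat{\avg}(p,C)$, hence $\widehat{\avg}(p,C)\le 3x$ and $error(C)\le x/3$. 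In the other direction, $\widehat{\avg}(p,C)\ge\widetilde{\avg}(p,C)\ge\tfrac9{10}\avg(p,C)\ge\tfrac{9t^*}{100\min\{|C|,|C'|\}}$, so $x=\Omega\!\bigl(t^*/\min\{|C|,|C'|\}\bigr)$. The $C''=C$ half of each conclusion now follows directly, using $\min\{|C|,|C'|\}\le|C|\le 2\widehat{size}(C)$ from \cref{:cor:size-didnt-change-much} for the $numSwaps$ bound.

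The only mildly delicate part is the destination cluster $C'$, which may be far smaller than $C$, so I cannot reason about it directly and must transfer the bounds from $C$. Here \cref{:obs:swap-decided-by-estimates} gives $\widetilde{\avg}(p,C')\le\tfrac12\widetilde{\avg}(p,C)\le\tfrac12\widehat{\avg}(p,C)$, so $\widehat{\avg}(p,C')\le(1+\epsilon)\widetilde{\avg}(p,C')\le\tfrac{1+\epsilon}{2}\widehat{\avg}(p,C)=O(x)$; and invariant~2 of \cref{:cl:basic-invariants-of-fast-epoch} gives $error(C')\le\tfrac{t^*}{100\alpha|C'|}\le\tfrac{t^*}{100\min\{|C|,|C'|\}}=O(x)$ by the lower bound on $x$ just established. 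This yields $\widehat{\avg}(p,C')+error(C')=O(x)$, and combined with $\min\{|C|,|C'|\}\le|C'|\le 2\widehat{size}(C')$ (again \cref{:cor:size-didnt-change-much}) also gives $x=\Omega\!\bigl(t^*/\widehat{size}(C')\bigr)$, finishing the $C''=C'$ half. Everything else is routine bookkeeping with the explicit update formulas, the fixed value $\epsilon=1/10$, the fact $\alpha=16\log n\ge 1$, and the constant-factor slack between $|C''|$ and $\widehat{size}(C'')$.
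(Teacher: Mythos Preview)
Your proposal is correct and follows essentially the same route as the paper's proof: both arguments hinge on \cref{:lem:steps-are-effective} to lower-bound $\avg(p,C)$ by $\Omega(t^*/\min\{|C|,|C'|\})$, invariants~2 and~4 of \cref{:cl:basic-invariants-of-fast-epoch} to control $error(C'')$ relative to $\widehat{\avg}(p,C)$, \cref{:obs:swap-decided-by-estimates} to transfer bounds from $C$ to $C'$, and \cref{:cor:size-didnt-change-much} to trade $|C''|$ for $\widehat{size}(C'')$. Your observation that the statement's ``$progress(C)$'' should read ``$progress(C'')$'' is correct, and your explicit handling of the one-unit shift in $|C''|$ between the pre-swap invariants and the post-swap update formula is a detail the paper glosses over.
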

\begin{proof}
Firstly, when the algorithm is at line \ref{:line:fast-epoch:before-swap}, by \cref{:lem:steps-are-effective}, $\avg(p,C(p)) \geq \frac{t^*}{10\min\{|C(p)|,|C'|\}}$.
So, by invariants 4 and 2 of \cref{:cl:basic-invariants-of-fast-epoch},
\begin{align*}
    \widetilde{\avg}(p,C(p))
    \geq \frac{t^*}{10\min\{|C(p)|,|C'|\}} - error(C(p))
    &\geq \frac{t^*}{10\min\{|C(p)|,|C'|\}} - t^*/(100\alpha|C(p)|)\\
    &\geq \frac{t^*}{11\min\{|C(p)|,|C'|\}}.
\end{align*}
Therefore, when the algorithm is at line \ref{:line:fast-epoch:before-swap}, by the fact that $\widehat{\avg}(p,C(p)) \geq \widetilde{\avg}(p,C(p))$ (see \cref{:def:old-averages}),
\[
 \widehat{\avg}(p,C(p)) \geq \frac{t^*}{11\min\{|C(p)|,|C'|\}}
\]
which, by invariant 2 of \cref{:cl:basic-invariants-of-fast-epoch}, it implies that
\begin{equation}\label{:eq:proof-of-progress-relative-to-error:relating-error-to-larger-distance}
    \forall C'' \in \{C(p),C'\}, \qquad error(C'') \leq \widehat{\avg}(p,C(p))/(9\alpha)
\end{equation}
and by \cref{:cor:size-didnt-change-much}, it implies that
\begin{equation}\label{:eq:proof-of-progress-relative-to-error:lower-bound-on-prgoress-increase}
    \forall C'' \in \{C(p),C'\}, \qquad 1 \leq 11\widehat{\avg}(p,C(p)) \cdot \widehat{size}(C'')/t^*
\end{equation}
\cref{:eq:proof-of-progress-relative-to-error:relating-error-to-larger-distance} implies that when the algorithm is at line~\ref{:line:fast-epoch:before-swap}, we have
\[
 \frac{1}{1+\epsilon}\cdot \widehat{\avg}(p,C(p))- error(C(p))
 \geq \frac{1}{2}\cdot \widehat{\avg}(p,C(p))
\]
which implies that after line \ref{:line:fast-epoch:set-progress-increase},
\begin{equation}\label{:eq:proof-of-progress-relative-to-error:progress-increase}
    progressIncrease \geq \frac{1}{4}\widehat{\avg}(p,C(p)).
\end{equation}

By \cref{:obs:swap-decided-by-estimates}, when the algorithm is at line \ref{:line:fast-epoch:before-swap}, $\widetilde{\avg}(p,C(p))\geq 2\widetilde{\avg}(p,C')$.
So, since $\widehat{\avg}(p,C(p))$ and $\widehat{\avg}(p,C')$ are $(1+\epsilon)$-approximations of $\widetilde{\avg}(p,C(p))$ and $\widetilde{\avg}(p,C')$ (see \cref{:def:old-averages}), we get that $\widehat{\avg}(p,C(p))\geq 2\widehat{\avg}(p,C')$.
Therefore, at this point, for each $C'' \in \{C(p),C'\}$, by also using invariant 2 of \cref{:cl:basic-invariants-of-fast-epoch},
\[
 (\widehat{\avg}(p,C'') + error(C''))/\widehat{size}(C'') \leq (\widehat{\avg}(p,C) + error(C''))/\widehat{size}(C'')
\]
So, by \cref{:eq:proof-of-progress-relative-to-error:relating-error-to-larger-distance},
\begin{align*}
    (\widehat{\avg}(p,C'') + error(C''))/\widehat{size}(C'')
    \leq (\widehat{\avg}(p,C(p)) + error(C''))/\widehat{size}(C'')
    \leq \frac{10}{9}\widehat{\avg}(p,C(p))/\widehat{size}(C'').
\end{align*}
Therefore, by \cref{:cor:size-didnt-change-much}, when the algorithm increases $error(C'')$ at line \ref{:line:fast-epoch:increase-error}, it increases it by at most $\frac{20}{9}\widehat{\avg}(p,C(p))/\widehat{size}(C'')$.
Furthermore, by \cref{:eq:proof-of-progress-relative-to-error:progress-increase}, line \ref{:line:fast-epoch:increase-progress} increases $error(C'')$ by at least $\frac{1}{4}\widehat{\avg}(p,C(p))$.

In summary, line \ref{:line:fast-epoch:increase-error} increases $error(C'')$ by at most $\frac{20}{9}\widehat{\avg}(p,C(p))/\widehat{size}(C'')$, while line \ref{:line:fast-epoch:increase-progress} increases $progress(C'')$ by at least $\frac{1}{4}\widehat{\avg}(p,C(p))$, and line \ref{:line:fast-epoch:increase-numSwaps} increases $numSwaps(C'')$ by at most $11\widehat{\avg}(p,C(p)) \cdot \widehat{\avg}(p,C(p))/t^*$ (\cref{:eq:proof-of-progress-relative-to-error:lower-bound-on-prgoress-increase}).
This concludes the proof of \cref{:cl:progress-increase-relative-to-error-increase}.
\end{proof}

\begin{claim}\label{:cl:progress-relative-to-error}
During line \ref{:line:fast-epoch:if-statement-error-too-large}, for every cluster $C \in \cC$, $progress(C) = \Omega(error(C) \cdot \widehat{size}(C))$ and also $progress(C) = \Omega(numSwaps(C) \cdot t^* / \widehat{size}(C))$.
\end{claim}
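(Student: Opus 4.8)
The plan is to obtain both inequalities directly from \cref{:cl:progress-increase-relative-to-error-increase} by summing its per-step guarantee over every swap step that has touched $C$ since $C$ was last recomputed. Fix a moment when the algorithm executes line \ref{:line:fast-epoch:if-statement-error-too-large}. Since this line lies inside the branch guarded by ``$\clusterToRecompute=\emptyset$'', the current iteration is a swap step, so $\clusterToRecompute$ was empty at the start of the iteration; hence every $C\in\cC$ was in $\cC\setminus\clusterToRecompute$ then, and by \cref{:cl:values-defined} there is a most recent recompute step on $C$, call it $\tau_C$ (a cluster created by a merge or split is inserted into $\clusterToRecompute$ and therefore gets recomputed before the next swap step, so $\tau_C$ indeed exists).

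I would then record the two structural facts that make the summation clean. Immediately after $\tau_C$, lines \ref{:line:fast-epoch:reset-error} and \ref{:line:fast-epoch:reset-progress} and the assignment $numSwaps(C)\leftarrow 0$ set $error(C)=progress(C)=numSwaps(C)=0$, and $\widehat{size}(C)$ is set to the value $|C|$ had at that time. Moreover, between $\tau_C$ and the current moment $\widehat{size}(C)$ is never reassigned (it only changes on a recompute of $C$, and $\tau_C$ is the most recent such step), and $error(C)$, $progress(C)$, $numSwaps(C)$ change only through lines \ref{:line:fast-epoch:increase-error}--\ref{:line:fast-epoch:increase-numSwaps}, i.e.\ only during swap steps in which $C\in\{C(p),C'\}$; note $C$ cannot have been merged away or split in the interim, or it would not still belong to $\cC$. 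Enumerate those swap steps $j$, and let $x_j$ be the amount by which line \ref{:line:fast-epoch:increase-progress} increased $progress(C)$ at step $j$. Summing the increments from the zero values right after $\tau_C$ gives $progress(C)=\sum_j x_j$, while \cref{:cl:progress-increase-relative-to-error-increase}, applied with the common value $\widehat{size}(C)$, bounds the accumulated increase of $error(C)$ by $\sum_j O(x_j/\widehat{size}(C))=O(progress(C)/\widehat{size}(C))$ and that of $numSwaps(C)$ by $\sum_j O(x_j\,\widehat{size}(C)/t^*)=O(progress(C)\,\widehat{size}(C)/t^*)$. Rearranging the two bounds yields $progress(C)=\Omega(error(C)\cdot\widehat{size}(C))$ and $progress(C)=\Omega(numSwaps(C)\cdot t^*/\widehat{size}(C))$, which is the claim.

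The argument is essentially bookkeeping once \cref{:cl:progress-increase-relative-to-error-increase} is in hand; the only points that need care are (i) checking that at line \ref{:line:fast-epoch:if-statement-error-too-large} the cluster $C$ has already undergone at least one recompute step, which is handled above via the emptiness of $\clusterToRecompute$ at the start of a swap step together with \cref{:cl:values-defined}, and (ii) observing that $\widehat{size}(C)$ is the same constant at every step $j$ and at the current moment, so that the per-step bounds can be summed with a single factor of $\widehat{size}(C)$. I do not expect any genuine obstacle beyond this.
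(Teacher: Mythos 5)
Your proposal is correct and takes essentially the same approach as the paper: identify the most recent recompute step $\tau_C$ on $C$ via \cref{:cl:values-defined} (using that $\clusterToRecompute$ is empty at the start of a swap step), note that the three counters were zeroed and $\widehat{size}(C)$ fixed at $\tau_C$, and then sum the per-step bounds of \cref{:cl:progress-increase-relative-to-error-increase} over the intervening swap steps. You spell out the summation and the constancy of $\widehat{size}(C)$ somewhat more explicitly than the paper does, but the argument is the same.
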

\begin{proof}
Let $\tau$ be the current iteration.
Since we reached line \ref{:line:fast-epoch:if-statement-error-too-large} in this iteration, it must have been a swap step.
By the condition of the if statement in line \ref{:line:fast-epoch:swap-or-recompute-if-statement}, $\clusterToRecompute$ must have been empty at the beginning of the iteration, which means that $C \in \cC \setminus \clusterToRecompute$ held.
Therefore, by \cref{:cl:values-defined}, there was some recompute step on the cluster $C$ before step $\tau$.
Let $\tau'$ be the last such recompute step.
At step $\tau'$, lines \ref{:line:fast-epoch:reset-error} and \ref{:line:fast-epoch:reset-progress} set $error(C)$ and $progress(C)$ and $numSwaps(C)$ to $0$.
After that, until the end of iteration $\tau$, the only times were $error(C)$, $progress(C)$, and $numSwaps(C)$, were changed are by lines \ref{:line:fast-epoch:increase-error}, \ref{:line:fast-epoch:increase-progress}, and \ref{:line:fast-epoch:increase-numSwaps} during swap steps.
So, by \cref{:cl:progress-increase-relative-to-error-increase}, we get that $error(C) = O(progress(C) / \widehat{size}(C))$ and $numSwaps(C) = O(progress(C) \cdot \widehat{size}(C) / t^*)$ hold at the beginning of iteration $\tau$, which exactly means that $progress(C) = \Omega(error(C) \cdot \widehat{size}(C))$ and $progress(C) = \Omega(numSwaps(C) \cdot t^* / \widehat{size}(C))$, as we needed.
\end{proof}

\begin{corollary} \label{:cor:recompute-only-once-large-progress}
Whenever a cluster $C$ is inserted into $\clusterToRecompute$ by line \ref{:line:fast-epoch:recompute-because-error-too-large}, it must have $progress(C) = \Tilde{\Omega}(t^*) = \Tilde{\Omega}(\hat{\Phi}(\cC)/k)$.
\end{corollary}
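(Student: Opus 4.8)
The plan is to read off the claim directly from the if-condition that triggers line~\ref{:line:fast-epoch:recompute-because-error-too-large} together with \cref{:cl:progress-relative-to-error}. First, observe that the only way a cluster $C$ gets inserted into $\clusterToRecompute$ by line~\ref{:line:fast-epoch:recompute-because-error-too-large} is if the guard of the if-statement in line~\ref{:line:fast-epoch:if-statement-error-too-large} held for $C$ at that moment, i.e. (writing $C$ for the $C''$ of that line) either $error(C) > t^*/(100\alpha|C|)$ or $numSwaps(C) > \widehat{size}(C)/2$. Since line~\ref{:line:fast-epoch:if-statement-error-too-large} is only reached inside a swap step, \cref{:cl:progress-relative-to-error} applies and gives us the two lower bounds $progress(C) = \Omega(error(C)\cdot\widehat{size}(C))$ and $progress(C) = \Omega(numSwaps(C)\cdot t^*/\widehat{size}(C))$.

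Now I would split into the two cases. In the case $numSwaps(C) > \widehat{size}(C)/2$, plugging into the second bound gives $progress(C) = \Omega\!\big(\tfrac{\widehat{size}(C)}{2}\cdot\tfrac{t^*}{\widehat{size}(C)}\big) = \Omega(t^*)$ immediately. In the case $error(C) > t^*/(100\alpha|C|)$, the first bound gives $progress(C) = \Omega\!\big(\tfrac{t^*}{\alpha|C|}\cdot\widehat{size}(C)\big)$, so it remains only to argue $\widehat{size}(C) = \Omega(|C|)$; this follows from \cref{:cor:size-didnt-change-much} applied at the start of the current iteration, together with the fact that line~\ref{:line:fast-epoch:actual-swap} changed $|C|$ by at most $1$ (and the degenerate case $|C|=1$ can be handled separately since then the threshold $t^*/(100\alpha)$ is large). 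This yields $progress(C) = \Omega(t^*/\alpha)$. Since $\alpha = 16\log n = \polylog(n)$, in both cases we get $progress(C) = \Tilde\Omega(t^*)$.

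Finally, I would unwind the definition of $t^*$ from line~\ref{:line:fast-epoch:definition-of-t-start}: $t^* = \Omega\!\big(\tfrac{\hat\Phi(\cC)}{k\log n}\big)/(4\log n) = \Tilde\Omega(\hat\Phi(\cC)/k)$, which converts the bound $progress(C) = \Tilde\Omega(t^*)$ into the stated $progress(C) = \Tilde\Omega(\hat\Phi(\cC)/k)$, completing the proof.

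I do not expect a real obstacle here: the statement is essentially a bookkeeping corollary of \cref{:cl:progress-relative-to-error}. The only point requiring a little care is transporting the size comparison of \cref{:cor:size-didnt-change-much} (stated for the beginning of an iteration, for clusters outside $\clusterToRecompute$) to the moment line~\ref{:line:fast-epoch:if-statement-error-too-large} is evaluated, which is mid-iteration and just after the swap has altered $|C|$ by one; this is a constant-factor slack argument, absorbed into the $\Omega$-notation.
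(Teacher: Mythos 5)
Your proposal is correct and takes the same route as the paper, which simply asserts that the corollary ``follows directly from \cref{:cl:progress-relative-to-error} and from the condition of the if statement in line \ref{:line:fast-epoch:if-statement-error-too-large}.'' You fill in the case split and, usefully, spell out the one non-trivial bookkeeping detail the paper elides in the $error(C'')$-branch: that one needs $\widehat{size}(C'') = \Omega(|C''|)$ at the moment the if-guard is evaluated, which follows from \cref{:cor:size-didnt-change-much} (valid at the start of the iteration) plus the fact that the swap changes $|C''|$ by only one, with $\widehat{size}(C'')\geq1$ handling the tiny-cluster edge case. This is a faithful, slightly more explicit version of the paper's argument.
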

\begin{proof}
This follows directly from \cref{:cl:progress-relative-to-error} and from the condition of the if statement in line \ref{:line:fast-epoch:if-statement-error-too-large}.
\end{proof}

\begin{claim} \label{:cl:progress-implies-reduction-in-potential}
For every cluster $C''$, whenever $progress(C'')$ is increased by a swap step, this swap step must have reduced $\Phi(\cC)$ by a greater amount.
\end{claim}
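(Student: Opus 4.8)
The plan is to reduce the claim to two ingredients already available in the excerpt: a lower bound on the decrease of $\Phi(\cC)$ caused by a swap step, and an upper bound on the quantity $progressIncrease$ added to $progress(C'')$ in that step. Fix a swap step that moves a point $p$ from $C = C(p)$ to a cluster $C'$, and take ``the beginning of the iteration'' as the reference time for all quantities. Since it is a swap step, the condition of line \ref{:line:fast-epoch:swap-or-recompute-if-statement} held, so $\clusterToRecompute = \emptyset$ at that time; in particular $C, C' \in \cC \setminus \clusterToRecompute$, so the invariants of \cref{:cl:basic-invariants-of-fast-epoch} apply to both of them, and moreover $C(p) \neq \{p\}$ by the while-loop condition, so $|C| \ge 2$. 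The only clusters whose $progress$ value changes during this step are $C$ and $C'$, and each is increased by exactly the value $progressIncrease$ computed in line \ref{:line:fast-epoch:set-progress-increase}; so it suffices to show that the step decreases $\Phi(\cC)$ by strictly more than $progressIncrease$. If $progressIncrease \le 0$ there is nothing to prove, so assume $progressIncrease > 0$.

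First I would recover the potential-decrease bound exactly as in the proof of \cref{:cor:potential-reduction-by-swap-steps}: by \cref{:thm:existence-of-potential-function}, removing $p$ from $C$ decreases $\Phi(C)$ by at least $\avg(p, C\setminus\{p\})$ and adding $p$ to $C'$ increases $\Phi(C')$ by at most $2\log n \cdot \avg(p, C')$, while \cref{:cl:swaps-are-valid} gives $\avg(p, C\setminus\{p\}) > 4\log n \cdot \avg(p, C')$; no other cluster's potential changes, so the net effect on $\Phi(\cC)$ is a decrease by strictly more than $\tfrac12 \avg(p, C\setminus\{p\})$. Next I would bound $progressIncrease$: since $d(p,p)=0$ and $|C|\ge 2$ we have $\avg(p, C\setminus\{p\}) = \tfrac{|C|}{|C|-1}\avg(p,C) \ge \avg(p,C)$, and by \cref{:def:old-averages} together with invariant 4 of \cref{:cl:basic-invariants-of-fast-epoch}, $\tfrac{1}{1+\epsilon}\widehat{\avg}(p,C) \le \widetilde{\avg}(p,C) \le \avg(p,C) + error(C)$, i.e. $\avg(p,C) \ge \tfrac{1}{1+\epsilon}\widehat{\avg}(p,C) - error(C) = 2\,progressIncrease$. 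Chaining these inequalities, the swap step decreases $\Phi(\cC)$ by strictly more than $\tfrac12 \avg(p,C\setminus\{p\}) \ge \tfrac12 \avg(p,C) \ge progressIncrease$, which is exactly what the claim asserts.

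I do not expect a genuine obstacle here; the proof is a short chain of inequalities. The only points requiring care are bookkeeping: confirming that the invariants of \cref{:cl:basic-invariants-of-fast-epoch} are applicable to $C$ at the moment of the swap (which holds because $\clusterToRecompute$ is empty during a swap step), making sure the values $\widehat{\avg}(p,C)$ and $error(C)$ used in line \ref{:line:fast-epoch:set-progress-increase} are the beginning-of-iteration values (they are, since line \ref{:line:fast-epoch:actual-swap} does not touch them and the update in line \ref{:line:fast-epoch:increase-error} runs afterward), and keeping the final inequality strict, which it is because the decrease in $\Phi(\cC)$ strictly exceeds $\tfrac12\avg(p,C\setminus\{p\})$ by the strict inequality in \cref{:cl:swaps-are-valid}.
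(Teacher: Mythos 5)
Your proof is correct and takes essentially the same route as the paper: first derive $\Delta > \tfrac12\avg(p,C\setminus\{p\}) \ge \tfrac12\avg(p,C)$ from \cref{:thm:existence-of-potential-function} and \cref{:cl:swaps-are-valid}, then bound $progressIncrease \le \tfrac12\avg(p,C)$ using the $(1+\epsilon)$-approximation guarantee of $\widehat{\avg}$ together with invariant 4 of \cref{:cl:basic-invariants-of-fast-epoch}, and chain the two. The only (harmless) additions beyond the paper's argument are your explicit dispatch of the $progressIncrease\le 0$ case and the remark that only $C$ and $C'$ have their $progress$ value touched; the core chain of inequalities is identical.
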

\begin{proof}
Let $\Delta$ be the amount by which the current swap step reduces $\Phi(\cC)$.
By \cref{:thm:existence-of-potential-function}, the potential function $\Phi$ satisfies that line \ref{:line:fast-epoch:actual-swap} of the algorithm reduces the potential of the cluster $C(p)$ by at least $\avg(p,C(p)\setminus\{p\})$ and increases the potential of the cluster $C'$ by at most $2\log(n)\cdot\avg(p,C')$.
By \cref{:cl:swaps-are-valid}, this increase to the potential of $C'$ is less than $\frac{1}{2}\avg(p,C(p)\setminus\{p\})$, so the total change to $\Phi(\cC)$ is a reduction by more than $\frac{1}{2}\avg(p,C(p)\setminus\{p\})$.
Therefore, since $\avg(p,C(p)\setminus\{p\}) \geq \avg(p,C(p))$, we get that 
\begin{equation}\label{:eq:reduction-in-potential-is-more-than-half-of-avarage-distance}
    \Delta > \frac{1}{2}\avg(p,C(p)).
\end{equation}

At the beginning of the current swap step, by the fact that $\widehat{\avg}(p,C(p))$ is a $(1+\epsilon)$ approximation of $\widetilde{\avg}(p,C(p))$ (see \cref{:def:old-averages}), together with invariant 4 of \cref{:cl:basic-invariants-of-fast-epoch}, and with \cref{:eq:reduction-in-potential-is-more-than-half-of-avarage-distance},
\begin{align*}
 (\frac{1}{1+\epsilon}\widehat{\avg}(p,C(p)) - error(C(p)))/2
 \leq (\widetilde{\avg}(p,C(p)) - error(C(p)))/2
 \leq (\avg(p,C(p)))/2
 < \Delta.
\end{align*}
Therefore, line \ref{:line:fast-epoch:set-progress-increase} set the variable $ProgressIncrease$ to less than $\Delta$, which means that for each $C''$, the amount by which the current swap step increases $progress(C'')$ is less than $x$ (i.e. less than the amount by which the current swap step reduces $\Phi(\cC)$), as we needed to prove.
\end{proof}

\begin{corollary} \label{:cor:estimates-become-bad-rarely}
Throughout the execution of Algorithm \ref{:alg:fast-epoch}, line \ref{:line:fast-epoch:recompute-because-error-too-large} is executed at most $\Tilde{O}(k)$ times.
\end{corollary}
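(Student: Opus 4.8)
The plan is a charging argument: each execution of line~\ref{:line:fast-epoch:recompute-because-error-too-large} is charged $\Tilde{\Omega}(t^*)$ units of accumulated $progress$, these charges are to disjoint chunks of $progress$, and the total amount of $progress$ ever accumulated is $O(\hat{\Phi}(\cC))$; since $t^* = \Tilde{\Omega}(\hat{\Phi}(\cC)/k)$ (line~\ref{:line:fast-epoch:definition-of-t-start}), this yields the bound $\Tilde{O}(k)$. Let $N$ denote the number of executions of line~\ref{:line:fast-epoch:recompute-because-error-too-large} over the whole run of Algorithm~\ref{:alg:fast-epoch}.

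First I would set up the disjointness. The variable $progress(C)$ is reset to $0$ on every recompute step on $C$ (line~\ref{:line:fast-epoch:reset-progress}) and is otherwise only ever increased, and only at swap steps (line~\ref{:line:fast-epoch:increase-progress}). The structural point is that once line~\ref{:line:fast-epoch:recompute-because-error-too-large} inserts $C$ into $\clusterToRecompute$, no swap step can occur until $\clusterToRecompute$ is emptied again---swap steps require $\clusterToRecompute=\emptyset$ (line~\ref{:line:fast-epoch:swap-or-recompute-if-statement})---and $\clusterToRecompute$ cannot become empty until $C$ has been popped and recomputed, which resets $progress(C)$, or $C$ has been removed from $\cC$ (which happens only in a merge-and-split step). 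Hence between any two consecutive executions of line~\ref{:line:fast-epoch:recompute-because-error-too-large} that involve the same cluster there is a reset of that cluster's $progress$, so each swap-step increment to $progress$ is ``claimed'' by at most one execution of line~\ref{:line:fast-epoch:recompute-because-error-too-large} (the first one involving the corresponding cluster after the increment and before that cluster's next recompute), and the value of $progress(C)$ read at a given execution is exactly the sum of the increments it claims. By \cref{:cor:recompute-only-once-large-progress}, every execution of line~\ref{:line:fast-epoch:recompute-because-error-too-large} sees $progress(C) = \Tilde{\Omega}(t^*)$. Summing over executions,
\[
 N \cdot \Tilde{\Omega}(t^*) \;\le\; \sum_{\text{swap steps } \sigma} \big(\text{total increase of } progress \text{ during } \sigma \big).
\]

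Next I would bound the right-hand side. At a swap step $progress(C'')$ is increased by $progressIncrease$ for exactly the two distinct clusters $C'' \in \{C,C'\}$ involved, so the total increase of $progress$ at that step is $2\cdot progressIncrease$, which by \cref{:cl:progress-implies-reduction-in-potential} is strictly less than $2\Delta$, where $\Delta$ is the amount by which that swap step decreases $\Phi(\cC)$. Summing, the total increase of $progress$ over all swap steps is less than $2\sum_{\text{swap steps}}\Delta$. Now $\Phi(\cC)$ is monotonically non-increasing over the whole execution: swap steps decrease it (\cref{:cor:potential-reduction-by-swap-steps}), merge-and-split steps decrease it (\cref{:cl:potential-reduction-by-merge-and-split-steps}), and recompute steps that are not merge-and-split steps leave the clustering unchanged; moreover $\Phi(\cC) \ge 0$ always and $\Phi(\cC_{\mathrm{input}}) \le \hat{\Phi}(\cC)$ at the start (line~\ref{:line:fast-epoch:computing-initial-estimated-potential} and \cref{:cor:fast-potential-of-clustering}). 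Hence $\sum_{\text{swap steps}}\Delta \le \hat{\Phi}(\cC)$, so the total increase of $progress$ over all swap steps is at most $2\hat{\Phi}(\cC)$. Plugging this into the previous display together with $t^* = \Tilde{\Omega}(\hat{\Phi}(\cC)/k)$ gives $N = \Tilde{O}(k)$.

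I expect the main obstacle to be the disjointness/charging claim in the second paragraph---checking against the pseudocode that a cluster, once inserted into $\clusterToRecompute$ by line~\ref{:line:fast-epoch:recompute-because-error-too-large}, genuinely cannot accrue further $progress$ before it is recomputed (which zeroes its $progress$) or merged away. This relies on the control flow of the \textbf{while} loop: recompute steps drain $\clusterToRecompute$ completely before any swap step resumes, and the only way a cluster leaves $\clusterToRecompute$ without being recomputed is by being removed from $\cC$ in a merge-and-split step. Everything else is a direct combination of the already-established \cref{:cor:recompute-only-once-large-progress}, \cref{:cl:progress-implies-reduction-in-potential}, \cref{:cor:potential-reduction-by-swap-steps}, \cref{:cl:potential-reduction-by-merge-and-split-steps}, and \cref{:cor:fast-potential-of-clustering}.
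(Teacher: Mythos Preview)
Your proof is correct and follows essentially the same charging argument as the paper: each execution of line~\ref{:line:fast-epoch:recompute-because-error-too-large} is charged $\Tilde{\Omega}(t^*)$ of accumulated $progress$ via \cref{:cor:recompute-only-once-large-progress}, disjointness follows because a cluster inserted into $\clusterToRecompute$ must be recomputed (resetting $progress$) or removed from $\cC$ before swap steps resume, and the total $progress$ ever accumulated is at most twice the total potential drop by \cref{:cl:progress-implies-reduction-in-potential}. Your handling of the disjointness (explicitly noting the merge-away case) is slightly more careful than the paper's one-line assertion that ``a recompute step on cluster $C''$ has to happen between any two consecutive times that line~\ref{:line:fast-epoch:recompute-because-error-too-large} is executed on a cluster $C''$,'' but the two arguments are otherwise the same.
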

\begin{proof}
By \cref{:cor:recompute-only-once-large-progress}, whenever line \ref{:line:fast-epoch:recompute-because-error-too-large} is executed on a cluster $C''$, this cluster must have $progress(C'') \geq \Tilde{\Omega}(\hat{\Phi}(\cC)/k)$.
Furthermore, by \cref{:cl:progress-implies-reduction-in-potential}, this implies that since the last time a recompute step happened on cluster $C''$, the cluster $C''$ has been involved in swap steps that reduced $\Phi(\cC)$ by a total greater than $\Tilde{\Omega}(\hat{\Phi}(\cC)/k)$.
Since a recompute step on cluster $C''$ has to happen between any two consecutive times that line \ref{:line:fast-epoch:recompute-because-error-too-large} is executed on a cluster $C''$, this implies that, for any one cluster $C$, if line \ref{:line:fast-epoch:recompute-because-error-too-large} has been executed on this cluster $C$ a total of $x$ times throughout the algorithm, then $C$ has been involved in swap steps that reduced $\Phi(\cC)$ by a total greater than $\Tilde{\Omega}(x\hat{\Phi}(\cC)/k)$. So, since each swap step can involve at most two clusters, we get that the total reduction to $\Phi(\cC)$ throughout the algorithm is greater than $\left(\sum_{C}x_C\right)\cdot\Tilde{\Omega}(\hat{\Phi}(\cC)/k)/2$, where $\left(\sum_{C}x_C\right)$ denotes the total number of times that line \ref{:line:fast-epoch:recompute-because-error-too-large} has been executed throughout the algorithm.
Therefore, since the value of $\hat{\Phi}(\cC)$ is at least as large as the initial value of $\Phi(\cC)$ (see line \ref{:line:fast-epoch:computing-initial-estimated-potential} and \cref{:cor:fast-potential-of-clustering}), we get that the total number of times that line \ref{:line:fast-epoch:recompute-because-error-too-large} has been executed throughout the algorithm is at most $\hat{\Phi}(\cC)/\Tilde{\Omega}(\hat{\Phi}(\cC)/k) = \Tilde{O}(k)$.
\end{proof}

\begin{corollary}\label{:cor:number-of-recompute-steps}
Throughout the execution of Algorithm \ref{:alg:fast-epoch}, there are at most $\Tilde{O}(k)$ recompute steps.
\end{corollary}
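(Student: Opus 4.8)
The plan is to count recompute steps by counting insertions into $\clusterToRecompute$. Every recompute step executes line \ref{:line:fast-epoch:select-cluster-to-recompute}, which pops exactly one cluster off $\clusterToRecompute$, so the number of recompute steps is at most the total number of times a cluster is placed into $\clusterToRecompute$ over the whole run; since $\clusterToRecompute$ is a set, re-inserting a cluster that is already present is a no-op, so this count is only an over-count and hence a legitimate upper bound. First I would scan \cref{:alg:fast-epoch} to confirm that insertions into $\clusterToRecompute$ occur in exactly four places: the initialization $\clusterToRecompute \leftarrow \cC$; line \ref{:line:fast-epoch:recompute-because-error-too-large}; the merge line \ref{:line:fast-epoch:changing-clusterstorecompute-in-merge} (inserting the merged cluster $C''$); and the split line \ref{:line:fast-epoch:changing-clusterstorecompute-in-split} (inserting $C^*_1$ and $C^*_2$).

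Next I would bound each contribution. The initialization contributes exactly $k$ insertions. Line \ref{:line:fast-epoch:recompute-because-error-too-large} is executed at most $\Tilde{O}(k)$ times over the whole run by \cref{:cor:estimates-become-bad-rarely}. Lines \ref{:line:fast-epoch:changing-clusterstorecompute-in-merge} and \ref{:line:fast-epoch:changing-clusterstorecompute-in-split} are executed once each per merge and split step, inserting at most three clusters per such step, and by \cref{:cor:number-of-swap-and-merge-and-split-steps} there are at most $\Tilde{O}(k)$ merge and split steps; so these two lines together contribute at most $\Tilde{O}(k)$ insertions. Summing, the number of recompute steps is at most $k + \Tilde{O}(k) + \Tilde{O}(k) = \Tilde{O}(k)$.

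I expect essentially no obstacle here: the corollary is purely an accounting step. The only things to double-check are (i) that the four insertion sites above are exhaustive, and (ii) that each recompute step really consumes one element of $\clusterToRecompute$, so that the number of pops is bounded by the number of pushes — this remains true even accounting for the early return at line \ref{:line:fast-epoch:return-clustering-with-less-potential}, which can only leave elements unpopped and thus makes the bound slacker. Both are immediate from inspecting the pseudocode and \cref{:def:step-types}; the genuine content lives in \cref{:cor:estimates-become-bad-rarely} and \cref{:cor:number-of-swap-and-merge-and-split-steps}, which are already established.
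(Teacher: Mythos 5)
Your proposal is correct and follows essentially the same accounting argument as the paper: both bound the number of recompute steps by the number of insertions into $\clusterToRecompute$, identify the same insertion sites (initialization, line \ref{:line:fast-epoch:recompute-because-error-too-large}, and the merge/split lines), and invoke \cref{:cor:estimates-become-bad-rarely} and \cref{:cor:number-of-swap-and-merge-and-split-steps} to bound them. The only differences are cosmetic (you spell out the no-op on re-insertion and the slack introduced by an early return, neither of which the paper bothers to mention).
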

\begin{proof}
Each recompute step removes one cluster from $\clusterToRecompute$ at line \ref{:line:fast-epoch:remove-cluster-from-clusters-to-recompute}.
Furthermore, $\clusterToRecompute$ is created with $k$ clusters, and the only lines that insert clusters into $\clusterToRecompute$ are line \ref{:line:fast-epoch:recompute-because-error-too-large}, line \ref{:line:fast-epoch:changing-clusterstorecompute-in-merge}, and line \ref{:line:fast-epoch:changing-clusterstorecompute-in-split}, which each insert a constant number of clusters each time they are executed.
So, since \cref{:cor:estimates-become-bad-rarely} says that line \ref{:line:fast-epoch:recompute-because-error-too-large} is executed at most $\Tilde{O}(k)$ times throughout the execution of the algorithm, and \cref{:cor:number-of-swap-and-merge-and-split-steps} implies that lines \ref{:line:fast-epoch:changing-clusterstorecompute-in-merge} and \ref{:line:fast-epoch:changing-clusterstorecompute-in-split} are executed at most $\Tilde{O}(k)$ times throughout the execution of the algorithm, we get that the total number of recompute step throughout the execution of the algorithm can be at most $k + O(1)\cdot\Tilde{O}(k) = \Tilde{O}(k)$.
\end{proof}

\subsubsection{Running Time Analysis of Algorithm \ref{:alg:fast-epoch}} \label{:sec:runtime-analysis-of-fast-epoch}
In this section, we bound the total running time of Algorithm \ref{:alg:fast-epoch}.
We begin by proving the following two claims, and then complete the analysis in \cref{:cor:final-proof-of-running-time-of-fast-epoch}.

\begin{claim}\label{:cl:running-time-of-swap-step}
Each swap step can be implemented in time $\Tilde{O}(1)$.
\end{claim}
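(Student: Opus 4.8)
The plan is to exhibit a small collection of auxiliary data structures that Algorithm \ref{:alg:fast-epoch} maintains alongside its state, so that every line executed during a swap step — including the evaluation of the \textbf{while}-loop condition in line \ref{:line:fast-epoch:main-loop} and the \textbf{if} condition in line \ref{:line:fast-epoch:swap-or-recompute-if-statement} — costs $\tilde O(1)$. The crucial structural fact is Definition \ref{:def:old-averages}: the estimates $\widehat{\avg}(p,C)$ are (re)written only in recompute steps (line \ref{:line:fast-epoch:recompute-averages}), so during a swap step \emph{no} $\widehat{\avg}$ value changes; the only mutations a swap step performs are to the membership of the single point $p$, to the two sizes $|C|,|C'|$, and to the $O(1)$ scalar bookkeeping variables $error,\,progress,\,numSwaps,\,\widehat{size}$ attached to $C$ and $C'$ (lines \ref{:line:fast-epoch:set-progress-increase}--\ref{:line:fast-epoch:increase-numSwaps}), all of which are $O(1)$ arithmetic updates, plus the conditional insertion of $C$ or $C'$ into $\clusterToRecompute$ (line \ref{:line:fast-epoch:recompute-because-error-too-large}), which is an $O(1)$ set operation. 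Thus the only places a naive implementation would spend $\Theta(n)$ time are (i) discovering a witnessing pair $(p,C')$ for the \textbf{while} condition and (ii) refreshing the averages after the move — and (ii) is exactly what the algorithm refuses to do between recompute steps, the resulting drift being absorbed into $error(\cdot)$.

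For (i) I would keep, as in the proof of \cref{:cl:time-required-for-steps}, a min-heap $H_p$ for every point $p$ holding $\{\widehat{\avg}(p,C'):C'\in\cC,\ C'\neq C(p)\}$, so that $m(p):=\min_{C'\neq C(p)}\widehat{\avg}(p,C')$ and its minimiser are read in $\tilde O(1)$. On top of this I keep, for each non-singleton cluster $C$, a min-heap $K_C$ over $\{\,m(p)/\widehat{\avg}(p,C):p\in C\,\}$, and one global min-heap $G$ over $\bigl\{\,\tfrac{|C|}{|C|-1}\cdot\min(K_C)\,\bigr\}_{C}$ (these are built during initialization / recompute steps, not during swap steps, so they do not count against the present claim). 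Since $C(p)\neq\{p\}$ is equivalent to $|C(p)|\geq 2$, and the \textbf{while} condition asks for some $p$ with $\tfrac{|C(p)|}{|C(p)|-1}\widehat{\avg}(p,C(p))>\tfrac{\alpha}{2}m(p)$, i.e. with $\tfrac{|C(p)|}{|C(p)|-1}\cdot\tfrac{m(p)}{\widehat{\avg}(p,C(p))}<\tfrac{2}{\alpha}$, the condition holds iff $\min(G)<2/\alpha$, and the chain $\min(G)\to$ cluster $C\to\min(K_C)\to$ point $p\to$ minimising cluster of $H_p\to C'$ recovers the witness in $\tilde O(1)$; the \textbf{if} condition in line \ref{:line:fast-epoch:swap-or-recompute-if-statement} is merely a test of $\clusterToRecompute=\emptyset$, which is $O(1)$ with an emptiness flag.

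It remains to update the heaps after moving $p$ from $C$ to $C'$. Because a swap step creates and destroys no cluster and only $p$ changes membership, and because all $\widehat{\avg}$ values are frozen, $H_{p''}$ and $m(p'')$ are unchanged for every $p''\neq p$; hence only $H_p$ needs work — delete the entry $\widehat{\avg}(p,C')$ and insert $\widehat{\avg}(p,C)$, in $O(\log k)$ — yielding the new $m(p)$. Then $K_C$ loses $p$'s entry and $K_{C'}$ gains the entry $m(p)/\widehat{\avg}(p,C')$, while \emph{every other} entry of $K_C$ and $K_{C'}$ keeps its key (those $m(\cdot)$'s and frozen $\widehat{\avg}$'s are unchanged); finally the two $G$-entries for $C$ and $C'$ are recomputed from the updated sizes and $\min(K_C),\min(K_{C'})$, and no other $G$-entry changes. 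I expect this last point to be the main obstacle: a change in $|C|$ alters the ``envy ratio'' of every one of the $\Theta(|C|)$ members of $C$, so a single flat heap over per-point envy ratios would force $\Theta(n)$ updates; the two-level design avoids this precisely by factoring the size-dependent term $\tfrac{|C|}{|C|-1}$ out of $K_C$ and carrying it as one scalar per cluster inside $G$, turning a size change into one $O(\log k)$ heap update. Altogether a swap step performs $O(1)$ scalar updates and $O(1)$ heap operations, each $\tilde O(1)$, for a total of $\tilde O(1)$ time.
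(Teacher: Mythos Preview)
Your argument is correct and, in one respect, more careful than the paper's. The paper maintains the same per-point heaps $H_p$ that you do, but then keeps a single flat $\mathrm{MainHeap}$ with one entry per point, the entry for $p$ being $\min_{C'\neq C(p)}\widehat{\avg}(p,C')\big/\bigl(\tfrac{|C(p)|}{|C(p)|-1}\widehat{\avg}(p,C(p))\bigr)$; it then asserts, without further detail, that these heaps are maintained at $\tilde O(1)$ multiplicative overhead. You correctly identify the obstacle with a flat design: a change in $|C|$ perturbs the keys of \emph{all} points in $C$, so a naive flat heap would incur $\Theta(|C|+|C'|)$ updates per swap. Your two-level structure---factoring the size-dependent scalar out of the per-point key and carrying it once per cluster in $G$---is exactly the right fix, and the paper's proof would benefit from this observation. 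One small slip: your rearrangement of the \textbf{while} inequality is off by an inversion; from $\tfrac{|C(p)|}{|C(p)|-1}\widehat{\avg}(p,C(p))>\tfrac{\alpha}{2}m(p)$ one gets $\tfrac{|C(p)|-1}{|C(p)|}\cdot\tfrac{m(p)}{\widehat{\avg}(p,C(p))}<\tfrac{2}{\alpha}$, so the per-cluster factor stored in $G$ should be $\tfrac{|C|-1}{|C|}$, not $\tfrac{|C|}{|C|-1}$. This does not affect the structure of the argument. You might also note in passing the edge cases $\widehat{\avg}(p,C(p))=0$ and a cluster becoming or ceasing to be a singleton after the move; both are handled in $O(1)$ but are worth one sentence.
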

\begin{proof}
To quickly implement the checking of the condition of the loop in line \ref{:line:fast-epoch:main-loop}, and to implement line \ref{:line:fast-epoch:choose-point-to-swap}, the algorithm maintains, for each $p \in X$, a min-heap $\mathrm{Heap}_{p}$ that contains the values of $\widehat{avg}(p,C')$ for all clusters $C' \in \cC \setminus \{C(p)\}$, where each value is accompanied by a pointer to the respective cluster $C'$.
Furthermore, the algorithm maintains a min-heap $\mathrm{MainHeap}$ that contains $\frac{\min_{C' \in \cC \setminus \{C(p)\}} \widehat{\avg}(p,C') }{(|C(p)|/|C(p)\setminus\{p\}|)\widehat{\avg}(p,C(p))}$ for each $p \in X$ such that $C(p) \neq \{p\}$ and $\widehat{\avg}(p,C(p)) \neq 0$, where each value is accompanied by a pointer to the respective point $p$.
These min-heaps while only increasing the running time of the algorithm by a multiplicative $\Tilde{O}(1)$, where the implementation of the updates to $\mathrm{MainHeap}$ uses the other heaps to find the appropriate $\min_{C' \in \cC \setminus \{C(p)\}} \widehat{\avg}(p,C')$ for each point.
Furthermore, using these heaps, it's clear that checking of the condition of the loop in line \ref{:line:fast-epoch:main-loop}, and implementing line \ref{:line:fast-epoch:choose-point-to-swap}, can both be done in time $\Tilde{O}(1)$.
Since the rest of the lines executed during a swap step can all be trivially executed in time $\Tilde{O}(1)$, we get that a swap step runs in time $\Tilde{O}(1)$, as we needed to prove.
\end{proof}

\begin{claim}\label{:cl:running-time-of-recompute-step}
Each recompute step (including merge and split steps) can be implemented in time $\Tilde{O}(n)$
\end{claim}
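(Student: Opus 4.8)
The plan is to walk through the operations performed during a recompute step of Algorithm~\ref{:alg:fast-epoch} — that is, the else-branch in lines~\ref{:line:fast-epoch:select-cluster-to-recompute}--\ref{:line:fast-epoch:changing-clusterstorecompute-in-split}, together with evaluating the while-loop condition of line~\ref{:line:fast-epoch:main-loop} and the three if-conditions (lines~\ref{:line:fast-epoch:swap-or-recompute-if-statement}, the one preceding line~\ref{:line:fast-epoch:return-clustering-with-less-potential}, and line~\ref{:line:fast-epoch:merge-and-split-if-statement}) — and argue each costs $\Tilde{O}(n)$. The cheap pieces are handled first: since the step is a recompute step we have $\clusterToRecompute \neq \emptyset$ at its start, so the while-loop condition and the if-condition of line~\ref{:line:fast-epoch:swap-or-recompute-if-statement} are decided in $\Tilde{O}(1)$ just by checking whether $\clusterToRecompute$ is empty, and $pop(\clusterToRecompute)$ together with the resets of $error(C)$, $progress(C)$, $\widehat{size}(C)$ and $numSwaps(C)$ are $\Tilde{O}(1)$. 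The two heavyweight primitives are the call $\CalcAverage(C,X,\epsilon)$ in line~\ref{:line:fast-epoch:recompute-averages} and the call $\CalcPotential(\cC,\epsilon)$ in the if preceding line~\ref{:line:fast-epoch:return-clustering-with-less-potential}; because the algorithm uses the constant $\epsilon = 1/10$, Lemma~\ref{:lem:fast-average-distances} gives the former a running time of $\Tilde{O}((|C|+|X|)/\epsilon^2) = \Tilde{O}(n)$, and Corollary~\ref{:cor:fast-potential-of-clustering} gives the latter a running time of $\Tilde{O}(n/\epsilon^2) = \Tilde{O}(n)$.

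Next I would bound the cost of line~\ref{:line:fast-epoch:merge-and-split-if-statement} and of the merge-and-split body. To check the condition of line~\ref{:line:fast-epoch:merge-and-split-if-statement}, for each cluster $C' \in \cC \setminus \{C\}$ the algorithm needs $\sum_{p \in C'} \widehat{\avg}(p,C)$; since the estimates $\{\widehat{\avg}(p,C)\}_{p \in X}$ were just produced in line~\ref{:line:fast-epoch:recompute-averages} and the algorithm maintains the clustering and the true cluster sizes, each such sum is obtained by iterating over the points of $C'$, and $\sum_{C' \in \cC}|C'| = n$, so the whole condition (including the $O(k)$ comparisons with the fixed value $t^*$) is checked in $\Tilde{O}(n)$. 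If the condition holds, the merge in lines~\ref{:line:fast-epoch:changing-clustering-in-merge}--\ref{:line:fast-epoch:changing-clusterstorecompute-in-merge} only relabels the points of $C$ and $C'$ and performs a constant number of set operations, hence $\Tilde{O}(n)$; the call $\fastSplit(\cC)$ runs in $\Tilde{O}(n)$ by Lemma~\ref{:lem:fast-split} (note that after the merge $\cC$ has $k-1$ clusters, as that lemma requires); and the updates in lines~\ref{:line:fast-epoch:changing-clustering-in-split}--\ref{:line:fast-epoch:changing-clusterstorecompute-in-split} are again $\Tilde{O}(n)$.

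Finally I would account for maintaining the auxiliary structures introduced in the proof of Claim~\ref{:cl:running-time-of-swap-step}: the per-point min-heaps $\mathrm{Heap}_{p}$ and the global $\mathrm{MainHeap}$ (as well as the true cluster sizes, which are updated in $O(n)$ per merge-and-split). I would keep the invariant that $\mathrm{Heap}_{p}$ holds exactly the estimates $\widehat{\avg}(p,C')$ for clusters $C' \in \cC \setminus \clusterToRecompute$ with $C' \neq C(p)$, so that only clusters with valid, initialized estimates ever appear in a heap (this is precisely what Claim~\ref{:cl:values-defined} guarantees). A recompute step on $C$ changes $\widehat{\avg}(p,C)$ for every $p \in X$ and moves $C$ out of $\clusterToRecompute$, so for each $p$ the algorithm inserts or updates $C$'s entry in $\mathrm{Heap}_p$ and then refreshes $p$'s entry in $\mathrm{MainHeap}$ (whose value depends only on $\widehat{\avg}(p,C(p))$ and $\min_{C' \neq C(p)} \widehat{\avg}(p,C')$, both retrievable from $\mathrm{Heap}_p$ in $\Tilde{O}(1)$): that is $n$ heap operations of $\Tilde{O}(1)$ each. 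In a merge-and-split step the algorithm additionally deletes from all heaps the entries of the removed clusters ($C$, $C'$, and $C^*$ when $C^*$ was not already in $\clusterToRecompute$) and relabels the affected points, while it does \emph{not} insert entries for the new clusters $C''$, $C^*_1$, $C^*_2$ since those are placed in $\clusterToRecompute$ — still $O(n)$ heap operations. Summing all contributions yields $\Tilde{O}(n)$ per recompute step. The main obstacle is exactly this last bookkeeping: one must be careful that the heaps never contain clusters with stale or uninitialized estimates (which the $\clusterToRecompute$-based invariant, backed by Claim~\ref{:cl:values-defined}, ensures) and that a single recompute step triggers only $O(n)$ heap updates rather than $O(nk)$; the rest of the argument is a routine tally of subroutine costs.
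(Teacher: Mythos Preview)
Your proposal is correct and follows essentially the same approach as the paper: cite \cref{:lem:fast-average-distances}, \cref{:cor:fast-potential-of-clustering}, and \cref{:lem:fast-split} for the three expensive subroutines, and defer the while-loop condition check to the heap infrastructure of \cref{:cl:running-time-of-swap-step}. The paper's own proof is terse (three sentences) and sweeps the heap maintenance under the blanket statement in the proof of \cref{:cl:running-time-of-swap-step} that the heaps only cost a multiplicative $\Tilde{O}(1)$; you instead spell out explicitly how the heaps are updated during a recompute step, and in particular propose the cleaner invariant that $\mathrm{Heap}_p$ holds entries only for clusters in $\cC \setminus \clusterToRecompute$, which avoids ever referencing uninitialized estimates and makes the $O(n)$-updates-per-step bound transparent.
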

\begin{proof}
It is already explained in the proof of \cref{:cl:running-time-of-swap-step} how the condition of the while loop in line \ref{:line:fast-epoch:main-loop} can checked in time $\Tilde{O}(1)$.
Furthermore, by \cref{:lem:fast-average-distances}, \cref{:cor:fast-potential-of-clustering}, and \cref{:lem:fast-split}, the procedures $\CalcAverage$, $\CalcPotential$, and $\fastSplit$ each run in time $\Tilde{O}(n)$.
This implies that each recompute step runs in time $\Tilde{O}(n)$, as we needed to prove.
\end{proof}

\begin{corollary}\label{:cor:final-proof-of-running-time-of-fast-epoch}
The total running time of Algorithm \ref{:alg:fast-epoch} is $\Tilde{O}(nk)$.
\end{corollary}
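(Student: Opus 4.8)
The plan is to assemble the $\Tilde{O}(nk)$ bound by multiplying the already-established per-step running times by the already-established bounds on the number of steps of each type, and then adding the cost of the one-time initialization performed before the main loop. Concretely, the total work done by Algorithm~\ref{:alg:fast-epoch} decomposes into three pieces: (i) initialization (line~\ref{:line:fast-epoch:computing-initial-estimated-potential} together with the setup of the auxiliary data structures); (ii) all swap steps; and (iii) all recompute steps, where the merge-and-split steps form a subset of the recompute steps.

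First I would handle the initialization. The single call to $\CalcPotential$ in line~\ref{:line:fast-epoch:computing-initial-estimated-potential} runs in time $\Tilde{O}(n)$ by \cref{:cor:fast-potential-of-clustering}. Building the per-point min-heaps $\mathrm{Heap}_p$ (one for each of the $n$ points, each holding at most $k$ entries) together with the single $\mathrm{MainHeap}$, as described in the proof of \cref{:cl:running-time-of-swap-step}, costs $\Tilde{O}(nk)$ in total; allocating and zero-initializing the bookkeeping variables $progress(\cdot)$, $error(\cdot)$, $numSwaps(\cdot)$, $\widehat{size}(\cdot)$ is absorbed into this. Note that the first $k$ recompute steps — which populate the estimates $\widehat{\avg}(\cdot,C)$ for the $k$ clusters of $\cC_{input}$ — are counted among the recompute steps of piece (iii), not here. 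Next, for the loop body I would combine \cref{:cl:running-time-of-swap-step} with \cref{:cor:number-of-swap-and-merge-and-split-steps}: there are at most $\Tilde{O}(nk)$ swap steps, each costing $\Tilde{O}(1)$, for a total of $\Tilde{O}(nk)$; and combine \cref{:cl:running-time-of-recompute-step} with \cref{:cor:number-of-recompute-steps}: there are at most $\Tilde{O}(k)$ recompute steps, each costing $\Tilde{O}(n)$, again totalling $\Tilde{O}(nk)$. Summing pieces (i)--(iii) yields the claimed $\Tilde{O}(nk)$ running time, which in particular also re-certifies that the algorithm terminates.

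I do not expect a genuine obstacle here: this corollary is pure bookkeeping once the four ingredient results are in hand. The only point deserving a sentence of care is that every quantity being multiplied is itself a ``$\Tilde{O}$'' bound hiding $\polylog(n)$ factors (arising from heap operations, from invoking $\CalcAverage$ and $\fastSplit$ with accuracy parameter $\epsilon = 1/\polylog(n)$, and from boosting the subroutines' success probabilities), so one should observe that a product of a $\polylog(n)$-factor-hiding step count with a $\polylog(n)$-factor-hiding per-step cost is still $\Tilde{O}(nk)$, which is immediate since $\polylog(n)\cdot\polylog(n)=\polylog(n)$. The genuinely hard work — bounding the number of recompute steps via the $progress$ potential argument of \cref{:sec:num-compute-steps} — has already been carried out upstream.
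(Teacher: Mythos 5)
Your proposal is correct and takes essentially the same approach as the paper's proof: combine \cref{:cl:running-time-of-swap-step} with \cref{:cor:number-of-swap-and-merge-and-split-steps}, combine \cref{:cl:running-time-of-recompute-step} with \cref{:cor:number-of-recompute-steps}, and add the cost of the one-time work outside the loop (which the paper bounds by $\Tilde{O}(n)$ and you bound, slightly more generously but harmlessly, by $\Tilde{O}(nk)$).
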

\begin{proof}
\cref{:cl:running-time-of-swap-step} and \cref{:cor:number-of-swap-and-merge-and-split-steps} imply that the total time spent on swap steps is $\Tilde{O}(nk)$, while \cref{:cl:running-time-of-recompute-step} and \cref{:cor:number-of-recompute-steps} imply that the total time spent on recompute steps is $\Tilde{O}(nk)$.
Therefore, the total time that the algorithm spends in the while loop at line \ref{:line:fast-epoch:main-loop} is $\Tilde{O}(nk)$.
Furthermore, since $\CalcPotential$ runs in time $\Tilde{O}(n)$ (see \cref{:cor:fast-potential-of-clustering}, the total amount of time that the algorithm spends outside of this loop is $\Tilde{O}(n)$.
This concludes the proof of \cref{:cor:final-proof-of-running-time-of-fast-epoch}.
\end{proof}

\subsection{Approximating Average Distances Fast}\label{:sec:fast-average-distances}
The goal of this section is to prove \cref{:lem:fast-average-distances}.
To do this, we will need the following claim.
\begin{claim}\label{:cl:find-central-point}
There exists an algorithm $\CalcPotential$ that, given a metric space $(X,d)$ a non-empty cluster $C \subseteq X$, and $\delta > 0$, computes a point $p \in C$ such that $\avg(p,C) \leq \frac{2}{|C|}\sum_{p' \in C} \avg(p',C)$ holds with probability $(1-\delta)$, in time $O(|C|\log(\frac{1}{\delta}))$.
\end{claim}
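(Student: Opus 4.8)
The plan is to combine Markov's inequality with a standard amplification-by-repetition trick. First observe that if $p$ is drawn uniformly at random from $C$, then $\mathbb{E}[\avg(p,C)] = \frac{1}{|C|}\sum_{p' \in C}\avg(p',C)$, so by Markov's inequality
\[
 \Pr\left[\avg(p,C) > \frac{2}{|C|}\sum_{p' \in C}\avg(p',C)\right] \leq \frac{1}{2}.
\]
Call a point $p \in C$ \emph{good} if $\avg(p,C) \leq \frac{2}{|C|}\sum_{p' \in C}\avg(p',C)$; thus a uniformly random point of $C$ is good with probability at least $1/2$.

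The algorithm $\CalcCentral$ then proceeds as follows: sample $t = \lceil \log_2(1/\delta)\rceil$ points $p_1,\dots,p_t \in C$ independently and uniformly at random, compute the exact value $\avg(p_i,C) = \frac{1}{|C|}\sum_{p' \in C}d(p_i,p')$ for each $i \in [t]$ (each such computation takes $O(|C|)$ time given query access to $d$), and return the point $p \in \{p_1,\dots,p_t\}$ that minimizes $\avg(p_i,C)$. The total running time is therefore $O(t\,|C|) = O(|C|\log(1/\delta))$, matching the claimed bound. For correctness, since the $t$ samples are independent and each is good with probability at least $1/2$, the probability that none of them is good is at most $2^{-t} \leq \delta$; and whenever some sampled $p_{i^\ast}$ is good, the returned point $p$ satisfies $\avg(p,C) \leq \avg(p_{i^\ast},C) \leq \frac{2}{|C|}\sum_{p' \in C}\avg(p',C)$, so $p$ is good. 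Hence the output is good with probability at least $1-\delta$.

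There is no real obstacle in this argument; the only subtlety worth highlighting is that we cannot afford to evaluate the threshold $\frac{1}{|C|}\sum_{p'\in C}\avg(p',C)$ exactly, since that would cost $\Theta(|C|^2)$ time, so we cannot simply test each sampled point for goodness. We sidestep this by returning the sampled point of minimum average distance and observing that it is good as soon as any one of the $t$ samples is, which is all the amplification argument requires.
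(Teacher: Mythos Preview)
Your proof is correct and essentially identical to the paper's own argument: sample $t=\lceil\log_2(1/\delta)\rceil$ points uniformly from $C$, compute each $\avg(p_i,C)$ exactly in $O(|C|)$ time, return the minimizer, and use Markov's inequality plus independence to bound the failure probability by $2^{-t}\le\delta$. Your added remark about not being able to evaluate the threshold directly is a nice clarification the paper leaves implicit.
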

\begin{proof}
The algorithm samples $t=\ceil{\log(\frac{1}{\delta})}{}$ points $p_1,\ldots,p_t$ from $C$ independently uniformly at random, explicitly computes $\avg(p_i,C)$ for each such point, and outputs the point with the lowest $\avg(p_i,C)$.
Since computing each $\avg(p_i,C)$ takes only $O(|C|)$ time, the algorithm runs in time $O(|C|\log(\frac{1}{\delta}))$.
So, we just need to analyze the failure probability of the algorithm.

\paragraph{Correctness Analysis.}
We need to show that the probability that all $p_i$ simultaneously satisfy $\avg(p_i,C) > \frac{2}{|C|}\sum_{p' \in C} \avg(p',C)$ is at most $\delta$.
Since these points are sampled independently, it is enough to show that each point $p_i$ satisfies $\avg(p_i,C) > \frac{2}{|C|}\sum_{p' \in C} \avg(p',C)$ with probability at most $1/2$.
Indeed, since $\expect{\avg(p_i,C)} =\frac{1}{|C|}\sum_{p' \in C} \avg(p',C)$, where $p_i$ is chosen uniformly from $C$, this probability bound is implied by Markov's inequality.

This concludes the proof of \cref{:cl:find-central-point}.
\end{proof}

\subsubsection{Algorithm and Analysis}
To prove \cref{:lem:fast-average-distances}, we need to present an algorithm (Algorithm \ref{:alg:fast-average-distances}) that, given a metric space $(X,d)$, a non-empty cluster $C \subseteq X$, a set $S \subseteq X$, and $\epsilon > 0$, runs in time $\Tilde{O}(\frac{|C|+|S|}{\epsilon^2})$ and, for each $p \in S$, computes w.h.p a $(1+\epsilon)$-estimate of $\avg(p,C)$.

\begin{algorithm}
\caption{$\CalcAverage$ from \cref{:lem:fast-average-distances}.}\label{:alg:fast-average-distances}
\KwData{$(X,d)$, $S \subseteq X$, non-empty $C \subseteq X$, $0 < \epsilon \leq 1$}
\KwResult{For each $p \in S$, as estimate $\widehat{\avg}(p,C)$ of $\avg(p,C)$}
$p^* \leftarrow \CalcCentral(C)$, $\epsilon' \leftarrow \epsilon/3$, and $t \leftarrow O(1/(\epsilon')^2)$\label{:line:fast-averages:definition-of-t}\\
\If{$d(p^*,p)=0$ for all $p \in C$}{
    \For{$p' \in S$}{
        $\widehat{\avg}(p',C) \leftarrow d(p',p^*)$
    }
    \Return $\{\widehat{\avg}(p',C)\}_{p' \in S}$
}
\For{$i=1$ to $t$}{
    $p_i^{\mathrm{weighted}} \leftarrow$ sample a point $p$ from $C$ with probability proportional to $d(p,p^*)$\\
    $p_i^{\mathrm{uniform}} \leftarrow$ sample a point $p$ uniformly from $C$
}
{\bf compute} $\avg(p^*,C)$\\
\For{$p' \in S$\label{:line:fast-averages:loop-over-points-in-S-to-set-final-estimates}}{
    \For{$i=1$ to $t$}{
        $p_i \leftarrow$ sample $p_i^{\mathrm{weighted}}$ w.p. $\frac{\avg(p^*,C)}{\avg(p^*,C) + d(p',p^*)}$ and $p_i^{\mathrm{uniform}}$ w.p. $\frac{d(p',p^*)}{\avg(p^*,C) + d(p',p^*)}$\label{:line:fast-averages:final-sample}
    }
    $\widehat{\avg}(p',C) \leftarrow \frac{\avg(p^*,C) + d(p',p^*)}{t(1-\epsilon')}\sum_{i=1}^{t} d(p_i,p')/(d(p_i,p^*)+d(p^*,p'))$
}
\Return $\{\widehat{\avg}(p',C)\}_{p' \in S}$
\end{algorithm}

We will begin with the following claim about Algorithm \ref{:alg:fast-average-distances}.

\begin{claim}\label{:cl:distribution-of-samples}
For every $p' \in S$, every $i \in \{1,\ldots,t\}$, and every $p \in C$, the probability that line \ref{:line:fast-averages:final-sample} of the algorithm will set $p_i \leftarrow p$ is exactly $\frac{d(p,p^*)+d(p',p^*)}{\sum_{p'' \in C} (d(p'',p^*)+d(p',p^*))}$.
\end{claim}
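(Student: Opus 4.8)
The plan is to unfold the definition of the random variable $p_i$ produced by line \ref{:line:fast-averages:final-sample} and compute the desired probability by conditioning on which of the two presampled candidates, $p_i^{\mathrm{weighted}}$ or $p_i^{\mathrm{uniform}}$, the mixing coin of line \ref{:line:fast-averages:final-sample} selects. First I would observe that, since the execution has reached line \ref{:line:fast-averages:final-sample}, the algorithm did not return inside the preceding degenerate-case \textbf{if} block, so some point of $C$ is at nonzero distance from $p^*$; hence $\sum_{p'' \in C} d(p'',p^*) > 0$ and $\avg(p^*,C) = \frac{1}{|C|}\sum_{p'' \in C} d(p'',p^*) > 0$. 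This guarantees that the distribution used to sample $p_i^{\mathrm{weighted}}$ and the coin probabilities $\frac{\avg(p^*,C)}{\avg(p^*,C)+d(p',p^*)}$ and $\frac{d(p',p^*)}{\avg(p^*,C)+d(p',p^*)}$ are all well-defined.

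Next I would record the two elementary marginals for a fixed $p \in C$: $\Pr[p_i^{\mathrm{weighted}} = p] = d(p,p^*)/\sum_{p'' \in C} d(p'',p^*) = d(p,p^*)/(|C|\,\avg(p^*,C))$ and $\Pr[p_i^{\mathrm{uniform}} = p] = 1/|C|$, neither of which depends on $p' \in S$. Since the mixing coin in line \ref{:line:fast-averages:final-sample} is independent of the presampled points and picks $p_i^{\mathrm{weighted}}$ with probability $\frac{\avg(p^*,C)}{\avg(p^*,C)+d(p',p^*)}$ (and $p_i^{\mathrm{uniform}}$ with the complementary probability), the law of total probability yields
\[
 \Pr[p_i = p] = \frac{\avg(p^*,C)}{\avg(p^*,C)+d(p',p^*)}\cdot\frac{d(p,p^*)}{|C|\,\avg(p^*,C)} + \frac{d(p',p^*)}{\avg(p^*,C)+d(p',p^*)}\cdot\frac{1}{|C|} = \frac{d(p,p^*)+d(p',p^*)}{|C|\,\bigl(\avg(p^*,C)+d(p',p^*)\bigr)}.
\]
To conclude, I would rewrite the denominator using $|C|\,\avg(p^*,C) = \sum_{p'' \in C} d(p'',p^*)$, so that $|C|\,(\avg(p^*,C)+d(p',p^*)) = \sum_{p'' \in C}\bigl(d(p'',p^*)+d(p',p^*)\bigr)$, which is exactly the expression in the claim.

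There is essentially no obstacle: the statement is a bookkeeping identity and the proof is a one-line conditional-probability computation. The only points that merit care are the well-definedness remark above (which relies on having passed the degenerate-case \textbf{if}) and the use of independence of the mixing coin from the presampled points $p_i^{\mathrm{weighted}}, p_i^{\mathrm{uniform}}$, so that the total-probability decomposition is valid for the particular $p' \in S$ being processed.
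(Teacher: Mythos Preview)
Your proposal is correct and follows essentially the same approach as the paper: both condition on the mixing coin of line \ref{:line:fast-averages:final-sample}, plug in the marginals of $p_i^{\mathrm{weighted}}$ and $p_i^{\mathrm{uniform}}$, and simplify the denominator using $|C|\,\avg(p^*,C)=\sum_{p''\in C}d(p'',p^*)$. You add a useful well-definedness remark (that $\avg(p^*,C)>0$ because the degenerate-case branch was not taken) which the paper leaves implicit, but otherwise the arguments are identical.
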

\begin{proof}
In the iteration of the loop in line \ref{:line:fast-averages:loop-over-points-in-S-to-set-final-estimates} that selects $p'$, the probability that $p_i$ will be set to $p$ is exactly
\begin{align*}
    &\frac{\avg(p^*,C)}{\avg(p^*,C) + d(p',p^*)} \cdot \Pr[p_i^{\mathrm{weighted}} = p] + \frac{d(p',p^*)}{\avg(p^*,C) + d(p',p^*)} \cdot \Pr[p_i^{\mathrm{uniform}} = p]\\
    = &\frac{\avg(p^*,C)}{\avg(p^*,C) + d(p',p^*)} \cdot \frac{d(p,p^*)}{\sum_{p'' \in C}d(p'',p^*)} + \frac{d(p',p^*)}{\avg(p^*,C) + d(p',p^*)} \cdot \frac{1}{|C|}\\
    = &\frac{\avg(p^*,C)}{\avg(p^*,C) + d(p',p^*)} \cdot \frac{d(p,p^*)}{|C|\avg(p^*,C)} + \frac{d(p',p^*)}{\avg(p^*,C) + d(p',p^*)} \cdot \frac{1}{|C|}\\
    = &\frac{d(p,p^*)}{|C|(\avg(p^*,C) + d(p',p^*))} + \frac{d(p',p^*)}{|C|(\avg(p^*,C) + d(p',p^*))}\\
    = &\frac{d(p,p^*) + d(p',p^*)}{\sum_{p'' \in C} (d(p'',p^*)+d(p',p^*))},
\end{align*}
as needed to prove the claim.
\end{proof}

We are now ready to prove \cref{:lem:fast-average-distances}
\begin{proof}[Proof of \cref{:lem:fast-average-distances}]
It is straightforward to verify that Algorithm \ref{:alg:fast-average-distances} runs in $O(\frac{|C|+|S|}{\epsilon^2})$ time.
So, let $p'$ be an arbitrary point in $S$, and our goal for the rest of the proof is to show that, w.h.p,
\begin{equation}\label{:eq:goal-for-proof-of-fast-average-distances}
    \avg(p,C) \leq \widehat{\avg}(p,C) \leq (1+\epsilon)\avg(p,C).
\end{equation}

By \cref{:cl:find-central-point}, $\avg(p^*,C) \leq \frac{2}{|C|}\sum_{p'' \in C}\avg(p'',C)$. So, by \cref{:lem:average-distance-lemma},
\begin{equation}\label{:eq:proof-of-fast-average-distances:lower-bound-on-average-distance}
    \avg(p^*,C) \leq 4 \cdot \avg(p',C).
\end{equation}
For each $p \in C$, let $x_p = d(p,p')$ and let $\hat{x}_p = d(p,p^*)+d(p',p^*)$.
By the triangle inequality,
\begin{equation}\label{:eq:proof-of-fast-average-distances:bound-on-each-x_p}
    \forall p \in C, \qquad \hat{x}_p \geq x_p
\end{equation}
Furthermore, by the triangle inequality,
\begin{align*}
    \sum_{p \in C} \hat{x}_p
    = \sum_{p \in C} (d(p,p^*)+d(p',p^*))
    \leq \sum_{p \in C} (2d(p,p^*)+d(p',p))
    = 2|C|\avg(p^*,C) + |C|\avg(p',C)
\end{align*}
So, by \cref{:eq:proof-of-fast-average-distances:lower-bound-on-average-distance},
\begin{equation}\label{:eq:proof-of-fast-average-distances:bound-on-sum-of-hat-x_p}
\begin{aligned}
    \sum_{p \in C} \hat{x}_p
    \leq 2|C|\avg(p^*,C) + |C|\avg(p',C)
    \leq 9|C|\avg(p',C)
    = 9\sum_{p \in C}x_p
    = O\left(\sum_{p \in C}x_p\right).
\end{aligned}
\end{equation}
It's not difficult to see that, for our given $p'$, the distributions of the points $p_1,\ldots,p_t$ selected by line \ref{:line:fast-averages:final-sample} are independent.
Furthermore, by \cref{:cl:distribution-of-samples}, each $p_i$ is sampled with probability $\frac{\hat{x}_p}{\sum_{p''\in C} \hat{x}_{p''}}$ to be each $p$.
So, by \cref{:thm:importance-sampling}, since \cref{:eq:proof-of-fast-average-distances:bound-on-each-x_p} and \cref{:eq:proof-of-fast-average-distances:bound-on-sum-of-hat-x_p} hold, if we correctly set the constant in the $O$ notation at line \ref{:line:fast-averages:definition-of-t} of the algorithm, then, w.h.p,
\[
 (1-\epsilon')\sum_{p \in C} x_p \leq \frac{\left(\sum_{p \in C}\hat{x}_p\right)}{t} \cdot \sum_{j=1}^{t}\frac{x_{p_j}}{\hat{x}_{p_j}} \leq (1+\epsilon') \sum_{p \in C} x_p.
\]
Since $\epsilon' = \epsilon/3$ and $\epsilon \leq 1$, we have $\frac{1+\epsilon'}{1-\epsilon'}\leq(1+\epsilon)$, so we get that, w.h.p,
\[
 \sum_{p \in C} x_p \leq \frac{\left(\sum_{p \in C}\hat{x}_p\right)}{t(1-\epsilon')} \cdot \sum_{j=1}^{t}\frac{x_{p_j}}{\hat{x}_{p_j}} \leq (1+\epsilon) \sum_{p \in C} x_p.
\]
So, by the definitions of the $x_p$s and $\hat{x}_p$s, w.h.p,
\[
 \sum_{p \in C} d(p,p') \leq \frac{\sum_{p \in C}(d(p,p^*) + d(p',p^*))}{t(1-\epsilon')} \cdot \sum_{j=1}^{t}  \frac{d(p_j,p')}{d(p_j,p^*) + d(p',p^*)} \leq (1+\epsilon) \sum_{p \in C} d(p,p').
\]
Dividing all sides of the last inequality by $|C|$, we get that, w.h.p,
\[
 \avg(p',C) \leq \frac{\avg(p^*,C) + d(p',p^*)}{t(1-\epsilon')} \cdot \sum_{j=1}^{t}\frac{d(p_j,p')}{d(p_j,p^*) + d(p',p^*)} \leq (1+\epsilon) \avg(p',C),
\]
which implies that
\[
 \avg(p',C) \leq \widehat{\avg}(p',C) \leq (1+\epsilon) \avg(p',C).
\]

To summarize, we saw that Algorithm \ref{:alg:fast-average-distances} runs in time $\Tilde{O}(\frac{|C|+|S|}{\epsilon^2})$ and we saw that for every $p' \in S$, w.h.p, the returned estimate $\widehat{\avg}(p',C)$ satisfies $\avg(p',C) \leq \widehat{\avg}(p',C) \leq (1+\epsilon) \avg(p',C)$.
\end{proof}

\section{Proof of \texorpdfstring{\cref{:thm:when-there-is-a-very-good-clustering}}{Theorem 3}}
In this section, we show that if the input has a ground truth clustering $\cC$ satisfying $\alpha$-IP stability where $\alpha < 1/1000$, then it is possible to efficiently find an $O(\alpha)$-IP stable clustering.
First, we define the following quantity $\beta(\cC)$ for a clustering $\cC$, and show how to relate this quantity to the IP stability of $\cC$. 

\begin{definition}\label{:def:definition-of-beta}
Let $(X,d)$ be a metric space and let $C \subset X$ be a non-empty cluster.
Then, we let $\beta(C)$ denote the minimum value of $\beta$ that satsifies $\max_{p,p' \in C} d(p,p') \leq \beta \min_{p \in C, p' \in X \setminus C} d(p,p')$.
Furthermore, if $C = X$ then we let $\beta(C) \defeq 0$.
Finally, for a clustering $\cC$ of $X$ or of any subset of $X$, we let $\beta(\cC) \defeq \max_{C \in \cC} \beta(C)$.
\end{definition}

\begin{observation}\label{:obs:beta-is-upper-bound-on-IP-stability}
For every metric space $(X,d)$, every clustering $\cC$ of $(X,d)$ must be $\beta(\cC)$-IP stable
\end{observation}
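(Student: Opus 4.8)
The plan is to verify the defining inequality of $\beta(\cC)$-IP stability directly at each point, by squeezing the two relevant averages between the cluster diameter and the minimum distance from the cluster to the rest of the space — precisely the two quantities that appear in \cref{:def:definition-of-beta}.

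Concretely, I would fix an arbitrary point $p \in X$, write $C = C(p)$ for its cluster, and take an arbitrary cluster $C' \in \cC$ with $C' \neq C$. If $C = \{p\}$ the IP-stability condition is vacuously satisfied, and if $\cC = \{X\}$ there is no such $C'$ to check at all (this also disposes of the convention $\beta(X) = 0$), so we may assume $|C| \geq 2$. For the numerator, every $q \in C \setminus \{p\}$ satisfies $d(p,q) \leq \diameter(C) = \max_{p_1,p_2 \in C} d(p_1,p_2)$, hence $\avg(p, C \setminus \{p\}) \leq \diameter(C)$. For the denominator, since $C' \subseteq X \setminus C$, every $q \in C'$ satisfies $d(p,q) \geq \min_{p_1 \in C,\, p_2 \in X \setminus C} d(p_1,p_2) =: \delta$, and therefore $\avg(p,C') \geq \delta$; here $\delta > 0$ because any $p_1 \in C$ and $p_2 \in X \setminus C$ are distinct points of the metric space.

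Finally, by \cref{:def:definition-of-beta}, $\beta(C)$ is the least $\beta$ for which $\diameter(C) \leq \beta \cdot \delta$, which (using $\delta>0$) means $\diameter(C) = \beta(C)\cdot\delta$, and of course $\beta(C) \leq \beta(\cC)$. Chaining the three bounds would give
\[
 \avg(p, C \setminus \{p\}) \;\leq\; \diameter(C) \;=\; \beta(C)\cdot\delta \;\leq\; \beta(\cC)\cdot\delta \;\leq\; \beta(\cC)\cdot\avg(p,C'),
\]
which is exactly the $\beta(\cC)$-IP-stability condition at $p$ and $C'$. Since $p$ and $C'$ were arbitrary, $\cC$ is $\beta(\cC)$-IP stable.

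I do not expect a genuine obstacle here: the argument is a three-term sandwich and uses nothing beyond the triangle-free observation that every intra-cluster distance is at most the diameter and every inter-cluster distance is at least $\delta$. The only places that warrant a sentence of care are the degenerate cases (a singleton cluster, $C = X$, or the situation $\delta = 0$ which cannot occur in a genuine metric space once $|C|\geq 2$) and the translation of the ``minimum $\beta$'' phrasing of \cref{:def:definition-of-beta} into the equality $\diameter(C) = \beta(C)\cdot\delta$ used in the display.
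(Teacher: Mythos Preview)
Your proposal is correct and follows essentially the same sandwich argument as the paper: bound $\avg(p,C\setminus\{p\})$ above by the cluster diameter, bound $\avg(p,C')$ below by the minimum distance from $C$ to its complement, and invoke the definition of $\beta(C)\le\beta(\cC)$. The paper's proof is the same chain of inequalities, just written inline; your version is slightly more careful about the degenerate cases and about justifying $\delta>0$, but the substance is identical.
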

\begin{proof}
Let $\cC$ be a clustering in a metric space $(X,d)$, let $p \in X$ be a point such that $C(p)\neq\{p\}$, and let $C' \in \cC$ be a cluster other than $C(p)$.
We need to show that $\avg(p,C(p)\setminus\{p\}) \leq \beta(\cC) \cdot \avg(p,C')$:
By the definition of $\avg$, for every set $S$, $\min_{p' \in S} d(p,p') \leq \avg(p,S)\leq \max_{p' \in S}d(p,p')$. Using this fact and the definition of $\beta(\cC)$,
\begin{align*}
    \avg(p,C(p)\setminus\{p\})
    \leq \max_{p' \in C(p) \setminus \{p\}} d(p,p')
    \leq \max_{p',p'' \in C(p)} d(p'',p')
    &\leq \beta(\cC) \cdot \min_{p'' \in C(p), p'\in C'} d(p'',p')\\
    &\leq \beta(\cC) \cdot \min_{p'\in C'} d(p,p')\\
    &\leq \beta(\cC) \cdot \avg(p,C').
\end{align*}
\end{proof}

\begin{lemma}[Lemma~1 in~\citep{ahmadi2022individual}; 
from~\cite{daniely2012clustering}]\label{lem:separation}
   Let $\cC$ be a clustering such that for every $C\neq C'\in \cC$ and $p\in C$, $\avg(p, C) \le \alpha \cdot \avg(p, C')$, where $\alpha < 1$. Then, 
    \begin{enumerate}
        \item For every $p\in C$, $p'\in C'$, 
        \begin{align}\label{eq:all-dist-equal}
            (1-\alpha) \cdot \avg(p, C') \le d(p, p') \le \frac{1 + \alpha^2}{1-\alpha} \cdot \avg(p, C'),
        \end{align}
        \item For every $p, q\in C$ and $C'\neq C$,
        \begin{align}\label{eq:iner-dist-smaller}
            d(p,q) \le \frac{2\alpha}{1-\alpha} \cdot \avg(p, C').
        \end{align}
    \end{enumerate}
\end{lemma}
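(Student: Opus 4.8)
The plan is to derive both parts from two elementary consequences of the triangle inequality for average distances, which I will use repeatedly: (a) for all points $x,y$ and every non-empty $S \subseteq X$, $\avg(x,S) \le d(x,y) + \avg(y,S)$ (i.e. $z \mapsto \avg(z,S)$ is $1$-Lipschitz); and (b) for all points $x,y$ and every non-empty $S$, $d(x,y) \le \avg(x,S) + \avg(y,S)$. Both are immediate from \cref{:obs:triangle-inequality-for-average-distance} (or by averaging the pointwise triangle inequality over $S$). Throughout, these are combined with the hypothesis $\avg(p,C) \le \alpha\,\avg(p,C')$, which holds for every $p \in C$ and every $C' \ne C$ in $\cC$.

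First I would prove Part~1; fix $p \in C$ and $p' \in C'$. For the lower bound I would chain $\avg(p,C') \le d(p,p') + \avg(p',C') \le d(p,p') + \alpha\avg(p',C) \le d(p,p') + \alpha(d(p,p') + \avg(p,C)) \le (1+\alpha)d(p,p') + \alpha^2\avg(p,C')$, using (a), then the hypothesis at $p'$, then (a) again, then the hypothesis at $p$; rearranging gives $(1-\alpha^2)\avg(p,C') \le (1+\alpha)d(p,p')$, hence $d(p,p') \ge (1-\alpha)\avg(p,C')$. For the upper bound I would start from (b): $d(p,p') \le \avg(p,C') + \avg(p',C') \le \avg(p,C') + \alpha\avg(p',C) \le \avg(p,C') + \alpha(d(p,p') + \avg(p,C)) \le (1+\alpha^2)\avg(p,C') + \alpha\, d(p,p')$, and rearranging yields $d(p,p') \le \frac{1+\alpha^2}{1-\alpha}\avg(p,C')$.

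For Part~2, fix $p,q \in C$ and $C' \ne C$. I would apply (b) with $S = C$, then the hypothesis to both $p$ and $q$, then (a) to replace $\avg(q,C')$ by $d(p,q) + \avg(p,C')$: $d(p,q) \le \avg(p,C) + \avg(q,C) \le \alpha(\avg(p,C') + \avg(q,C')) \le \alpha(2\avg(p,C') + d(p,q))$. Rearranging gives $d(p,q)(1-\alpha) \le 2\alpha\avg(p,C')$, which is the claim.

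The arguments are short, so the only thing that needs care is the self-referential bookkeeping: each of the three chains must be arranged so that the quantity being bounded ($\avg(p,C')$ in the two chains of Part~1, and $d(p,q)$ in Part~2) reappears on the right-hand side with coefficient $\alpha^2$ or $\alpha$ (both $<1$), permitting a clean rearrangement — that is essentially the whole content of the lemma. I would also note that the proof is insensitive to whether $\avg(p,C)$ in the hypothesis is read as including or excluding the zero term $d(p,p)$, since (a) and (b) hold for any non-empty set, so the identical argument covers the $\avg(p,C\setminus\{p\})$ variant if that is what is needed downstream.
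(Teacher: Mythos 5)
Your proof is correct: all three chains close properly because the self-referential term reappears with coefficient $\alpha$ or $\alpha^2 < 1$, and each intermediate step is a valid application of the Lipschitz bound (a), the two-sided bound (b), or the hypothesis at the appropriate point (noting that $p'\in C'$ lets you invoke the hypothesis in the form $\avg(p',C') \le \alpha\,\avg(p',C)$). The paper itself does not prove this lemma — it cites it verbatim from \citet{ahmadi2022individual} (originally from \citet{daniely2012clustering}) — so there is no in-paper proof to compare against; your argument is the standard triangle-inequality chaining one would expect, and your closing remark that the argument is insensitive to whether $\avg(p,C)$ includes the zero term $d(p,p)$ is a correct and worthwhile observation, since the paper's \cref{:def:IP-stable-clustering} uses $\avg(p,C\setminus\{p\})$ while the lemma's hypothesis uses $\avg(p,C)$, and the former implies the latter.
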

\begin{corollary}\label{cor:inter-to-intra-distance}
    Let $\cC$ be a clustering such that for every $C\neq C'\in \cC$ and $p\in C$, $\avg(p, C) \le \alpha \cdot \avg(p, C')$, where $\alpha < 1$. Then, for every $C\neq C' \in \cC$,
    \begin{align*}
        \max_{p,q\in C} d(p,q) \le \beta \cdot \min_{p\in C, p'\in C'} d(p,p'),
    \end{align*}
    where $\beta = \frac{2\alpha (1+\alpha^2)}{(1-\alpha)^5}$.
\end{corollary}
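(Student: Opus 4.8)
The plan is to reduce the statement to a single chain of inequalities obtained from \cref{lem:separation} and the triangle inequality. Fix two distinct clusters $C \neq C' \in \cC$; note that the hypothesis guarantees $|\cC| \geq 2$, so \cref{lem:separation} applies. Let $p^*, q^* \in C$ attain $\max_{p,q \in C} d(p,q)$, and let $\hat p \in C$ and $\hat p' \in C'$ attain $\min_{p \in C,\, p' \in C'} d(p,p')$. It then suffices to prove $d(p^*,q^*) \le \beta \cdot d(\hat p,\hat p')$ with $\beta = \frac{2\alpha(1+\alpha^2)}{(1-\alpha)^5}$.

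First I would apply \cref{eq:iner-dist-smaller} to the pair $p^*, q^* \in C$ and the cluster $C'$, giving $d(p^*,q^*) \le \frac{2\alpha}{1-\alpha}\,\avg(p^*,C')$; the remaining task is to bound $\avg(p^*,C')$ from above by a multiple of $d(\hat p,\hat p')$. For that, the right inequality of \cref{eq:all-dist-equal} applied to the specific pair $(p^*,\hat p')$ yields $(1-\alpha)\,\avg(p^*,C') \le d(p^*,\hat p')$, and the triangle inequality gives $d(p^*,\hat p') \le d(p^*,\hat p) + d(\hat p,\hat p')$. The distance $d(p^*,\hat p)$ lies within $C$, so \cref{eq:iner-dist-smaller} applied to the pair $\hat p, p^*$ bounds it by $\frac{2\alpha}{1-\alpha}\,\avg(\hat p,C')$, and finally the left inequality of \cref{eq:all-dist-equal} for the pair $(\hat p,\hat p')$ gives $\avg(\hat p,C') \le \frac{1}{1-\alpha}\,d(\hat p,\hat p')$.

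Chaining these, $d(p^*,\hat p') \le \bigl(\tfrac{2\alpha}{(1-\alpha)^2}+1\bigr) d(\hat p,\hat p') = \tfrac{1+\alpha^2}{(1-\alpha)^2} d(\hat p,\hat p')$, hence $\avg(p^*,C') \le \tfrac{1+\alpha^2}{(1-\alpha)^3} d(\hat p,\hat p')$ and therefore $d(p^*,q^*) \le \tfrac{2\alpha(1+\alpha^2)}{(1-\alpha)^4} d(\hat p,\hat p')$; since $\alpha<1$ we have $(1-\alpha)^4 \ge (1-\alpha)^5$, so this is at most $\beta\,d(\hat p,\hat p')$, which is the claim. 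There is no genuine obstacle beyond bookkeeping: the one thing to watch is applying each part of \cref{lem:separation} to the correct pair of points — in particular using \cref{eq:iner-dist-smaller} once with the average $\avg(p^*,C')$ (for the pair $p^*,q^*$) and once with $\avg(\hat p,C')$ (for the pair $\hat p,p^*$) — and noting that the gap between the constant actually derived, $\tfrac{2\alpha(1+\alpha^2)}{(1-\alpha)^4}$, and the stated $\beta$ is harmless since it only weakens the bound.
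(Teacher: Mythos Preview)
Your proof is correct and follows essentially the same approach as the paper: chain the inequalities of \cref{lem:separation} from the farthest intra-cluster pair down to the closest inter-cluster pair. The only cosmetic difference is that where you use the triangle inequality together with a second application of \cref{eq:iner-dist-smaller}, the paper instead passes through $\avg(x',C)$ via the upper bound in \cref{eq:all-dist-equal}; both routes give the constant $\tfrac{2\alpha(1+\alpha^2)}{(1-\alpha)^4}$, which is then relaxed to the stated $\beta$.
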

\begin{proof}
    Let $p,q$ denote the farthest pair of points in $C$. Moreover, let $x\in C, x'\in C'$ denote the closest pair of points in $C$ and $C'$.  
    \begin{align*}
        d(p,q) 
        &\le \frac{2\alpha}{1-\alpha} \avg(p, C') && \rhd\text{By~\eqref{eq:iner-dist-smaller}} \\
        &\le \frac{2\alpha}{1-\alpha} \cdot \frac{1}{1-\alpha} \cdot d(p,x') \\
        &\le \frac{2\alpha}{(1-\alpha)^2} \cdot \frac{1 + \alpha^2}{(1-\alpha)^2} \cdot \avg(x', C) && \rhd\text{By~\eqref{eq:all-dist-equal}} \\
        &\le \frac{2\alpha}{(1-\alpha)^2} \cdot \frac{1+\alpha^2}{(1-\alpha)^3} \cdot d(x,x') && \rhd\text{By~\eqref{eq:all-dist-equal}}
    \end{align*}
\end{proof}

Finally, we use the following theorem in order to find the clustering of minimal $\beta(C)$. The proof of \cref{:thm:dynamic-programming-to-find-clustering-with-minimum-beta} is deferred to \cref{:sec:dynamic-programming-to-find-clustering-with-minimum-beta}.

\begin{theorem}\label{:thm:dynamic-programming-to-find-clustering-with-minimum-beta}
There exists a deterministic algorithm that, given a metric space $(X,d)$ and a desired number of clusters $k$ that admit a $k$-clustering $\cC^*$ of $(X,d)$ with $\beta(\cC^*) < 1$, computes a $k$-clustering $\cC$ of $(X,d)$ with $\beta(\cC) \leq \beta(\cC^*)$.
\end{theorem}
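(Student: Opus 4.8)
The plan is to reduce the search for a good $k$-clustering to a dynamic program over a laminar family of only $O(n)$ \emph{candidate clusters} obtained from single-linkage thresholding. For $t \ge 0$ let $G_t$ be the graph on $X$ with an edge between $p \ne q$ exactly when $d(p,q) \le t$, and let $\mathcal{L}$ be the family of all connected components of $G_t$ over all $t \ge 0$. Since the edge set of $G_t$ only grows with $t$, components only merge, so $\mathcal{L}$ is laminar; it contains $X$ (take $t = \diameter(X)$), all singletons (take $t$ below the smallest positive distance), and satisfies $|\mathcal{L}| \le 2n-1$ (at most $n-1$ merge events, each creating one new set, plus the $n$ singletons). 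Ordered by inclusion, $\mathcal{L}$ is therefore a rooted tree $T$ with root $X$ and the $n$ singletons as leaves. The first step is to compute $T$ together with $\beta(C)$ (Definition~\ref{:def:definition-of-beta}) for every $C \in \mathcal{L}$; this is done deterministically in polynomial time by sorting the pairwise distances, sweeping with a union--find structure while recording the merge tree, and then, for each node $C$, directly evaluating its diameter and its separation $\min_{p \in C,\, q \in X \setminus C} d(p,q)$.

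The crux — and the only place the hypothesis $\beta(\cC^*)<1$ is used — is the structural claim: \emph{if $\cC$ is any $k$-clustering with $\beta(\cC) < 1$, then every cluster $C \in \cC$ belongs to $\mathcal{L}$}. To prove it, fix $C \in \cC$; the case $C = X$ is immediate, so assume $C \ne X$ and set $D = \max_{p,q \in C} d(p,q)$ and $s = \min_{p \in C,\, q \in X \setminus C} d(p,q)$, so $\beta(C) = D/s < 1$, i.e.\ $D < s$. Choose any $t$ with $D \le t < s$. Then every pair of points of $C$ is within distance $t$, so $C$ spans a clique in $G_t$, while every $G_t$-edge leaving $C$ would have length $\ge s > t$, so none exists. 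Hence $C$ is precisely a connected component of $G_t$, i.e.\ $C \in \mathcal{L}$. Consequently the clusters of $\cC^*$ form a partition of $X$ into members of $\mathcal{L}$, which by laminarity is exactly a way of ``cutting'' $T$ into $k$ subtrees (a set of $k$ nodes of $T$, one of which is an ancestor, inclusive, of every leaf).

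Given this, the remaining step is a routine deterministic tree dynamic program computing, in polynomial time, a partition of $X$ into exactly $k$ members of $\mathcal{L}$ that minimizes the maximum of $\beta(\cdot)$ over the parts; call its value $\mathrm{OPT}_{\mathcal{L}}$. For a node $v$ of $T$ corresponding to $S_v \in \mathcal{L}$ with children $u_1,\dots,u_m$ (internal nodes have $m \ge 2$), let $f(v,j)$ be the minimum over all cuts of the subtree rooted at $v$ into $j$ subtrees of the maximum $\beta$ over the resulting parts, with base case $f(v,1) = \beta(S_v)$ (for a leaf, $\beta(S_v)=0$) and, for $j \ge 2$,
\[
 f(v,j) \;=\; \min_{\substack{j_1+\dots+j_m=j\\ j_i \ge 1}}\;\max_{1 \le i \le m} f(u_i,j_i),
\]
evaluated by folding in the children one at a time via $m-1$ ``min-of-max'' convolutions. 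Then $\mathrm{OPT}_{\mathcal{L}} = f(X,k)$, and standard backtracking recovers a clustering $\cC$ attaining it. By the structural claim $\cC^*$ is among the partitions the DP ranges over, so $\beta(\cC) = \mathrm{OPT}_{\mathcal{L}} \le \beta(\cC^*)$; and since $\beta$ is computed exactly for each candidate, the reported $\cC$ genuinely satisfies $\beta(\cC) \le \beta(\cC^*)$, as required.

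I expect the structural claim to be the only non-routine ingredient: once it is in hand, the a priori exponential space of $k$-clusterings collapses onto the $O(n)$-leaf tree $T$ and the rest is bookkeeping. (With more care — binarizing $T$ and maintaining diameters and separations incrementally — the running time can be pushed to $O(n^2)$, as the remark following the theorem statement notes, but polynomial time is all that is claimed here.)
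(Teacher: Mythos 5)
Your proof is correct, and it takes a genuinely different route from the paper's. The paper builds a binary tree $\tau$ by recursively removing the longest remaining MST edge (Algorithm~\ref{:alg:create-tree}) and then establishes, via a somewhat involved contradiction argument about MST structure (Claims~\ref{clm:tree-property} and~\ref{:cl:best-clustering-is-induced-by-tree}), that every $\beta<1$ clustering is induced by this tree. You instead work directly with the single-linkage dendrogram $\mathcal{L}$ --- the connected components of the threshold graphs $G_t$ --- and your structural argument is noticeably shorter and cleaner: for a cluster $C \neq X$ with diameter $D$ and separation $s$ satisfying $D<s$, any threshold $t \in [D,s)$ simultaneously makes $C$ a clique in $G_t$ and cuts off all edges from $C$ to its complement, so $C$ is exactly a component of $G_t$ and hence a node of $\mathcal{L}$. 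In fact your family $\mathcal{L}$ is in general a \emph{strict subset} of the paper's $\{f(u): u \in \tau\}$: under ties in edge lengths the paper's binary tree inserts intermediate nodes that are not in $\mathcal{L}$, yet your argument shows that every cluster with $\beta<1$ already lies in the smaller family $\mathcal{L}$. Your tree $T$ is consequently not binary in general, but you handle this correctly with a knapsack-style min-of-max folding over the children of each node; the paper avoids the multi-way recurrence by binarizing at construction time. Both routes give a deterministic polynomial-time algorithm, and both structural lemmas are tight enough for the theorem; what your approach buys is a more elementary and self-contained proof of the key structural claim that bypasses the MST minimality argument entirely.
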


Now, we are ready to prove \cref{:thm:when-there-is-a-very-good-clustering}.

\begin{proof}[Proof of \cref{:thm:when-there-is-a-very-good-clustering}]
Given a metric space $(X,d)$ and a desired number of clusters $k$, such that there exists a $k$-clustering $\cC^*$ of $(X,d)$ that is $\alpha^*$-IP stable for $\alpha^* < 0.001$.
Then, by \cref{cor:inter-to-intra-distance}, $\beta(\cC^*) \leq \frac{2\alpha^*(1+(\alpha^*)^2)}{(1-\alpha^*)^5} \leq 3\alpha^*<1$.
So, the algorithm from \cref{:thm:dynamic-programming-to-find-clustering-with-minimum-beta} on $(X,d)$ and $k$ outputs a $k$-clustering $\cC$ with $\beta(\cC) \leq \beta(\cC^*) \leq 3\alpha^*$.
Thus, by \cref{:obs:beta-is-upper-bound-on-IP-stability}, the clustering $\cC$ is $\alpha$-IP stable for $\alpha=\beta(\cC) \leq 3\alpha^*$.
%
\end{proof}

\subsection{Proof of \texorpdfstring{\cref{:thm:dynamic-programming-to-find-clustering-with-minimum-beta}}{Theorem 9}}\label{:sec:dynamic-programming-to-find-clustering-with-minimum-beta}
\begin{algorithm}[H]
\caption{$\createTree$}\label{:alg:create-tree}
\KwData{$(X,d)$, and an MST $T$ of $(X,d)$.}
\KwResult{A binary tree $\tau$ with $n$ leaves (corresponding to $X$), where each node $u$ of $\tau$ represents a subset $f(u)$ of $X$ s.t. $f(u) = f(\text{left-child}(u)) \cup f(\text{right-child}(u))$.}
$root \leftarrow$ new node\\
$f(root) \leftarrow X$\\
\If{$|X| = 1$}{
    left-child$(root) \leftarrow$ null\\
    right-child$(root) \leftarrow$ null\\
    \Return $root$\\
}
$e_{\mathrm{max}} \leftarrow$ the edge of $T$ with maximum length\\
$C_1,C_2 \leftarrow$ the two connected components of $T \setminus \{e_{\mathrm{max}}\}$\\
$(X_1,d),(X_2,d)$ the restrictions of $(X,d)$ to $C_1$ and $C_2$\\
$T_1,T_2 \leftarrow$ the restrictions of $T$ to $C_1$ and $C_2$\\
left-child$(root) \leftarrow \createTree((X_1, d), T_1)$\\
right-child$(root) \leftarrow \createTree((X_2, d), T_2)$ \\
\Return $root$
\end{algorithm}

\begin{definition}\label{:def:clustering-induced-by-tree}
Let $(X,d)$ be a metric space and let $\tau$ be the output of Algorithm \ref{:alg:create-tree} on this metric space.
Then, we say that a clustering $\cC$ of $X$ is \emph{induced} by the tree $\tau$ if, for every cluster $C \in \cC$, there exists a node $u$ of $\tau$ with $f(u)=C$.
\end{definition}

\begin{claim}\label{:cl:best-clustering-is-induced-by-tree}
For every metric space $(X,d)$ and every clustering $\cC$ of $(X,d)$ with $\beta(\cC) < 1$, $\cC$ must be induced by the tree $\tau$ resulting from Algorithm \ref{:alg:create-tree} on $(X,d)$ and a minimum spanning tree of it.
\end{claim}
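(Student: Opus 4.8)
The plan is to prove the claim by induction on $|X|$. If $\cC = \{X\}$ — in particular whenever $|X| = 1$ — the claim is immediate, since the root node of $\tau$ has $f$-value $X$. So in the inductive step I would assume $|X| \ge 2$ and that $\cC$ has at least two clusters, each of which is then a proper nonempty subset of $X$.

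The whole argument rests on two structural facts about the MST $T$ given to the algorithm, both coming from the same mechanism. For any cluster $C \in \cC$: an edge of $T$ with both endpoints in $C$ has weight at most $\diameter(C)$, whereas an edge of $T$ with exactly one endpoint in $C$ (so $C \ne X$) has weight at least $\min_{p \in C,\, q \notin C} d(p,q)$, which strictly exceeds $\diameter(C)$ because $\beta(\cC) < 1$ and $\min_{p \in C,\, q \notin C} d(p,q) > 0$ (distinct points of a metric space are at positive distance). Using this together with the MST cycle property — every edge on the $T$-path between two vertices $p,q$ has weight at most $d(p,q)$ — I would first show \textbf{(i)} that $T[C]$ is connected for every $C \in \cC$: otherwise a $T$-path joining two components of $T[C]$ would have to exit $C$, and the first edge on which it leaves $C$ would be simultaneously heavier than $\diameter(C)$ (it crosses the cut $(C, X\setminus C)$) and no heavier than $\diameter(C)$ (it lies on a $T$-path between two points of $C$), a contradiction. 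Then I would show \textbf{(ii)} that the maximum-weight edge $e_{\max}$ selected by the algorithm has its two endpoints in two \emph{different} clusters: if they lay in a common cluster $C \subsetneq X$, then because $T$ is a spanning tree some $T$-edge crosses the cut $(C, X\setminus C)$ and has weight strictly larger than $\diameter(C) \ge w(e_{\max})$, contradicting the maximality of $e_{\max}$.

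Next I would use (i) and (ii) to argue that $e_{\max}$ does not cut through any cluster. Let $Y_1, Y_2$ be the two components of $T - e_{\max}$. If some $C \in \cC$ met both $Y_1$ and $Y_2$, then by (i) the connected graph $T[C]$ would contain an edge between $C \cap Y_1$ and $C \cap Y_2$; but the only $Y_1$--$Y_2$ edge of $T$ is $e_{\max}$, whose endpoints lie in different clusters by (ii) — contradiction. Hence $\cC$ restricts to a clustering $\cC_1$ of $Y_1$ and a clustering $\cC_2$ of $Y_2$, both nonempty, and each $\cC_j$ satisfies $\beta(\cC_j) < 1$ in $(Y_j, d)$ (for $C = Y_j$ this holds by definition; for $C \subsetneq Y_j$ the diameter is unchanged while the separation from the rest of $Y_j$ is at least the separation from $X \setminus C$, so the $\beta$-value can only decrease). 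Since removing the heaviest edge of an MST leaves an MST on each component (standard exchange argument), $T[Y_j]$ is an MST of $(Y_j,d)$, so the recursive call $\createTree((Y_j,d), T[Y_j])$ is exactly of the form the claim addresses and produces the child subtree $\tau_j$ of $\tau$; the inductive hypothesis then gives that every cluster of $\cC_j$ is $f(u)$ for some node $u$ of $\tau_j$, hence of $\tau$. Taking the union over $j \in \{1,2\}$ shows $\cC$ is induced by $\tau$.

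The main obstacle I expect is fact (ii) — ruling out that $e_{\max}$ lies inside a single cluster — since everything downstream (the clean split of $\cC$ across $Y_1$ and $Y_2$, and the closing of the induction) depends on it; the rest is a routine interplay between the MST cycle property and the $\beta(\cC)<1$ separation inequality.
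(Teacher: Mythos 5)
Your proof is correct and takes a genuinely different route from the paper's. The paper first establishes an auxiliary invariant (its \cref{clm:tree-property}) --- that every node $u$ of $\tau$ has $f(u)$ equal to either a union of clusters of $\cC$ or a subset of a single cluster --- via a minimal-counterexample argument that works at the parent $v_p$ of the highest violating node, and then derives \cref{:cl:best-clustering-is-induced-by-tree} from this invariant by a second contradiction argument involving the highest node whose $f$-value is strictly contained in a missed cluster and its sibling. Your proof instead runs a direct structural induction on $|X|$ that mirrors the algorithm's own recursion: your facts (i) and (ii) show that the first edge removed by $\createTree$ respects $\cC$, so $\cC$ splits cleanly as $\cC_1 \cup \cC_2$ across the two components $Y_1, Y_2$; you then check that each $(\cC_j, Y_j, T[Y_j])$ is again a valid instance of strictly smaller size (with $\beta(\cC_j) < 1$ relative to $Y_j$, and $T[Y_j]$ an MST of $(Y_j,d)$), so the inductive hypothesis closes the argument. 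Both proofs ultimately rest on the same two structural consequences of $\beta(\cC)<1$ --- that $T$ restricted to each cluster is connected, and that the heaviest $T$-edge must cross clusters --- but you package them into a recursion-following induction, whereas the paper hoists them into a stronger hereditary property of all nodes of $\tau$ and then extracts the claim. Your organization avoids the auxiliary invariant entirely and makes the base case and recursion explicit; the paper's earns the slightly stronger \cref{clm:tree-property} as a reusable byproduct.
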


\begin{lemma}\label{:lem:dynamic-programming-to-find-best-clustering-induced-by-tree}
There exists a deterministic polynomial time algorithm that, given a metric space $(X,d)$, a desired number of clusters $k$, and a tree $\tau$ returned by Algorithm \ref{:alg:create-tree} on $(X,d)$, computes a $k$-clustering $\cC$ with $\beta(\cC) = \min_{\cC' \in \cS_{\tau}}\beta(\cC')$, where $\cS_{\tau}$ denote the set of $k$-clusterings induced by $\tau$. 
\end{lemma}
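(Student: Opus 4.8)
The plan is to solve this by a bottom-up dynamic program on the tree $\tau$. First I would record that, by \cref{:def:clustering-induced-by-tree}, a $k$-clustering $\cC$ is induced by $\tau$ precisely when it arises from a set $F$ of exactly $k$ nodes of $\tau$ forming a \emph{frontier}: the subsets $\{f(u)\}_{u\in F}$ are pairwise disjoint, non-empty, and cover $X$ (equivalently, every leaf of $\tau$ lies below exactly one node of $F$). Since each split in \cref{:alg:create-tree} partitions a set of size $\ge 2$ into two non-empty parts, $f$ is injective on the nodes of $\tau$, so size-$k$ frontiers and $k$-clusterings induced by $\tau$ are literally the same objects, i.e.\ $\cS_\tau$ is the set of size-$k$ frontiers of $\tau$. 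The crucial structural point is that $\beta(C) = \diameter(C)/\min_{p\in C,\,p'\in X\setminus C} d(p,p')$ (with $\beta(X)\defeq 0$) depends only on the set $C$ and the fixed ambient space $X$, not on how $X\setminus C$ is partitioned. Hence I can precompute $\beta(f(u))$ once for every node $u$ of $\tau$, each in $O(n^2)$ time, and $O(n^3)$ time overall since $\tau$ has $2n-1$ nodes.

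Next I would set up the DP. For a node $u$ of $\tau$ and an integer $j$ with $1\le j\le s_u$, where $s_u$ is the number of leaves of $\tau$ below $u$, let $g(u,j)$ be the minimum of $\max_{v\in F}\beta(f(v))$ over all frontiers $F$ of the subtree rooted at $u$ with $|F|=j$ (and $+\infty$ if no such $F$ exists). For a leaf $u$ we have $g(u,1)=\beta(f(u))$ and $g(u,j)=+\infty$ for $j\ne 1$. For an internal node $u$ with children $u_L,u_R$, the only size-$1$ frontier of its subtree is $\{u\}$, while any frontier of size $j\ge 2$ decomposes uniquely into a frontier of the subtree at $u_L$ together with one of the subtree at $u_R$; since $\beta(\cC)=\max_{C\in\cC}\beta(C)$, this yields $g(u,1)=\beta(f(u))$ and, for $j\ge 2$,
\[
 g(u,j) = \min_{\substack{j_L+j_R=j\\ j_L,j_R\ge 1}} \max\{\,g(u_L,j_L),\; g(u_R,j_R)\,\}.
\]
Evaluating this from the leaves up, the value asked for by the lemma is $g(\mathrm{root},k)$, and the minimizing clustering is recovered by storing, for each pair $(u,j)$, the choice (either ``$\{u\}$'' or a split $(j_L,j_R)$) that attained the minimum, and tracing back from $(\mathrm{root},k)$.

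Finally I would check correctness and efficiency. Correctness of the recurrence is a routine induction on subtree size, using the unique decomposition of a size-$(\ge 2)$ frontier at an internal node and the fact that $\beta$ of a clustering is the maximum of $\beta$ over its clusters; the key enabling fact is that each $\beta(f(v))$ is unchanged by the rest of the clustering, so the objective is separable over the tree. That $g(\mathrm{root},k)$ is finite follows because $\tau$ has $n$ leaves and $k\le n$: the frontiers are nested from $\{\mathrm{root}\}$ up to the set of all leaves, and one can always pass from a frontier of size $j<n$ to one of size $j+1$ by replacing an internal node in it with its two children, so every size in $\{1,\dots,n\}$ occurs. For the running time, after the $O(n^3)$ precomputation the table fill costs $O\!\big(\sum_{u\ \text{internal}} s_{u_L}s_{u_R}\big)=O(\binom{n}{2})=O(n^2)$, so the whole algorithm is deterministic polynomial time. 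The main obstacle here is conceptual rather than computational: one must notice and justify that $\beta(C)$ is a function of $C$ and $X$ alone (so the max-over-clusters objective is separable and a tree DP applies at all), and pin down exactly the correspondence between $\cS_\tau$ and size-$k$ frontiers of $\tau$, including the injectivity of $f$ that makes ``number of clusters'' equal ``number of nodes in the frontier.''
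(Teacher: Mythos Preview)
Your proposal is correct and takes essentially the same approach as the paper: a bottom-up dynamic program over the nodes of $\tau$ indexed by the desired number of clusters, with the recurrence at an internal node ranging over splits $j_L+j_R=j$ between the two children and combining via $\max$ (since $\beta(\cC)=\max_{C\in\cC}\beta(C)$). Your write-up is in fact a bit more careful than the paper's—you explicitly isolate the separability observation that $\beta(f(u))$ depends only on $f(u)$ and $X$, and your precompute-then-fill analysis yields $O(n^3)$ rather than the paper's looser $O(n^3k^2)$ bound.
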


\begin{proof}[Proof of \cref{:thm:dynamic-programming-to-find-clustering-with-minimum-beta}] 
Let $(X,d)$ be a metric space that admit a $k$-clustering $\cC^*$ with $\beta(\cC^*) < 1$.
To find a $k$-clustering $\cC$ with $\beta(\cC) \leq \beta(\cC^*)$, first we compute a minimum spanning tree $T$ of $(X,d)$, then run Algorithm~\ref{:alg:create-tree} on $(X,d)$ and $T$ to produce a tree $\tau$, and finally run the algorithm from \cref{:lem:dynamic-programming-to-find-best-clustering-induced-by-tree} on the metric space $(X,d)$ and the tree $\tau$ with parameter $k$.

By \cref{:cl:best-clustering-is-induced-by-tree}, the clustering $\cC^*$ must be induced by the tree $\tau$.
Therefore, the algorithm from \cref{:lem:dynamic-programming-to-find-best-clustering-induced-by-tree} must return a clustering $\cC$ with $\beta(\cC) \leq \beta(\cC^*)$, as we need.
Since computing a minimum spanning tree of $(X,d)$ can be done in polynomial time, and since Algorithm \ref{:alg:create-tree} and the algorithm from \cref{:lem:dynamic-programming-to-find-best-clustering-induced-by-tree} both run in polynomial time, this suffices to prove \cref{:thm:dynamic-programming-to-find-clustering-with-minimum-beta}.
\end{proof}

\subsubsection{Proof of \texorpdfstring{\cref{:cl:best-clustering-is-induced-by-tree}}{Claim 17}}\label{:sec:best-clustering-is-induced-by-tree}

First we need to prove the following claim.
\begin{claim}\label{clm:tree-property}
    Let $\tau$ be the output of Algorithm \ref{:alg:create-tree} on $(X,d)$ and a minimum spanning tree of $(X,d)$ as inputs, where $(X,d)$ admits an $k$-clustering $\cC$ with $\beta(\cC)<1$.
    Then, every node in $\tau$ represents either a subset of clusters in $\cC$, or a subset of a cluster in $\cC$.
\end{claim}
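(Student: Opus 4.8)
The plan is to prove the claim by induction on the distance of a node from the root of $\tau$, showing that for each node $u$ the set $f(u)$ is either (a) a union of some clusters of $\cC$, or (b) contained in a single cluster of $\cC$. The root satisfies (a) since $f(\mathrm{root})=X=\bigcup_{C\in\cC}C$, and the two trivial reductions are immediate: if $f(v)$ satisfies (b) then every descendant $u$ of $v$ has $f(u)\subseteq f(v)$, so it satisfies (b) too; and if $f(v)$ is a union of exactly one cluster, both children are contained in that cluster. Hence the whole content of the induction is the step where $f(v)=\bigcup_{C\in\cC'}C$ for a subcollection $\cC'\subseteq\cC$ with $|\cC'|\ge 2$: I must show that the split of $f(v)$ into its two children — obtained by deleting the longest edge $e^{*}$ of the tree $T[f(v)]$ (the induced subgraph of the fixed MST $T$ on $f(v)$, which is exactly the tree passed to the corresponding recursive call of $\createTree$, and is connected since it arises from $T$ by repeatedly deleting one edge and keeping a component) — keeps every cluster of $\cC'$ entirely on one side; then each child is a nonempty union of clusters of $\cC'$, so it satisfies (a).

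To carry this out I would first record two facts about $T$. Write $\mathrm{diam}(C)=\max_{p,p'\in C}d(p,p')$ and $\mathrm{sep}(C)=\min_{p\in C,\,q\notin C}d(p,q)$ for $C\subsetneq X$; from Definition~\ref{:def:definition-of-beta} and $\beta(\cC)<1$ we get $\mathrm{diam}(C)=\beta(C)\,\mathrm{sep}(C)\le\beta(\cC)\,\mathrm{sep}(C)<\mathrm{sep}(C)$, the last inequality being strict since $\mathrm{sep}(C)>0$ ($(X,d)$ is a metric and $C\ne X$). Fact (i): for every cluster $C\subsetneq X$, the induced subgraph $T[C]$ is connected. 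Indeed, if it were not, there would be $p,q\in C$ whose unique $T$-path leaves $C$, hence contains an edge $(x,y)$ with $x\in C$, $y\notin C$, so $d(x,y)\ge\mathrm{sep}(C)>\mathrm{diam}(C)\ge d(p,q)$; but then $(p,q)$ is a non-tree edge, and the MST cycle property forces every edge on the $T$-path between $p$ and $q$ to have length at most $d(p,q)$, a contradiction. Fact (ii): for every cluster $C\subseteq f(v)$ we have $T[C]\subseteq T[f(v)]$, and since $T[f(v)]$ is a tree and $T[C]$ is connected, the unique $T[f(v)]$-path between two points of $C$ lies inside $T[C]$ and therefore never uses an inter-cluster edge of $T[f(v)]$. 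Consequently, deleting any inter-cluster edge of $T[f(v)]$ leaves every cluster of $\cC'$ on a single side of the resulting bipartition.

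With these in hand the inductive step is short. Since $|\cC'|\ge 2$ and $T[f(v)]$ is a connected spanning tree of $f(v)=\bigcup_{C\in\cC'}C$, it must contain at least one inter-cluster edge. So, by fact (ii), it suffices to show that the longest edge $e^{*}$ of $T[f(v)]$ is inter-cluster. Suppose instead $e^{*}=(r,s)$ with $r,s\in C^{*}$ for some $C^{*}\in\cC'$, so $d(r,s)\le\mathrm{diam}(C^{*})$. Pick any $w$ lying in a cluster of $\cC'$ other than $C^{*}$; the $T[f(v)]$-path from $r$ to $w$ starts in $C^{*}$ and ends outside $C^{*}$, hence contains an edge $(x,y)$ with $x\in C^{*}$, $y\notin C^{*}$, giving $d(x,y)\ge\mathrm{sep}(C^{*})>\mathrm{diam}(C^{*})\ge d(r,s)$, which contradicts $e^{*}$ being a longest edge of $T[f(v)]$. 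Therefore $e^{*}$ is inter-cluster, the split respects $\cC'$, each child of $v$ is a nonempty union of clusters of $\cC$, and the induction goes through, proving the claim.

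The main obstacle is fact (i) and, above all, the strictness $\mathrm{diam}(C)<\mathrm{sep}(C)$ that makes it and the length comparison in the last paragraph usable — this is precisely where the hypothesis $\beta(\cC)<1$ (rather than $\le 1$) is essential — together with a correct application of the MST cycle property so that a non-tree edge is the heaviest edge on its fundamental cycle. Everything else (the path-in-$T[C]$ argument of fact (ii) and the combinatorics of the inductive step) is routine; in particular, since the contradiction in the last paragraph exhibits an edge strictly longer than $e^{*}$, no tie-breaking subtlety in the choice of "the edge of $T$ with maximum length" can cause trouble.
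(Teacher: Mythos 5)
Your proof is correct and relies on the same two technical facts as the paper: that the induced subgraph $T[C]$ of the MST on each cluster $C$ is connected (following from the strict gap between intra-cluster and inter-cluster distances implied by $\beta(\cC)<1$), and that consequently the longest edge of $T[f(v)]$ must be inter-cluster whenever $f(v)$ spans two or more clusters, so the recursive split in $\createTree$ never cuts a cluster. The only difference is presentational: you proceed by a forward induction on depth, whereas the paper phrases the identical argument as a minimal-counterexample contradiction on the violating node closest to the root.
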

\begin{proof}
    We show that for every node $u\in \tau$, if there exists $p\in C,q\in C'$ where $C\neq C'$, such that $\{p,q\} \subseteq f(u)$, then $f(u)$ contains all points in $C\cup C'$. 
    
    Suppose this property does not hold for all nodes in $\tau$. consider a node $v \in \tau$ closest to the root that does not satisfy this property. Note that the root itself trivially satisfies the property. Let $v_p$ denote the parent node of $v$ in $\tau$. Without loss of generality, suppose that $v$ does not contain all points in $C$ and its intersection with $C'$ is non-empty. Since $v$ is the closest node to the root that does not satisfy the described property, $v_p$ contains all points in $C \cup C'$. Let $T_{v_p}$ be the restriction of the MST $T$ on the points contained in $v_p$.

    First, we prove that $T_{v_p}[C]$, the induced subgraph of $T_{v_p}$ on $C$ is a connected component. Otherwise, $T_{v_p}[C]$ has at least two connected components $C_1, C_2$ such that both $T_{v_p}[C_1]$ and $T_{v_p}[C_2]$ are trees and they both have distinct edges leaving $C_1$ and $C_2$ in $T_{v_p}$. However, it contradicts the minimality of $T_{v_p}$, and consequently the minimality of $T$. This follows, because by the condition $\beta(\cC)<1$, all edges within $C$ are strictly smaller than any edge with one end-point in $C$ and one end-point outside of $C$. So, if we add an arbitrary edge between $C_1, C_2$ to $T$, then we can remove one of the edges leaving $C$; hence, the cost of $T$ after this modification strictly decrease which is a contradiction.  

    Now, if by removing an edge in $T_v$ the points in $C$ are split into at least two non-empty sets, then the edge has to be between two points in $C$. However, this is not possible because $T_{v_p}$ is a tree spanning both $C$ and $C'$ and by the condition $\beta(\cC)<1$ any edge between $C$ and $C'$ is strictly large than any edge corresponding to two points in $C$. So, all nodes in $\tau$ satisfy the desired property. 
\end{proof}

We are now ready to prove \cref{:cl:best-clustering-is-induced-by-tree}.

\begin{proof}[Proof of~\cref{:cl:best-clustering-is-induced-by-tree}]
The proof is by contradiction.
Lets assume that $\cC$ is not induced by the tree $\tau$.
Then, there must exist a cluster $C \in \cC$ such that no node $v$ in $\tau$ represents the set $C\subseteq X$.
Let $v^C$ be the highest node in $\tau$, i.e., closest to the root, that represents a set $f(v^C) \subseteq C$. Note that such a node exists because, for every point $p \in X$, there exists a leaf node in $\tau$ that represents the set $\{p\}$.
By the choice of $C$, $f(v^C) \neq C$, so $f(v^C)$ is strictly contained in $C$,
\begin{equation}\label{:eq:proof-of-best-clustering-is-induced-by-tree:C-strictly-contained-in-C}
    f(v^C) \subset C,
\end{equation}
which means that $f(v^C) \neq X$, and thus $v^C$ has a parent $v^C_{\mathrm{parent}}$ and a sibling $v^C_{\mathrm{sibling}}$.
By the choice of $v^C$, since $v^C_{\mathrm{parent}}$ is closer to the root than $v^C$, there must exist some cluster $C' \in \cC$ other than $C$ such that $f(v^C_{\mathrm{parent}}) \cap C' \neq \emptyset$.
By the design of Algorithm \ref{:alg:create-tree} and the choice of $v^C$, $f(v^C)$ is non-empty and $f(v^C) \subseteq f(v^C_{\mathrm{parent}}) \cap C$, which means that $f(v^C_{\mathrm{parent}}) \cap C \neq \emptyset$.
By \cref{clm:tree-property}, since $f(v^C_{\mathrm{parent}}) \cap C' \neq \emptyset$ and $f(v^C_{\mathrm{parent}}) \cap C \neq \emptyset$, we must have
\begin{equation}\label{:eq:proof-of-best-clustering-is-induced-by-tree:parent-node-contains-both-clusters}
    C \cup C' \subseteq f(v^C_{\mathrm{parent}}).
\end{equation}
Since $f(v^C)$ is strictly contained in $C$ (\cref{:eq:proof-of-best-clustering-is-induced-by-tree:C-strictly-contained-in-C}) and since $C'$ is non-empty, there must be at least one point in $C$ and at least one point in $C'$ that are not in $f(v^C)$.
By \cref{:eq:proof-of-best-clustering-is-induced-by-tree:parent-node-contains-both-clusters} and by the fact that $f(v^C_{\mathrm{parent}}) = f(v^C) \cup f(v^C_{\mathrm{sibling}})$, these two points must both belong to $f(v^C_{\mathrm{sibling}})$; hence, by~\cref{clm:tree-property}, $C \cup C' \subseteq f(v^C_{\mathrm{sibling}})$.
So, $f(v^C) \subseteq C \subseteq f(v^C_{\mathrm{sibling}})$, which contradicts the structure of $\tau$. 

Thus, the clustering $\cC$ must be induced by $\tau$.
This concludes the proof of \cref{:cl:best-clustering-is-induced-by-tree}.
\end{proof}

\subsubsection{Proof of \texorpdfstring{\cref{:lem:dynamic-programming-to-find-best-clustering-induced-by-tree}}{Lemma 11}}\label{:sec:dynamic-programming-to-find-best-clustering-induced-by-tree}

\begin{definition}\label{:def:clustering-restricted-to-subtree}
Given a metric space $(X,d)$ and the tree $\tau$ from Algorithm~\ref{:alg:create-tree}. 
For every node $u$ of $\tau$, we say that a clustering $\cC_u$ of $f(u)$ is induced by the subtree of $\tau$ rooted at $u$ if, for every cluster $C \in \cC_u$, there exists a node $v$ in the subtree of $\tau$ rooted at $u$ such that $C = f(v)$.
We note that $\beta(\cC_u)$ still refers to the quantity $\max_{C \in \cC_u} \beta(C)$, where each $\beta(C)$ is defined with respect to the whole space $X$.
\end{definition}

\begin{proof}[Proof of \cref{:lem:dynamic-programming-to-find-best-clustering-induced-by-tree}]

We follow a dynamic programming approach. For each tuple $(u, k')$ where $u$ is a node in $\tau$ and $1\le k'\le k$, $DP(u,k')$ denotes the $k'$-clustering $\cC_u$ of $f(u)$ induced by the subtree of $\tau$ rooted at $u$ minimizing $\beta(\cC_u)$, or Null if such a $k'$-clustering does not exist.
Note that for any node $u$, $DP(u,1)=\{f(u)\}$, i.e., there is a trivial $1$-clustering of $f(u)$  which is a single cluster containing $f(u)$.
Furthermore, for every leaf node $\ell$ and every $k' > 1$, $DP(\ell, k')=\mathrm{Null}$ because there is no way to partition $f(\ell)$ into $k' > 1$ disjoint non-empty clusters, as $|f(\ell)|=1$.
For every node $u \in \tau$, let $u_R$ and $u_L$ respectively denote the right and left child of $u$.
Then, the update rule in the dynamic programming is as follows:
\begin{itemize}
    \item $DP(u, 1) = \{f(u)\}$, where $f(u)$ is the set of points contained in $u$.
    \item $DP(u, k'>1) = \argmin_{\cC_{u,k'} \in \mathrm{Candidates}_{u,k'}} \beta(\cC_{u,k'})$ if $\mathrm{Candidates}_{u,k'}$ is not empty, where
    \begin{align*}
       \mathrm{Candidates}_{u,k'} 
        &\defeq \{DP(u_R,i)\cup DP(u_L,k'-i) \mid i \in [k'-1] \\ 
        &\text{ s.t. $DP(u_R,i)$ and $DP(u_L,k'-i)$ are not Null}\}, 
    \end{align*} 
    \item $DP(u, k'>1) = \mathrm{Null}$ if $\mathrm{Candidates}_{u,k'}$ is an empty set.
\end{itemize}

The dynamic programming has $|\tau|k=O(nk)$ cells and updating each of them requires at most $O(k n^2)$ time, the total runtime of the algorithm is $O(n^3 k^2)$.
Furthermore, by definition, the value of $DP(\mathrm{root}(\tau), k)$ is exactly the desired clustering that we need to compute.

Regarding the correctness of the algorithm, it follows by an inductive argument and the definition of clustering induced by $\tau$ together with~\cref{:cl:best-clustering-is-induced-by-tree}. 
\end{proof}

\section{Proof of \texorpdfstring{\cref{:thm:constant-median-IP}}{Theorem 4}}
We first present a general framework in \cref{:sec:general-framework} for extending the techniques of \cref{:sec:basic-merge-and-split-algorithm} to other types of IP stability.
Then, throughout most of this section, we show how to apply this framework in order to construct an algorithm that, given $(X,d)$ and a desired number of clusters $k$, produces an $(O(1),\sqrt{\median})$-IP stable $k$-clustering.
To finish of, we use the following theorem, concluding that such a clustering must also be $(O(1),\median)$-IP stable.

\begin{theorem}\label{:thm:constant-sqrt-median-IP}
There exists a deterministic polynomial-time algorithm that, given a metric space $(X,d)$ and a desired number of clusters $2 \leq k \leq |X|$, computes an $(O(1), \sqrt{\median})$-IP stable $k$-clustering of $(X,d)$, where $\sqrt{\median}$ denotes the square root of the median function from \cref{:thm:constant-median-IP}.
\end{theorem}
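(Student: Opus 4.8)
The plan is to obtain \cref{:thm:constant-sqrt-median-IP} by instantiating, with $f=\sqrt{\median}$, the general $f$-IP-stability framework of \cref{:sec:general-framework}, which adapts the merge-and-split local search of \cref{:sec:basic-merge-and-split-algorithm} (and of \cref{:thm:fast-algorithm-exists}) to arbitrary stability functions. Running that framework needs three ingredients. First, a potential function $\Phi:(2^X\setminus\{\emptyset\})\to\bbR_{\ge 0}$ satisfying the sandwich property of \cref{:obs:property-of-potential-function} for $f$, i.e. $f(p,S)\le\Phi(S\cup\{p\})-\Phi(S)\le\alpha\cdot f(p,S)$ for every non-empty $S\subset X$ and $p\notin S$, with $\alpha=O(1)$; this is what forces each ``swap'' step to strictly decrease $\Phi(\cC)$. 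Second, a $\median$-analogue of the split guarantee of \cref{:lem:slow-split}: every cluster $C$ can be partitioned into $C_1,C_2$ with $\Phi(C_1)+\Phi(C_2)\le(1-\Omega(1/\polylog n))\,\Phi(C)$, so that merge-and-split steps also make progress. Third, the $\median$-analogues of the merge-cost bound \cref{:lem:merge-cost-bound} and of the $k$-center initialization bound \cref{:lem:k-center-initialization}, which together cap the number of epochs at $O(\log n)$ and hence the total number of steps at a polynomial. To keep the algorithm \emph{deterministic}, as the statement requires, the split is implemented not by random sampling but by exhaustively trying the polynomially many candidate partitions of $C$ of the form $(B(c,r)\cap C,\ C\setminus B(c,r))$ with $c\in C$ and $r\in\{d(c,q):q\in C\}$ and keeping the best; an existence argument (a ball-carving argument in the spirit of \cref{:lem:slow-split} and of~\citet{aamand2023constant}) guarantees one of them works. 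Finally, as in the ``$m$ distinct locations'' remark after \cref{:thm:fast-algorithm-exists}, co-located points are handled as a single weighted point; besides being cheap this is what guarantees $\median(p,C')>0$ for every $C'\neq C(p)$, removing the degeneracy in which a majority of some other cluster sits exactly on $p$.

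The heart of the proof — and the step I expect to be the main obstacle — is the design of $\Phi$ and the verification of the sandwich property with an \emph{absolute constant} $\alpha$. I would first record the $\sqrt{\median}$-analogues of the two elementary facts driving \cref{:sec:analysis-of-local-search}: a \emph{coarse triangle inequality} stating that for every $p$, every non-empty $S$ of size at least a small constant, and every $q_1,q_2\in X$ one has $\sqrt{d(q_1,q_2)}=O(\sqrt{\median}(q_1,S)+\sqrt{\median}(q_2,S))$, proved by intersecting the set of the $\lceil|S|/2\rceil$ nearest points of $S$ to $q_1$ with the corresponding set for $q_2$ (which overlap by a pigeonhole count, up to a $\pm1$ parity slack absorbed into the $O(1)$) and then applying the triangle inequality in the \emph{snowflake metric} $\sqrt d$; and a $\median$-analogue of \cref{:lem:average-distance-lemma} bounding $\frac1{|S|}\sum_{p'\in S}\sqrt{\median}(p',S)$ by $O(1)\cdot\sqrt{\median}(p,S)$. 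Here the passage to $\sqrt d$ is essential, not cosmetic: $\sqrt d$ is again a metric, $\sqrt{\median}(\cdot)$ with respect to $d$ is exactly $\median(\cdot)$ with respect to $\sqrt d$, and $\sqrt d$ additionally satisfies $\sqrt d(x,z)^2\le\sqrt d(x,y)^2+\sqrt d(y,z)^2$, i.e. it is ``sub-Euclidean'', which is what lets the per-cluster contributions recombine with only constant loss. The naive candidate $\Phi(C)=\sum_{p\in C}\sqrt{\median}(p,C)$ \emph{fails} the upper sandwich bound, because $\median$ is discontinuous at the $50\%$ mark: inserting a point coinciding with half of a balanced two-location set has $\sqrt{\median}(p,S)=0$ yet pushes the order-statistic rank of many points across a tie and strictly increases such a $\Phi$. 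So $\Phi$ must instead be built on a robustified median statistic that is insensitive to a $\pm1$ shift of ranks (working in $\sqrt d$), and the sandwich property is then checked by the same derivative/linearization bookkeeping used for \cref{:thm:existence-of-potential-function}, tracking that inserting one point moves every relevant order statistic by at most one rank.

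Once $\Phi$ is in hand, the remaining prerequisites are comparatively routine. The merge-cost bound follows from the coarse triangle inequality exactly as \cref{:lem:merge-cost-bound} follows from \cref{:obs:triangle-inequality-for-average-distance}; the initialization bound follows by sandwiching $\Phi(\cC)$ between $\poly(n)$ factors of $\max_{C\in\cC}\sqrt{\diameter(C)}$, and hence of the $k$-center cost of $\cC$, mirroring \cref{:lem:k-center-initialization}, so that greedy $k$-center gives an initial potential within a $\poly(n)$ factor of optimal and the search runs for $O(\log n)$ epochs; and the split guarantee follows from the ball-carving existence argument, derandomized by the enumeration above. Assembling these with the general framework yields a deterministic polynomial-time algorithm that, for every $2\le k\le|X|$, outputs an $(O(1),\sqrt{\median})$-IP stable $k$-clustering of $(X,d)$, which is \cref{:thm:constant-sqrt-median-IP}; combined with the subsequent comparison between $\sqrt{\median}$-stability and $\median$-stability this also yields \cref{:thm:constant-median-IP}. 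The genuinely delicate part, on which I would spend most of the effort, remains pinning down the exact form of $\Phi$ so that its one-step increment is squeezed between $\sqrt{\median}(p,S)$ and $O(1)\cdot\sqrt{\median}(p,S)$ with constants independent of $n$ and $k$.
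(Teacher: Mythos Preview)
Your high-level plan is right: the paper does prove \cref{:thm:constant-sqrt-median-IP} by instantiating the general framework of \cref{:thm:framework-for-general-f} with $f=\sqrt{\median}$ and a suitable potential $\Phi_{\sqrt{\median}}$, and then checking the six conditions (sandwich property, computability, approximability of $\Phi$, split, merge-cost, $k$-center initialization). Your identification of the sandwich property as the crux, and your insistence on working in the snowflake metric $\sqrt{d}$, are both exactly on target.

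The genuine gap is the potential function itself. You leave $\Phi$ unspecified beyond ``a robustified median statistic'' to be analyzed by ``derivative/linearization bookkeeping'' as in \cref{:thm:existence-of-potential-function}. That template is unlikely to go through: any potential built as a sum over $p\in C$ of smoothed order statistics of $\{d(p,q)\}_{q\in C}$ still suffers from the fact that inserting one point shifts \emph{every} point's rank-$\lceil|C|/2\rceil$ value, and controlling the aggregate shift to within $O(1)\cdot\sqrt{\median}(p,S)$ (with a constant independent of $|S|$) is precisely the obstruction that killed the naive $\sum_{p}\sqrt{\median}(p,C)$. The paper's $\Phi$ is of a completely different nature: it is the \emph{maximum-length TSP tour} of the clique on $C$ with edge lengths $w(p,q)=\sqrt{d(p,q)}$, scaled by $1/(2-\sqrt{2})$. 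The sandwich bounds are then combinatorial tour-surgery arguments: for the lower bound one uses a pigeonhole (\cref{:obs:edge-exists}) to find a tour edge both of whose endpoints are at distance $\geq\median(p,S)$ from $p$ and inserts $p$ there, gaining at least $(2-\sqrt{2})\sqrt{\median(p,S)}$ via the inequality $\sqrt a+\sqrt b-\sqrt{a+b}\geq(2-\sqrt2)\sqrt{\min\{a,b\}}$; for the upper bound one reroutes a max tour of $S\cup\{p\}$ so that $p$ becomes adjacent to a point within $\median(p,S)$ and then shortcuts $p$ out (\cref{:cl:getting-rid-of-p}), losing only $O(\sqrt{\median(p,S)})$ by the triangle inequality in $\sqrt d$ (\cref{:obs:triangle-inequality-in-sqrt-metric}). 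None of this resembles a derivative computation.

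Two smaller divergences worth noting. First, the split is much simpler than your ball-carving enumeration: take the cluster $C^*$ of largest diameter, pick a diametral pair $p,p'$, and output $(\{p''\},\,C^*\setminus\{p''\})$ where $p''\in\{p,p'\}$ maximizes $\median(p'',C^*\setminus\{p''\})$; \cref{:cl:exists-large-median} (your ``coarse triangle inequality'') gives $\median(p'',C^*\setminus\{p''\})\geq d(p,p')/2$, and the lower sandwich bound converts this into a $\Phi$-drop of at least $\Phi(\cC)/\poly(n)$ via the diameter sandwich \cref{:cor:bound-on-median-potential-of-clustering}. Second, although $\Phi_{\sqrt{\median}}$ is NP-hard to evaluate, the framework only needs a $\poly(n)$-approximation, which \cref{:cor:bound-on-median-potential-of-clustering} supplies trivially via $\max_{C}\sqrt{\diameter(C)}$; so there is no computability issue, and no need to solve max-TSP.
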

The proof of \cref{:thm:constant-sqrt-median-IP} is deferred to \cref{:sec:concluding-sqrt-median-IP}.

\begin{lemma}\label{:lem:sqrt-median-to-median}
Given a metric space $(X,d)$, every $(\alpha,\sqrt{\median})$-IP stable cluster of $(X,d)$ must also be $(\alpha^2,\median)$-IP stable.
\end{lemma}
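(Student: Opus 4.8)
The plan is to unwind the definitions and observe that the only thing happening is squaring a valid inequality. Recall that $\sqrt{\median}(p,S)$ denotes $\sqrt{\median(p,S)}$, so the statement relates the two functions pointwise by a monotone transformation, and $(\alpha,f)$-IP stability is a collection of inequalities of the form $f(p,C(p)) \le \alpha\, f(p,C')$ ranging over all points $p$ with $C(p)\neq\{p\}$ and all clusters $C'\neq C(p)$.

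First I would fix a clustering $\cC$ of $(X,d)$ that is $(\alpha,\sqrt{\median})$-IP stable, fix an arbitrary point $p\in X$ with $C(p)\neq\{p\}$, and fix an arbitrary cluster $C'\in\cC$ with $C'\neq C(p)$. By the definition of $(\alpha,\sqrt{\median})$-IP stability applied to this $p$ and $C'$, we have
\[
 \sqrt{\median(p,C(p))} = \sqrt{\median}(p,C(p)) \leq \alpha \cdot \sqrt{\median}(p,C') = \alpha \cdot \sqrt{\median(p,C')}.
\]
Since $\median(p,C(p))$ and $\median(p,C')$ are non-negative reals (they are values of the metric $d$), and since $\alpha\ge 1>0$, both sides of this inequality are non-negative, so squaring both sides preserves the inequality, giving $\median(p,C(p)) \leq \alpha^2 \cdot \median(p,C')$. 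As $p$ and $C'$ were arbitrary, this is precisely the statement that $\cC$ is $(\alpha^2,\median)$-IP stable.

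There is no real obstacle here: the content of the lemma is exactly that $x\mapsto x^2$ is non-decreasing on $\mathbb{R}_{\ge 0}$, and that both $\median$ values and $\alpha$ are non-negative so that this monotonicity applies. The one point worth stating explicitly in the write-up is that $\sqrt{\median}(p,S)=\sqrt{\median(p,S)}$ by definition, so that the definitional inequality for $\sqrt{\median}$-IP stability is literally a square-rooted version of the one we want for $\median$-IP stability.
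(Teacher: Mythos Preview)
Your proposal is correct and is essentially identical to the paper's own proof: both simply square the defining inequality $\sqrt{\median(p,C(p))}\le \alpha\sqrt{\median(p,C')}$, using that both sides are non-negative. One cosmetic remark: you only need $\alpha\ge 0$ (not $\alpha\ge 1$) for the squaring step, and that already holds since the right-hand side must be non-negative.
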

\begin{proof}
Given a metric space $(X,d)$, if a clustering $\cC$ is $(\alpha,\sqrt{\median})$-IP stable, then, for every point $p$ with $C(p)\neq\{p\}$ and every cluster $C' \in \cC \setminus \{C\}$,
$ \sqrt{\median(p,C(p)\setminus\{p\})} \leq \alpha \cdot \sqrt{\median(p,C')}$. So, $\median(p,C(p)\setminus\{p\}) \leq \alpha^2 \cdot \median(p,C')$. Hence, $\cC$ is a $(\alpha^2,\median)$-IP stable clustering.
\end{proof}

Then, \cref{:thm:constant-median-IP} follows from \cref{:thm:constant-sqrt-median-IP} and \cref{:lem:sqrt-median-to-median}.
\begin{proof}[Proof of \cref{:thm:constant-median-IP}]
We use the algorithm from \cref{:thm:constant-sqrt-median-IP}.
By \cref{:lem:sqrt-median-to-median}, the clustering that this algorithm returns must be $(O(1),\median)$-IP stable.
Therefore, this algorithm satisfies all the properties promised in \cref{:thm:constant-median-IP}.
\end{proof}

\subsection{Framework for General \texorpdfstring{$f$}{f}-IP Stability}\label{:sec:general-framework}
In this subsection, we show how the techniques from \cref{:sec:basic-merge-and-split-algorithm} can be extended to a framework for constructing algorithms that compute $f$-IP stable clusterings for other functions $f$.
\begin{theorem}\label{:thm:framework-for-general-f}
For every metric space $(X,d)$, let $f:X \times 2^X \to \bbR_{\geq 0}$ and $\Phi_f:(2^X \setminus \{\emptyset\}) \to \bbR_{\geq 0}$ and $\alpha > 0$. Furthermore, suppose the following properties hold:
\begin{enumerate}[leftmargin=*]
    \item For every non-empty subset $S \subset X$ and every point $p \in X \setminus S$,
    \[
     f(p,S) \leq \Phi_f(S \cup \{p\}) - \Phi_f(S) \leq \alpha \cdot f(p,S).
    \]
    \item There exists a deterministic polynomial-time algorithm that computes $f(p,S)$, given any $S \subset X$, and $p \in X \setminus S$.
    \item There exists a deterministic polynomial-time algorithm that computes a $\poly(n)$-approximation of $\Phi_f(C)$, given any $C \subset X$.
    \item There exists a deterministic polynomial-time algorithm {\sc Split}$_f$ that, given a clustering $\cC$ of $X$, finds a cluster $C^* \in \cC$ and a partition $(C^*_1,C^*_2)$ of $C^*$ such that $\Phi(C^*_1) + \Phi(C^*_2) \leq \Phi(C^*) - \frac{\Phi(\cC)}{\poly(n)}$.
    \item For every two disjoint non-empty clusters $C,C' \subseteq X$, and every $p \in C$,
    \begin{align*}
        \Phi_f(C \cup C') \leq \Phi_f(C) + \Phi_f(C') + \poly(n)\cdot(f(p,C\setminus\{p\}) + f(p,C'))
    \end{align*}
    \item There exists a deterministic polynomial time algorithm {\sc Initialize}$_f$ that, given a metric space $(X,d)$ and a desired number of clusters $k$, finds a $k$-clustering $\cC$ that $\poly(n)$-approximately minimizes the potential function $\Phi_f(\cC)$; more precisely, $\Phi_f(\cC) \leq \poly(n) \cdot \Phi_f(\cC^*)$, where $\cC^*$ is a $k$-clustering minimizing $\Phi_f$, and $\Phi_f(\cC)\defeq \sum_{C \in \cC}\Phi_f(C)$.
\end{enumerate}
Then, for every constant $c>1$, there exists a deterministic polynomial-time algorithm that, given a desired number of clusters $k$, computes a $(c\alpha,f)$-IP stable $k$-clustering of $(X,d)$.
\end{theorem}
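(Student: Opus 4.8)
The plan is to run the merge-and-split local search of \cref{:alg:basic-merge-and-split} essentially verbatim, with $\avg$ replaced by $f$, the potential $\Phi$ replaced by $\Phi_f$, and the concrete subroutines used there (\cref{:lem:slow-split}, \cref{:lem:merge-cost-bound}, \cref{:lem:k-center-initialization}, and the exact computation of $\avg$) replaced by the algorithms promised in properties 2--6. Concretely: initialize $\cC$ with {\sc Initialize}$_f$; while some point $p$ with $C(p)\neq\{p\}$ and some cluster $C'\neq C(p)$ satisfy $f(p,C(p)\setminus\{p\}) > c\alpha\cdot f(p,C')$, pick such a pair $(p,C')$ and either perform a \emph{swap} of $p$ from $C(p)$ to $C'$ when $f(p,C(p)\setminus\{p\}) \geq \Phi_f(\cC)/\poly(n)$ for a suitably large $\poly(n)$ factor, or otherwise \emph{merge} $C(p)$ and $C'$ and then apply {\sc Split}$_f$ to the resulting $(k-1)$-clustering to return to $k$ clusters. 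Since the loop exits only when no point $(c\alpha)$-envies any cluster, the output is $(c\alpha,f)$-IP stable by definition, so the whole content of the proof is bounding the number of iterations and the per-iteration cost.

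For the progress bound I would mirror the proof of \cref{:cl:potential-reduction-from-steps}. By property 1, removing $p$ from $C(p)$ decreases $\Phi_f(C(p))$ by at least $f(p,C(p)\setminus\{p\})$ while inserting $p$ into $C'$ increases $\Phi_f(C')$ by at most $\alpha\cdot f(p,C') < f(p,C(p)\setminus\{p\})/c$ (using the envy inequality), so a swap step decreases $\Phi_f(\cC)$ by more than $(1-1/c)\,f(p,C(p)\setminus\{p\}) \geq (1-1/c)\,\Phi_f(\cC)/\poly(n)$. For a merge-and-split step, property 4 says {\sc Split}$_f$ decreases the potential by at least $\Phi_f(\widetilde{\cC})/\poly(n)$, where $\widetilde{\cC}$ is the $(k-1)$-clustering after the merge, while property 5 bounds the merge increase by $\poly(n)\cdot(f(p,C(p)\setminus\{p\}) + f(p,C'))$, which in the ``else'' branch (where $f(p,C(p)\setminus\{p\})$ is small relative to $\Phi_f(\cC)/\poly(n)$, and $f(p,C') < f(p,C(p)\setminus\{p\})/(c\alpha)$ is likewise controlled) is smaller than $\Phi_f(\cC)/\poly(n)$ for an appropriate choice of the $\poly(n)$ factors. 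Calibrating these so the merge increase is at most half the split gain (and noting the merge itself changes the potential only by that same small amount, so $\Phi_f(\widetilde{\cC})$ is within a constant factor of $\Phi_f(\cC)$), a merge-and-split step decreases $\Phi_f(\cC)$ by $\Omega(\Phi_f(\cC)/\poly(n))$. In either case one step multiplies $\Phi_f(\cC)$ by a factor $1-1/\poly(n)$.

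It remains to bound the step count and the per-step time. Property 6 gives $\Phi_f(\cC_{\mathrm{init}}) \leq \poly(n)\cdot\Phi_f(\cC^*) \leq \poly(n)\cdot\Phi_f(\cC_{\mathrm{final}})$ (the last inequality because $\cC^*$ minimizes $\Phi_f$), so a potential that is nonnegative and shrinks by a factor $1-1/\poly(n)$ at every step can do so only $\poly(n)\cdot\log(\poly(n)) = \poly(n)$ times; the degenerate case $\Phi_f(\cC^*)=0$ forces $\Phi_f(\cC_{\mathrm{init}})=0$, and then property 1 shows every point has $f$-value $0$ to every cluster, so the loop never executes. Each iteration runs in polynomial time: checking the while condition requires computing $f(p,C(p)\setminus\{p\})$ and $f(p,C')$ for the $O(nk)$ relevant pairs, which is polynomial by property 2; checking the if condition requires a $\poly(n)$-approximation of $\Phi_f(\cC)$, available in polynomial time by property 3 (its error is harmless, being absorbed into the $\poly(n)$ in the threshold); the swap is trivial; and the merge-and-split invokes {\sc Split}$_f$ from property 4. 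Together with the polynomial-time {\sc Initialize}$_f$, this yields a polynomial total running time and an output that is $(c\alpha,f)$-IP stable.

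The main obstacle is the merge-and-split analysis: one must confirm that \emph{whenever a swap step would make little progress}, i.e.\ $f(p,C(p)\setminus\{p\})$ is small relative to $\Phi_f(\cC)/\poly(n)$, a merge-and-split step instead makes $\Omega(\Phi_f(\cC)/\poly(n))$ progress. This depends on property 5 being phrased in terms of exactly the two quantities $f(p,C(p)\setminus\{p\})$ and $f(p,C')$ that are both small in that regime, and on transferring the $(k-1)$-clustering guarantee of {\sc Split}$_f$ back to a guarantee in terms of $\Phi_f(\cC)$ by comparing the potential just before and just after the merge; pinning down the $\poly(n)$ factor in the swap/merge threshold so that the split gain dominates the merge cost is the delicate bookkeeping, exactly paralleling (but abstracting) Section~\ref{:sec:basic-merge-and-split-algorithm}. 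A secondary but essential point is that the iteration count is \emph{polynomial} rather than merely finite only because of property 6: without a $\poly(n)$-approximate initializer the number of steps could be $\Theta(\log(\Phi_f(\cC_{\mathrm{init}})/\Phi_f(\cC_{\mathrm{final}})))$, which need not be polynomially bounded.
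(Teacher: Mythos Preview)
Your proposal is correct and follows essentially the same approach as the paper's own proof sketch: initialize via property~6, then iterate swap or merge-and-split steps (deciding by a threshold on $f(p,C(p)\setminus\{p\})$ relative to $\Phi_f(\cC)/\poly(n)$), using properties~1, 4, and~5 to show each step reduces the potential by a $(1-1/\poly(n))$ factor, and property~6 to bound the number of iterations. Your treatment is in fact more detailed than the paper's sketch; the only minor imprecision is in the degenerate case $\Phi_f(\cC^*)=0$, where you claim ``every point has $f$-value $0$ to every cluster,'' but what property~1 actually gives (and what suffices) is $f(p,C(p)\setminus\{p\})=0$ for all $p$ with $|C(p)|\ge 2$, so the loop condition is never satisfied.
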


Since the proof of \cref{:thm:framework-for-general-f} is analogous to the arguments in \cref{:sec:basic-merge-and-split-algorithm}, we only provide the following sketch.

\begin{proof}[Proof Sketch for \cref{:thm:framework-for-general-f}]
The algorithm is analogous to Algorithm \ref{:alg:basic-merge-and-split}.
It begins by initializing a clustering using the algorithm from property 6 in \cref{:thm:framework-for-general-f}.
Then, as long as the current clustering is not $(c\alpha,f)$-IP stable (which can be checked using the algorithm from property 2), it picks a point $p$ and cluster that violate the constraint of $(c\alpha,f)$-IP stability, and produces an estimate of $\Phi_{f}(\cC)$ using the algorithm from property 3 of \cref{:thm:framework-for-general-f}.
If the estimate of $\Phi_{f}(\cC)$, the value of $f(p,C \setminus \{p\}) + f(p,C')$, and properties 4 and 5 of \cref{:thm:framework-for-general-f} are together enough to guarantee that a merge-and-split step would decrease the overall potential of the clustering by a multiplicative $(1-\frac{1}{\poly(n)})$ factor, then it executes a merge-and-split step, merging clustering $C(p)$ and $C'$, then using the algorithm from property 4 to split some cluster $C^*$.
Otherwise, it must be that $f(p,C \setminus \{p\}) + f(p,C') \geq \frac{\Phi_{f}(\cC)}{\poly(n)}$, so, since $p$ and $C'$ violate $(c\alpha,f)$-IP stability, it must be that $f(p,C \setminus \{p\}) \geq \frac{1}{2}(f(p,C \setminus \{p\}) + f(p,C')) \geq \frac{\Phi_{f}(\cC)}{\poly(n)}$, so, by property 1 of \cref{:thm:framework-for-general-f}, swapping the point $p$ from cluster $C(p)$ to cluster $C'$ must reduce the overall potential by 
\begin{align*}
 f(p,C \setminus \{p\}) -\alpha f(p,C') > f(p,C \setminus \{p\}) - \frac{\alpha}{c\alpha}f(p,C \setminus \{p\}) = \Omega(f(p,C \setminus \{p\})) \geq \frac{\Phi_{f}(\cC)}{\poly(n)},   
\end{align*}
and this is what the algorithm does.

In summary, using all the properties from \cref{:thm:framework-for-general-f}, the algorithm is able to initialize a $k$-clustering $\cC$ in polynomial time such that $\Phi_{f}(\cC)$ is a $\poly(n)$-approximation of $\Phi_{f}(\cC^*)$, where $\cC^*$ is a minimizer of $\Phi_{f}$, and then continue in iterations as long as the current maintained clustering is not $(c\alpha,f)$-IP stable, where each iteration runs in polynomial time and reduces the potential $\Phi_{f}$ of the maintained clustering $\cC$ by a multiplicative factor $<(1-\frac{1}{\poly(n)})$.
Thus, since the potential of the maintained clustering can never go below $\Phi_{f}(\cC^*)$, the number of iterations made by the algorithm is bounded by $\poly(n)$, so the overall runtime of the algorithm is polynomial.
\end{proof}

\subsection{Potential Function and Preliminaries for \texorpdfstring{$\sqrt{\median}$}{sqrt(median)}-IP}

In this section, we introduce the potential function $\Phi_{\sqrt{\median}}$. Just as the potential function discussed in \cref{:sec:potential-function} played a crucial role in \cref{:sec:basic-merge-and-split-algorithm}, this new function will be integral to the framework presented in \cref{:sec:general-framework}.
Furthermore, we prove some preliminary observations for analyzing this potential function.

\begin{definition}[Potential function for median-IP] \label{:def:potential-for-sqrt-median}
Given a metric space $(X,d)$, let $G=(X,E,w)$ be the complete clique graph on $X$, together with edge-lengths $w(p_1,p_2) \defeq \sqrt{d(p_1,p_2)}$ for each edge $(p_1,p_2)$ of the clique.
Then, we define the potential function $\Phi_{\sqrt{\median}}:(2^X \setminus \{\emptyset\}) \to \bbR_{\geq 0}$ such that, for every non-empty subset $C \subseteq X$,
\[
 \Phi_{\sqrt{\median}}(C) \defeq \begin{cases}
       \frac{1}{2 - \sqrt{2}} \cdot \max_{\text{TSP tour $\Pi$ of $G[C]$}} w(\Pi) &\quad\text{if }|C| \geq 2\\
       0 &\quad\text{if }|C|=1 
     \end{cases}
\]
where $G[C]$ is the subgraph of $G$ induced on the set $C$ of vertices.\footnote{A TSP tour $\Pi$ of $G[C]$ is a Hamiltonian cycle in $G[C]$, and its length is $w(\Pi) = \sum_{e \in E(\Pi)} w(e) = \sum_{(p_1,p_2)\in E(\Pi)\} \sqrt{d(p_1,p_2)}}$.
If $|C|=2$, i.e. if $C = \{p_1,p_2\}$ for some $p_1,p_2 \in X$, then we consider $\Pi$ to be taking the edge $(p_1,p_2)$ twice, which means that $w(\Pi) = 2w(p_1,p_2)=2\sqrt{d(p_1,p_2)}$.
}
\end{definition}

To use this potential function in the framework of \cref{:sec:general-framework}, we need to prove the following theorem.
The proof of the theorem is in \cref{:sec:proof-of-potential-function-for-sqrt-median}.

\begin{theorem}\label{:thm:potential-for-sqrt-median}
For every metric space $(X,d)$, every non-empty set $S \subset X$ and every point $p \in X \setminus S$, the potential function from \cref{:def:potential-for-sqrt-median} satisfies
\[
 \sqrt{\median(p,S)} \leq \Phi_{\sqrt{\median}}(S \cup \{p\}) - \Phi_{\sqrt{\median}}(S) \leq O(1) \cdot \sqrt{\median(p,S)},
\]
where $\median(p,S)$ is defined as in \cref{:thm:constant-median-IP}.
\end{theorem}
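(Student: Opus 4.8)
The plan is to analyze how the maximum-weight TSP tour of $G[C]$ changes when a single vertex is inserted, exactly mirroring the role of the quantity $\max_{p,p'\in S}d(p,p')$ in the classical ``maximum TSP vs.\ longest edge'' arguments but now in the square-root metric $w(p_1,p_2)=\sqrt{d(p_1,p_2)}$. Write $M = \frac{1}{2-\sqrt2}$ for the normalizing constant and $W(C) = \max_{\text{TSP tour }\Pi\text{ of }G[C]} w(\Pi)$, so that $\Phi_{\sqrt{\median}}(C) = M\cdot W(C)$ for $|C|\ge 2$. The key metric fact is that $w$ is a metric: $\sqrt{d(\cdot,\cdot)}$ satisfies the triangle inequality because for non-negative reals $\sqrt{a+b}\le\sqrt a+\sqrt b$, hence $\sqrt{d(p,r)}\le\sqrt{d(p,q)+d(q,r)}\le\sqrt{d(p,q)}+\sqrt{d(q,r)}$. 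So all standard maximum-TSP manipulations (inserting a vertex into an edge, patching a Hamiltonian path into a cycle) are available with multiplicative distortion controlled by triangle inequality in $w$.

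For the \textbf{lower bound} $\Phi_{\sqrt{\median}}(S\cup\{p\}) - \Phi_{\sqrt{\median}}(S) \ge \sqrt{\median(p,S)}$: take an optimal max-TSP tour $\Pi$ of $G[S]$ (of weight $W(S)$), and insert $p$ between some edge $(a,b)$ of $\Pi$, replacing edge $(a,b)$ by the two edges $(a,p),(p,b)$. I would choose the insertion point to make $w(a,p)+w(p,b)-w(a,b)$ as large as possible, or more simply average over all $|S|$ insertion points and argue the best one gives at least the average increase; one then lower-bounds this average increase by roughly $\frac{1}{|S|}\sum_{s\in S} w(p,s)$ minus a correction, and needs to relate $\frac{1}{|S|}\sum_{s} \sqrt{d(p,s)}$ (or rather the relevant order statistic) to $\sqrt{\median(p,S)}$. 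Since $\median(p,S)$ is the $\lceil|S|/2\rceil$-th smallest of $\{d(p,s)\}$, at least half the values $\sqrt{d(p,s)}$ are $\ge\sqrt{\median(p,S)}$, so the average of the largest insertion contributions is $\Omega(\sqrt{\median(p,S)})$; the constant $M=\frac1{2-\sqrt2}$ is presumably tuned precisely so that this comes out to exactly $\ge \sqrt{\median(p,S)}$ after accounting for the $-w(a,b)$ loss via triangle inequality ($w(a,b)\le w(a,p)+w(p,b)$, so the increase is $\ge 0$ always, and one pushes further using that the tour visits $p$'s near points).

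For the \textbf{upper bound} $\Phi_{\sqrt{\median}}(S\cup\{p\}) - \Phi_{\sqrt{\median}}(S) \le O(1)\sqrt{\median(p,S)}$: take an optimal max-TSP tour $\Pi'$ of $G[S\cup\{p\}]$, delete $p$ from it (splicing its two neighbors $a,b$ together by the edge $(a,b)$); this yields a TSP tour of $G[S]$, so $W(S)\ge w(\Pi') - w(a,p) - w(p,b) + w(a,b) \ge w(\Pi') - w(a,p) - w(p,b)$, hence $W(S\cup\{p\}) - W(S) \le w(a,p)+w(p,b) = \sqrt{d(p,a)}+\sqrt{d(p,b)}$. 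The obstacle — and this is the step I expect to be the crux — is bounding $\sqrt{d(p,a)}+\sqrt{d(p,b)}$ by $O(1)\sqrt{\median(p,S)}$ for the \emph{specific} neighbors $a,b$ of $p$ in the optimal tour, which need not be close to $p$; a maximum-weight tour would actually \emph{prefer} to place $p$ between far-away points. The resolution must be a global exchange/averaging argument: if $p$'s neighbors in $\Pi'$ were both far, one could reroute to increase the weight, contradicting optimality of $\Pi'$ — more precisely, one shows that in an optimal max-TSP tour of $G[S\cup\{p\}]$ the weight gained by placing $p$ at its actual location cannot exceed (up to the $\frac{1}{2-\sqrt2}$ factor) what is ``forced'' by the median distance, because more than half of $S$ lies within $\sqrt{\median(p,S)}$ of $p$ in the $w$-metric and these points constrain the achievable tour weight via triangle inequality. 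I would formalize this by comparing $\Pi'$ against a tour that routes $p$ through two of its $\le\lceil|S|/2\rceil$ nearest neighbors and using the triangle inequality for $w$ to show the weight difference between any two tours caused by relocating $p$ is $O(\sqrt{\median(p,S)})$. This last inequality, matching the constant so that $\frac1{2-\sqrt2}$ absorbs everything, is where all the real work lies; the rest is bookkeeping with the (easy) special cases $|S|=1$ and $|S|=2$ handled directly from the definition as in the proof of Theorem \ref{:thm:existence-of-potential-function}.
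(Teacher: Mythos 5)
Your framework is right: interpret $\Phi_{\sqrt{\median}}$ as (a multiple of) the max-weight TSP tour in the $w=\sqrt{d}$ metric, note $w$ still satisfies the triangle inequality, prove the lower bound by inserting $p$ into a max tour of $G[S]$ and the upper bound by deleting $p$ from a max tour of $G[S\cup\{p\}]$, and handle small $|S|$ directly. You also correctly identify that the upper bound is the crux because the max-weight tour will happily place $p$ between two \emph{far} neighbors. However, both inequalities are left open, and the upper bound contains a wrong idea.

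For the lower bound, averaging over all insertion points does not work: if an edge $(a,b)$ has $a$ very close to $p$, the insertion gain $w(a,p)+w(p,b)-w(a,b)$ can be essentially zero even if $b$ is far, so the average gain is not controlled by $\median(p,S)$. The paper instead uses a counting/parity fact about Hamiltonian cycles (Observation~\ref{:obs:edge-exists}): split $S$ into $A=\{p':d(p,p')<\median(p,S)\}$ and $B=\{p':d(p,p')\geq\median(p,S)\}$; since $|B|>|A|$ by definition of the median, any Hamiltonian cycle on $S$ must contain an edge $(p_1,p_2)$ with \emph{both} endpoints in $B$, and inserting $p$ into that specific edge gives gain $\geq(2-\sqrt{2})\sqrt{\min\{d(p,p_1),d(p,p_2)\}}\geq(2-\sqrt{2})\sqrt{\median(p,S)}$ via the elementary inequality $\sqrt{a}+\sqrt{b}-\sqrt{a+b}\geq(2-\sqrt{2})\sqrt{\min\{a,b\}}$. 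Your sketch neither isolates the ``edge with both endpoints far from $p$'' structural fact nor the algebraic identity that fixes the constant $\tfrac{1}{2-\sqrt{2}}$.

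For the upper bound, the proposed resolution --- ``if $p$'s neighbors in $\Pi'$ were both far, one could reroute to increase the weight, contradicting optimality'' --- is backwards: the max-weight tour genuinely \emph{prefers} far neighbors for $p$, so there is no contradiction to extract. Your follow-up claim, that ``the weight difference between any two tours caused by relocating $p$ is $O(\sqrt{\median(p,S)})$,'' is also false: moving $p$ from between two far points to between two near points can drop the tour weight by $\Omega(\sqrt{\diameter})$, not $O(\sqrt{\median})$. The correct resolution in the paper is a four-edge swap: again by the counting argument (now with the weak inequality, $|A|\geq|B|$, so $|A\cup\{p\}|>|B|$), $\Pi'$ contains some edge $(p_1',p')$ with both endpoints in $A\cup\{p\}$; swapping the four edges $\{(p_1,p),(p,p_2),(p_1',p'),(p',p_2')\}$ for $\{(p_1,p'),(p',p_2),(p_1',p),(p,p_2')\}$ yields a new Hamiltonian tour $\Pi''$ in which $p$ has a neighbor $p_1'$ within $w$-distance $\sqrt{\median(p,S)}$, and each of the four replaced edges changes in length by at most $w(p,p')\leq\sqrt{\median(p,S)}$ via triangle inequality, so $w(\Pi'')\geq w(\Pi')-4\sqrt{\median(p,S)}$; then splicing $p$ out of $\Pi''$ loses at most an additional $2\sqrt{\median(p,S)}$ (Claim~\ref{:cl:getting-rid-of-p}). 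Without this swap construction the upper bound does not go through, so there is a genuine gap here beyond ``bookkeeping.''
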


In order to analyze the potential function defined in this section, the following observations will be useful.

\begin{observation}\label{:obs:edge-exists}
Let $G=(V,E)$ be a complete clique graph with at least two vertices, and let $\Pi$ be a TSP tour of $G$. If a partition $(A,B)$ of $V$ satisfies $|B|>|A|$, then there must be an edge $(p,p') \in E(\Pi)$ such that $p,p' \in B$.
\end{observation}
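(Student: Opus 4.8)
The plan is a straightforward double-counting argument on the edges of the tour $\Pi$. First I would dispose of the degenerate case $|V| = 2$: here the hypothesis $|B| > |A|$ forces $|A| = 0$ and $B = V$, and by the convention in \cref{:def:potential-for-sqrt-median} the tour $\Pi$ consists of the unique edge of $G$ taken twice, whose two endpoints both lie in $B = V$, so the claim holds trivially.

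For $|V| \ge 3$, the tour $\Pi$ is a genuine Hamiltonian cycle, so it has exactly $|V| = |A| + |B|$ edges and every vertex is incident to exactly two of them. The key step is to bound the number of edges of $\Pi$ that touch $A$: summing the degrees (within $\Pi$) of the vertices of $A$ gives $2|A|$, and this sum counts every edge with at least one endpoint in $A$ at least once, so at most $2|A|$ edges of $\Pi$ have an endpoint in $A$. Consequently the number of edges of $\Pi$ with \emph{both} endpoints in $B$ is at least $(|A| + |B|) - 2|A| = |B| - |A|$, which is strictly positive by hypothesis. Any such edge is the desired $(p,p') \in E(\Pi)$ with $p,p' \in B$.

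There is essentially no obstacle here — the only point requiring care is the $|V|=2$ convention for what a ``TSP tour'' means (the doubled edge), which is why I treat it as a separate case; everything else is bookkeeping. One could equivalently argue by walking around the cyclic vertex order and noting that $A$, having strictly fewer than half the vertices, cannot be incident to every edge of the cycle, but the degree-sum phrasing above is the cleanest.
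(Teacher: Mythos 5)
Your proof is correct and takes essentially the same degree-counting approach as the paper, including the same separate handling of the $|V|=2$ doubled-edge convention. The only cosmetic difference is that you bound the number of edges touching $A$ from above and subtract from $|V|$, whereas the paper compares $\sum_{v\in B}\deg_{E(\Pi)}(v)$ against $\sum_{v\in A}\deg_{E(\Pi)}(v)$ directly; both are the same double-counting argument.
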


\begin{observation}\label{:obs:triangle-inequality-in-sqrt-metric}
Given a metric space $(X,d)$ and the graph $G$ as defined in \cref{:def:potential-for-sqrt-median}, for every three points $p,p',p'' \in X$,
$w(p,p') \geq w(p',p'') - w(p,p'')$.
\end{observation}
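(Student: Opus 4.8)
The statement is exactly the claim that the square-root-transformed distance $w(a,b) = \sqrt{d(a,b)}$ still satisfies the triangle inequality, rearranged into the form $w(p,p') \geq w(p',p'') - w(p,p'')$. So the plan is to prove the equivalent inequality $w(p',p'') \leq w(p,p') + w(p,p'')$, i.e. $\sqrt{d(p',p'')} \leq \sqrt{d(p,p')} + \sqrt{d(p,p'')}$, and then rearrange.

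First I would invoke the triangle inequality in the original metric space $(X,d)$ to get $d(p',p'') \leq d(p,p') + d(p,p'')$. Since the square-root function is monotone non-decreasing on $\bbR_{\geq 0}$, this yields $\sqrt{d(p',p'')} \leq \sqrt{d(p,p') + d(p,p'')}$. Next I would use the elementary subadditivity of the square root, namely that $\sqrt{a+b} \leq \sqrt{a} + \sqrt{b}$ for all $a,b \geq 0$; this follows immediately by squaring both sides, since $a + b \leq a + b + 2\sqrt{ab} = (\sqrt{a}+\sqrt{b})^2$. Applying this with $a = d(p,p')$ and $b = d(p,p'')$ gives $\sqrt{d(p,p') + d(p,p'')} \leq \sqrt{d(p,p')} + \sqrt{d(p,p'')}$.

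Chaining the two displayed inequalities gives $\sqrt{d(p',p'')} \leq \sqrt{d(p,p')} + \sqrt{d(p,p'')}$, which by the definition $w(\cdot,\cdot) = \sqrt{d(\cdot,\cdot)}$ from \cref{:def:potential-for-sqrt-median} reads $w(p',p'') \leq w(p,p') + w(p,p'')$. Subtracting $w(p,p'')$ from both sides yields $w(p,p') \geq w(p',p'') - w(p,p'')$, as required.

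There is essentially no hard part here: the only two ingredients are the triangle inequality for $d$ (given) and the subadditivity of $\sqrt{\cdot}$ (a one-line squaring argument), so the whole proof is a two-step chain followed by a trivial rearrangement. If anything needs care, it is only making sure the monotonicity step and the subadditivity step are applied in the right order on non-negative arguments, which they are since distances are non-negative.
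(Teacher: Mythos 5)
Your proof is correct and follows essentially the same route as the paper: both invoke the triangle inequality for $d$, then the subadditivity $\sqrt{a+b}\le\sqrt{a}+\sqrt{b}$, and rearrange. The only cosmetic difference is that you spell out the one-line squaring justification for subadditivity, which the paper leaves implicit.
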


\begin{observation}\label{:obs:bound-on-cluster-median-potential}
For every metric space $(X,d)$ and every non-empty subset $C \subseteq X$,
\[
 \sqrt{\diameter(C)} \leq \Phi_{\sqrt{\median}}(C) \leq O(|C| \cdot \sqrt{\diameter(C)}).
\]
\end{observation}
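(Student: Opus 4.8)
The plan is to reduce the statement to elementary bounds on the maximum TSP tour length of $G[C]$, treating the degenerate cases $|C|=1$ and $|C|=2$ by hand. First I would dispose of $|C|=1$: there $\diameter(C)=0$ and $\Phi_{\sqrt{\median}}(C)=0$ by \cref{:def:potential-for-sqrt-median}, so both claimed inequalities read $0\le 0\le 0$. For the rest assume $|C|\ge 2$; since $G[C]$ is a finite complete graph, the maximum over Hamiltonian cycles (TSP tours) $\Pi^*$ is well defined, and by definition $\Phi_{\sqrt{\median}}(C)=\frac{1}{2-\sqrt 2}\,w(\Pi^*)$.

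For the lower bound I would pick a pair $p,p'\in C$ with $d(p,p')=\diameter(C)$ and exhibit a single tour through the edge $(p,p')$. If $|C|=2$ the degenerate tour traverses $(p,p')$ twice, so $w(\Pi^*)=2\sqrt{d(p,p')}=2\sqrt{\diameter(C)}$; if $|C|\ge 3$, listing the remaining points of $C$ as $v_3,\dots,v_m$ and taking the cycle $p\to p'\to v_3\to\cdots\to v_m\to p$ gives a tour of length at least $w(p,p')=\sqrt{\diameter(C)}$, since all edge lengths $w(\cdot,\cdot)=\sqrt{d(\cdot,\cdot)}$ are nonnegative. Hence $w(\Pi^*)\ge\sqrt{\diameter(C)}$ in all cases, and since $\frac{1}{2-\sqrt 2}\approx 1.707\ge 1$ we get $\Phi_{\sqrt{\median}}(C)\ge\sqrt{\diameter(C)}$.

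For the upper bound I would observe that every edge $(p_1,p_2)$ of $G[C]$ has length $w(p_1,p_2)=\sqrt{d(p_1,p_2)}\le\sqrt{\diameter(C)}$, and that any TSP tour of $G[C]$ uses exactly $|C|$ edges counted with multiplicity (two copies of the single edge when $|C|=2$). Therefore $w(\Pi^*)\le |C|\cdot\sqrt{\diameter(C)}$, so $\Phi_{\sqrt{\median}}(C)=\frac{1}{2-\sqrt 2}\,w(\Pi^*)\le\frac{1}{2-\sqrt 2}\,|C|\cdot\sqrt{\diameter(C)}=O(|C|\cdot\sqrt{\diameter(C)})$, which completes the argument.

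I do not expect any real obstacle here: the proof is routine. The only two points that need a moment of care are the degenerate $|C|=2$ convention for tours (so that both bounds are stated cleanly in terms of ``$|C|$ edges'') and the sanity check that the normalizing constant $\frac{1}{2-\sqrt 2}$ is at least $1$, so that keeping the tour through the diameter edge already suffices for the lower bound without any further slack.
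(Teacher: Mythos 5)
Your proof is correct and follows essentially the same approach as the paper: handle $|C|=1$ trivially, lower-bound the optimal tour by any tour through a diameter edge and use that $\frac{1}{2-\sqrt 2}\ge 1$, and upper-bound it by counting $|C|$ edges each of length at most $\sqrt{\diameter(C)}$. The only cosmetic difference is that you spell out the $|C|=2$ degenerate-tour convention explicitly for the lower bound, which the paper absorbs into the phrase ``there exists a TSP tour of $C$ that uses the edge $(p,p')$.''
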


\begin{corollary}\label{:cor:bound-on-median-potential-of-clustering}
For every metric space $(X,d)$ and every non-empty subset $\cC$,
\[
 \max_{C \in \cC} \sqrt{\diameter(C)} \leq \sum_{C \in \cC} \Phi_{\sqrt{\median}}(C) \leq O(n \cdot \max_{C \in \cC} \sqrt{\diameter(C)}).
\]
\end{corollary}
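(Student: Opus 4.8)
The corollary is a direct summation of the per‑cluster bounds in \cref{:obs:bound-on-cluster-median-potential}, using only that the clusters of $\cC$ partition $X$. The plan is as follows. Fix a metric space $(X,d)$ and a clustering $\cC$, and let $C_{\mathrm{big}} \in \cC$ be a cluster attaining $\diameter(C_{\mathrm{big}}) = \max_{C \in \cC} \diameter(C)$.

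For the lower bound, since every term $\Phi_{\sqrt{\median}}(C)$ is non‑negative (it is a length of a TSP tour, divided by a positive constant, or $0$), I would drop all summands except the one for $C_{\mathrm{big}}$ and then apply the left inequality of \cref{:obs:bound-on-cluster-median-potential} to that cluster:
\[
 \sum_{C \in \cC} \Phi_{\sqrt{\median}}(C) \;\geq\; \Phi_{\sqrt{\median}}(C_{\mathrm{big}}) \;\geq\; \sqrt{\diameter(C_{\mathrm{big}})} \;=\; \max_{C \in \cC}\sqrt{\diameter(C)}.
\]

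For the upper bound, I would apply the right inequality of \cref{:obs:bound-on-cluster-median-potential} termwise, bound each $\sqrt{\diameter(C)}$ by $\max_{C' \in \cC}\sqrt{\diameter(C')}$, and then use $\sum_{C \in \cC}|C| = |X| = n$ because $\cC$ partitions $X$:
\[
 \sum_{C \in \cC} \Phi_{\sqrt{\median}}(C) \;\leq\; \sum_{C \in \cC} O\!\left(|C|\cdot\sqrt{\diameter(C)}\right) \;\leq\; O\!\left(\max_{C \in \cC}\sqrt{\diameter(C)}\right)\sum_{C \in \cC}|C| \;=\; O\!\left(n\cdot\max_{C \in \cC}\sqrt{\diameter(C)}\right).
\]
Combining the two displays gives the claim. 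There is no real obstacle here; the only thing to be slightly careful about is the degenerate edge cases (e.g.\ singleton clusters, where $\diameter = 0$ and $\Phi_{\sqrt{\median}} = 0$), but these are already covered by \cref{:obs:bound-on-cluster-median-potential} as stated, so no separate argument is needed.
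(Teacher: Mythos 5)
Your proof is correct and matches the paper's proof essentially step for step: both argue the upper bound by applying \cref{:obs:bound-on-cluster-median-potential} termwise and using $\sum_{C \in \cC}|C| = n$, and both argue the lower bound by restricting attention to the cluster of maximum diameter and using non-negativity of the potential.
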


\subsubsection{Proofs of Observations}
In this section, we prove \cref{:obs:edge-exists} and \cref{:obs:triangle-inequality-in-sqrt-metric}.

\begin{proof}[Proof of \cref{:obs:edge-exists}]
If the clique graph $G=(V,E)$ has exactly two vertices, then the TSP tour $\Pi$ must take the edge between these two vertices, and the condition $|B|>|A|$ implies that $B=V$, which means that this edge has both its endpoints in $B$.
So, for the rest of the proof, we can assume that $|V|\neq 2$.
Then, by the condition $|V|\geq 2$, we have $|V|> 2$.
Therefore, each vertex $v \in B$ must have at least two incident edges of $E(\Pi)$.
So, since $|B| > |A|$ and since each vertex of $A$ can have at most two incident edges of $E(\Pi)$,
\[
 \sum_{v \in B} \mathrm{deg}_{E(\Pi)}(v) \geq 2|B|>2|A|\geq \sum_{v \in A} \mathrm{deg}_{E(\Pi)}(v).
\]
Since $(A,B)$ is a partition of $V$, the above inequality implies that there must exist some edge of $E(\Pi)$ whose endpoints are both in $B$.
\end{proof}

\begin{proof}[Proof of \cref{:obs:triangle-inequality-in-sqrt-metric}]
By the definition of the lengths $w$ and by the triangle inequality 
\[
 w(p',p'') = \sqrt{d(p',p'')} \leq \sqrt{d(p',p) + d(p,p'')}
\]
so, by the inequality $\sqrt{a+b} \leq \sqrt{a}+\sqrt{b}$, and by again using the definitions of the lengths $w$,
\begin{align*}
    w(p',p'')
    \leq \sqrt{d(p',p) + d(p,p'')}
    \leq \sqrt{d(p',p)} + \sqrt{d(p,p'')}
     = w(p',p) + w(p,p'').
\end{align*}
\end{proof}

\begin{proof}[Proof of \cref{:obs:bound-on-cluster-median-potential}]
Firstly, if $|C|=1$, then $\Phi_{\sqrt{\median}}(C)$ and $\sqrt{\diameter(C)}$ are both zero, so the inequalities hold.
Thus, for the rest of the proof, we assume $|C|\geq 2$.
So, let $p,p' \in C$ be two point in $C$ such that $d(p,p')=\diameter(C)$.
Then, there exists a TSP tour of $C$ that uses the edge $(p,p')$, which means that
\[
 w(p,p')
 \leq \max_{\text{TSP tour $\Pi$ of $G[C]$}} w(\Pi)
 \leq \Phi_{\sqrt{\median}}(C).
\]
So, since we picked the point $p,p'$ such that $d(p,p')=\diameter(C)$,
\begin{align*}
    \sqrt{\diameter(C)} = \sqrt{d(p,p')} = w(p,p') \leq \Phi_{\sqrt{\median}}(C).
\end{align*}
Furthermore, for every TSP tour $\Pi$ of $G[C]$, since $\Pi$ contains exactly $|C|$ edges, and the length of each edge $e \in E(\Pi)$ is $w(e)=\sqrt{d(e)}\leq\sqrt{\diameter(C)}$, we get that every TSP tour $\Pi$ of $G[C]$ has $w(\Pi) \leq |C| \cdot \sqrt{\diameter(C)}$, which exactly means that
\begin{align*}
    \Phi_{\sqrt{\median}}(C)
    \leq \frac{1}{2-\sqrt{2}}\cdot|C| \cdot \sqrt{\diameter(C)}
    = O(|C| \cdot \sqrt{\diameter(C)}).
\end{align*}
Thus, we proved both sides of the desired inequality from \cref{:obs:bound-on-cluster-median-potential}, which concludes the proof of the observation.
\end{proof}

\begin{proof}[Proof of \cref{:cor:bound-on-median-potential-of-clustering}]
By \cref{:obs:bound-on-cluster-median-potential}, each cluster $C \in \cC$ satisfies
\[
 \Phi_{\sqrt{\median}}(C) \leq O(|C| \cdot \sqrt{\diameter(C')})
 \leq O(|C| \cdot \max_{C' \in \cC} \sqrt{\diameter(C')}).
\]
So,
\begin{align*}
    \sum_{C \in \cC} \Phi_{\sqrt{\median}}(C)
    \leq \sum_{C \in \cC} O(|C| \cdot \max_{C' \in \cC} \sqrt{\diameter(C')})
    &\leq \left(\sum_{C \in \cC} |C|\right) \cdot O(\max_{C' \in \cC} \sqrt{\diameter(C')})\\
    &= O(n \cdot \max_{C' \in \cC} \sqrt{\diameter(C')})\\
    &= O(n \cdot \max_{C \in \cC} \sqrt{\diameter(C)})
\end{align*}
Furthermore, letting $C_{\max}$ be the cluster with largest diameter in $\cC$, then using \cref{:obs:bound-on-cluster-median-potential},
\begin{align*}
    \max_{C \in \cC} \sqrt{\diameter(C)}
    = \sqrt{\diameter(C_{\max})}
    \leq \Phi_{\sqrt{\median}}(C_{\max})
    \leq \sum_{C \in \cC} \Phi_{\sqrt{\median}}(C).
\end{align*}
Thus, we proved both sides of the desired inequality from \cref{:cor:bound-on-median-potential-of-clustering}, concluding the proof of the corollary.
\end{proof}

\subsection{Proof of \texorpdfstring{\cref{:thm:potential-for-sqrt-median}}{Theorem 12}} \label{:sec:proof-of-potential-function-for-sqrt-median}
We being by proving the following claim.

\begin{claim}\label{:cl:getting-rid-of-p}
Given a metric space $(X,d)$, a set $S \subset X$ with $|S|\geq2$, a point $p \in X \setminus S$, and a TSP tour $\Pi_{S \cup \{p\}}$ of the graph $G[S \cup \{p\}]$ as defined in \cref{:def:potential-for-sqrt-median} and \cref{:thm:potential-for-sqrt-median}.
If there exists a point $p' \in S$ such that $d(p,p') \leq \median(p,S)$ and the edge $(p',p)$ belongs to the TSP tour $\Pi_{S \cup \{p\}}$, then there exists a TSP tour $\Pi_{S}$ of $G[S]$ such that $w(\Pi_{S}) \geq w(\Pi_{S \cup \{p\}}) - O(1) \cdot \sqrt{\median(p,S)}$.
\end{claim}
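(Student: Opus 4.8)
The plan is the classical \emph{shortcutting} argument: starting from the tour $\Pi_{S\cup\{p\}}$, delete the vertex $p$ and reconnect its two neighbours by a single edge, obtaining a tour of $G[S]$ whose length has dropped by only $O(1)\cdot\sqrt{\median(p,S)}$. (This claim is the ``upper-bound half'' needed for \cref{:thm:potential-for-sqrt-median}: combined with the fact that a maximum tour of $G[S\cup\{p\}]$ can be taken to use a short edge at $p$, it will give $\Phi_{\sqrt{\median}}(S\cup\{p\})-\Phi_{\sqrt{\median}}(S)\le O(1)\sqrt{\median(p,S)}$.)

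First I would pin down the two edges of $\Pi_{S\cup\{p\}}$ incident to $p$. Since $|S|\ge 2$ we have $|S\cup\{p\}|\ge 3$, so $p$ has degree exactly $2$ in the tour and its two neighbours are distinct: one is the given $p'\in S$ with $d(p,p')\le\median(p,S)$, and the other is some $p''\in S\setminus\{p'\}$. Define $\Pi_S$ to be obtained from $\Pi_{S\cup\{p\}}$ by removing the edges $(p',p)$ and $(p,p'')$ and inserting the edge $(p',p'')$. For $|S|\ge 3$ this is a genuine Hamiltonian cycle on $S$; for $|S|=2$ it is the closed walk traversing $(p',p'')$ twice, which is exactly the TSP-tour convention for a two-vertex clique from the footnote of \cref{:def:potential-for-sqrt-median}. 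In both cases one checks directly (for $|S|=2$ by expanding $w(\Pi_{S\cup\{p\}})$ as the sum of the three triangle edges) that
\[
 w(\Pi_S) = w(\Pi_{S\cup\{p\}}) - w(p',p) - w(p,p'') + w(p',p'').
\]

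Next I would lower-bound $w(p',p'')$ by the triangle inequality in the square-root metric, \cref{:obs:triangle-inequality-in-sqrt-metric}, applied to the triple $p',p'',p$, which gives $w(p',p'')\ge w(p,p'') - w(p,p')$. Substituting into the displayed identity, the two copies of $w(p,p'')$ cancel and
\[
 w(\Pi_S) \ \ge\ w(\Pi_{S\cup\{p\}}) - 2\,w(p,p') \ =\ w(\Pi_{S\cup\{p\}}) - 2\sqrt{d(p,p')} \ \ge\ w(\Pi_{S\cup\{p\}}) - 2\sqrt{\median(p,S)},
\]
where the last step uses the hypothesis $d(p,p')\le\median(p,S)$. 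Since $2=O(1)$, this is the asserted inequality.

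There is no real obstacle here; once the shortcut is set up the estimate is three lines. The only point needing a moment of care is the degenerate case $|S|=2$, where ``deleting a vertex from a triangle'' yields a doubled edge rather than an ordinary cycle — I would dispose of it by observing that the length identity above holds verbatim in that case (and is consistent with the doubled-edge convention in \cref{:def:potential-for-sqrt-median}), so the remaining computation is unchanged.
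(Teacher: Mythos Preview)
Your proposal is correct and follows essentially the same shortcutting argument as the paper: delete $p$, reconnect its two tour-neighbours $p'$ and $p''$ by the single edge $(p',p'')$, and bound the loss via \cref{:obs:triangle-inequality-in-sqrt-metric} to get $w(\Pi_S)\ge w(\Pi_{S\cup\{p\}})-2\sqrt{d(p,p')}\ge w(\Pi_{S\cup\{p\}})-2\sqrt{\median(p,S)}$. Your explicit treatment of the $|S|=2$ case (where the shortcut yields the doubled edge) is a nice touch that the paper leaves implicit.
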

\begin{proof}
Since $|S| \geq 2$, there must be some point $p'' \in S$ other than $p'$ such that $(p,p'') \in E(\Pi_{S \cup \{p\}})$.
To construct the tour $\Pi_{S}$ to be like $\Pi_{S \cup \{p\}}$ except the two edges $(p',p)$ and $(p,p'')$ are replaced by the single edge $(p',p'')$, thus skipping over the point $p$.
Therefore,
\[
 w(\Pi_{S}) - w(\Pi_{S \cup \{p\}})
 = w(p',p'') - w(p',p) - w(p,p'').
\]
So, by \cref{:obs:triangle-inequality-in-sqrt-metric},
\[
 w(\Pi_{S}) - w(\Pi_{S \cup \{p\}})
 \leq -2w(p',p)
 = -2\sqrt{d(p',p)}.
\]
Therefore, since we are given that $d(p,p') \leq \median(p,S)$,
\begin{align*}
    w(\Pi_{S})
    \geq w(\Pi_{S \cup \{p\}}) - 2\sqrt{d(p',p)} 
    \geq w(\Pi_{S \cup \{p\}}) - 2\sqrt{\median(p,S)},
\end{align*}
as we needed to prove.
\end{proof}

We are now ready to prove \cref{:thm:potential-for-sqrt-median}.

\begin{proof}[Proof of \cref{:thm:potential-for-sqrt-median}]
Let $(X,d)$, $S$, and $p$ be as in the theorem, and let $G$ be defined as in \cref{:def:potential-for-sqrt-median}. We will first deal with the case of $|S|=1$, and then separately prove each of the two inequalities in \cref{:thm:potential-for-sqrt-median} when $|S| \geq 2$.

\paragraph{Case $|S|=1$.}
In this case there exists some $p' \in X$ such that $S=\{p'\}$, so $\Phi_{\sqrt{\median}}(S)=0$ and $\Phi_{\sqrt{\median}}(S \cup \{p\})=\frac{1}{2-\sqrt{2}}\cdot 2\sqrt{d(p,p')}$, while $\median(p,S)=d(p,p')$.
This immediately implies that $\sqrt{\median(p,S)} \leq \Phi_{\sqrt{\median}}(S \cup \{p\}) - \Phi_{\sqrt{\median}}(S) \leq O(1) \cdot \sqrt{\median(p,S)}$, as we needed.

\paragraph{Left-Hand Side Inequality.}
In this paragraph, our goal is to prove the left-hand side inequality from \cref{:thm:potential-for-sqrt-median}, under the additional assumption $|S|\geq 2$.
In other words, we need to show that $\Phi_{\sqrt{\median}}(S) + \sqrt{\median(p,S)} \leq \Phi_{\sqrt{\median}}(S \cup \{p\})$.
So, let $\Pi_{S}$ be the maximum-length TSP tour in $G[S]$, which means that $\Phi_{\sqrt{\median}}(S) = \frac{1}{2 - \sqrt{2}}\cdot w(\Pi_{S})$.
We need to construct a TSP tour $\Pi_{S \cup \{p\}}$ of $G[S \cup \{p\}]$ with $\frac{1}{2 - \sqrt{2}}\cdot w(\Pi_{S \cup \{p\}}) \geq \frac{1}{2 - \sqrt{2}}\cdot w(\Pi_{S}) + \sqrt{\median(p,S)}$.
Let $A \defeq \{p' \in S \mid d(p',p) < \median(p,S)\}$ and let $B \defeq \{p' \in S \mid d(p',p) \geq \median(p,S)\}$.
By the definition of $\median(p,S)$, $|B| > |A|$. (See \cref{:thm:constant-median-IP}.)
Since $(A,B)$ is a partition of $S$ with $|S|\geq 2$ and $|B|>|A|$, by \cref{:obs:edge-exists}, there must be some edge $(p_1,p_2) \in E(\Pi_{S})$ such that $p_1,p_2 \in B$.
We build the TSP tour $\Pi_{S \cup \{p\}}$ in $G[S \cup \{p\}]$ by starting with $\Pi_{S}$, then removing the edge $(p_1,p_2)$ and instead adding the edges $(p_1,p)$ and $(p,p_2)$.
It's not hard to see that
\begin{align*}
    w(\Pi_{S \cup \{p\}}) - w(\Pi_{S})
    = w(p_1,p) + w(p,p_2) - w(p_1,p_2)
    = \sqrt{d(p_1,p)} + \sqrt{d(p,p_2)} - \sqrt{d(p_1,p_2)}.
\end{align*}
Thus, by the triangle inequality,
\[
 w(\Pi_{S \cup \{p\}}) - w(\Pi_{S})
 \geq \sqrt{d(p_1,p)} + \sqrt{d(p,p_2)} - \sqrt{d(p_1,p) + d(p,p_2)}.
\]
So, using the inequality $\sqrt{a}+\sqrt{b} - \sqrt{a+b} - (2-\sqrt{2})\sqrt{\min\{a,b\}} \geq 0$,\footnote{It's very easy to see that this inequality holds for all $a,b\geq 0$. The desired values of $a$ and $b$ can always be achieved by first starting with setting $a=b$ and then increasing only one of $a$,$b$. Setting $a=b$ causes exact equality to hold, then increasing one of them can only increase the left-hand side expression, since $\min\{a,b\}$ does not change.}
\[
 w(\Pi_{S \cup \{p\}}) - w(\Pi_{S})
 \geq (2-\sqrt{2})\sqrt{\min\{d(p_1,p),d(p,p_2)\}}.
\]
Since $p_1,p_2 \in B$, they satisfy $\min\{d(p_1,p),d(p,p_2)\}\geq \median(p,S)$, so
\[
  w(\Pi_{S \cup \{p\}}) - w(\Pi_{S})
 \geq (2-\sqrt{2})\sqrt{\median(p,S)},
\]
which exactly gives the inequality
\[
 \frac{1}{2 - \sqrt{2}}\cdot w(\Pi_{S \cup \{p\}})
 \geq \frac{1}{2 - \sqrt{2}}\cdot w(\Pi_{S}) + \sqrt{\median(p,S)},
\]
as we needed.
This concludes the proof of $\sqrt{\median(p,S)} \leq \Phi_{\sqrt{\median}}(S \cup \{p\}) - \Phi_{\sqrt{\median}}(S)$.

\paragraph{Right-Hand Side Inequality}
In this paragraph, our goal is to prove the right-hand side inequality from \cref{:thm:potential-for-sqrt-median}, under the additional assumption $|S|\geq 2$.
In other words, we need to show that $\Phi_{\sqrt{\median}}(S \cup \{p\}) - O(1)\cdot\sqrt{\median(p,S)} \leq \Phi_{\sqrt{\median}}(S)$.
So, let $\Pi_{S \cup \{p\}}$ be the maximum-length TSP tour in $G[S \cup \{p\}]$, which means that $\Phi_{\sqrt{\median}}(S \cup \{p\}) = \frac{1}{2 - \sqrt{2}}\cdot w(\Pi_{S \cup \{p\}})$.
We need to show that there exists a TSP tour $\Pi_{S}$ of $G[S]$ with $\frac{1}{2 - \sqrt{2}}\cdot w(\Pi_{S}) \geq \frac{1}{2 - \sqrt{2}}\cdot w(\Pi_{S \cup \{p\}}) - O(1)\cdot \sqrt{\median(p,S)}$, which is an equivalent condition to $w(\Pi_{S}) \geq w(\Pi_{S \cup \{p\}}) - O(1)\cdot \sqrt{\median(p,S)}$.
To do this, we will construct a TSP tour $\Pi_{S \cup \{p\}}'$ of $G[S \cup \{p\}]$ that has the property required by \cref{:cl:getting-rid-of-p} and satisfies $w(\Pi_{S \cup \{p\}}') \geq w(\Pi_{S \cup \{p\}}) - O(1)\cdot \sqrt{\median(p,S)}$.

Firstly, if the tour $\Pi_{S \cup \{p\}}$ itself has the property required by \cref{:cl:getting-rid-of-p}, then the claim implies the existence of a tour $\Pi_{S}$ as we need, and then we are done.
So, for the rest of the proof, we assume that the tour $\Pi_{S \cup \{p\}}$ doesn't satisfy the property from \cref{:cl:getting-rid-of-p}, meaning that there is no point $p' \in S$ such that both $(p,p') \in E(\Pi_{S \cup \{p\}})$ and $d(p,p') \leq \median(p,S)$ hold.
Since $|S|\geq 2$, this implies that there are two distinct points $p_1,p_2 \in S$ such that $(p_1,p),(p,p_2) \in E(\Pi_{S \cup \{p\}})$ and $d(p,p_1),d(p,p_2) > \median(p,S)$.
Let $A \defeq \{p' \in S \mid d(p',p) \leq \median(p,S)\}$ and let $B \defeq \{p' \in S \mid d(p',p) > \median(p,S)\}$. (Notice that unlike in the previous section of the proof, the inequality in the definition of $B$ is strict and the one in the definition of $A$ is not.)
Therefore, by the definition of $\median(p,S)$, $|A| \geq |B|$, which means that $|A \cup \{p\}| > |B|$.
Thus, by \cref{:obs:edge-exists}, there must be some edge $(p_1',p') \in \Pi_{S \cup \{p\}}$ such that $p_1',p' \in A \cup \{p\}$.
Since $|S \cup \{p\}| \geq 3$, there must be some point $p_2'$ other than $p_1'$ such that $(p',p_2') \in E(\Pi_{S \cup \{p\}})$.
Furthermore, by the assumption $d(p,p_1),d(p,p_2) > \median(p,S)$, we have $p_1,p_2 \in B$.
Therefore, both incident edges of $p$ in $E(\Pi_{S \cup \{p\}})$ lead from it to a point $B$, while each of $p_1',p',p_2'$ has an incident edge of $E(\Pi_{S \cup \{p\}})$ leading from them to a point outside of $B$, so $p \notin \{p_1',p',p_2'\}$.
This means that the edges $(p_1,p),(p,p_2)$ are both distinct from the edges $(p_1',p'),(p',p_2')$.
Now, we construct the TSP tour $\Pi_{S \cup \{p\}}'$ of $G[S \cup \{p\}]$ to be like $\Pi_{S \cup \{p\}}$ except that the edges $(p_1,p),(p,p_2),(p_1',p'),(p',p_2')$ are replaced by $(p_1,p'),(p',p_2),(p_1',p),(p,p_2')$.
Then, the tour $\Pi_{S \cup \{p\}}'$ satisfies the property from \cref{:cl:getting-rid-of-p} because $(p_1',p) \in E(\Pi_{S \cup \{p\}}')$ and because $p_1' \in A \cup \{p\}$ means that $d(p,p_1') \leq \median(p,S)$.
Furthermore,
\begin{align*}
    &w(\Pi_{S \cup \{p\}}') - w(\Pi_{S \cup \{p\}}) \\
    = &w(p_1,p')+w(p',p_2)+w(p_1',p)+w(p,p_2') - w(p_1,p)-w(p,p_2)-w(p_1',p')-w(p',p_2')\\
    = &(w(p_1,p')-w(p_1,p)) + (w(p',p_2)-w(p,p_2)) + (w(p_1',p)-w(p_1',p')) + (w(p,p_2')-w(p,p_2')).
\end{align*}
Thus, by \cref{:obs:triangle-inequality-in-sqrt-metric},
\begin{align*}
    w(\Pi_{S \cup \{p\}}') - w(\Pi_{S \cup \{p\}})
    \geq (-w(p',p)) + (-w(p',p)) + (-w(p',p)) + (-w(p',p))
    &= -4w(p',p) \\
    &= -4\sqrt{d(p',p)}.
\end{align*}
So, since $p' \in A \cup \{p\}$,
\begin{align*}
    w(\Pi_{S \cup \{p\}}')
    \geq w(\Pi_{S \cup \{p\}}) - 4\sqrt{d(p',p)}
    &\geq w(\Pi_{S \cup \{p\}}) - 4\sqrt{\median(p,S)} \\
    &= w(\Pi_{S \cup \{p\}}) - O(\sqrt{\median(p,S)}).
\end{align*}

In summary, we proved that there exists a TSP tour $\Pi_{S \cup \{p\}}'$ of $G[S \cup \{p\}]$ that satisfies the conditions of \cref{:cl:getting-rid-of-p} and also satisfies $w(\Pi_{S \cup \{p\}}') \geq w(\Pi_{S \cup \{p\}}) - O(\sqrt{\median(p,S)})$.
So, by \cref{:cl:getting-rid-of-p}, there exists a TSP tour $\Pi_{S}$ of $G[S]$ with
\begin{align*}
    w(\Pi_{S})
    \geq w(\Pi_{S \cup \{p\}}') - O(\sqrt{\median(p,S)})
    \geq w(\Pi_{S \cup \{p\}}) - O(\sqrt{\median(p,S)}).
\end{align*}
This implies that the TSP tour $w(\Pi_{S})$ satisfies
\begin{align*}
    \frac{1}{2-\sqrt{2}}w(\Pi_{S})
    &\geq \frac{1}{2-\sqrt{2}}w(\Pi_{S \cup \{p\}}') - O(\sqrt{\median(p,S)}) \\
    &= \Phi_{\sqrt{\median}}(S \cup \{p\}) - O(\sqrt{\median(p,S)}),
\end{align*}
and thus that $\Phi_{\sqrt{\median}}(S) \geq \Phi_{\sqrt{\median}}(S \cup \{p\}) - O(\sqrt{\median(p,S)})$, as needed.
\end{proof}

\subsection{Split Procedure for \texorpdfstring{$\sqrt{\median}$}{sqrt(median)}-IP}

In this section, we prove the following lemma.

\begin{lemma}\label{:lem:split-procedure-for-median-IP}
There exists a deterministic polynomial-time algorithm that, given a metric space $(X,d)$ and a clustering $\cC$ of $X$, finds a cluster $C^* \in \cC$ and a partition $(C^*_1,C^*_2)$ of $C^*$ with $\Phi_{\sqrt{\median}}(C^*_1) + \Phi_{\sqrt{\median}}(C^*_1) \leq \Phi_{\sqrt{\median}}(C^*) - \left(\sum_{C \in \cC} \Phi_{\sqrt{\median}}(C)\right)/\poly(n)$.
\end{lemma}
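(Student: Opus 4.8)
The plan is to mirror the structure of Lemma~\ref{:lem:slow-split}: pick a ``heavy'' cluster $C^*\in\cC$ and split off a single, carefully chosen point. The twist is that we cannot compute $\Phi_{\sqrt{\median}}$ exactly (it is a max-TSP length, hence NP-hard), so the algorithm will use cluster diameters as proxies; this is legitimate because Observation~\ref{:obs:bound-on-cluster-median-potential} shows $\sqrt{\diameter(C)}\le \Phi_{\sqrt{\median}}(C)\le O(|C|\cdot\sqrt{\diameter(C)})$, and all the genuine $\Phi_{\sqrt{\median}}$-estimates we need in the \emph{analysis} are supplied for free by Observation~\ref{:obs:bound-on-cluster-median-potential} and Theorem~\ref{:thm:potential-for-sqrt-median}.

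Step 1 (choosing $C^*$). Among all clusters of $\cC$ of size at least $2$, let $C^*$ be one of maximum diameter (if every cluster is a singleton, then $\cC$ is trivially $(\alpha,f)$-IP stable and {\sc Split} is never invoked, exactly as in Algorithm~\ref{:alg:split-procedure}). Applying Observation~\ref{:obs:bound-on-cluster-median-potential} to $C^*$ and to an arbitrary cluster $C$, and using $\diameter(C)\le\diameter(C^*)$, $|C|\le n$, and $\sqrt{\diameter(C^*)}\le\Phi_{\sqrt{\median}}(C^*)$, one gets $\Phi_{\sqrt{\median}}(C)\le O(n)\cdot\Phi_{\sqrt{\median}}(C^*)$ (with the singleton case trivial since then $\Phi_{\sqrt{\median}}(C)=0$). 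Summing over the at most $n$ clusters yields $\Phi_{\sqrt{\median}}(C^*)\ge \bigl(\sum_{C\in\cC}\Phi_{\sqrt{\median}}(C)\bigr)/\poly(n)$.

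Step 2 (choosing the point to remove). Let $p,q\in C^*$ realize $\diameter(C^*)$, and let $\hat p\in C^*$ be a point maximizing $\median(\hat p,\,C^*\setminus\{\hat p\})$ (all these medians are computable in polynomial time). The algorithm outputs $C^*$ together with the partition $(C^*_1,C^*_2)=(C^*\setminus\{\hat p\},\{\hat p\})$. The combinatorial core is the claim $\median(\hat p,\,C^*\setminus\{\hat p\})>\diameter(C^*)/4$. Set $t:=\diameter(C^*)/4$. The sets $\{r\in C^*\setminus\{p,q\}: d(p,r)\le t\}$ and $\{r\in C^*\setminus\{p,q\}: d(q,r)\le t\}$ are disjoint, since a common element would force $d(p,q)\le 2t<\diameter(C^*)$; hence one of them — say the one around $q$ — has at most $\lfloor(|C^*|-2)/2\rfloor$ elements. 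Therefore at least $\lceil|C^*|/2\rceil$ of the $|C^*|-1$ points of $C^*\setminus\{q\}$ (counting $p$ itself, which is at distance $\diameter(C^*)>t$ from $q$) lie at distance $>t$ from $q$, and a short parity check shows this already forces the $\lceil(|C^*|-1)/2\rceil$-th smallest of the distances $\{d(q,r)\}_{r\in C^*\setminus\{q\}}$ — i.e.\ $\median(q,\,C^*\setminus\{q\})$ — to exceed $t$; the choice of $\hat p$ then gives $\median(\hat p,\,C^*\setminus\{\hat p\})\ge\median(q,\,C^*\setminus\{q\})>t$. (If $|C^*|=2$ this is immediate: $C^*_1$ is a singleton and $\median(\hat p,C^*_1)=\diameter(C^*)$.)

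Step 3 (putting it together). Since $\Phi_{\sqrt{\median}}(\{\hat p\})=0$, applying the lower bound of Theorem~\ref{:thm:potential-for-sqrt-median} with $S=C^*\setminus\{\hat p\}$ gives
\[
 \Phi_{\sqrt{\median}}(C^*)-\Phi_{\sqrt{\median}}(C^*_1)-\Phi_{\sqrt{\median}}(C^*_2)\;\ge\;\sqrt{\median(\hat p,\,C^*\setminus\{\hat p\})}\;>\;\frac{1}{2}\sqrt{\diameter(C^*)}.
\]
By Observation~\ref{:obs:bound-on-cluster-median-potential}, $\sqrt{\diameter(C^*)}\ge\Omega(\Phi_{\sqrt{\median}}(C^*)/|C^*|)=\Omega(\Phi_{\sqrt{\median}}(C^*)/n)$, and combining this with Step~1 bounds the left-hand side below by $\bigl(\sum_{C\in\cC}\Phi_{\sqrt{\median}}(C)\bigr)/\poly(n)$, which is exactly the required guarantee. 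Computing diameters, the extremal pair $p,q$, and all relevant medians is clearly polynomial, so the algorithm runs in polynomial time. The only step that needs genuine care is the combinatorial claim in Step~2 — specifically getting the floors and ceilings right so that ``$\ge\lceil|C^*|/2\rceil$ far points'' really does force the median to be large for both parities of $|C^*|$; everything else is a direct invocation of results already established in this section, in close analogy with the proof of Lemma~\ref{:lem:slow-split}.
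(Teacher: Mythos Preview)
Your proposal is correct and follows essentially the same algorithm and proof outline as the paper: select the cluster $C^*$ of maximum diameter, split off a single point with large median distance to the rest of $C^*$, and combine Theorem~\ref{:thm:potential-for-sqrt-median} with the diameter--potential comparisons of Observation~\ref{:obs:bound-on-cluster-median-potential} (the paper packages the latter as Corollary~\ref{:cor:bound-on-median-potential-of-clustering}).

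The only substantive difference is the combinatorial step bounding the chosen point's median in terms of $\diameter(C^*)$. The paper isolates this as a standalone statement (Claim~\ref{:cl:exists-large-median}): for the diameter pair $p,p'$ one has $\median(p,C^*\setminus\{p\})+\median(p',C^*\setminus\{p'\})\ge d(p,p')$, proved by observing that the two ``close half'' sets must intersect and applying the triangle inequality at the common point; this gives the larger median $\ge \diameter(C^*)/2$. You instead argue that the radius-$\diameter(C^*)/4$ balls about $p$ and $q$ are disjoint in $C^*\setminus\{p,q\}$, so one of them is small and the corresponding median exceeds $\diameter(C^*)/4$. Both are short pigeonhole arguments; the paper's version is marginally cleaner and yields a better constant, but this is immaterial to the $\poly(n)$ conclusion.
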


To do this, we first need the following claim.

\begin{claim}\label{:cl:exists-large-median}
Let $(X,d)$ be a metric space, let $C \subseteq X$, and let $p,p' \in C$.
Then,
\[
 \median(p,C \setminus \{p\}) + \median(p',C \setminus \{p'\}) \geq d(p,p')
\]
\end{claim}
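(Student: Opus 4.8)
The plan is to use a counting / pigeonhole argument on the points of $C$, together with the triangle inequality in $(X,d)$. Write $m = |C|$ and $\ell = \median(p, C \setminus \{p\})$, $\ell' = \median(p', C \setminus \{p'\})$. By the definition of $\median$ from \cref{:thm:constant-median-IP}, $\ell$ is the $\ceil{(m-1)/2}$th smallest among the $m-1$ distances $\{d(p,q)\}_{q \in C \setminus \{p\}}$; in particular at least $\ceil{(m-1)/2}$ of the points $q \in C \setminus \{p\}$ satisfy $d(p,q) \le \ell$. Likewise at least $\ceil{(m-1)/2}$ of the points $q \in C \setminus \{p'\}$ satisfy $d(p',q) \le \ell'$. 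First I would check that these two sets of points must intersect: each is a subset of $C$ of size at least $\ceil{(m-1)/2}$, and $2\ceil{(m-1)/2} \ge m - 1$, but the bound I actually want is $2\ceil{(m-1)/2} > m - 2$, i.e. that two subsets of $C$ of these sizes cannot be disjoint unless together they miss at most one point of $C$. I need to be slightly careful that one set lives in $C \setminus \{p\}$ and the other in $C \setminus \{p'\}$, so I should argue directly: the set $A = \{q \in C : d(p,q) \le \ell\}$ (which contains $p$ itself, trivially, since $d(p,p)=0$) has size $\ge \ceil{(m-1)/2} + 1$, and similarly $B = \{q \in C : d(p',q) \le \ell'\}$ has size $\ge \ceil{(m-1)/2} + 1$. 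Then $|A| + |B| \ge 2\ceil{(m-1)/2} + 2 \ge m + 1 > |C|$, so $A \cap B \neq \emptyset$.

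Pick any $q \in A \cap B$. Then $d(p,q) \le \ell$ and $d(p',q) \le \ell'$, so by the triangle inequality
\[
 d(p,p') \le d(p,q) + d(q,p') \le \ell + \ell' = \median(p, C \setminus \{p\}) + \median(p', C \setminus \{p'\}),
\]
which is exactly the claimed inequality.

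The only subtlety — and what I expect to be the one point needing a careful sentence — is the pigeonhole step, i.e. making sure the arithmetic $2\ceil{(m-1)/2} + 2 > m$ genuinely holds for all relevant $m$ (it does: if $m-1$ is even this reads $m + 1 > m$; if $m-1$ is odd this reads $m + 2 > m$), and handling the degenerate case $m = 1$ separately — but if $C = \{p\}$ then $p = p'$ and $d(p,p') = 0$, so the inequality is trivial, and the statement is only invoked with $p, p' \in C$ where implicitly $C$ is a cluster of size $\ge 2$. A small care is also needed if $C \setminus \{p\}$ is empty when defining $\median$, but again that only happens when $m = 1$. I do not anticipate any other obstacles; the rest is the displayed triangle-inequality line above.
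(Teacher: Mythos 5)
Your proof is correct and takes essentially the same approach as the paper: both arguments identify, by a pigeonhole count on $|C|$, a point $q$ that simultaneously lies within $\median(p,C\setminus\{p\})$ of $p$ and within $\median(p',C\setminus\{p'\})$ of $p'$, and then apply the triangle inequality $d(p,p') \le d(p,q)+d(q,p')$. Your version is, if anything, a bit more careful about the counting (spelling out $2\ceil{(m-1)/2}+2 \ge m+1$ and the degenerate $m=1$ case), but it is the same argument.
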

\begin{proof}
Let $A \defeq \{p'' \in C \mid d(p'',p) \leq \median(p,C\setminus\{p\})\}$ and $B \defeq \{p'' \in C \mid d(p'',p) > \median(p,C\setminus\{p\})\}$ and $A' \defeq \{p'' \in C' \mid d(p'',p) \leq \median(p,C\setminus\{p'\})\}$ and $B' \defeq \{p'' \in C' \mid d(p'',p) > \median(p,C\setminus\{p'\})\}$.
Then, $|A|\geq |B|$ and $|A'| \geq |B'|$, which means that $|A \cup \{p\}| > |B|$ and $|A' \cup \{p'\}| > B'$.
Since $(A \cup \{p\},B)$ and $(A' \cup \{p'\},B')$ are both partitions of $C$, we get that $|A \cup \{p\}| > |C|/2$ and $|A' \cup \{p'\}|>|C|/2$ and thus that there must be some point $p'' \in (A \cup \{p\}) \cap (A' \cup \{p'\})$.
Therefore, $d(p,p'')\leq \median(p,C\setminus\{p\})$ and $d(p',p'')\leq \median(p',C\setminus\{p'\})$.
So, by the triangle inequality,
\[
 d(p,p') \leq d(p,p'') + d(p',p'')
 \leq \median(p,C\setminus\{p\}) + \median(p',C\setminus\{p'\}).
\]
as we needed to prove.
\end{proof}
We are now ready to prove the lemma. The algorithm from the lemma is Algorithm \ref{:alg:split-for-median-IP}.

\begin{algorithm}
\caption{\split: The cluster splitting procedure from \cref{:lem:split-procedure-for-median-IP}.}\label{:alg:split-for-median-IP}
\KwData{$(X,d)$ where $|X|=n$, and a clustering $\cC$.}
\KwResult{a cluster $C^* \in \cC$ and a partition of $C^*$.}
$C^* \leftarrow \argmax_{C \in \cC} \max_{p,p' \in C} d(p,p')$\\
$p,p' \leftarrow \argmax_{p,p' \in C^*} d(p,p')$\\
$p'' \leftarrow \argmax_{p'' \in \{p,p'\}}\median(p'',C^*\setminus\{p''\})$\\
\Return{$C^*$ and the partition $(C^* \setminus \{p''\},\{p''\})$}
\end{algorithm}

\begin{proof}[Proof or \cref{:lem:split-procedure-for-median-IP}]
Firstly, By \cref{:def:potential-for-sqrt-median}, $\Phi_{\sqrt{\median}}(\{p''\})=0$.
Secondly, by \cref{:thm:potential-for-sqrt-median},
\[
 \Phi_{\sqrt{\median}}(C^* \setminus \{p''\}) \leq \Phi_{\sqrt{\median}}(C^*) - \sqrt{\median(p'',C^*\setminus\{p''\})}.
\]
Thirdly, by \cref{:cl:exists-large-median} and by the choice of $p''$,
\[
 \median(p'',C^*\setminus\{p''\}) \geq \frac{1}{2}(\median(p,C^*\setminus\{p\}) + \median(p',C^*\setminus\{p'\}))
 \geq \frac{1}{2}d(p,p').
\]
Putting these three together, then using the way that $p$ and $p'$ were chosen,
\begin{equation}\label{:eq:decrease-by-median-IP-split}
\begin{aligned}
    \Phi_{\sqrt{\median}}(C^* \setminus \{p''\}) + \Phi_{\sqrt{\median}}(\{p''\})
    &\leq \Phi_{\sqrt{\median}}(C^*) - \sqrt{\median(p'',C^*\setminus\{p''\})}\\
    &\leq \Phi_{\sqrt{\median}}(C^*) - \sqrt{\frac{1}{2}d(p,p')}\\
    &= \Phi_{\sqrt{\median}}(C^*) - \max_{C' \in \cC} \max_{p,p' \in C'} \sqrt{\frac{1}{2}d(p,p')}\\
    &= \Phi_{\sqrt{\median}}(C^*) - \max_{C' \in \cC} \max_{p,p' \in C'} \frac{1}{\sqrt{2}}w(p,p')
\end{aligned}
\end{equation}
Furthermore, by \cref{:cor:bound-on-median-potential-of-clustering},
\begin{align*}
    \sum_{C \in \cC} \Phi_{\sqrt{\median}}(C)
    \leq O(n \cdot \max_{C' \in \cC} \sqrt{\median(C')})
    = O(n \cdot \max_{C' \in \cC} \max_{p,p' \in C'} w(p,p')).
\end{align*}
Combining the last inequality and \cref{:eq:decrease-by-median-IP-split},
\[
 \Phi_{\sqrt{\median}}(C^* \setminus \{p''\}) + \Phi_{\sqrt{\median}}(\{p''\})
 \leq \Phi_{\sqrt{\median}}(C^*) - \frac{1}{O(n)}\left(\sum_{C \in \cC} \Phi_{\sqrt{\median}}(C)\right),
\]
as needed.
\end{proof}

\subsection{Bounding Potential Increase of Cluster Merges for \texorpdfstring{$\sqrt{\median}$}{sqrt(median)}-IP}

The goal of this section is to prove the following lemma.

\begin{lemma} \label{:lem:sqrt-median-merge-cost-bound}
For every metric space $(X,d)$ with $|X|=n$, every two disjoint non-empty clusters $C,C' \subseteq X$, and every point $p \in X$, if $\Phi_{\sqrt{\median}}$ is the potential function from \cref{:def:potential-for-sqrt-median}, then
\begin{align*}
    \Phi_{\sqrt{\median}}(C \cup C') \leq \Phi_{\sqrt{\median}}(C) + \Phi_{\sqrt{\median}}(C') + \poly(n)\cdot(\sqrt{\median(p,C)} + \sqrt{\median(p,C')}).
\end{align*}
\end{lemma}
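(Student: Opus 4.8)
The plan is to bound $\Phi_{\sqrt{\median}}(C \cup C')$ by constructing an explicit TSP tour of $G[C\cup C']$ out of the maximum tours of $G[C]$ and $G[C']$, and then to show that the ``stitching cost'' of joining these two tours is $\poly(n)\cdot(\sqrt{\median(p,C)}+\sqrt{\median(p,C')})$. First I would handle degenerate sizes: if either $|C|=1$ or $|C'|=1$, the claim follows directly from the right-hand inequality of \cref{:thm:potential-for-sqrt-median} (treating the singleton as the point being added), much as in the $\min\{|C|,|C'|\}=1$ case of \cref{:lem:merge-cost-bound}. So assume $|C|,|C'|\geq 2$. Let $\Pi_C$ and $\Pi_{C'}$ be the maximum-length TSP tours of $G[C]$ and $G[C']$, so $\Phi_{\sqrt{\median}}(C) = \frac{1}{2-\sqrt 2}w(\Pi_C)$ and likewise for $C'$. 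Pick any edge $(a_1,a_2)\in E(\Pi_C)$ and any edge $(b_1,b_2)\in E(\Pi_{C'})$; deleting these two edges and adding the two crossing edges $(a_1,b_1)$ and $(a_2,b_2)$ yields a Hamiltonian cycle $\Pi$ of $G[C\cup C']$. Hence
\[
 \Phi_{\sqrt{\median}}(C\cup C') \le \tfrac{1}{2-\sqrt2}w(\Pi) = \Phi_{\sqrt{\median}}(C)+\Phi_{\sqrt{\median}}(C') + \tfrac{1}{2-\sqrt2}\bigl(w(a_1,b_1)+w(a_2,b_2)-w(a_1,a_2)-w(b_1,b_2)\bigr),
\]
so it suffices to bound $w(a_1,b_1)+w(a_2,b_2)$, i.e. $\sqrt{d(a_1,b_1)}+\sqrt{d(a_2,b_2)}$, by $\poly(n)\cdot(\sqrt{\median(p,C)}+\sqrt{\median(p,C')})$.

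To bound a crossing cost $\sqrt{d(a,b)}$ with $a\in C$, $b\in C'$: by the triangle inequality $d(a,b)\le d(a,p)+d(p,b)$, so $\sqrt{d(a,b)}\le\sqrt{d(a,p)}+\sqrt{d(p,b)}$. Thus I need $\sqrt{d(a,p)} \le \poly(n)\sqrt{\median(p,C)}$ for a suitably chosen $a\in C$ (and symmetrically for $b\in C'$). The key point is that we get to choose which edge $(a_1,a_2)$ of $\Pi_C$ to delete, so I would choose it cleverly. Let $A = \{q\in C : d(q,p)\le \median(p,C\setminus\{q\})\text{-type bound}\}$ — more carefully, set $A$ to be the ``small half'' of $C$ relative to $p$, i.e. using the definition from \cref{:thm:constant-median-IP}, at least half the points $q\in C$ satisfy $d(p,q)\le \median(p,C)$ (here $\median(p,C)$ is the $\lceil|C|/2\rceil$th smallest distance from $p$ into $C$, so $|\{q\in C: d(p,q)\le\median(p,C)\}|\ge\lceil|C|/2\rceil > |C|/2$). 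So this ``small'' set $A\subseteq C$ has $|A|>|C|/2\ge |C\setminus A|$. By \cref{:obs:edge-exists} applied to $\Pi_C$ and the partition $(C\setminus A, A)$, there is an edge $(a_1,a_2)\in E(\Pi_C)$ with both endpoints in $A$; delete that edge. Then $d(a_1,p),d(a_2,p)\le\median(p,C)$, so $\sqrt{d(a_i,p)}\le\sqrt{\median(p,C)}$. Symmetrically choose the edge $(b_1,b_2)\in E(\Pi_{C'})$ inside the small half $B\subseteq C'$ of $C'$ relative to $p$, giving $\sqrt{d(b_j,p)}\le\sqrt{\median(p,C')}$.

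Combining: for $i\in\{1,2\}$, $\sqrt{d(a_i,b_i)} \le \sqrt{d(a_i,p)}+\sqrt{d(p,b_i)} \le \sqrt{\median(p,C)}+\sqrt{\median(p,C')}$ (after relabeling endpoints so that $a_i$ is joined to $b_i$; since we may orient each deleted edge freely, either pairing of $\{a_1,a_2\}$ with $\{b_1,b_2\}$ is available). Therefore
\[
 w(a_1,b_1)+w(a_2,b_2) \le 2\bigl(\sqrt{\median(p,C)}+\sqrt{\median(p,C')}\bigr),
\]
and plugging into the displayed bound gives $\Phi_{\sqrt{\median}}(C\cup C') \le \Phi_{\sqrt{\median}}(C)+\Phi_{\sqrt{\median}}(C') + \frac{2}{2-\sqrt2}(\sqrt{\median(p,C)}+\sqrt{\median(p,C')})$, with an absolute constant — in fact better than the claimed $\poly(n)$ factor, which is fine since $\poly(n)$ is only an upper requirement imposed by the framework of \cref{:thm:framework-for-general-f}. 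I expect the main obstacle (more a matter of care than of difficulty) to be twofold: first, making sure the ``small half'' set $A$ is genuinely larger than its complement so that \cref{:obs:edge-exists} applies — this needs the exact $\lceil|C|/2\rceil$ convention from \cref{:thm:constant-median-IP}, and a moment's attention when $p\notin C$ versus the analogous $\median(p,C\setminus\{p\})$ quantity appearing in other lemmas (here $p$ may or may not lie in $C\cup C'$, but the lemma statement only involves $\median(p,C)$ and $\median(p,C')$, so no removal of $p$ is needed); and second, the bookkeeping that the two new crossing edges can always be chosen to avoid creating two disjoint cycles — this is standard (deleting one edge from each of two vertex-disjoint cycles and adding a ``crossing matching'' always yields a single cycle), and I would state it as an elementary observation rather than belabor it.
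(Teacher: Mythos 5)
There is a genuine gap, and it is fatal: the first displayed inequality goes the wrong way. The potential $\Phi_{\sqrt{\median}}(C\cup C')$ is defined (\cref{:def:potential-for-sqrt-median}) as $\tfrac{1}{2-\sqrt{2}}$ times the \emph{maximum}-length TSP tour of $G[C\cup C']$. Constructing one explicit tour $\Pi$ of $G[C\cup C']$ therefore gives a \emph{lower} bound $\Phi_{\sqrt{\median}}(C\cup C')\ge \tfrac{1}{2-\sqrt2}w(\Pi)$, not the upper bound $\Phi_{\sqrt{\median}}(C\cup C')\le \tfrac{1}{2-\sqrt2}w(\Pi)$ that your argument asserts. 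Since everything downstream hangs on that inequality, the proposal does not establish the lemma.

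The fix is not a matter of bookkeeping. To upper bound a maximum you must reason about an arbitrary (or the optimal) tour $\Pi^*$ of $G[C\cup C']$, and unlike the converse direction, $\Pi^*$ may weave between $C$ and $C'$ many times, so it does not decompose into one tour of $G[C]$ and one of $G[C']$ plus two stitching edges. This is precisely why the paper's proof avoids composing/decomposing tours directly. Instead it passes through the ``hub'' point $p$: \cref{:cl:upper-bounding-length-of-TSP-tour} shows (by the relaxed triangle inequality of \cref{:obs:triangle-inequality-in-sqrt-metric}) that \emph{every} tour of $G[S]$ has weight at most $2\sum_{p'\in S\setminus A}w(p',p)+\poly(n)\cdot r$ once one discards a ball $A$ around $p$, and \cref{:cl:exists-TSP-tour-with-large-length} constructs a tour of $G[S]$ matching this up to an additive $\poly(n)\cdot\sqrt{\median(p,S)}$ term. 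Applying the first claim to $C\cup C'$ with $A\cup A'$ excised, and the second claim to $C$ and to $C'$ separately, and subtracting, gives the result — the sums over the ``far halves'' $B$ and $B'$ cancel. Your edge-selection idea (using \cref{:obs:edge-exists} on the small half $A$) is correct in spirit and matches the flavour of the paper's lower-bound construction, but it cannot be deployed in the direction you need without the hub-based upper bound.
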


To prove the above lemma, we will need the following claims.

\begin{claim}\label{:cl:upper-bounding-length-of-TSP-tour}
Given a metric space $(X,d)$ with $|X|=n$, a subset $S \subseteq X$ with $|S| \geq 2$, a point $p \in X$, and a radius $r > 0$, and a subset $A \subseteq S$ such that every point $p' \in A$ satisfies $w(p',p) \leq r$.
Then, every TSP tour $\Pi_{S}$ of $G[S]$, as defined in \cref{:def:potential-for-sqrt-median}, satisfies
\[
 w(\Pi_{S}) \leq 2\left(\sum_{p' \in S \setminus A} w(p',p)\right) + \poly(n) \cdot r
\]
\end{claim}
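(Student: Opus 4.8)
The plan is to bound the length of any TSP tour $\Pi_S$ by comparing it against a natural reference structure built from the point $p$, much in the spirit of the triangle inequality Observation \ref{:obs:triangle-inequality-in-sqrt-metric}. First I would observe that for every edge $(p_1,p_2) \in E(\Pi_S)$, the inequality $w(p_1,p_2) \leq w(p_1,p) + w(p,p_2)$ holds (this is just Observation \ref{:obs:triangle-inequality-in-sqrt-metric} applied with the apex $p$). Summing over all $|S|$ edges of the tour, each vertex $p' \in S$ appears as an endpoint of exactly two tour edges, so $w(\Pi_S) \leq 2\sum_{p' \in S} w(p',p)$. This already gives something close; the remaining work is to replace the sum over all of $S$ by a sum over $S \setminus A$ plus a $\poly(n)\cdot r$ error term, which is exactly what accounting for the points in $A$ accomplishes: $2\sum_{p' \in A} w(p',p) \leq 2|A| \cdot r \leq 2n \cdot r = \poly(n)\cdot r$, using the hypothesis that $w(p',p)\leq r$ for all $p'\in A$ and $|A|\leq |S|\leq n$.

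Putting these together, I would write
\[
 w(\Pi_S) \;\leq\; 2\sum_{p' \in S} w(p',p) \;=\; 2\sum_{p' \in S\setminus A} w(p',p) \;+\; 2\sum_{p' \in A} w(p',p) \;\leq\; 2\left(\sum_{p' \in S\setminus A} w(p',p)\right) + \poly(n)\cdot r,
\]
which is precisely the claimed bound. The only mild subtlety is the degenerate case $|S| = 2$: there the ``tour'' $\Pi_S$ traverses the single edge $(p_1,p_2)$ twice, so $w(\Pi_S) = 2w(p_1,p_2)$, and the same two-edge counting argument still applies (each of $p_1,p_2$ is incident to the doubled edge twice), so the bound $w(\Pi_S) \leq 2\sum_{p'\in S} w(p',p)$ continues to hold. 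I would make a one-line remark handling this case to be safe.

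I do not expect any genuine obstacle here — this is a routine triangle-inequality-and-double-counting argument, with the hypothesis on $A$ doing exactly the bookkeeping the statement asks for. The ``hard part,'' to the extent there is one, is simply making sure the edge-counting is stated correctly for the Hamiltonian-cycle structure (and its $|S|=2$ degeneracy), and that the polynomial factor is tracked honestly: here the constant is genuinely $O(1)$ times $|A| \leq n$, so $\poly(n)$ is more than enough slack. This claim will then feed into the proof of Lemma \ref{:lem:sqrt-median-merge-cost-bound}, where $A$ will be taken to be the set of points in a cluster close to $p$ and $r$ will be calibrated to $\sqrt{\median(p,\cdot)}$.
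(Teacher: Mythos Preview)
Your proposal is correct and follows essentially the same argument as the paper: apply Observation~\ref{:obs:triangle-inequality-in-sqrt-metric} to each tour edge, sum to get $w(\Pi_S)\le 2\sum_{p'\in S} w(p',p)$ via the degree-two count, then split the sum over $A$ and $S\setminus A$ and bound the $A$ part by $2|A|\cdot r$. Your explicit remark on the $|S|=2$ degeneracy is a small addition the paper omits, but otherwise the proofs are the same.
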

\begin{proof}
By \cref{:obs:triangle-inequality-in-sqrt-metric},
\begin{align*}
    w(\Pi_S)
    = \sum_{(p',p'') \in E(\Pi_S)} w(p',p'')
    &\leq \sum_{(p',p'') \in E(\Pi_S)} (w(p',p) + w(p,p'')) \\
    &= 2\sum_{p' \in S} w(p,p')\\
    &= 2\left(\sum_{p' \in S \setminus A} w(p,p')\right) + 2\left(\sum_{p' \in A} w(p,p')\right)\\
    &\leq 2\left(\sum_{p' \in S \setminus A} w(p,p')\right) + 2\left(\sum_{p' \in A} r\right)\\
    &= 2\left(\sum_{p' \in S \setminus A} w(p,p')\right) + \poly(n) \cdot r
\end{align*}
\end{proof}

\begin{claim}\label{:cl:exists-TSP-tour-with-large-length}
For every metric space $(X,d)$ with $|X|=n$, every subset $S \subseteq X$ with $|S| \geq 2$, and every a point $p \in X$, there exists a TSP tour $\Pi_{S}$ of $G[S]$, as defined in \cref{:def:potential-for-sqrt-median}, such that
\[
 w(\Pi_{S}) \geq 2\left(\sum_{p' \in S \mid d(p',p) > \median(p,S)} w(p',p)\right) - \poly(n) \cdot \sqrt{\median(p,S)}
\]
\end{claim}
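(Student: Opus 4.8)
Write $m \defeq \median(p,S)$ and partition $S$ into $A \defeq \{p' \in S : d(p',p) \le m\}$ and $B \defeq \{p' \in S : d(p',p) > m\}$, so that the sum appearing in the claim is exactly $\sum_{p' \in B} w(p',p)$. Since $m$ is the $\ceil{|S|/2}$-th smallest of the values $\{d(p,p')\}_{p' \in S}$, at most $|S|-\ceil{|S|/2} = \floor{|S|/2}$ of these values exceed $m$, so $|B| \le \floor{|S|/2}$ and hence $|B| \le \ceil{|S|/2} \le |A|$. The plan is to exhibit a single TSP tour $\Pi_{S}$ of $G[S]$ in which no two vertices of $B$ are adjacent; then each of the $2|B|$ tour-edges incident to a vertex of $B$ has its other endpoint in $A$, and we can bound $w(\Pi_{S})$ from below edge by edge using the reverse triangle inequality in the $\sqrt{d}$-metric.

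First I would construct the tour. If $|S| = 2$, say $S = \{a,b\}$: the degenerate tour traverses the single edge twice (weight $2w(a,b)$), and since $|B| \le 1$ we are done trivially when $B = \emptyset$, while if $B = \{b\}$ then necessarily $a \in A$. If $|S| \ge 3$, then $|A| \ge \ceil{|S|/2} \ge 2$, so I would arrange the vertices of $A$ in an arbitrary cyclic order, creating $|A|$ gaps between cyclically consecutive vertices, and insert the $|B| \le |A|$ vertices of $B$ into $|B|$ distinct gaps, one vertex per gap. Reading off the resulting cyclic order yields a Hamiltonian cycle of $G[S]$ (valid, as $G[S]$ is complete) in which every vertex of $B$ has both of its neighbors in $A$.

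Then I would lower-bound $w(\Pi_{S})$. Since $B$ is an independent set in $\Pi_{S}$, each $b \in B$ is incident to two distinct edges $\{a_1,b\}$, $\{b,a_2\}$ with $a_1,a_2 \in A$ (in the $|S|=2$ case the doubled edge plays this role with $a_1 = a_2 = a$). By \cref{:obs:triangle-inequality-in-sqrt-metric}, $w(a_i,b) \ge w(b,p) - w(a_i,p)$, and $a_i \in A$ gives $w(a_i,p) = \sqrt{d(a_i,p)} \le \sqrt{m}$, so the two $B$-incident edges at $b$ together have weight at least $2w(b,p) - 2\sqrt{m}$. Summing over $b \in B$ and discarding the remaining (non-negative) edge weights,
\[
 w(\Pi_{S}) \ \ge\ \sum_{b \in B}\bigl(2w(b,p) - 2\sqrt{m}\bigr)
 \ =\ 2\Bigl(\sum_{b \in B} w(b,p)\Bigr) - 2|B|\sqrt{m}
 \ \ge\ 2\Bigl(\sum_{b \in B} w(b,p)\Bigr) - \poly(n)\cdot\sqrt{m},
\]
using $|B| \le n$. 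This is exactly the asserted inequality. The only mildly delicate points are the counting inequality $|B| \le |A|$ extracted from the definition of the median, and bookkeeping the degenerate $|S|=2$ tour correctly; I do not anticipate any real obstacle.
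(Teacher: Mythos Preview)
Your proof is correct and follows essentially the same approach as the paper: partition $S$ into $A$ and $B$ by comparison with $\median(p,S)$, use $|B|\le|A|$ to build a tour in which every vertex of $B$ has both neighbors in $A$, and then lower-bound each $B$-incident edge via \cref{:obs:triangle-inequality-in-sqrt-metric}. The only cosmetic difference is the tour construction itself---the paper builds an alternating $A'$--$B$ path (with $|A'|=|B|$) and then appends the remaining $A$-vertices, whereas you place $A$ on a cycle and drop the $B$-vertices into distinct gaps---but both constructions achieve the same ``$B$ is independent in $\Pi_S$'' property and lead to the identical estimate $w(\Pi_S)\ge 2\sum_{b\in B}w(b,p)-2|B|\sqrt{m}$.
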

\begin{proof}
Let $A \defeq \{p' \in S \mid d(p',p) \leq \median(p,S)\}$ and $B \defeq \{p' \in S \mid d(p',p) > \median(p,S)\}$.
Then, $|A| \geq |B|$, so let $A' \subseteq A$ be some subset of $A$ with $|A'|=|B|$.
We construct the tour $\Pi_S$ by starting with a path that alternates between $A'$ and $B$ until visiting all vertices of these sets, and then extending that path to visit the vertices of $A \setminus A'$, and finally coming back to the starting vertex.
Then, for every $p' \in B$, the two edges of $\Pi_S$ that are incident to $p'$ must both have their other endpoint in $A$, so, by \cref{:obs:triangle-inequality-in-sqrt-metric} and by the definition of $A$, they must each have length at least $w(p',p)-\sqrt{\median(p,S)}$.
Therefore,
\begin{align*}
    \sum_{e \in \Pi_S} w(e)
    &\geq 2\sum_{p' \in B} (w(p',p)-\sqrt{\median(p,S)}) \\
    &= 2\sum_{p' \in S \mid d(p',p) > \median(p,S)} (w(p',p)-\sqrt{\median(p,S)})\\
    &= 2\left(\sum_{p' \in S \mid d(p',p) > \median(p,S)} w(p',p)\right) - \poly(n) \cdot \sqrt{\median(p,S)}
\end{align*}
\end{proof}

Now, we are ready to prove \cref{:lem:sqrt-median-merge-cost-bound}.

\begin{proof}[Proof of \cref{:lem:sqrt-median-merge-cost-bound}]
Firstly, if $|C|=1$ or $|C'|=1$, then the lemma follows from \cref{:thm:potential-for-sqrt-median}.
So, for the rest of the proof, we assume that $|C| \geq 2$ and $|C'| \geq 2$.
Let $A \defeq \{p' \in C \mid d(p',p) \leq \median(p,C)\}$ and $B \defeq \{p' \in C \mid d(p',p) > \median(p,C)\}$ and $A' \defeq \{p' \in C' \mid d(p',p) \leq \median(p,C')\}$ and $B' \defeq \{p' \in C' \mid d(p',p) > \median(p,C')\}$.
By \cref{:cl:upper-bounding-length-of-TSP-tour}, since all points $p' \in A \cup A'$ satisfy $w(p',p) \leq \sqrt{\median(p,C)} + \sqrt{\median(p,C')}$,
\[
 \Phi_{\sqrt{\median}}(C \cup C') \leq \frac{1}{2-\sqrt{2}} \cdot 2\left(\sum_{p' \in B \cup B'} w(p',p)\right) + \poly(n) \cdot (\sqrt{\median(p,C)} + \sqrt{\median(p,C')}).
\]
Furthermore, by \cref{:cl:exists-TSP-tour-with-large-length},
\[
 \Phi_{\sqrt{\median}}(C) \geq \frac{1}{2-\sqrt{2}} \cdot 2\left(\sum_{p' \in B} w(p',p)\right) - \poly(n) \cdot \sqrt{\median(p,C)}
\]
and
\[
 \Phi_{\sqrt{\median}}(C') \geq \frac{1}{2-\sqrt{2}} \cdot 2\left(\sum_{p' \in B'} w(p',p)\right) - \poly(n) \cdot \sqrt{\median(p,C')}.
\]
Subtracting the last two inequalities from the first one,
\[
 \Phi_{\sqrt{\median}}(C \cup C') - \Phi_{\sqrt{\median}}(C) - \Phi_{\sqrt{\median}}(C') \leq \poly(n) \cdot (\sqrt{\median(p,C)} + \sqrt{\median(p,C')}),
\]
as we needed to prove.
\end{proof}

\subsection{Initialization for \texorpdfstring{$\sqrt{\median}$}{sqrt(median)}-IP}
In this section, we show that initializing using the greedy $k$-centers algorithm also works for the potential function $\Phi_{\sqrt{\median}}$ from \cref{:def:potential-for-sqrt-median}.
In other words, we prove the following lemma, which is an analogue to \cref{:lem:k-center-initialization}.

\begin{lemma}\label{:lem:initialization-for-median-IP}
    Given a metric space $(X,d)$ with $n$ points and a $k$-clustering $\cC$. If $\cC$ is a $2$-approximate optimal solution to the $k$-centers problem, then it $\poly(n)$-approximately minimizes the potential function $\Phi_{\sqrt{\median}}$; more precisely, $\Phi_{\sqrt{\median}}(\cC) \le \poly(n) \cdot \Phi_{\sqrt{\median}}(\cC^*)$, where $\cC^*$ is a $k$-clustering minimizing $\Phi_{\sqrt{\median}}$.
\end{lemma}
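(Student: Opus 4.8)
The plan is to mimic the proof of \cref{:lem:k-center-initialization}, with the diameter-based sandwich bound on the potential replaced by the one supplied by \cref{:cor:bound-on-median-potential-of-clustering}. Let $r^*$ denote the optimal $k$-center radius of $(X,d)$, and for a clustering $\cC'$ write $\diameter(\cC') \defeq \max_{C \in \cC'} \diameter(C)$. It suffices to establish two things: that the given $2$-approximate clustering $\cC$ satisfies $\Phi_{\sqrt{\median}}(\cC) = O(n\sqrt{r^*})$, and that the $\Phi_{\sqrt{\median}}$-minimizer $\cC^*$ satisfies $\Phi_{\sqrt{\median}}(\cC^*) \ge \sqrt{r^*}$; dividing then gives $\Phi_{\sqrt{\median}}(\cC) \le O(n) \cdot \Phi_{\sqrt{\median}}(\cC^*)$, which is the claimed $\poly(n)$ bound.

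First I would record two elementary facts relating cluster diameters to $r^*$. On one hand, a $2$-approximate $k$-center solution assigns each point to a center at distance at most $2r^*$, so each induced cluster is contained in a ball of radius $2r^*$ and hence has diameter at most $4r^*$; thus $\diameter(\cC) \le 4r^*$. On the other hand, for \emph{any} $k$-clustering $\cC'$ one may pick an arbitrary point of each cluster as its center, obtaining $k$ centers of radius at most $\diameter(\cC')$, so that $r^* \le \diameter(\cC')$; in particular $\diameter(\cC^*) \ge r^*$.

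Then I would invoke \cref{:cor:bound-on-median-potential-of-clustering} twice. Applied to $\cC$ it yields $\Phi_{\sqrt{\median}}(\cC) \le O\big(n \cdot \sqrt{\diameter(\cC)}\big) \le O\big(n \cdot \sqrt{4r^*}\big) = O(n\sqrt{r^*})$. Applied to $\cC^*$ it yields $\Phi_{\sqrt{\median}}(\cC^*) \ge \sqrt{\diameter(\cC^*)} \ge \sqrt{r^*}$. Combining the two inequalities gives $\Phi_{\sqrt{\median}}(\cC) / \Phi_{\sqrt{\median}}(\cC^*) = O(n)$, as required.

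The one point needing care is the degenerate case $r^* = 0$, i.e.\ when $X$ occupies at most $k$ distinct locations: then both $\cC$ and $\cC^*$ consist of clusters of diameter $0$, so by \cref{:obs:bound-on-cluster-median-potential} both potentials vanish and the inequality holds trivially; when $r^* > 0$ the chain above is valid since then $\diameter(\cC^*) > 0$ as well. I do not anticipate a genuine obstacle: once \cref{:cor:bound-on-median-potential-of-clustering} is available the lemma is a near-verbatim transcription of \cref{:lem:k-center-initialization}, the only mildly delicate step being the verification of the two diameter-versus-$r^*$ comparisons, which are standard facts about Gonzalez-type $k$-center approximations.
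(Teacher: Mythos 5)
Your proof is correct and follows essentially the same argument as the paper: bound $\diameter(\cC)$ above by a constant multiple of $\diameter(\cC^*)$ via the $2$-approximate $k$-center guarantee, then sandwich both potentials between $\sqrt{\diameter}$ and $O(n\sqrt{\diameter})$ using \cref{:cor:bound-on-median-potential-of-clustering}. Your explicit handling of the degenerate case $r^*=0$ is a minor refinement the paper omits, but it does not change the substance of the argument.
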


\begin{proof}
Since $\cC$ is a $2$-approximate $k$-centers solution, and since radius is a $2$-approximation of diameter,
\[
 \max_{C \in \cC} \diameter(C) \leq 4 \cdot \max_{C \in \cC^*} diamter(C).
\]
So, by \cref{:cor:bound-on-median-potential-of-clustering},
\begin{align*}
    \Phi_{\sqrt{\median}}(\cC)
    \leq O(n \cdot \max_{C \in \cC} diamter(C))
    \leq O(n \cdot \max_{C \in \cC^*} diamter(C))
    \leq O(n \cdot \Phi_{\sqrt{\median}}(\cC^*)).
\end{align*}
\end{proof}

\subsection{Putting it All Together}\label{:sec:concluding-sqrt-median-IP}

In this section, we prove \cref{:thm:constant-sqrt-median-IP} using the framework of \cref{:sec:general-framework}.
\begin{proof}[Proof of \cref{:thm:constant-sqrt-median-IP}]
In order to prove the theorem, we just need to show that for every metric space $(X,d)$, the function $\sqrt{\median}:X \times 2^X \to \bbR_{\geq 0}$ and the potential function $\Phi_{\sqrt{\median}}$ from \cref{:def:potential-for-sqrt-median} satisfy all the conditions of \cref{:thm:framework-for-general-f} for $\alpha = O(1)$.
We will now do this by going over each of the conditions:
\begin{enumerate}
    \item The first condition from \cref{:thm:framework-for-general-f} is satisfied by \cref{:thm:potential-for-sqrt-median}.
    \item It is clear that $\sqrt{\median}(p,S)$ can be computed in polynomial time given $(X,d)$, $S \subseteq X$, and $p$, so the second condition holds.
    \item Using \cref{:cor:bound-on-median-potential-of-clustering}, it is clear that a $\poly(n)$ approximation of $\Phi_{\sqrt{\median}}(\cC)$ can be computed in polynomial time, by simply computing $\max_{C \in \cC} \diameter(C)$. So, the third condition holds. We note that an $O(1)$-approximation can be achieved via an approximate maximum matching algorithm, but we will not prove this.
    \item By \cref{:lem:split-procedure-for-median-IP}, the fourth condition holds.
    \item By \cref{:lem:sqrt-median-merge-cost-bound}, the fifth condition holds.
    \item By \cref{:lem:initialization-for-median-IP} and the classic $2$-approximate $k$-centers algorithm, the sixth condition holds.
\end{enumerate}
In summary, we saw that all the conditions of \cref{:thm:framework-for-general-f} are satisfied for $f=\sqrt{\median}$ with $\alpha=O(1)$. So, by \cref{:thm:framework-for-general-f}, there exists a deterministic polynomial time algorithm that, given a metric space $(X,d)$ and a desired number of clusters $k$, computes a $(O(1),\sqrt{\median})$-IP stable $k$-clustering of $(X,d)$.
\end{proof}

\section{Proof of \texorpdfstring{\cref{:thm:existence-of-max-IP}}{Theorem 5}}
In this section, we prove \cref{:thm:existence-of-max-IP}, showing that every metric space $(X,d)$ and number of clusters $2 \leq k \leq |X|$ admit a $(1,\max)$-IP stable $k$-clustering, where $\max(p,S)\defeq \max_{p' \in S}d(p,p')$.
We show that the natural local search algorithm for this problem, as detailed in Algorithm \ref{:alg:local-search-for-max-IP}, terminates. However, we note that the algorithm may not terminate within polynomial time.

\begin{algorithm}
\caption{\maxIPLS: Local Search For Max-IP.}\label{:alg:local-search-for-max-IP}
\KwData{$(X,d)$, $2 \leq k \leq n =|X|$.}
\KwResult{A $(1,\max)$-IP stable $k$-clustering of $(X,d)$.}
\textbf{initialize} an arbitrary $k$-clustering $\mathcal{C}$ of $(X,d)$ \\
\While{exists $p$ and $C'$ s.t. $C(p)\neq\{p\}$ and $\max(p,C(p) \setminus \{p\}) > \max(p,C')$ \label{:line:local-search-for-max-IP:loop-condition}}{
    \textbf{move} $p$ to cluster $C'$\\
}
\textbf{return} the current clustering
\end{algorithm}

First, we need the following definition and claim. We will use these to prove \cref{:thm:existence-of-max-IP}, and then we will prove \cref{:cl:max-IP-steps-reduce-potential}.

\begin{definition}[Potential function for max-IP]\label{:def:potential-for-max-IP}
Consider the complete clique graph $G=(X,E)$ on $X$ where each edge $(p,p') \in E$ has length $w(p,p') = d(p,p')$.
Label the edges of $G$ so that $w(e_1)\geq w(e_2) \geq \ldots \geq w(e_{|E|})$.
For every clustering $\cC$ of $X$, let $s_{\cC} \in \{0,1\}^{|E|}=\{0,1\}^{\binom{|X|}{2}}$ be the string whose $i$th character is a $1$ if both endpoints of $e_i$ are in the same cluster of $\cC$ and is $0$ otherwise.
\end{definition}
\begin{claim}\label{:cl:max-IP-steps-reduce-potential}
At each step of Algorithm \ref{:alg:local-search-for-max-IP}, the string $s_{\cC}$ corresponding to the maintained clustering $\cC$, as in \cref{:def:potential-for-max-IP}, goes down in lexicographic order.
\end{claim}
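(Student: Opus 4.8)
## Proof Plan for Claim \ref{:cl:max-IP-steps-reduce-potential}

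The plan is to track exactly which edges change their "within-cluster" status during a single local search step, and argue that the net effect is a strict lexicographic decrease of $s_{\cC}$. Consider a step that moves a point $p$ from its cluster $C = C(p)$ (with $C \neq \{p\}$) to another cluster $C'$, where the step is taken because $\max(p, C \setminus \{p\}) > \max(p, C')$. The only edges whose endpoints change their co-cluster relationship are the edges incident to $p$: for each $q \in C \setminus \{p\}$ the edge $(p,q)$ goes from within-cluster to between-cluster (its bit flips $1 \to 0$), and for each $q' \in C'$ the edge $(p,q')$ goes from between-cluster to within-cluster (its bit flips $0 \to 1$). All other bits of $s_{\cC}$ are unchanged. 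So I must compare the "highest" bit that flips to a $1$ against the "highest" bit that flips to a $0$, where "highest" means largest edge weight (earliest index in the ordering $e_1, e_2, \dots$).

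First I would identify the key quantities: let $q^* \in C \setminus \{p\}$ achieve $d(p, q^*) = \max(p, C \setminus \{p\})$, so the edge $(p, q^*)$ is the longest among all edges that flip from $1$ to $0$. Let $q'$ be any point in $C'$; then $d(p, q') \leq \max(p, C') < \max(p, C \setminus \{p\}) = d(p, q^*)$, so \emph{every} edge that flips from $0$ to $1$ is strictly shorter than $(p, q^*)$. The next step is to translate this into the lexicographic order on $\{0,1\}^{|E|}$ induced by the weight-decreasing edge ordering of \cref{:def:potential-for-max-IP}. Since edges are indexed so that $w(e_1) \geq w(e_2) \geq \cdots$, the edge $(p,q^*)$ occupies some position $i^*$, and every edge flipping $0 \to 1$ occupies a position strictly later than $i^*$ (strictly smaller weight means strictly later in the ordering — I should be slightly careful about ties, but ties can be broken by a fixed tie-breaking rule in the ordering, and the strict inequality $d(p,q') < d(p,q^*)$ guarantees $(p,q')$ is genuinely later regardless of how ties among equal-weight edges are resolved). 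Therefore, looking at $s_{\cC}$ from position $1$ onward: all positions before $i^*$ are unchanged, position $i^*$ changes from $1$ to $0$, hence $s_{\cC}$ strictly decreases in lexicographic order.

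I would then wrap up by noting this proves the claim: each step strictly decreases $s_{\cC}$ in lexicographic order. (For the proof of \cref{:thm:existence-of-max-IP} itself, one then observes that $s_{\cC}$ lives in the finite set $\{0,1\}^{\binom{|X|}{2}}$, so the algorithm cannot cycle and must terminate, and upon termination the loop condition fails, which is exactly the statement that the clustering is $(1,\max)$-IP stable; one also needs that the clustering stays a valid $k$-clustering, i.e., $C$ does not become empty — this holds because the step requires $C(p) \neq \{p\}$, so $|C| \geq 2$ before the move and $|C| \geq 1$ after.) The main obstacle, and really the only subtle point, is the tie-breaking issue in the edge ordering: I need to make sure the definition fixes a specific total order on edges (not merely a preorder by weight) so that "lexicographically decreases" is well-defined, and then verify that the strict inequality $\max(p, C') < \max(p, C\setminus\{p\})$ is strong enough to place every $0\to 1$ flip strictly after the $1\to 0$ flip at $(p,q^*)$ in \emph{that} fixed order. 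Everything else is a short, direct bookkeeping argument about which bits of $s_\cC$ move.
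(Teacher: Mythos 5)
Your proof is correct and follows essentially the same argument as the paper: both partition the bit-flips into the $1\to 0$ flips (edges from $p$ into $C\setminus\{p\}$) and the $0\to 1$ flips (edges from $p$ into $C'$), and use the strict inequality $\max(p,C\setminus\{p\}) > \max(p,C')$ together with the weight-decreasing edge ordering to place the earliest $1\to 0$ flip strictly before every $0\to 1$ flip. The paper's phrasing is marginally more robust in that it compares $\min I_{\mathrm{reduced}}$ to $\min I_{\mathrm{increased}}$ directly rather than pinning down the specific edge $(p,q^*)$, which side-steps the small imprecision in your claim that ``all positions before $i^*$ are unchanged'' (a tied $1\to 0$ flip at another $q\in C\setminus\{p\}$ with $d(p,q)=d(p,q^*)$ could occupy a smaller index than $i^*$, though this does not affect the conclusion, since it is still a $1\to 0$ flip).
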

\begin{proof}
Given a specific step of Algorithm \ref{:alg:local-search-for-max-IP}, let $\cC$ be the state of the clustering at the beginning of the step, let $p$ and $C'$ be the point and cluster selected in this step, let $C$ be the cluster containing $p$ at the beginning of the step, and let $\cC'$ be the state of the clustering at the end of the step.
Let
\[
 I_{\mathrm{reduced}} \defeq \{i \mid \text{the $i$th character of $s_{\cC}$ is $1$ and the $i$th character of $s_{\cC'}$ is $0$}\}
\]
and
\[
 I_{\mathrm{increased}} \defeq \{i \mid \text{the $i$th character of $s_{\cC}$ is $0$ and the $i$th character of $s_{\cC'}$ is $1$}\}.
\]
In order to show that $s_{\cC'}$ is smaller than $s_{\cC}$ in lexicographic order, we need to show that $I_{\mathrm{reduced}}$ is non-empty and that $\min\{i \in I_{\mathrm{reduced}}\} < \min\{i \in I_{\mathrm{increased}}\}$.
Since the difference between $\cC$ and $\cC'$ is that $p$ was moved from $C$ to $C'$,
\[
 I_{\mathrm{reduced}} = \{i \mid \text{$e_i$ has one endpoint in $C\setminus \{p\}$ and the other endpoint is $p$}\}
\]
and
\[
 I_{\mathrm{increased}} = \{i \mid \text{$e_i$ has one endpoint in $C'$ and the other endpoint is $p$}\}.
\]
Since $C \neq \{p\}$, the set $I_{\mathrm{reduced}}$ cannot be empty.
Furthermore, since the edges are ordered in non-increasing order of their length, and since $\max(p,C\setminus\{p\}) > \max(p,C')$, we must have $\min\{i \in I_{\mathrm{reduced}}\} < \min\{i \in I_{\mathrm{increased}}\}$, as we needed.

In summary, we showed that for each step of Algorithm \ref{:alg:local-search-for-max-IP}, there is at least one $i$ such that the step turns the $i$th character of the string $s_{\cC}$ from $1$ to $0$, and that the minimum $i$ for which this happens is smaller than the minimum $i$ for which the step turns the $i$th character of the string $s_{\cC}$ from $0$ to $1$.
Thus, each step of the algorithm reduces the string $s_{\cC}$ in lexicographic order.
This concludes the proof of \cref{:cl:max-IP-steps-reduce-potential}
\end{proof}

Now, we are ready to prove \cref{:thm:existence-of-max-IP}.
\begin{proof}[Proof of \cref{:thm:existence-of-max-IP}]
Consider an execution of Algorithm \ref{:alg:local-search-for-max-IP} on $(X,d)$ with parameter $k$.
\cref{:cl:max-IP-steps-reduce-potential} implies that the clustering $\cC$ maintained by the algorithm never returns to a state that it was in before.
Therefore, since there are a finite number of possible clusterings of $(X,d)$, Algorithm \ref{:alg:local-search-for-max-IP} must terminate in finite time.
When the algorithm terminates, the condition of the while loop in line \ref{:line:local-search-for-max-IP:loop-condition} must not be satisfied, which exactly means that the current clustering $\cC$ maintained by the algorithm is $(1,\max)$-IP stable.
Furthermore, it is not hard to see that the number of clusters in $\cC$ always remains $k$.
Thus, the algorithm terminates and outputs a $(1,\max)$-IP stable $k$-clustering of $(X,d)$.
Therefore, such a clustering exists.
\end{proof}
\end{document}